\keywords{graph databases, conjunctive regular path queries, semantic optimization, tree-width, path-width, containment, approximation} %
\tikzset{
    initial text={},
    accepting/.style=accepting by arrow
}
\tikzset{
    inode/.style={
        inner xsep=0pt,
	}
}
\definecolor{Dark Ruby Red}{HTML}{580507}
\definecolor{Dark Blue Sapphire}{HTML}{053641}
\definecolor{Dark Gamboge}{HTML}{be7c00}
\definecolor{Desire}{HTML}{eb3b5a} %
\definecolor{Boyzone}{HTML}{2d98da} %
\definecolor{Royal Blue}{HTML}{3867d6} %
\definecolor{NYC Taxi}{HTML}{f7b731} %
\definecolor{Beniukon Orange}{HTML}{fa8231}
\definecolor{Algal Fuel}{HTML}{20bf6b} %
\definecolor{Innuendo}{HTML}{a5b1c2} %
\definecolor{Twinkle Blue}{HTML}{d1d8e0} %
\definecolor{Blue Horizon}{HTML}{4b6584} %
\definecolor{Gloomy Purple}{HTML}{8854d0} %
\colorlet{cBlue}{Royal Blue}
\colorlet{cYellow}{NYC Taxi}
\colorlet{cOrange}{Beniukon Orange}
\colorlet{cGreen}{Algal Fuel}
\colorlet{cRed}{Desire}
\colorlet{cGrey}{Innuendo}
\colorlet{cDarkGrey}{Blue Horizon}
\colorlet{cLightGrey}{Twinkle Blue}
\colorlet{cPurple}{Gloomy Purple}
\renewcommand{\epsilon}{\varepsilon}
\newif\ifproofappendix
\newrobustcmd\introinrestatable[1]{%
\ifproofappendix%
\kl{#1}%
\else%
\intro{#1}%
\fi%
}
\newrobustcmd\introstarinrestatable[1]{%
\ifproofappendix%
\kl{#1}%
\else%
\intro*#1%
\fi%
}
\newrobustcmd\introinrestatableopt[1]{%
\ifproofappendix%
\kl[#1]{#1}%
\else%
\intro[#1]{#1}%
\fi%
}
\@nx\else[{#1}]\fi}%
\newrobustcmd\recall[1]{
  \proofappendixtrue%
    #1*
  \proofappendixfalse%
}
\definecolor{green}{RGB}{0,120,0}
\definecolor{hlyellow}{RGB}{250, 250, 190}
\definecolor{diegoeditcolor}{RGB}{210,210,255}
\definecolor{remieditcolor}{RGB}{210,255,210}
\newcommand{\sidediego}[1]{}
\newcommand{\sideremi}[1]{}
\newcommand{\remi}[1]{}
\newcommand{\diego}[1]{}
\definecolor{light-gray}{gray}{0.9}
\newcommand{\proofcase}[1]{\textbf{#1}~}
\colorlet{wrote}{cRed}
\colorlet{advised}{cBlue}
\newrobustcmd{\wrote}{\color{wrote}\scriptsize\text{wrote}}
\newrobustcmd{\advised}{\color{advised}\scriptsize\text{advised}}
\renewcommand{\phi}{\varphi}
\renewcommand{\leq}{\leqslant}
\renewcommand{\geq}{\geqslant}
\renewcommand{\emptyset}{\varnothing}
\knowledgenewrobustcmd{\lBrack}{\cmdkl{\llbracket}}
\knowledgenewrobustcmd{\rBrack}{\cmdkl{\rrbracket}}
\newcommand{\set}[1]{\{#1\}}
\newrobustcmd{\defeq}{\mathrel{\hat{=}}}
\newrobustcmd{\id}{\mathrm{id}}
\newrobustcmd{\Nat}{\mathbb{N}}
\newrobustcmd\pset[1]{\+P(#1)} %
\newrobustcmd{\pto}{\rightharpoonup} %
\newrobustcmd{\dom}{\mathop{\textrm{dom}}} %
\newrobustcmd{\poly}{\mathop{\textrm{poly}}} %
\knowledgenewrobustcmd{\A}{\mathbb{A}} %
\knowledgenewrobustcmd{\Aext}{\cmdkl{\mathbb{A}^\pm}} %
\knowledgenewrobustcmd\vertex[1]{\cmdkl{V}(#1)}
\knowledgenewrobustcmd\edges[1]{\cmdkl{E}(#1)}
\knowledgenewrobustcmd\qvar{\footnotesize\bullet} %
\knowledgenewrobustcmd{\pathl}{\cmdkl{\mathbf{P}_{\!l}}} %
\knowledgenewrobustcmd\subaut[3]{#1\cmdkl{[#2,#3]}}
\knowledgenewrobustcmd\bagmap{\cmdkl{\mathbf{v}}}
\knowledgenewrobustcmd\tagmap{\cmdkl{\mathbf{t}}}
\knowledgenewrobustcmd\tagmappath[1]{\cmdkl{\mathbf{t}[#1]}}
\newrobustcmd\tagmappathprime[1]{%
  \withkl{\kl[\tagmappath]}{%
    \cmdkl{\mathbf{t}'[#1]}%
  }%
}
\knowledgenewrobustcmd{\atom}[1]{\,\raisebox{-.09em}{$\xrightarrow{\smash{#1}}$}\,}
\knowledgenewrobustcmd{\coatom}[1]{\,\raisebox{-.09em}{$\xleftarrow{\smash{#1}}$}\,}
\knowledgenewrobustcmd{\symatom}[1]{\,\raisebox{-.09em}{$\xrightleftharpoons{\smash{#1}}$}\,}
\knowledgenewrobustcmd{\atoms}[1]{\cmdkl{\textnormal{Atoms}}(#1)}
\knowledgenewrobustcmd{\contained}{\mathrel{\cmdkl{\subseteqq}}}
\newrobustcmd{\strcontained}{
  \mathrel{\withkl{\kl[\contained]}{\cmdkl{%
    \subsetneqq
  }}}
}
\suggestcommand\equiv{Use instead \semequiv for semantical equivalence.}
\knowledgenewrobustcmd{\semequiv}{\mathrel{\cmdkl{\LaTeXequiv}}}
\knowledgenewrobustcmd{\nbatoms}[2][]{\cmdkl{\|}#2\cmdkl{\|^{#1}_{\textrm{at}}}}
\knowledgenewrobustcmd{\size}[1]{\cmdkl{\|}#1\cmdkl{\|}}
\knowledgenewrobustcmd{\vars}{\cmdkl{\textit{vars}}} %
\newrobustcmd{\collapse}{\approx}
\knowledgenewrobustcmd{\Exp}{\cmdkl{\textnormal{Exp}}} %
\knowledgenewrobustcmd{\UCtwoRPQ}{\cmdkl{\textnormal{UC2RPQ}}}
\newrobustcmd{\UCRPQ}{\kl[\UCtwoRPQ]{\textnormal{UCRPQ}}}
\newrobustcmd{\CtwoRPQ}{%
  \withkl{\kl[\UCtwoRPQ]}{\cmdkl{%
    \textnormal{C2RPQ}
  }}
}
\knowledgenewrobustcmd{\UCRPQSRE}{\ensuremath{\cmdkl{\textup{UCRPQ}(\textup{SRE})}}}
\newrobustcmd{\CRPQSRE}{%
  \withkl{\kl[\UCRPQSRE]}{\cmdkl{%
    \textup{CRPQ}(\textup{SRE})%
  }}%
}%
\knowledgenewrobustcmd{\homto}{\mathrel{\cmdkl{\raisebox{-.09em}{$\xrightarrow{\smash{\textit{\tiny hom}}}$}}}}
\newrobustcmd{\nothomto}{\mathrel{\withkl{\kl[\homto]}{\cmdkl{\raisebox{-.09em}{$\not\xrightarrow{\smash{\textit{\tiny hom}}}$}}}}}
\newcommand{\xrightarrowdbl}[2][]{%
  \xrightarrow[#1]{#2}\hspace{-.8em}\xrightarrow{}
}
\knowledgenewrobustcmd\surj{\mathrel{\cmdkl{\raisebox{-.09em}{$\xrightarrowdbl{\smash{\textit{\tiny hom}}}$}}}}
\knowledgenewrobustcmd{\fun}{f}
\knowledgenewrobustcmd{\class}{\mathcal{C}}
\knowledgenewrobustcmd{\Tw}[1][k]{\cmdkl{\mathcal{T\hspace{-.15em}w}_{#1\!}}}
\knowledgenewrobustcmd{\Pw}[1][k]{\cmdkl{\mathcal{P\hspace{-.05em}w}_{#1\!}}}
\knowledgenewrobustcmd{\ContrPw}[1][k]{\cmdkl{\mathcal{C\hspace{-.05em}p\hspace{-.05em}w}_{#1\!}}}
\knowledgenewrobustcmd{\PwOneWay}[1][k]{\cmdkl{\mathcal{1P\hspace{-.05em}w}_{#1\!}}}
\knowledgenewrobustcmd{\ContrPwOneWay}[1][k]{\cmdkl{\mathcal{1\hspace{-.08em}C\hspace{-.05em}p\hspace{-.05em}w}_{#1\!}}}
\knowledgenewrobustcmd{\ContrTw}[1][k]{\cmdkl{\mathcal{C\hspace{-.05em}t\hspace{-.05em}w}_{#1\!}}}
\knowledgenewrobustcmd{\TwOneWay}[1][k]{\cmdkl{\mathcal{1T\hspace{-.15em}w}_{#1\!}}}
\knowledgenewrobustcmd{\ContrTwOneWay}[1][k]{\cmdkl{\mathcal{1\hspace{-.08em}C\hspace{-.05em}t\hspace{-.05em}w}_{#1\!}}}
\newrobustcmd\bw{\textrm{bw}}
\newrobustcmd{\core}{\mathop{\textrm{core}}}
\newcommand{\anexpansion}{\xi}
\knowledgenewrobustcmd{\Refin}[1][]{\cmdkl{\textnormal{Ref}^{\smash{#1}}}}
\knowledgenewrobustcmd{\MUA}[2]{\cmdkl{\ensuremath{\textnormal{App}_{#2}(#1)}}}
\knowledgenewrobustcmd{\MUAHom}[2]{\cmdkl{\ensuremath{\textnormal{App}_{#2}^{\smash{\star}}(#1)}}}
\knowledgenewrobustcmd{\MUAHomBounded}[3]{\cmdkl{\ensuremath{\textnormal{App}_{#2}^{\smash{\star,#3}}(#1)}}}
\knowledgenewrobustcmd{\type}{\cmdkl{\textnormal{type}}}
\knowledgenewrobustcmd{\Qapp}[1][k]{\cmdkl{\ensuremath{\textnormal{App}_{\Tw[#1]}^{\smash{\textup{zip}}}(\gamma)}}}
\knowledgenewrobustcmd{\contract}[1]{\cmdkl{[}#1\cmdkl{]}}
\knowledgenewrobustcmd\upquery[1]{#1^{\cmdkl{\uparrow}}}
\suggestcommand\ell{Use instead \l for bound on the size of refinements.}
\knowledgerenewcommand{\l}{\cmdkl{\LaTeXell}}
\knowledgenewrobustcmd{\lOne}{\cmdkl{\LaTeXell_1}}
\knowledgenewrobustcmd{\avoid}{\textsf{avoid}}
\knowledgenewrobustcmd{\trap}{\textsf{trap}}
\newcommand{\pspace}{\textup{\textsf{PSpace}}\xspace}
\newcommand{\expspace}{{\sf ExpSpace}\xspace}
\newcommand{\conp}{{\sf coNP}\xspace}
\newcommand{\wone}{\textup{W[1]}\xspace}
\newrobustcmd\pitwo{\ensuremath{\Pi^p_2}}
\newrobustcmd\sigmatwo{\ensuremath{\Sigma^p_2}}
\knowledgenewrobustcmd\itemClosureInfCQ{\cmdkl{\textup{\textrm{(1)}}}}
\knowledgenewrobustcmd\itemClosureUCRPQ{\cmdkl{\textup{\textrm{(2)}}}}
\knowledgenewrobustcmd\itemClosureUCRPQSimple{\cmdkl{\textup{\textrm{(3)}}}}
\knowledgenewrobustcmd{\disjointUnion}{\mathrel{\cmdkl{+}}}
\NewCommandCopy{\proofqedsymbol}{\qedsymbol}%
\newcommand{\exampleqedsymbol}{{$\triangle$}}%
\renewcommand{\qedsymbol}{\exampleqedsymbol}%
\renewcommand{\qedsymbol}{\exampleqedsymbol}%
\crefname{thm}{Theorem}{Theorems}
\crefname{defi}{Definition}{Definitions}
\let\endexample\endexa
\crefname{exa}{Example}{Examples}
\crefname{rem}{Remark}{Remarks}
\crefname{obs}{Observation}{Observations}
\crefname{cor}{Corollary}{Corollaries}
\crefname{lem}{Lemma}{Lemmata}
\crefname{prop}{Proposition}{Propositions}
\crefname{clm}{Claim}{Claims}
\crefname{fact}{Fact}{Facts}
\crefname{nota}{Notation}{Notations}
\crefname{qu}{Question}{Questions}
\begin{document}

\title[Semantic Tree-Width and Path-Width of CRPQs]{Semantic Tree-Width and Path-Width of\\ Conjunctive Regular Path Queries\rsuper*}
\titlecomment{{\lsuper*}This is the journal version of an ICDT'23 paper \cite{thispaperICDT}, see \Cref{sec:conf-paper-diff} for a summary of the added material. Submitted in November 2023.}

\author[D.~Figueira]{Diego Figueira\lmcsorcid{0000-0003-0114-2257}}
\author[R.~Morvan]{Rémi Morvan\lmcsorcid{0000-0002-1418-3405}}

\address{Univ. Bordeaux, CNRS, Bordeaux INP, LaBRI, UMR5800, F-33400 Talence, France}
\email{diego.figueira@cnrs.fr, remi.morvan@u-bordeaux.fr}

\begin{abstract}
  
We show that the problem of whether a query is equivalent to a query of tree-width $k$ is decidable, for the class of Unions of Conjunctive Regular Path Queries with two-way navigation (UC2RPQs). 
A previous result by Barceló, Romero, and Vardi \cite{BarceloRV16} has shown decidability for the case $k=1$, and here we extend this result showing that decidability in fact holds for any arbitrary $k\geq 1$. 
The algorithm is in 2\expspace, but for the restricted but practically relevant case where all regular expressions of the query are of the form $a^*$ or $(a_1 + \dotsb + a_n)$ we show that the complexity of the problem drops to \pitwo.

We also investigate the related problem of approximating a UC2RPQ by queries of small tree-width. We exhibit an algorithm which, for any fixed number $k$, builds the maximal under-approximation of tree-width $k$ of a UC2RPQ.
The maximal under-approximation of tree-width $k$ of a query $q$ is a query $q'$ of tree-width $k$ which is contained in $q$ in a maximal and unique way, that is, such that for every query $q''$ of tree-width $k$, if $q''$ is contained in $q$ then $q''$ is also contained in $q'$.

Our approach is shown to be robust, in the sense that it allows also to test equivalence with queries of a given path-width, it also covers the previously known result for $k=1$, and it allows to test for equivalence of whether a (one-way) UCRPQ is equivalent to a UCRPQ of a given tree-width (or path-width).

\end{abstract}

\maketitle
\bigskip

\noindent
\raisebox{-.4ex}{\HandRight}\ \ This pdf contains internal links: clicking on a "notion@@notice" leads to its \AP ""definition@@notice"".\footnote{This result was achieved by using the "knowledge" package and its companion tool "knowledge-clustering".}

\clearpage
\setcounter{tocdepth}{2}
\tableofcontents
\clearpage
\section{\AP{}Introduction}
\label{sec:intro}

\subsection{\AP{}Graph Databases}

"Graph databases" have gained significant attention due to their ability to efficiently model and manage complex, interconnected data. Unlike traditional relational databases, they model data as entities connected by edges that represent relationships. This structure facilitates the analysis of highly interconnected data, where the topology of the connections is as crucial as the data itself, making them particularly well-suited for use cases like biology, social networks, banking, recommendation systems, and fraud detection. We refer the reader to \cite{Barcelo-pods13,DBLP:journals/sigmod/Wood12,survey-graphdbs2017} for surveys on the foundations and applications of graph databases.

\AP ""Graph databases"" are abstracted as edge-labelled directed graphs
$G = \langle \vertex{G}, \edges{G} \rangle$, 
where nodes of $\intro*\vertex{G}$ represent entities and labelled edges $\intro*\edges{G} \subseteq \vertex G \times \A \times \vertex G$
represent relations between these entities, with $\A$ being a fixed finite alphabet.
For instance, \Cref{fig:example-graph-database} depicts a "graph database",
whose nodes are authors and papers, on the alphabet
$\A = \{\text{\color{wrote}wrote},\,\text{\color{advised}advised}\}$.
Edges $x \atom{\wrote} y$ indicate that the person $x$ wrote the paper $y$,
while edges $x \atom{\advised} y$ indicate that person $x$ was
the Ph.D. advisor of person $y$.

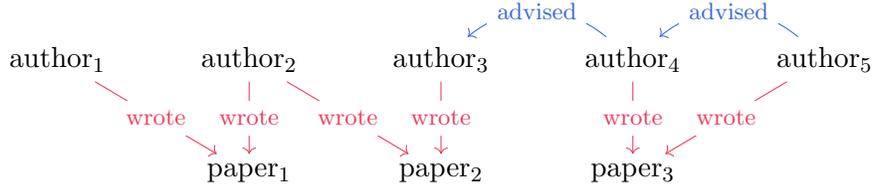
\begin{figure}[ht]
    \centering
    \begin{tikzpicture}
        \node (a1) {author$_1$};
        \node (a2) [right = of a1] {author$_2$};
        \node (a3) [right = of a2] {author$_3$};
        \node (a4) [right = of a3] {author$_4$};
        \node (a5) [right = of a4] {author$_5$};
        \node (p1) [below = 2.5em of a2] {paper$_1$};
        \node (p2) [below = 2.5em of a3] {paper$_2$};
        \node (p3) [below = 2.5em of a4] {paper$_3$};

        \draw[wrote] (a1) edge[->] node[fill=white] {\footnotesize wrote} (p1)
            (a2) edge[->] node[fill=white] {\footnotesize wrote} (p1)
            (a2) edge[->] node[fill=white] {\footnotesize wrote} (p2)
            (a3) edge[->] node[fill=white] {\footnotesize wrote} (p2)
            (a4) edge[->] node[fill=white] {\footnotesize wrote} (p3)
            (a5) edge[->] node[fill=white] {\footnotesize wrote} (p3);
        \draw[advised] (a4) edge[->, bend right=40] node[fill=white] {\footnotesize advised} (a3)
            (a5) edge[->, bend right=40] node[fill=white] {\footnotesize advised} (a4);
    \end{tikzpicture}
    \caption{%
        \AP\label{fig:example-graph-database}%
        A "graph database" with eight nodes and eight edges on a two-letter alphabet.
    }
\end{figure}

\AP Being a subclass of relational databases, "graph databases" can be queried by the
predominant query language of ""conjunctive queries"", "aka" "CQs", 
which consists of the closure under projection---"aka" existential quantification---of conjunctions of atoms of the form $x \atom{a} y$
for some letter $a \in \A$. For instance, the "conjunctive query"
\[
    \gamma_1(x, y) = x \atom{\wrote} z
        \land y \atom{\wrote} z    
\]
returns, when "evaluated" on the "graph database" $G$
defined in \Cref{fig:example-graph-database}, all pairs of nodes $(u, v)$ such that $u$ is a co-author
of $v$. Each variable not appearing in the left-hand side of 
the definition of a "conjunctive query" (in this example, $z$) is implicitly 
existentially quantified. 
Note that, to the cost of losing the information of which variable is existentially quantified, every "CQ"
can be seen as a "graph database", where each variable is a node, and each atom is an edge; 
hence, we sometimes use "graph database" terminology for "CQs".

The expressive power of "CQs" is somewhat limited, since
"CQs" cannot express, for example, transitive closure.
Since the ability to navigate paths is of importance in many "graph database" 
scenarios, most modern graph query languages support, as a central querying mechanism,
"conjunctive regular path queries", or "CRPQs" for short. 
In particular, "CRPQs" form the core navigational mechanism of the new ISO standard Graph Query Language (GQL) \cite{isoGQL} and the SQL extension for querying graph-structured data SQL/PGQ \cite{isoPGQ} (see also \cite{DBLP:conf/icdt/FrancisGGLMMMPR23,DBLP:conf/pods/FrancisGGLMMMPR23}).

"CRPQs" are 
defined analogously to "conjunctive queries", except that their atoms are now of the form 
$x \atom{L} y$ where $L$ is an arbitrary regular language over the alphabet $\A$. For 
instance the "evaluation" of the "CRPQ"
\[
    \gamma_2(x, y) = x \atom{\wrote} z
        \land z' \atom{\wrote} z 
        \land y \atom{({\advised})^*} z'
\]
on $G$ yields every pair of persons $(u,v)$ such that $u$ is a co-author of a
``scientific descendant'' of $v$. 

\AP Formally, a ""CRPQ"" $\gamma$ is defined as a tuple $\bar z = (z_1,\hdots,z_n)$
of ""output variables"", "aka" \reintro{free variables},\footnote{For technical reasons (see the definition of "equality atoms") we allow for a variable to appear multiple times.}
together with a conjunction of ""atoms"" of the form
$\bigwedge_{j=1}^m x_j \atom{L_j} y_j$, where each $L_j$ is a regular language and where $m \geq 0$.
The set of all variables occurring in $\gamma$, namely\footnote{We neither assume 
disjointness nor inclusion between $\{z_1,\hdots,z_n\}$ and $\{x_1,y_1,\hdots,x_m,y_m\}$}
$\{z_1,\hdots,z_n\}\cup\{x_1,y_1,\hdots,x_m,y_m\}$, is denoted by
$\intro*\vars(\gamma)$.
Given a "database" $G$, we say that a tuple of nodes $\bar u = (u_1,\hdots,u_n)$
\AP""satisfies"" $\gamma$ 
on $G$ if there is a mapping
$\fun\colon \vars(\gamma) \to \vertex{G}$ such that $u_i = \fun(z_i)$ for all
$1 \leq i \leq n$, and for each $1 \leq j \leq m$,
there exists a path from $\fun(x_i)$ to $\fun(y_i)$ in $G$, labelled by
a word from $L_i$ (if the path is empty, the label is $\epsilon$). The \AP""evaluation"" of $\gamma$ on $G$ is then the set of all tuples that "satisfy" $\gamma$.
For example, $(\text{author}_2, \text{author}_5)$ "satisfies" $\gamma_2$ 
on the "graph database" $G$ of \Cref{fig:example-graph-database} via
the function that maps $x$ to $\text{author}_2$, $y$ to $\text{author}_5$,
$z$ to $\text{paper}_2$, and $z'$ to $\text{author}_3$.

\AP
The language of "CRPQ" can be extended to navigate edges in both directions. 
\knowledgenewrobustcmd{\Gpm}{\cmdkl{G^\pm}}
Consider the expanded database $\intro*\Gpm$ obtained from $G$ by 
adding, for every edge $x \atom{a} y$ in $G$, an extra edge $y \atom{a^-} x$.
We obtain a graph database on the alphabet $\intro*\Aext = \A \cup \A^-$ where
$\A^- = \set{a^- \mid a \in \A}$. We then define the syntax of
a \AP""CRPQ with two-way navigation"", or \reintro{C2RPQ}, as a "CRPQ" on the alphabet $\Aext$.
Its \reintro{evaluation} is defined as the "evaluation" of the "CRPQ" on $\Gpm$.
For instance, the "evaluation" of the "C2RPQ"
\[
    \gamma_3(x, y) = x \atom{({\wrote}\cdot{\wrote}^-)^*} y
\]
on the "graph database" of \Cref{fig:example-graph-database} returns all pairs of
individuals linked by a chain of co-authorship.
It includes $(\text{author}_1, \text{author}_3)$ or $(\text{author}_1, \text{author}_1)$
but not $(\text{author}_1, \text{author}_4)$.
\AP If a query has no "output variables" we call it ""Boolean"", and
its "evaluation" can either be the set $\set{()}$, in which case we say that $G$
\reintro{satisfies} the query, or the empty set $\set{}$. For example, $G$ "satisfies" the
"Boolean CRPQ"
\[\gamma_4() = x \atom{\wrote} y\]
if, and only if, the database contains one author together with one paper they wrote.

To simplify proofs, we assume that all the regular languages are described via 
non-deterministic  finite automata (NFA) instead of regular expressions,
which does not affect any of our complexity bounds.
However, for readability all our examples will be given in terms of regular expressions.
We denote the set of "atoms" of a "C2RPQ" $\gamma$ by \AP$\intro*\atoms\gamma$, and by 
$\intro*\nbatoms{\gamma}$ we denote its number of "atoms", "ie", $|\atoms\gamma|$.
Moreover, we denote by $\intro*\size{\gamma}$ the sum of its number of "atoms" with
the sum of the size of NFAs used to describe $\gamma$.

\AP Finally, a ""union of CQs"" (\reintro{UCQs}) (resp.\ ""union of CRPQs"" (\reintro{UCRPQs}), resp.\ ""union of C2RPQs"" (\reintro{UC2RPQs}))  
is defined as a finite set of "CQs" (resp.\ "CRPQs", resp.\ "C2RPQs"), whose
tuples of "output variables" have all the same arity. 
\AP
A ""subquery"" of a "C2RPQ" $\gamma$ is any "C2RPQ" resulting from removing some "atoms" (possibly none) from $\gamma$. A \reintro{subquery} of  "UC2RPQ" is a union of "subqueries" of the "C2RPQs" therein.
The "evaluation" of a union is defined as the union of its "evaluations", for instance the following "UCQ"
\begin{align*}
    \Gamma_5 & = \gamma_{5}^1(x, y) \lor \gamma_{5}^2(x, y) \\
    & \text{ where }
    \gamma_{5}^1(x, y) = x \atom{\wrote} y %
    \text{ and }
    \gamma_{5}^2(x, y) = x \coatom{\advised} z \land
        z \atom{\wrote} y
\end{align*}
"evaluates" to the set of pairs $(x,y)$ such that $y$ is a paper written by either $x$
or their advisor.
We naturally extend the notations $\nbatoms{-}$ and $\size{-}$ to "unions@UC2RPQs".
\AP ""Infinitary unions"" are defined analogously, except
that we allow for potentially infinite unions. We often use a set notation to denote the union, especially for "infinitary unions".

For a more detailed introduction to "CRPQs", we refer the reader to \cite{DBLP:conf/rweb/Figueira21}.
For a more general introduction to different query languages for "graph databases"---including "CRPQs"---see \cite{barcelo2013querying}, and for a more practical approach,
see \cite{survey-graphdbs2017}.

\smallskip

\AP %
The ""evaluation problem"" for "UC2RPQ" is the problem of, given
a "UC2RPQ" $\Gamma$, a "graph database" $G$ and a tuple $\bar u$ of elements of $G$,
whether $\bar u$ "satisfies" $\Gamma$ on $G$. 
Given two "UC2RPQ" $\Gamma$
and $\Gamma'$ whose "output variables" have the same arity,
we say that $\Gamma$ is \AP""contained"" in $\Gamma'$,
denoted by $\Gamma \intro*\contained \Gamma'$ if
for every "graph database" $G$, for every tuple $\bar u$ of $G$,
if $\bar u$ "satisfies" $\Gamma$ on $G$, then so does $\Gamma'$ (we will hence reserve the symbol `$\subseteq$' for set inclusion---note in particular that inclusion (of the "UC2RPQs", seen as sets of "C2RPQs") implies "containment", but the converse does not hold). 
The \AP""containment problem"" for "UC2RPQs" is the problem of, given
two "UC2RPQs" $\Gamma$ and $\Gamma'$, to decide if $\Gamma \contained \Gamma'$.
When $\Gamma$ is "contained" in $\Gamma'$ and vice versa, we say that
$\Gamma$ and $\Gamma'$ are \AP""semantically equivalent"", denoted by
$\Gamma \intro*\semequiv \Gamma'$.  

\paragraph{Queries of small tree-width}
It is known that the "evaluation problem" for "UC2RPQ" is "NP"-complete, just as for "conjunctive queries"
\cite[Theorem 7]{DBLP:conf/stoc/ChandraM77}.
However, queries whose underlying structure looks like a tree---formally, queries of bounded 
"tree-width"---can be "evaluated" in polynomial time \cite[Theorem 3]{CHEKURI2000211}.\footnote{Theorem 3 talks about query "containment" of "CQs", which is in fact equivalent to the "evaluation problem" for "CQs". Moreover, the theorem deals with ``query width'', but this parameter is equivalent up to a multiplicative constant to the "tree-width" \cite[Lemma 2]{CHEKURI2000211} assuming that the database signature arity is fixed.} 

\AP
"Tree-width" is a measure of how much a
graph differs from a tree, introduced by Arnborg, Corneil and
Proskurowski~\cite{arnborg1987complexity}.
For a gentle but thorough introduction to "tree-width", we refer the reader to
\cite[\S 3.6]{sparsity12}. Formally, a \AP""tree decomposition"" of a multigraph $G$ is a
pair $(T, \intro*\bagmap)$ where $T$ is a tree and $\bagmap: \vertex{T} \to \pset{\vars(G)}$ is a function that associates to each node of $T$, called \AP""bag"",
a set of vertices of $G$. When $x \in \bagmap(b)$ we shall say that the "bag"
$b \in \vertex{T}$ \AP""contains@@tw"" vertex $v$. Further, it must satisfy the following three properties:
\begin{itemize}
    \item each vertex $v$ of $\gamma$ is "contained@@tw" in at least one "bag" of $T$;
    \item for each edge $u \atom{} v$ of $G$, there is at least one "bag" of $T$
        that "contains@@tw" both $u$ and $v$; and 
    \item for each vertex $v$ of $G$, the set of bags of $T$ "containing@@tw" $v$ is a 
        connected subset of $\vertex{T}$.
\end{itemize}
The \AP""width"" of $(T, \bagmap)$ is the maximum of $|\bagmap(b)|-1$ when $b$ ranges over
$\vertex{T}$.

\begin{figure}
	\centering
	\begin{subfigure}{.6\linewidth}
		\centering
		\includegraphics[scale=.8]{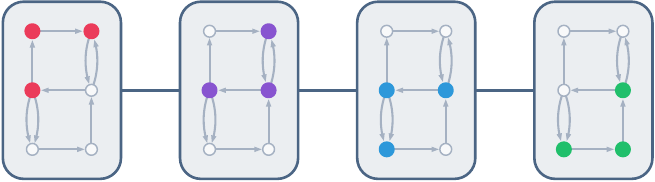}
		\caption{
			\AP\label{fig:ex-tree-dec-full}
			``Full'' representation of $(T, \bagmap)$.
		}
	\end{subfigure}
	\hfill
	\begin{subfigure}{.38\linewidth}
		\centering
		\includegraphics[scale=1.1]{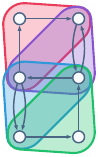}
		\caption{
			\AP\label{fig:ex-tree-dec-concise}
			``Concise'' representation of $(T, \bagmap)$
		}
	\end{subfigure}
	\caption{
		\AP\label{fig:ex-tree-dec}
		Two different representations of the same "tree decomposition" $(T, \bagmap)$ of a 
		multigraph $G$ with six 
		vertices. The underlying tree is a path with four nodes and each "bag" contains 3 
		vertices---hence the "decomposition@tree decomposition" has "width" 2.
	}
\end{figure}
We give an example of "tree decomposition" in \Cref{fig:ex-tree-dec}:
\begin{itemize}
	\item In \Cref{fig:ex-tree-dec-full}, we give the ``full'' representation of
	the "decomposition@tree decomposition": we draw $T$, and inside each "bag" $b$ of $T$
	we represent a copy of $G$. Nodes of $G$ belonging to $b$ are highlighted, while the others are dimmed. Sometimes, we will only write the
	name of the nodes contained in the "bag", instead of drawing the graph.
	\item In \Cref{fig:ex-tree-dec-concise}, we give a ``concise'' representation:
	we draw over $G$ a coloured shape for each "bag" of $T$. This representation is ambiguous---the structure of $T$ is not made explicit---and will only be used when no ambiguity can arise.
\end{itemize}

The \AP""tree-width"" of $G$ is the minimum of the "width" of all "tree decompositions" of $G$. The \reintro{tree-width} of a "C2RPQ" is the "tree-width" of its underlying multigraph. 
We denote by $\intro*\Tw$ (resp.\ $\intro*\TwOneWay$) the set of all "C2RPQs" (resp.\ "CRPQs") of "tree-width" at most $k$. The \reintro{tree-width} of a "UC2RPQ" is simply the maximum of the "tree-width" of its "C2RPQs".
\AP
A ""path decomposition"" is a "tree decomposition" $(T,\bagmap)$ in which $T$ is a path. The \AP""path-width"" of $\gamma$ is the minimum of the "width" among all "path decompositions" of $\gamma$. The \reintro{path-width} of a "C2RPQ" and "UC2RPQ" are defined analogously. We denote by \AP$\intro*\Pw$ (resp.\ $\intro*\PwOneWay$) the set of all "C2RPQs" (resp.\ "CRPQs") of "path-width" at most $k$. The relationship between these classes is depicted in \Cref{fig:taonomy-syntactic}:
note that $\TwOneWay$ and $\PwOneWay$ are not explicitly drawn, but correspond to the
intersection of $\Tw$ (resp.\ $\Pw$) with the class of "CRPQs".

\begin{figure}
	\centering
	\scalebox{1.125}{
	\begin{tikzpicture}
		\node at (0,0) {\includegraphics[scale=.8]{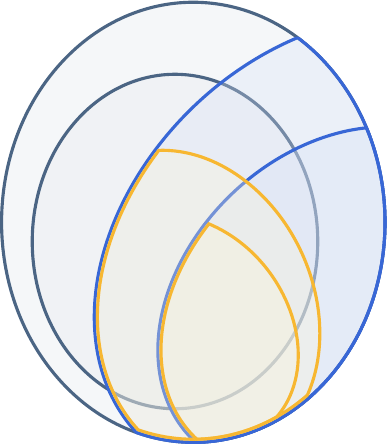}};
		\node[font=\small] at (.5,-1.5)
			{$\withkl{\kl[\Pw]}{\cmdkl{\color{cYellow}\mathcal{P\hspace{-.15em}w}_{1\!}}}$};
		\node[font=\small] at (-.5,0) 
			{$\withkl{\kl[\Pw]}{\cmdkl{\color{cYellow}\mathcal{P\hspace{-.15em}w}_{k\!}}}$};
		\node[font=\small] at (2.05,.2)
			{$\withkl{\kl[\Tw]}{\cmdkl{\color{cBlue}\mathcal{T\hspace{-.15em}w}_{1\!}}}$};
		\node[font=\small] at (1.4,1.8)
			{$\withkl{\kl[\Tw]}{\cmdkl{\color{cBlue}\mathcal{T\hspace{-.15em}w}_{k\!}}}$};
		\node[font=\tiny] at (-1.5,.5) {\kl[CRPQ]{\color{cDarkGrey}CRPQs}};
		\node[font=\tiny] at (-.5,2.4) {\kl[C2RPQ]{\color{cDarkGrey}C2RPQs}};
	\end{tikzpicture}
	}
	\caption{
		\AP\label{fig:taonomy-syntactic}
		Clickable taxonomy of syntactic classes studied in this paper.
	}
\end{figure}

Similar statements of the following proposition can be considered Folklore (see "eg" {\cite[Theorem IV.3]{DBLP:conf/lics/0001BV17}}); however, our inability to find a proof for it with sharp bounds invites us to include a proof.
\begin{restatable}[Proof in \Cref{apdx-sec:prop:crpq-bound-tree-width-upper-bound}]{prop}{crpqboundtwupperbound}
	\AP\label{prop:crpq-bound-tree-width-upper-bound}
    For each $k \geq 1$, the "evaluation problem" for "UC2RPQs" of "tree-width" at
    most $k$ can be solved in time $\+O(\size{\Gamma} \cdot |G|^{k+1} \cdot \log{|G|})$ on a Turing machine,
	or $\+O(\size{\Gamma} \cdot |G|^{k+1})$ under a RAM model, where $\Gamma$ and $G$ are the input "UC2RPQ" and "graph database", respectively.
\end{restatable}

In practice, "graph databases" tend to be huge and often changing, while queries
are in comparison very small.
This motivates the following question, given some natural $k \geq 1$: 

\begin{center}
    \AP 
    Given a "UC2RPQ" $\Gamma$, is it "equivalent" to a "UC2RPQ" $\Gamma'$ of "tree-width" at most $k$?\\
    That is, does it have ""semantic tree-width"" at most $k$?
\end{center}
This problem is called the ""semantic tree-width $k$ problem"".
Should it be decidable in a constructive way---that is, decidable, and if the answer is positive, we can compute a witnessing $\Gamma'$ from $\Gamma$---, then one could, once and for all,
compute $\Gamma'$ from $\Gamma$ and, whenever one wants to "evaluate" $\Gamma$ on a
database, "evaluate" $\Gamma'$ instead.

We will also study the restriction of these notions to one-way queries: a "UCRPQ" has \AP""one-way semantic tree-width"" at most $k$ if it is equivalent to a "UCRPQ" of "tree-width" at most $k$. The \AP""one-way semantic tree-width $k$ problem"" is the problem of, given a "UCRPQ" $\Gamma$, whether it has "one-way semantic tree-width" at most $k$.

\begin{example}
    \AP\label{ex:CRPQ-tw3-stw2}
    Consider the following "CRPQs",\footnote{In this graphical representation,
	we interpret a labelled graph as the "CRPQ" defined as
	the conjunction of the "atoms" induced by the labelled edges of the graph.
	For instance, $\gamma(\bar x)$ is a conjunction of six "atoms".}
    where $\bar x = (x_0,x_1,y,z)$:\leavevmode
    \begin{center}
        \small
        \begin{tikzcd}[column sep=small, row sep=small]
            &[-.5em] x_0 \ar[dr, "a"] \ar[rr, "c"] \ar[ddr, "a(bb)^+" swap, pos=.6, bend right] & &
            x_1 \ar[dl, "a" swap] \ar[ddl, "ab(bb)^*", pos=.7, bend left]
                &[1.5em] &[-.5em] x_0 \ar[dr, "a"] \ar[rr, "c"] \ar[ddr, "a(bb)^+" swap, pos=.6, bend right] & &
                x_1 \ar[dl, "a" swap] 
                    &[1.5em] &[-.5em] x_0 \ar[dr, "a"] \ar[rr, "c"]  & &
                    x_1 \ar[dl, "a" swap] \ar[ddl, "ab(bb)^*", pos=.6, bend left] \\
            \gamma(\bar x) \defeq & & y \ar[d, "b^+", pos=.35] & 
                & \delta(\bar x) \defeq & & y \ar[d, "b(bb)^*" pos=.35] & 
                    & \delta'(\bar x) \defeq & & y \ar[d, "(bb)^+" swap, pos=.35] & \\
            & & z & 
                & & & z & 
                    & & & z &
        \end{tikzcd}
    \end{center}
    \noindent
    The underlying graph of $\gamma(\bar x)$ being the directed 4-clique, $\gamma 
    (\bar x)$ has "tree-width" 3. We claim that $\gamma(\bar x)$ is equivalent to the "UCRPQ"
    $\delta(\bar x) \lor \delta'(\bar x)$, and hence has "one-way semantic tree-width" at most 2.

    Indeed, given a "graph database" satisfying $\gamma(\bar x)$ via some mapping $\mu$, 
    it suffices to make a case disjunction on whether the number of $b$-labelled "atoms"
    in the path from
    $\mu(y)$ to $\mu(z)$ is even or odd. In the first case, the "atom" $x_0\atom{a(bb)^+} z$ becomes
    redundant since we can deduce the existence of such a path from the conjunction
    $x \atom{a} y \atom{(bb)^+} z$, and hence the "database" "satisfies" $\delta(\bar x)$ via $\mu$.
    Symmetrically, in the second case, the "atom" $x_1 \atom{b(bb)^*} z$ becomes redundant,
    and the "database" "satisfies" $\delta'(\bar x)$ via $\mu$. 
    Thus, $\gamma(\bar x)$
    is "contained", and hence "equivalent" (the other "containment" being trivial), to
    the "UCRPQ" $\delta(\bar x) \lor \delta'(\bar x)$ of \linebreak[5] "tree-width" 2.
\end{example}

\subsection{\AP{}Related Work}
\label{sec:relwork}
On the class "conjunctive queries", the "semantic tree-width $k$ problem" becomes the \conp-complete problem of finding out whether the retraction of a query has "tree-width"
at most $k$. In fact, "CQs" enjoy the effective existence of unique minimal queries \cite[Theorem 12]{DBLP:conf/stoc/ChandraM77}, which happen to also minimize the tree-width. For "CRPQs" and "UC2RPQs", the question is far more challenging, and it has only been solved for the case $k = 1$ 
by Barceló, Romero, and Vardi \cite[Theorem 6.1]{BarceloRV16}; the case $k>1$ was left widely open
\cite[\S 7]{BarceloRV16}.

Furthermore, classes of "CQs" of bounded "semantic tree-width" precisely characterize tractable (and "FPT") "evaluation problem" \cite[Theorem~1.1]{Grohe07}.
This result is on bounded-arity schemas, which was later generalized \cite[Theorem~1]{DBLP:conf/ijcai/ChenGLP20} for characterizing "FPT" "evaluation" on arbitrary schemas---by replacing "semantic tree-width" with semantic ``submodular width'' \cite{Marx13}.

The problem of computing "maximal under-approximations" of "CQs" of a given "tree-width" has been explored in \cite{DBLP:journals/siamcomp/BarceloL014}.
A "maximal under-approximations" of tree-width at most $k$ of a "CQ" $\gamma$ consists of a
"CQ" $\delta_k$ of "tree-width" at most $k$, which under-approximates it,
"ie" $\delta_k$ is "contained" in $\gamma$, and which is maximal, in the sense that for every "CQ" 
$\delta'$, if $\delta'$ has "tree-width" at most $k$ and is "contained" in $\gamma$, then 
$\delta'$ is "contained" in $\delta_k$. 
"Maximal under-approximations" of a given "tree-width" for "CQs" always exist \cite{DBLP:journals/siamcomp/BarceloL014} and thus, a "CQ" is "semantically equivalent"
to a "CQ" of "tree-width" at most $k$ if, and only if, it is equivalent to its maximal under-approximation of "tree-width" at most $k$. Our solution to decide the
"semantic tree-width $k$ problem" for "UC2RPQs" is based on this idea.

While "maximal under-approximations" always exist for "CQs", this is not the case for the dual notion of ``minimal over-approximations''. The problem of when these exist is still unknown to be decidable, aside for some the special cases of acyclic "CQs" and Boolean "CQs" over binary schemas \cite{DBLP:journals/mst/BarceloRZ20}.

\subsection{\AP{}Contributions}
Here we solve both the "semantic tree-width $k$ problem" and "one-way semantic tree-width $k$ problem" for every $k$ with one unifying approach.
\begin{theorem}
    \AP\label{thm:decidability-semtw}
    For each $k \geq 1$, the "semantic tree-width $k$ problem" and the "one-way semantic tree-width $k$ problem" are decidable. Moreover, these problems are in "2ExpSpace" and are "ExpSpace"-hard.
	When $k=1$, the problems are in fact "ExpSpace"-complete.
\end{theorem}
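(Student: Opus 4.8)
The plan is to reduce the semantic tree-width $k$ problem to a single containment test against the \emph{maximal under-approximation} of tree-width $k$, mirroring the strategy recalled for CQs in the related-work discussion. Write $\MUA{\Gamma}{\Tw}$ for this under-approximation, namely the (a priori infinitary) union of all queries of tree-width at most $k$ that are contained in $\Gamma$. By construction $\MUA{\Gamma}{\Tw} \contained \Gamma$, so $\Gamma$ has semantic tree-width at most $k$ if and only if $\Gamma \semequiv \MUA{\Gamma}{\Tw}$, which --- since one containment is free --- holds if and only if $\Gamma \contained \MUA{\Gamma}{\Tw}$. The whole problem thus splits into: (i)~producing a finite, effective representation of $\MUA{\Gamma}{\Tw}$ as a genuine UC2RPQ of tree-width at most $k$, and (ii)~deciding the containment $\Gamma \contained \MUA{\Gamma}{\Tw}$.

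For (i) I would represent $\MUA{\Gamma}{\Tw}$ by the union of all \emph{refinements} of $\Gamma$ that have tree-width at most $k$: since every low-tree-width query contained in $\Gamma$ factors --- up to homomorphic equivalence of expansions --- through such a refinement, this union is already maximal. Using the expansion characterization of containment ($\Gamma \contained \Gamma'$ iff every $\xi \in \Exp(\Gamma)$ admits some $\xi' \in \Exp(\Gamma')$ with $\xi' \homto \xi$), condition (ii) becomes: every expansion of $\Gamma$ is homomorphically covered by an expansion of some tree-width-at-most-$k$ refinement of $\Gamma$. The central ingredient --- the Key Lemma --- is to bound the refinement length: I would show that along each non-branching path of a width-$k$ tree decomposition of a covering expansion, the only relevant data is a bounded-size type recording the states reached in the NFAs of $\Gamma$, so that once a type repeats the path may be contracted without destroying the covering homomorphism. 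This bounds refinement lengths exponentially in $\size{\Gamma}$, making $\MUA{\Gamma}{\Tw}$ representable by a UC2RPQ of tree-width at most $k$ of at most exponential size.

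Given the finite representation, (ii) is a single UC2RPQ containment $\Gamma \contained \MUA{\Gamma}{\Tw}$, which I would settle by invoking the known \expspace\ algorithm for UC2RPQ containment; since the right-hand side is of exponential size in $\size{\Gamma}$, this yields the \twoexpspace\ bound. For $k = 1$ the width-$1$ decompositions degenerate to trees of atoms and the type bookkeeping collapses, so the representation stays small enough for the containment test to run in \expspace, recovering the result of Barceló, Romero and Vardi and matching the lower bound. The construction is insensitive to one-wayness: restricting expansions and refinements to words over $\A$ rather than $\Aext$ yields the one-way variants with no change, and replacing tree decompositions by path decompositions throughout gives the corresponding path-width analogues.

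For the lower bound I would reduce from the \expspace-complete containment problem for UC2RPQs, building from a containment instance a query that is equivalent to a tree-width-at-most-$k$ query precisely when the containment holds; concretely, one attaches a high-tree-width ``clique-like'' gadget that is absorbed into a low-tree-width part exactly under the target containment and is otherwise irreducible. The main obstacle throughout is the Key Lemma of step (i): the infinitary union defining $\MUA{\Gamma}{\Tw}$ ranges over refinements of unbounded length, and one must prove that the profile/type summarization is both sound and complete with respect to the covering homomorphism --- it is exactly the robustness of this argument that lets the same machinery handle path-width and the one-way restrictions uniformly.
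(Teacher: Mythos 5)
Your overall skeleton---reducing the problem to a single containment $\Gamma \contained \Gamma'$ against a finite, computable maximal under-approximation, proving a Key Lemma that bounds refinement length via a pigeonhole on bounded types along non-branching paths of a width-$k$ decomposition, invoking an \expspace{} containment test to get \twoexpspace, and a lower bound built from UC2RPQ containment with a clique gadget over a fresh letter---is indeed the paper's. But your construction of $\Gamma'$ has a genuine gap. You represent $\MUA{\Gamma}{\Tw}$ as the union of all \emph{refinements} of $\Gamma$ that have tree-width at most $k$, claiming that every low-tree-width query contained in $\Gamma$ factors through such a refinement. It does not: it factors through a \emph{strong onto homomorphic image} of a refinement, and the image step cannot be dropped. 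The paper's object is $\MUAHom{\Gamma}{\Tw} = \{\alpha \in \Tw \mid \exists \rho \in \Refin(\Gamma),\ \exists \fun\colon \rho \surj \alpha\}$: the refinement $\rho$ may have large tree-width, and only its image $\alpha$ must lie in $\Tw$. Since a refinement merely subdivides atoms into paths (or collapses them via equality atoms, available only when $\varepsilon$ belongs to the language), it preserves tree-width exactly whenever the tree-width is at least $2$. So in \Cref{ex:CRPQ-tw3-stw2}, where no language contains $\varepsilon$, the directed $4$-clique query $\gamma(\bar x)$ has semantic tree-width $2$, yet every one of its refinements has tree-width $3$: your union is empty, and your algorithm answers no on a yes-instance. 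The approximations $\delta,\delta'$ there arise only after \emph{merging} variables of a refinement (\Cref{ex:CRPQ-tw3-stw2-contd}). Moreover, the maximality of the corrected object is not automatic either: one needs $\MUA{\Gamma}{\Tw} \semequiv \MUAHom{\Gamma}{\Tw}$, which the paper derives from closure of $\Tw$ under expansions and subqueries (\Cref{obs:equivalence_under_approx_homomorphism,coro:equivalence_under_approx_homomorphism_twk}), a point absent from your argument.

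Second, the case $k=1$ is not the degenerate easy case you describe; it is precisely where this machinery breaks. Tree-width $1$ is \emph{not} closed under refinements ($x \atom{a^*} x$ has tree-width $0$, while its refinement $x \atom{a^*} t_1 \atom{a^*} t_2 \atom{a^*} x$ has tree-width $2$), so $\MUA{\gamma}{\Tw[1]} \not\semequiv \MUAHom{\gamma}{\Tw[1]}$ (\Cref{ex:counterex-tw1}), and no bound on refinement length can repair this. The paper instead introduces \emph{contracted} tree-width: the class $\ContrTw[1]$ is closed under refinements, satisfies $\MUA{\gamma}{\Tw[1]} \semequiv \MUAHom{\gamma}{\ContrTw[1]}$ (\Cref{fact:tw-equiv-to-ctw}), and admits a Key Lemma with a polynomial length bound (\Cref{lemma:bound_size_refinements_tw}), which is what yields the \expspace{} upper bound for $k=1$. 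For the same reason, your claim that one-wayness comes for free fails at $k=1$: semantic and one-way semantic tree-width $1$ are genuinely different problems (\Cref{rk:closure-under-sublanguages-k1}); for $k\geq 2$ the collapse does hold, but it is a theorem (\Cref{thm:closure-under-sublanguages,coro:collapse-twoway-oneway-semtw}) resting on the fact that the Key Lemma's construction only introduces sublanguages and condensations of the input NFAs---not a mere restriction of alphabets. A smaller discrepancy: your types record NFA states, whereas the paper's pigeonhole is on profiles recording which atoms of $\gamma$ leave a bag at which variable; state information alone does not tell you how to re-attach the condensed atom refinements while preserving both the covering homomorphism and the width of the decomposition.
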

In \Cref{sec:maximal-under-approximations} (\Cref{lem:sem-tw-in-twoexp}),
we prove the upper bound for $k\geq 2$, by relying on the so-called ``"Key Lemma"'', which is our main technical result, and is proven in \Cref{sec:treedec,sec:proof-key-lemma}.
The upper bound for the case $k=1$---which was already proven in \cite{BarceloRV16} for the (two-way) "semantic tree-width $1$ problem"---is shown in \Cref{sec:acyclic-queries} (\Cref{cor:sem-tw-1-pb-exp-c}). The lower bound is shown in \Cref{sec:lowerbound} (\Cref{lemma:lowerbound}).

The "Key Lemma" (\Cref{lemma:bound_size_refinements}) essentially states that
every "UC2RPQ" has a computable ``maximal under approximation'' by a "UC2RPQ" of "tree-width" $k$ and that this approximation is well-behaved with respect to the class of languages used to label the queries under some mild assumptions on it (being ``"closed under sublanguages"''). Let us first
explain this assumption before formalizing the statement above (stated as \Cref{cor:mua-exists-effective}).

For a class $\+L$ of languages, let $\intro*\UCtwoRPQ(\+L)$ denote the class of all "UC2RPQs" whose atoms are all labelled by languages from $\+L$.
\AP For an NFA $\+A$ and two states $q,q'$ thereof, we denote by $\intro*\subaut{\+A}{q}{q'}$ the ""sublanguage"" of $\+A$ recognized  when considering $\set{q}$ as the set of initial states and $\set{q'}$ as the set of final states.
\AP We say that $\+L$ is ""closed under sublanguages"" if
(i) it contains every language of the form $\{a\}$,
where $a \in \A$ is any (positive) letter such that either $a$ or $a^-$ occur in a word of a
language of $\+L$, and (ii) for every language $L \in \+L$ there exists an NFA $\+A_L$ accepting $L$ such that every "sublanguage" $\subaut{\+A_L}{q}{q'}$ distinct from $\emptyset$ and
$\{\varepsilon\}$ belongs to $\+L$.

To the best of our knowledge,
all classes of regular expressions that have been considered in the realm of regular path queries (see, "eg", \cite[\S1]{FigueiraGKMNT20}) are "closed under sublanguages". In particular, this is
the case for the class $\bigl\{ \{a_1+\hdots+a_n\} \mid a_1,\hdots,a_n
\in \A \bigr\} \cup \bigl\{ a^* \mid a \in \A \bigr\}$, which will be our focus of study in \Cref{sec:sre}. Moreover, even if some class $\+L$
is not "closed under sublanguages", such as $\{(aa)^*\}$,
then it is contained in a minimal class "closed under sublanguages"---$\{a, a(aa)^*, (aa)^*\}$ in 
this example.

We can now state the main implication of the "Key Lemma" (whose formal statement requires some extra definitions).
\begin{restatable*}[Existence of the "maximal under-approximation"]{cor}{muaexistseffective}
    \AP\label{cor:mua-exists-effective}
    For each $k \geq 2$, for each class $\+L$ "closed under sublanguages",
    and for each query $\Gamma \in \UCtwoRPQ(\+L)$,
    there exists $\Gamma' \in \UCtwoRPQ(\+L)$ of "tree-width" at most $k$ 
    such that
    $\Gamma' \contained \Gamma$, and for every $\Delta \in \UCtwoRPQ$, if $\Delta$ has
    "tree-width" at most $k$ and $\Delta \contained \Gamma$, then $\Delta \contained \Gamma'$.
    Moreover, $\Gamma'$ is computable from $\Gamma$ in "ExpSpace".
\end{restatable*}

As a consequence of \Cref{cor:mua-exists-effective,prop:crpq-bound-tree-width-upper-bound}, we have that queries of bounded "semantic tree-width" have tractable evaluation.
\begin{restatable*}[FPT evaluation for bounded "semantic tree-width"]{cor}{fptEvalBoundedSemTreeWidth}
	\AP\label{coro:fpt-eval-bounded-semtreewidth}
	For each $k\geq 1$, the "evaluation problem" for "C2RPQs" of "semantic tree-width"
	at most $k$ is fixed-parameter tractable---"FPT"---when parametrized in the size of
	the query. More precisely on input $\langle \Gamma, G \rangle$,
	the algorithm runs in time
	$\+O(f(\size{\Gamma})\cdot |G|^{k+1} \cdot \log{|G|})$ on a Turing machine, where $f$ is a doubly-exponential function---or $\+O(f(\size{\Gamma})\cdot |G|^{k+1})$ under a RAM model.
\end{restatable*}
Note that \cite[Theorem~22]{FGM24} shows that the statement above can be improved to have a single-exponential function $f$.

Moreover, we also show that for any class $\+L$ of regular languages "closed under sublanguages", if $\Gamma \in \UCtwoRPQ(\+L)$ has "semantic tree-width" $k > 1$,
then $\Gamma$ is equivalent to a $\UCtwoRPQ(\+L)$ of "tree-width" at most $k$.
Analogous characterizations hold for $k=1$ and/or "path-width", see \Cref{coro:charact-semantic-treewidth-1,coro:charact-semantic-pathwidth-k}.
\proofappendixtrue
\begin{restatable*}{thm}{closureundersublanguages}
    \AP\label{thm:closure-under-sublanguages}
    \AP Assume that $\+L$ is "closed under sublanguages". 
    For any $k > 1$ and any query $\Gamma \in \UCtwoRPQ(\+L)$, the following are equivalent:
    \begin{enumerate}
        \itemAP[\introstarinrestatable{\itemClosureInfCQ}] $\Gamma$ is "equivalent" to an "infinitary union" of "conjunctive queries"
            of "tree-width" at most $k$; \label{thm:closure-under-sublanguages:1}
        \itemAP[\introstarinrestatable{\itemClosureUCRPQ}] $\Gamma$ has "semantic tree-width" at most $k$; \label{thm:closure-under-sublanguages:2}
        \itemAP[\introstarinrestatable{\itemClosureUCRPQSimple}] $\Gamma$ is "equivalent" to a $\UCtwoRPQ(\+L)$ of "tree-width" at most $k$. \label{thm:closure-under-sublanguages:3}
    \end{enumerate}
\end{restatable*} 
\proofappendixfalse

The implications $\itemClosureUCRPQSimple \Rightarrow \itemClosureUCRPQ \Rightarrow \itemClosureInfCQ$ immediately follow
from the definition of the "semantic tree-width".
On the other hand, the
implications $\itemClosureInfCQ \Rightarrow \itemClosureUCRPQ$ and $\itemClosureUCRPQ \Rightarrow \itemClosureUCRPQSimple$ are surprising,
since they are both trivially false when $k=1$. We defer the proof of this last claim
to \Cref{rk:closure-under-sublanguages-k1} as we first need a few tools to manipulate "CRPQs".

The previous theorem, together with the high complexity of "semantic tree-width $k$ problem",
motivates us to focus on the case of "CRPQs" using some "simple regular expressions" ("SRE") in \Cref{sec:sre}, where we show that the complexity of this problem is much lower.

\begin{restatable*}{thm}{thmSemTwSREpitwo}
    \AP\label{thm:semtw-sre-pitwo}
    For $k\geq 2$, the "semantic tree-width $k$ problem" for {\UCRPQSRE} is in "Pi2".
\end{restatable*}

We then study the problem of $k=1$: at first glance, our proof for $k\geq 2$ of
\Cref{thm:decidability-semtw} does not capture this case, for a technical---yet crucial---reason. In
\Cref{sec:acyclic-queries}, we explain how to adapt our proof to capture it: 
and show the decidability the "semantic tree-width 1 problem"---which was already studied by 
Barceló, Romero and Vardi \cite{BarceloRV16}---and of the "one-way semantic tree-width 1 problem". 

Building on the same idea, we show in \Cref{sec:semantic-path-width} that our results extend to "path-width".
\begin{restatable*}{thm}{decidabilitySemPw}
    \AP\label{thm:decidability-sempw}
    For each $k \geq 1$,
    the "semantic path-width $k$ problems" are decidable. Moreover, they lie in "2ExpSpace"
    and are "ExpSpace"-hard. Moreover, if $k=1$, these problems are in fact "ExpSpace"-complete.
\end{restatable*}
In turn, this leads to an evaluation algorithm with a remarkably low complexity.
\begin{restatable*}{thm}{paraNLEvalBoundedSemPathWidth}
	\AP\label{thm:evaluation-bounded-pathwidth}
	For each $k \geq 1$, the "evaluation problem", restricted to "UC2RPQs" of
	"semantic path-width" at most $k$ is in "para-NL" when parametrized in the size of the query.
	More precisely, the problem, on input $\langle \Gamma, G \rangle$, can be solved in
	non-deterministic space $f(|\Gamma|) + \log(|G|)$, where $f$ is a single exponential
	function.
\end{restatable*}

\begin{figure}
	\centering
	\begin{subfigure}{.45\linewidth}
		\centering 
		\scalebox{1.125}{
		\begin{tikzpicture}
			\node at (0,0) {\includegraphics[scale=.8]{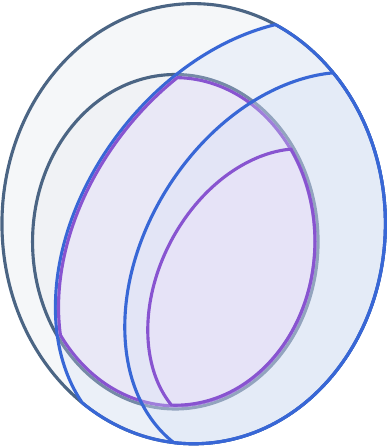}};
			\node[font=\tiny, align=center] at (.6,-.7)
				{\kl[one-way semantic tree-width]{\color{cPurple}1way sem.}\\ \kl[one-way semantic tree-width]{\color{cPurple}tw 1}};
			\node[font=\tiny, align=center] at (-1,.3)
				{\kl[one-way semantic tree-width]{\color{cPurple}1way}\\ \kl[one-way semantic tree-width]{\color{cPurple}sem.}\\ \kl[one-way semantic tree-width]{\color{cPurple}tw $k$}};
			\node[font=\tiny, align=center] at (1.9,.9)
				{\kl[semantic tree-width]{\color{cBlue}sem.}\\ \kl[semantic tree-width]{\color{cBlue}tw 1}};
			\node[font=\tiny, align=center] at (1,2.2)
				{\kl[semantic tree-width]{\color{cBlue}sem.}\\ \kl[semantic tree-width]{\color{cBlue}tw $k$}};
			\node[font=\tiny] at (-1.6,.8) {\kl[CRPQ]{\rotatebox{60}{\color{cDarkGrey}CRPQs}}};
			\node[font=\tiny] at (-.5,2.4) {\kl[C2RPQ]{\color{cDarkGrey}C2RPQs}};
		\end{tikzpicture}
		}
		\caption{
			\AP\label{fig:taxonomy-semantic-tw}
			Semantic classes of "C2RPQs" related to "tree-width".
		}
	\end{subfigure}
	\hfill
	\begin{subfigure}{.45\linewidth}
		\centering 
		\scalebox{1.125}{
		\begin{tikzpicture}
			\node at (0,0) {\includegraphics[scale=.8]{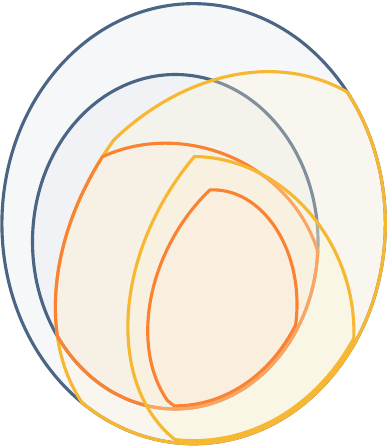}};
			\node[font=\tiny, align=center] at (.4,-1)
				{\kl[one-way semantic path-width]{\color{cOrange}1way sem.}\\ \kl[one-way semantic path-width]{\color{cOrange}pw 1}};
			\node[font=\tiny, align=center] at (-1.1,0)
				{\kl[one-way semantic path-width]{\color{cOrange}1way}\\ \kl[one-way semantic path-width]{\color{cOrange}sem.}\\ \kl[one-way semantic path-width]{\color{cOrange}pw $k$}};
			\node[font=\tiny] at (1,-2.35) {\kl[semantic path-width]{\rotatebox{33}{\color{cYellow}sem. pw 1}}};
			\node[font=\tiny, align=center] at (1.9,.9)
				{\kl[semantic path-width]{\color{cYellow}sem.}\\ \kl[semantic path-width]{\color{cYellow}pw $k$}};
			\node[font=\tiny] at (-1.6,.8) {\kl[CRPQ]{\rotatebox{60}{\color{cDarkGrey}CRPQs}}};
			\node[font=\tiny] at (-.5,2.4) {\kl[C2RPQ]{\color{cDarkGrey}C2RPQs}};
		\end{tikzpicture}
		}
		\caption{
			\AP\label{fig:taxonomy-semantic-pw}
			Semantic classes of "C2RPQs" related to "path-width".
		}
	\end{subfigure}
	\caption{
		\AP\label{fig:taxonomy-semantic}
		Clickable taxonomy of semantic classes studied in this paper, where $k \geq 2$.
	}
\end{figure}
Interestingly, the proof for "tree-width" 1 and "path-width" $k$ ($k \geq 1$)
can be derived from the proof from "tree-width" $k\geq 2$ but necessitates an
additional technical trick which yields different closure properties (or lack thereof).
We show that a "UCRPQ" has "semantic tree-width" at most $k$ if, and only if, it
has "one-way semantic tree-width" at most $k$ whenever $k \geq 2$
(\Cref{coro:collapse-twoway-oneway-semtw}). In other words, if the original query does not
use "two-way navigation", then considering "UC2RPQs" does not help to further minimize the "tree-width". Interestingly, this is false for $k=1$ ("cf" \Cref{rk:closure-under-sublanguages-k1},
also \cite[Proposition 6.4]{BarceloRV16}) and for "path-width", no matter the value of $k\geq 1$ (see \Cref{rk:path-width:oneway-vs-twoway}). Overall, this leads to
the landscape depicted in \Cref{fig:taxonomy-semantic}.

Finally, we conclude in \Cref{sec:discussion}.
We provide a \emph{partial} characterization \emph{à la} Grohe of classes of "UC2RPQs" which admit a tractable evaluation in \Cref{sec:charact-tractability}.
\begin{restatable*}{thm}{thmtractabilityfinred}
    \AP\label{thm:tractability-finred}
    Assuming $\wone \neq $ "FPT", for any recursively enumerable class $\+C$ of "finitely-redundant" Boolean "UC2RPQs", the "evaluation problem" for $\+C$ is "FPT" if, and only if, $\+C$ has bounded "semantic tree-width".
\end{restatable*}
We also discuss open questions, ranging from complexity
questions (\Cref{sec:discussion-complexity}) to extensions of our results to bigger classes
or larger settings (\Cref{sec:discussion-larger-classes,sec:discussion-different-notions}).

\subsection{\AP{}Conference Paper}
\label{sec:conf-paper-diff}
The current article is based on the conference paper \cite{thispaperICDT}. The main results for "tree-width" $k>1$ are essentially the same---though with improved explanations and figures, and we fixed some minor bugs in the proof of the "Key Lemma". Here we also show how to extend our techniques to tackle the "semantic tree-width $1$ problem" (\Cref{sec:acyclic-queries}) and we introduce and study the "semantic path-width $k$ problems" (\Cref{sec:semantic-path-width}). Our very partial lift of Grohe's characterization of "FPT" classes of queries (\Cref{thm:tractability-finred}) is also new.

\section{\AP{}Preliminaries}
\label{sec:prelim}

Before attacking the statement of our "Key Lemma" in \Cref{sec:maximal-under-approximations},
we first give a few elementary definitions on "C2RPQs" in this section.
\AP
We write $\Nat$ to denote $\set{0,1,\dotsc}$ and $\intro*\lBrack i,j \intro*\rBrack$ to denote $\set{ n \in \Nat : i \leq n \leq j}$.
\AP
A ""homomorphism"" $\fun$ from a "C2RPQ" $\gamma(x_1, \dotsc, x_m)$ to a "C2RPQ" $\gamma'(y_1, \dotsc, y_m)$ is a mapping from $\vars(\gamma)$ to $\vars(\gamma')$ such that $\fun(x) \atom{L} \fun(y)$ is an "atom" of $\gamma'$ for every "atom" $x \atom{L} y$ of $\gamma$, and further $\fun(x_i)=y_i$ for every $i$.
Such a "homomorphism" $\fun$ is \AP""strong onto"" if for every "atom" $x' \atom{L} y'$ of $\gamma'$ there is an "atom" $x \atom{L} y$ of $\gamma$ such that $\fun(x)=x'$ and $\fun(y)=y'$.
An example of "homomorphism" is provided in \Cref{fig:basic-hom}.
We write $\gamma \intro*\homto \gamma'$ if there is a "homomorphism" from $\gamma$ to $\gamma'$, and $\gamma \intro*\surj \gamma'$ if there is a "strong onto homomorphism".
In the latter case, we say that $\gamma'$ is a \AP""homomorphic image"" of $\gamma$.
It is easy to see that if $\gamma \homto \gamma'$ then $\gamma' \contained \gamma$, and in the case where $\gamma,\gamma'$ are "CQs" this is an ``if and only if'' \cite[Lemma 13]{DBLP:conf/stoc/ChandraM77}.

\paragraph*{Some intuitions on maximal under-approximations}
Given a "conjunctive query" $\gamma$,
the union of all "conjunctive queries"
that are "contained" in $\gamma$ is "semantically equivalent" to the union
$\bigvee \{ \gamma' \mid \gamma \surj \gamma' \}$. Naturally, this statement borders on the trivial since $\gamma'$ belongs to this union. It becomes interesting when we add a restriction:
given a class $\class$ of "CQs" (to which $\gamma$ may not belong) closed under "subqueries", then $\Gamma' \defeq \bigvee \{ \gamma' \in \class \mid \gamma \surj \gamma' \}$ is the maximal under-approximations
of $\gamma$ by finite unions of "conjunctive queries" of $\class$, in the following sense:
\begin{enumerate}[i.]
	\item (finite) $\Gamma'$ is a finite union of "CQs" of $\class$,
	\item (under-approximation) $\Gamma' \contained \gamma$, and
	\item (maximality) for any finite union $\Delta$ of "CQs" of $\class$, if $\Delta \contained \gamma$, then $\Delta \contained \Gamma'$.
\end{enumerate}

\begin{proof}
Only the last point is non-trivial, and follows from the fact that if
$\Delta \contained \gamma$, then for each $\delta \in \Delta$, $\delta \contained \gamma$,
so there is a "homomorphism" $f\colon \gamma \to \delta$. The image $\delta'$
of $f$ is a "subquery" of $\delta$, and $\+C$ is closed under "subqueries",
so it belongs to $\+C$, and hence to $\Gamma'$. Since there is a trivial homomorphism
from $\delta'$ to $\delta$, we moreover have that $\delta \contained \delta'$.
Hence, for each "CQ" $\delta \in \Delta$, there is a CQ $\delta' \in \Gamma'$ such
that $\delta \contained \delta'$, and hence $\Delta \contained \Gamma'$.
\end{proof}

As a consequence, we deduce that for each $k \geq 1$,
the "maximal under-approximation" of a "CQ" by
a finite union of "CQs" of "tree-width" at most $k$ is computable, and hence
we can effectively decide if some "CQ" is "equivalent" to a query of "tree-width" at
most $k$ by testing the equivalence with this maximal under-approximation.
For more details on approximations of "CQs", see \cite{DBLP:journals/siamcomp/BarceloL014}.
Note that interestingly, changing $\Gamma'$ from
$\bigvee \{ \gamma' \in \class \mid \gamma \surj \gamma' \}$
to $\bigvee \{ \gamma' \in \class \mid \gamma' \contained \gamma \}$
preserves both under-approximation and maximality, but $\Gamma'$ is now an infinite
union of "CQs" of $\+C$.

Unfortunately, these results cannot be straightforwardly extended to "conjunctive regular
path queries" since the previous proof implicitly relied on two points:
\begin{enumerate}
	\item the equivalence between the
	"containment" $\gamma' \contained \gamma$ and the existence of a "homomorphism"
	$\gamma \homto \gamma'$, and
	\item the possibility to restrict $\gamma'$ to its image $\gamma \homto \gamma'$ while 
	obtaining a semantically bigger query.
\end{enumerate}
These two crucial ingredients is what allows us to build a finite set $\Gamma'$ from $\gamma$.
For "CRPQs", the second point still holds, but not the first one.
For instance, the "CQ" $\gamma(x,y) = x \atom{a} z \atom{b} y$ is
contained in (in fact "equivalent" to) the "CRPQ" $\gamma'(x,y) = x \atom{ab} y$,
but there is no "homomorphism" from $\gamma'(x,y)$ to $\gamma(x,y)$.
Our main result shows that to find "maximal under-approximations" of "C2RPQs",
it suffices to take "homomorphic images" of so-called ``"refinements"'' of $\gamma$,
instead of "homomorphic images" of $\gamma$ itself. The next paragraphs are devoted to
introducing "refinements" and tools related to them.

\paragraph*{Equality Atoms}
\AP"C2RPQs" with ""equality atoms"" are queries of the form $\gamma(\bar{x}) = \delta \land I$, 
where $\delta$ is a "C2RPQ" (without equality atoms) and $I$ is a conjunction of "equality atoms" of the form $x=y$. 
Again, we denote by $\vars(\gamma)$ the set of variables appearing in the (equality and non-equality) atoms of $\gamma$. 
We define the binary relation $=_\gamma$ over $\vars(\gamma)$ to be the reflexive-symmetric-transitive closure of the binary relation $\{(x, y) \mid \text{$x=y$ is an "equality atom" in $\gamma$}\}$. 
In other words, we have $x=_\gamma y$ if the equality $x=y$ is forced by the "equality atoms" of $\gamma$. 
Note that every "C2RPQ" with "equality atoms" $\gamma(\bar{x}) = \delta \land I$ is equivalent to a "C2RPQ" without "equality atoms"  $\gamma^{\collapse}$, 
which is obtained from $\gamma$ by collapsing each equivalence class of the relation $=_\gamma$ into a single variable. 
This transformation gives us a \emph{canonical} renaming from $\vars(\gamma)$ to $\vars(\gamma^{\collapse})$. For instance, $\gamma(x,y) \defeq x \atom{K} y \land y \atom{L} z \land x = y$
collapses to $\gamma^{\collapse}(x,x) \defeq x \atom{K} x \land x \atom{L} z$.

\paragraph*{Refinements}
\AP An ""atom $m$-refinement"" of a "C2RPQ" "atom" $\gamma(x,y) = x \atom{L} y$ where $L$ is given by the NFA $\+A_L$ is any "C2RPQ" of the form 
\begin{equation}
    \AP\label{eq:refinement}
    \rho(x,y) = x \atom{L_1} t_1 \atom{L_2} \hdots \atom{L_{n-1}} t_{n-1} \atom{L_n} y
\end{equation}
where $1 \leq n \leq m$, $t_1,\hdots,t_{n-1}$ are fresh (existentially quantified) variables,
and $L_1,\hdots,L_n$ are such that there exists a sequence $(q_0,\dotsc,q_n)$ of states of $\+A_L$
such that $q_0$ is initial, $q_n$ is final, and for each $i$, $L_i$ is either of the form
\begin{enumerate}[(i)]
	\item $\subaut{\+A_L}{q_i}{q_{i+1}}$,
	\item $\{a\}$ if the letter $a\in \A$ belongs to $\subaut{\+A_L}{q_i}{q_{i+1}}$, or 
	\item $\{a^{-}\}$ if $a^{-} \in \A^{-}$ belongs to $\subaut{\+A}{q_i}{q_{i+1}}$.
\end{enumerate}
Additionally, if $\epsilon \in L$, the "equality atom" ``$x = y$'' is also an \reintro{atom $m$-refinement}. Thus, an \reintro{atom $m$-refinement} can be either of the form \eqref{eq:refinement} or ``$x=y$''.
By convention, $t \atom{a^{-}} t'$ is a shorthand for $t' \atom{a} t$. As a consequence,
the underlying graph of an "atom $m$-refinement" of the form \eqref{eq:refinement} is not necessarily a directed path.
By definition, note that
$L_1\cdots L_n \subseteq L$ and hence $\rho \contained \gamma$ for any "atom $m$-refinement" $\rho$ of $\gamma$.
An \AP""atom refinement"" is an "atom $m$-refinement" for some $m$.
An example is provided in \Cref{fig:basic-refinement}.
\begin{figure}
	\centering
	\begin{subfigure}{.45\linewidth}
		\centering
		\includegraphics[scale=.85]{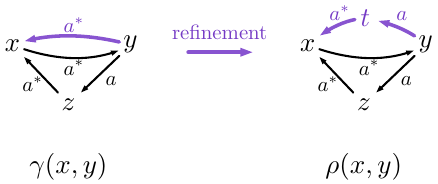}
		\caption{\AP\label{fig:basic-refinement}A "refinement".}
	\end{subfigure}
	\hfill
	\begin{subfigure}{.45\linewidth}
		\centering
		\includegraphics[scale=.85]{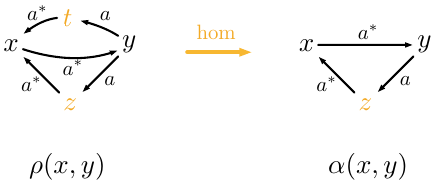}
		\caption{\AP\label{fig:basic-hom}A "strong onto homomorphism".}
	\end{subfigure}
	\caption{
		\AP\label{fig:basic-approx}
		"Refinements" and "homomorphisms" of "C2RPQs".
	}
\end{figure}

\begin{definition}
    \AP\label{def:atom-contraction}
    \AP Given an "atom refinement" $\rho = x \atom{L_1} t_1 \atom{L_2} \hdots \atom{L_{n-1}} t_{n-1} \atom{L_n} y$ of $\gamma = x \atom{L} y$ as in \eqref{eq:refinement}, define
    a ""condensation"" of $\rho$ between $t_i$ and $t_j$, where $0 \leq i,j \leq n$ and $j > i+1$, as any "C2RPQ" of the form:
    \[
        \rho' = x \atom{L_1} t_1 \atom{L_2} \hdots \atom{L_i} \textcolor{cPurple}{t_i \atom{K} t_j} \atom{L_{j+1}} \hdots
        \atom{L_{n-1}} t_{n-1} \atom{L_n} y
    \]
    such that $\textcolor{cPurple}{K = \+A[q_i,q_j]}$.
	\begin{fact}
		\AP\label{fact:refinement-contained}
		Every "condensation" $\rho'$ of $\rho$ is a "refinement" of $\gamma$, and $\rho \contained \rho' \contained \gamma$.
	\end{fact}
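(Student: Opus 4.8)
The plan is to prove the two assertions of the Fact separately: first, that the "condensation" $\rho'$ is again an "atom refinement" of $\gamma$; and second, the containment chain $\rho \contained \rho' \contained \gamma$. Once $\rho'$ is known to be a "refinement" of $\gamma$, the inclusion $\rho' \contained \gamma$ comes for free, since every "refinement" of $\gamma$ is "contained" in $\gamma$ (this was observed right after the definition of refinement, using $L_1 \cdots L_n \subseteq L$). So the only genuine work is to establish $\rho \contained \rho'$.

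For the first assertion, recall that $\rho$ comes with a witnessing sequence $(q_0, \dotsc, q_n)$ of states of $\+A_L$ with $q_0$ initial, $q_n$ final, and each label $L_{i'}$ ($1 \leq i' \leq n$) contained in the "sublanguage" $\subaut{\+A_L}{q_{i'-1}}{q_{i'}}$, where $t_0 = x$ and $t_n = y$. I would witness $\rho'$ by the shortened sequence $(q_0, \dotsc, q_i, q_j, \dotsc, q_n)$, obtained by deleting the intermediate states $q_{i+1}, \dotsc, q_{j-1}$. This sequence still starts at the initial state $q_0$ and ends at the final state $q_n$; the unchanged labels $L_1, \dotsc, L_i$ and $L_{j+1}, \dotsc, L_n$ keep satisfying conditions (i)--(iii) for exactly the same consecutive state pairs as in $\rho$, while the single new label $K = \subaut{\+A_L}{q_i}{q_j}$ is of type (i) for the new consecutive pair $(q_i, q_j)$. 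Its length drops from $n$ to $n - (j-i-1)$, which is at least $1$ since $j > i+1$, so the length bound of a refinement is preserved. Hence $\rho'$ is an "atom refinement" of $\gamma$, and in particular $\rho' \contained \gamma$.

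It remains to show $\rho \contained \rho'$, and here I would argue directly at the level of the "evaluation" rather than via a "homomorphism", since "containment" of "C2RPQs" is not captured by "homomorphisms" (there is in general no single "atom" of $\rho$ matching $t_i \atom{K} t_j$). Let $G$ be a "graph database" and suppose a tuple "satisfies" $\rho$ via a mapping $\fun \colon \vars(\rho) \to \vertex{G}$, with the paths witnessing its "atoms" taken in $G^{\pm}$. I claim its restriction $\fun'$ to $\vars(\rho') = \vars(\rho) \setminus \set{t_{i+1}, \dotsc, t_{j-1}}$ "satisfies" $\rho'$. All "atoms" of $\rho'$ other than $t_i \atom{K} t_j$ already occur in $\rho$ and involve only variables in the domain of $\fun'$, so they remain satisfied. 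For the atom $t_i \atom{K} t_j$, the witness for $\rho$ provides, for each $i < i' \leq j$, a path in $G^{\pm}$ from $\fun(t_{i'-1})$ to $\fun(t_{i'})$ labelled by some word $w_{i'} \in L_{i'}$; concatenating these paths yields a path from $\fun(t_i)$ to $\fun(t_j)$ labelled by $w_{i+1} \cdots w_j$. Since each $L_{i'} \subseteq \subaut{\+A_L}{q_{i'-1}}{q_{i'}}$, each $w_{i'}$ is read by a run of $\+A_L$ from $q_{i'-1}$ to $q_{i'}$; chaining these runs gives a run from $q_i$ to $q_j$, so $w_{i+1} \cdots w_j \in \subaut{\+A_L}{q_i}{q_j} = K$. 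Thus the concatenated path witnesses $t_i \atom{K} t_j$, so $\fun'$ "satisfies" $\rho'$ and $\rho \contained \rho'$.

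The delicate point is this last containment: it must be done semantically, by exhibiting the concatenated path and verifying that its label lies in $K$ through the concatenated automaton run, precisely because — unlike for "conjunctive queries" — no "homomorphism" from $\rho'$ to $\rho$ need exist. One should also keep in mind that, due to "two-way navigation", all paths live in $G^{\pm}$ and the words $w_{i'}$ range over $\Aext$; but since the condensation language $K$ is itself over $\Aext$, the concatenation argument is unaffected by this.
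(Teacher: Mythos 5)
Your proof is correct, and it is precisely the definition-unfolding that the paper has in mind: the paper states this Fact without proof, treating it as immediate, since the shortened state sequence $(q_0,\dotsc,q_i,q_j,\dotsc,q_n)$ witnesses that $\rho'$ is again an "atom refinement" (whence $\rho'\contained\gamma$ by the already-noted inclusion $L_1\cdots L_n\subseteq L$), and $\rho\contained\rho'$ follows from $L_{i+1}\cdots L_j\subseteq K$ by concatenating witnessing paths. Your additional remark that the containment $\rho\contained\rho'$ must be argued semantically rather than via a "homomorphism" is a correct and worthwhile observation, but it does not constitute a departure from the paper's (implicit) argument.
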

    \AP Informally, we will abuse the notation and
    write $\intro*\contract{L_{i}\cdots L_{j}}$ to denote the language $K$---even if this language
    does not only depend on $L_{i}\cdots L_{j}$.
\end{definition}

\begin{example}
    \AP\label{ex:atom-refinement-twoway}
    Let $\gamma(x,y) = x \atom{(aa^-)^*} y$ be a "C2RPQ" "atom", where
    $(aa^-)^*$ is implicitly represented by its minimal automaton.
    Then $\rho(x,y)$ is a "refinement" of "refinement length" seven of $\gamma(x,y)$
    and $\rho'(x,y)$ is a "condensation" of $\rho(x,y)$, where:
    \begin{align*}
        \rho(x,y) & = x \atom{a} t_1 \atom{(a^-a)^*} t_2 \atom{(a^-a)^*} t_3
            \coatom{a} t_4 \atom{(aa^-)^*} t_5 \atom{(aa^-)^*a} t_6 \coatom{a} y, \\
        \rho'(x,y) & = x \atom{a} t_1 \atom{(a^-a)^*} t_2 \atom{(a^-a)^*} t_3
		\coatom{a} t_4 \atom{(aa^-)^*} y. 
    \end{align*}
    On the other hand, $\rho''(x,y) = x \atom{a} t_1 \coatom{a} y$ is not
    a "condensation" of $\rho(x,y)$.
\end{example}

Given a natural number $m$, an \AP""$m$-refinement"" of a "C2RPQ" $\gamma(\bar x) = \bigwedge_{i} x_i \atom{L_i} y_i$ is any query resulting from: 1) replacing every "atom" by one of its "$m$-refinements@@atom", and 2)
should some "$m$-refinements@@atom" have "equality atoms",
collapsing the variables.
\AP A ""refinement"" is an "$m$-refinement" for some $m$.
Note that any "atom $m$-refinements" is, by definition, also an
"atom $m'$-refinements" when $m \leq m'$: as a consequence, in the "refinement" of a "C2RPQ"
the "atom refinements" need not have the same length.
For instance, both $\rho(x,x) = x \atom{c} x$ and $\rho'(x,y) = x \atom{a} t_1 \atom{a} y \coatom{c} y$ are "refinements" of $\gamma(x,y) = x \atom{a^*} y \coatom{c} x$.

For a given "C2RPQ" $\gamma$, let $\AP\intro*\Refin[\leq m](\gamma)$ be the set of all "$m$-refinements" of $\gamma$, and $\reintro*\Refin(\gamma)$ be the set of all its "refinements".
Given a "refinement" $\rho(\bar x)$ of $\gamma(\bar x)$,
its ""refinement length"" is the least natural number
$m$ such that $\rho(\bar x) \in \Refin[\leq m](\gamma)$.
Note that if the automaton representing a language $L$ has more than one final state, for instance the minimal automaton for $L = a^+ + b^+$,
then $x \atom{L} y$ is not a "refinement" of itself.
However, it will always be "equivalent" to a union of refinements: in
this example, $x \atom{a^+ + b^+} y$ is "equivalent" to the union of
$x \atom{a^+} y$ and $x \atom{b^+} y$, which are both "refinements"
of the original "C2RPQ".

\paragraph*{Expansions}
Remember that a "C2RPQ" whose languages are
of the form $\set{a}$ or $\set{a^-}$ for $a \in \A$ is in effect a "CQ".
The \AP""expansions"" of a "C2RPQ" $\gamma$ is the set $\intro*\Exp(\gamma)$ of all "CQs" which are "refinements" of $\gamma$.
In other words, an "expansion" of $\gamma$ is any "CQ" obtained from $\gamma$
by replacing each "atom" $x \atom{L} y$ by a path $x \atom{w} y$ for some
word $w \in L$.
For instance, $\xi(x,y) = x \atom{a} t_1 \coatom{a} t_2 \atom{a} t_3 \coatom{a} y$
is an "expansion" of $\rho(x,y) = x \atom{(aa^-)^*} y$.

Any "C2RPQ" is equivalent to the infinitary union of its "expansions". In light of this, the semantics for "UC2RPQ" can be rephrased as follows. 
Given a "UC2RPQ" $\Gamma(\bar x)$ and a graph database $G$, 
the "evaluation" of $\Gamma(\bar x)$ over $G$, denoted by $\Gamma(G)$, is the set of tuples 
$\bar{v}$ of nodes for which there is $\anexpansion \in \Exp(\Gamma)$ such that there is a "homomorphism" $\anexpansion \homto G$ that sends $\bar x$ onto $\bar v$.  
Similarly, "containment" of "UC2RPQs" can also be characterized in terms of expansions.

\begin{proposition}[Folklore, see e.g. {\cite[Proposition 3.2]{Florescu:CRPQ}} or
    {\cite[Theorem 2]{CGLV00}}]
    \AP\label{prop:cont-char-exp-st} 
    Let $\Gamma_1$ and $\Gamma_2$ be "UC2RPQs". Then the following are equivalent
    \begin{itemize}
        \item $\Gamma_1 \contained \Gamma_2$;
        \item for every $\anexpansion_1\in \Exp(\Gamma_1)$, $\anexpansion_1 \contained \Gamma_2$;
        \item for every $\anexpansion_1\in \Exp(\Gamma_1)$ there is $\anexpansion_2\in \Exp(\Gamma_2)$ such that $\anexpansion_2\homto \anexpansion_1$. 
    \end{itemize}
\end{proposition}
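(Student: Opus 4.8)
The plan is to prove the two equivalences separately, namely (first bullet) $\Leftrightarrow$ (second bullet) and (second bullet) $\Leftrightarrow$ (third bullet), using as the two main tools the fact recalled just above that every "C2RPQ" is "semantically equivalent" to the "infinitary union" of its "expansions", together with the homomorphism characterization of "containment" for "CQs".

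First I would show that $\Gamma_1 \contained \Gamma_2$ holds if, and only if, every $\anexpansion_1 \in \Exp(\Gamma_1)$ satisfies $\anexpansion_1 \contained \Gamma_2$. For the forward direction, each "expansion" $\anexpansion_1$ is in particular a "refinement" of $\Gamma_1$, so $\anexpansion_1 \contained \Gamma_1$, and chaining with the hypothesis yields $\anexpansion_1 \contained \Gamma_1 \contained \Gamma_2$. For the converse I would use $\Gamma_1 \semequiv \bigvee_{\anexpansion_1 \in \Exp(\Gamma_1)} \anexpansion_1$: for any "graph database" $G$ one has $\Gamma_1(G) = \bigcup_{\anexpansion_1} \anexpansion_1(G)$, and each term is a subset of $\Gamma_2(G)$ by assumption, so $\Gamma_1(G) \subseteq \Gamma_2(G)$. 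In short, "containment" of a (possibly infinite) union reduces to "containment" of each of its members.

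The second equivalence is where the real content lies. For (second) $\Rightarrow$ (third), I would fix an "expansion" $\anexpansion_1$, which is a "CQ" over $\Aext$, and \emph{freeze} it into a canonical "graph database": viewing $\anexpansion_1$ as a graph on the alphabet $\A$ by interpreting, per the convention $t \atom{a^-} t'$ $=$ $t' \atom{a} t$, each backward "atom" as a forward $a$-edge in the opposite direction, call this database $G_{\anexpansion_1}$. Its tuple $\bar x$ of "free variables" then trivially "satisfies" $\anexpansion_1$ on $G_{\anexpansion_1}$ via the identity assignment, so by the hypothesis $\anexpansion_1 \contained \Gamma_2$ the tuple $\bar x$ also "satisfies" $\Gamma_2$ on $G_{\anexpansion_1}$; unfolding the expansion-based semantics of "UC2RPQs" then produces an "expansion" $\anexpansion_2 \in \Exp(\Gamma_2)$ together with a "homomorphism" $\anexpansion_2 \homto G_{\anexpansion_1}$ fixing $\bar x$. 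Since the backward edges of the two-way closure of $G_{\anexpansion_1}$ restore exactly the $a^-$-atoms, this is precisely a "homomorphism" $\anexpansion_2 \homto \anexpansion_1$. Conversely, for (third) $\Rightarrow$ (second), given $\anexpansion_2 \in \Exp(\Gamma_2)$ with $\anexpansion_2 \homto \anexpansion_1$, the easy direction of the homomorphism/containment fact gives $\anexpansion_1 \contained \anexpansion_2$; and as $\anexpansion_2$ is a "refinement" of $\Gamma_2$ we get $\anexpansion_2 \contained \Gamma_2$, whence $\anexpansion_1 \contained \anexpansion_2 \contained \Gamma_2$.

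The main obstacle — really the only point needing care — is the bookkeeping around "two-way navigation" in the freezing step: one must verify that folding the $a^-$-atoms of $\anexpansion_1$ into reversed $a$-edges and then passing to the two-way closure restores exactly the structure of $\anexpansion_1$, so that a "homomorphism" into $G_{\anexpansion_1}$ (as a database) coincides with a "homomorphism" into $\anexpansion_1$ (as a "C2RPQ"). Everything else is a direct application of the two recalled facts, so once this correspondence is pinned down the proposition follows.
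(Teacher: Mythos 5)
The paper gives no proof of this proposition: it is stated as folklore, with pointers to \cite[Proposition 3.2]{Florescu:CRPQ} and \cite[Theorem 2]{CGLV00}, so there is no in-paper argument to compare against. Your proof is the standard folklore argument and it is correct: the first equivalence follows from the facts (both recalled in the paper) that expansions are refinements, hence contained in the query, and that a (U)C2RPQ is equivalent to the infinitary union of its expansions; the second follows by freezing an expansion into its canonical database and unfolding the expansion-based semantics, with the converse direction given by the easy half of the homomorphism--containment correspondence. Your care about folding $a^-$-atoms into reversed $a$-edges is exactly the right bookkeeping point, since the paper's convention that $t \atom{a^-} t'$ abbreviates $t' \atom{a} t$ makes the canonical database of an expansion and the expansion itself literally the same object.
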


Note that since an "expansion" of $\gamma$ is also a "refinement" of $\gamma$, it also
holds that $\gamma$ is "semantically equivalent" to the infinitary union of its "refinements".

Our approach to proving \Cref{thm:decidability-semtw,thm:closure-under-sublanguages}
and the "Key Lemma" heavily rely on "refinements". One crucial property
that these objects satisfy is that they preserve "tree-width" $k$, unless $k=1$,
as illustrated in \Cref{fig:tree-decompositon-expansion}.

\begin{restatable}{fact}{refinementtw}
    \AP\label{fact:refinement-tw}
    Let $k \geq 2$ and let $\gamma$ be a "C2RPQ" of "tree-width" at most $k$.
    Then any "refinement" of $\gamma$ has "tree-width" at most $k$.
\end{restatable}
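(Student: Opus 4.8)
The plan is to argue entirely at the level of the underlying multigraph, since the "tree-width" of a "C2RPQ" depends only on it, and in particular is insensitive to the directions and the language labels of the "atoms". Fix a "C2RPQ" $\gamma$ with underlying multigraph $G_\gamma$ and a "refinement" $\rho$. Inspecting the definition of an "$m$-refinement", the multigraph $G_\rho$ is obtained from $G_\gamma$ by two kinds of local modifications: (1) each "atom" refined into a path $x \atom{L_1} t_1 \atom{L_2} \cdots \atom{L_n} y$ \emph{subdivides} the corresponding edge $\{x,y\}$ by inserting the fresh degree-two vertices $t_1,\dots,t_{n-1}$ (when $n=1$ the edge is left untouched); and (2) each "atom" refined into an "equality atom" ``$x=y$'', together with the subsequent collapse of the relation $=_\gamma$, \emph{identifies} the endpoints of that edge. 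Since each pair identified by $=_\gamma$ is joined by an "atom" of $\gamma$, operation (2) amounts to contracting a set of edges of $G_\gamma$. I would therefore prove the statement by transforming a fixed "tree decomposition" $(T,\bagmap)$ of $\gamma$ of "width" at most $k$ into a "tree decomposition" of $\rho$ of "width" at most $k$, handling (2) first and (1) second.

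For the collapse, I would replace in every "bag" each variable by the representative of its $=_\gamma$-class. This never enlarges a "bag", so the "width" does not increase; and the connectivity condition is preserved because whenever two variables $x,y$ are identified there is an "atom" $x \atom{L} y$ of $\gamma$, hence a "bag" containing both, so the subtrees of "bags" meeting $x$ and meeting $y$ overlap and their union---which is exactly the set of "bags" meeting the merged vertex---remains connected. Iterating over the finitely many generating equalities yields a "tree decomposition" of the collapsed query of "width" at most $k$; equivalently, the collapsed multigraph is a minor of $G_\gamma$ and "tree-width" is minor-monotone.

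For the subdivisions I would process each refined "atom" independently, which is legitimate because the inserted variables $t_1,\dots,t_{n-1}$ are fresh and occur in no other "atom". Let $u,v$ be the (collapsed) endpoints of such an "atom"; there is a "bag" $b$ with $u,v \in \bagmap(b)$. I attach to $b$ a fresh path of "bags" $c_1 - c_2 - \cdots - c_{n}$ with $c_i \defeq \{t_{i-1}, t_i, v\}$, writing $t_0 \defeq u$ and $t_n \defeq v$, so that every new edge $\{t_{i-1},t_i\}$ is covered, each $t_i$ lies in the two consecutive "bags" $c_i,c_{i+1}$, and $v$ lies in $b$ and in all the $c_i$. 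Every new "bag" has at most three vertices, so it has "width" at most $2$; this is the only place where the hypothesis $k \geq 2$ is used, and it is precisely what fails for $k=1$, where a single subdivision can force a new "bag" to exceed "width" $1$. Carrying out all these attachments yields a "tree decomposition" of $\rho$ of "width" $\max(k,2)=k$, proving that $\rho$ has "tree-width" at most $k$.

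I expect the only genuine subtlety---the ``hard part''---to be the bookkeeping of the connectivity condition: one must keep the surviving endpoint $v$ in every new "bag" of the attached path, which is exactly what forces "bags" of size three, and one must check that the collapse step does not silently break connectivity when several equalities chain together, nor when a subdivided "atom" becomes a self-loop after collapsing. Both are routine once the operations are recognised as edge contraction and subdivision, and the argument also makes transparent why the statement is restricted to $k\geq 2$.
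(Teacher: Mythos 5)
Your proof is correct and takes essentially the same approach as the paper's: both decompose a refinement into edge contractions (coming from equality atoms) and edge subdivisions, and handle each subdivided atom by attaching, at a bag containing both endpoints, a path of bags of the form $\{t_{i-1},t_i,v\}$ that retain one endpoint throughout---which is exactly where $k \geq 2$ is needed and why the claim fails for $k=1$. The only cosmetic difference is that the paper cites a known lemma for the contraction step, whereas you verify it directly by replacing variables with their class representatives inside the bags.
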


\begin{figure}
    \centering
	\begin{subfigure}[t]{.4207\textwidth}
		\centering
		\includegraphics*[width=.98\textwidth]{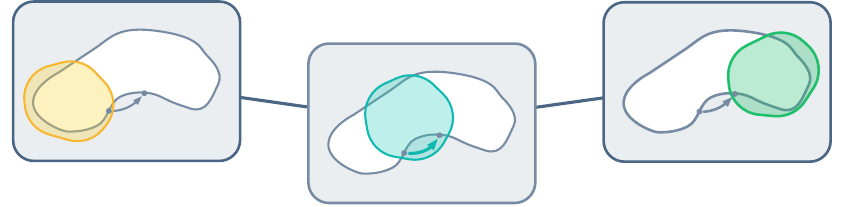}
		\caption{
			\AP\label{subfig:tree-decompositon-before-expansion}
			A multigraph together with a "tree decomposition" of "width" $k$.
		}
	\end{subfigure}
	\hfill
	\begin{subfigure}[t]{.5593\textwidth}
		\centering
		\includegraphics*[width=.98\textwidth]{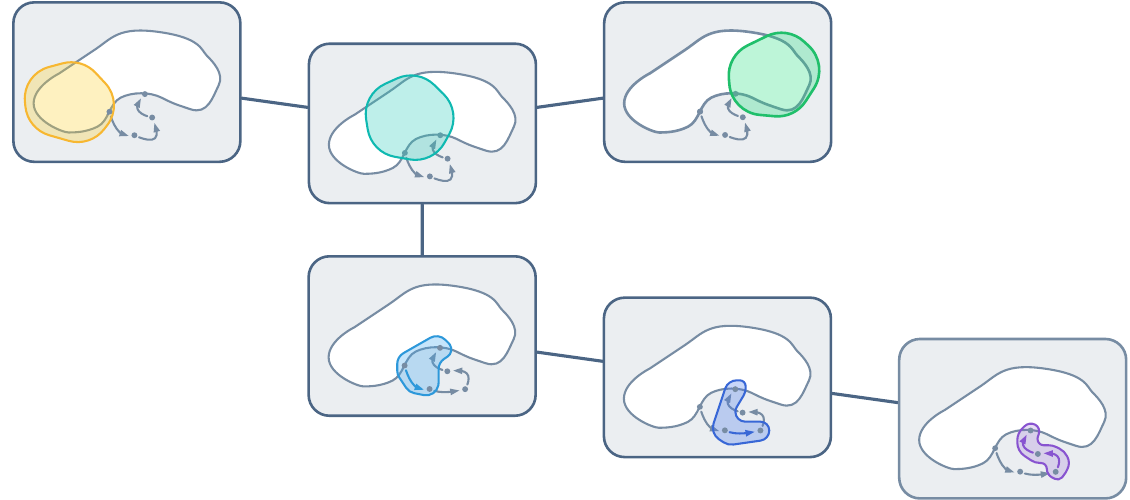}
		\caption{
			\AP\label{subfig:tree-decompositon-after-expansion}
			A "refinement" of the multigraph of \Cref{subfig:tree-decompositon-before-expansion} together with a "tree decomposition" of "width" $\max(k,2)$.
		}
	\end{subfigure}
    \caption{\AP\label{fig:tree-decompositon-expansion} "Refinements" and "expansions" preserve "tree-width" at
	most $k\geq 2$.
    }
\end{figure}
\begin{proof}
    The underlying graph of a "refinement" of $\gamma$ is obtained from the underlying graph
    of $\gamma$ by either contracting some edges (when dealing with "equality atoms"),
	or by replacing
    a single edge by a path of edges (where the non-extremal nodes are new nodes).

	This first operation preserves "tree-width" at most $k$ (even if $k = 1$),
    see "eg" \cite[Lemma 16]{bodlaender1998arboretum}. The second operation
    preserves "tree-width" at most $k$, assuming $k > 1$: if a graph $G'$
    is obtained from a graph $G$ by replacing an edge $x_0 \atom{} x_n$
    by a path $x_0 \atom{} x_1 \atom{} \cdots \atom{} x_n$, then
    from a "tree decomposition" of $G$ it suffices to pick a "bag" containing
    both $x_0$ and $x_n$, and add a branch to the tree, rooted at this "bag",
    and containing bags with nodes
    \begin{align*}
        \{x_0,x_1,x_n\},\,
        \{x_1,x_2,x_n\},\,
		\hdots,\,
        \{x_i,x_{i+1},x_n\},\,
		\hdots,\,
        \{x_{n-2},x_{n-1},x_n\},
    \end{align*}
	as depicted in \Cref{fig:tree-decompositon-expansion}.
    All bags contain exactly three nodes, so we obtain "tree decomposition" of
    $G'$ whose "width" is the maximum between 2 and the "width" of the original
    "tree decomposition" of $G$.
\end{proof}

For $k=1$, the property fails: for instance the "CRPQ" $\gamma(x) = x \atom{a^*} x$
has "tree-width" at most 1 (in fact it has "tree-width" 0), but its "refinement"
$\rho(x) = x \atom{a^*} t_1 \atom{a^*} t_2 \atom{a^*} x$ has "tree-width" 2.

\paragraph*{Fine tree decompositions}
    \AP
	For technical reasons---the proof of \Cref{lemma:shape-decomposition}---, we will use a restrictive class of "tree decompositions" which we call ``"fine@fine tree decomposition"''\footnote{This is similar---but orthogonal---to the classical notion of
	``nice tree decomposition'', see "eg" \cite[Definition 13.1.4, page 149]{kloks1994treewidth}.}. A \AP""fine tree decomposition"" is a "tree decomposition" $(T, \bagmap)$ in which:
	\begin{equation}
		\AP\label{eq:fine-tree-dec}
		\parbox{.65\linewidth}{every non-root "bag" can be obtained from its parent "bag" by
		either adding or removing a non-empty set of vertices.}%
	\end{equation}
	In the context of a "fine tree decomposition" of "width" $k$, a \AP""full bag"" is any "bag" of size $k+1$.
	
    A "C2RPQ" has "tree-width" $k$ if and only if it has a "fine tree decomposition" of "width" at most $k$. Indeed, from a "tree decomposition", it suffices to:
	\begin{enumerate}
		\item first merge every consecutive pair of "bags" that contain exactly the same variables;
		\item between every pair of "bags"
		that does not satisfy \eqref{eq:fine-tree-dec}, add a "bag" whose set of vertices
		correspond to the intersection of the two adjacent "bags".
	\end{enumerate}

\section{\AP{}Maximal Under-Approximations}
\label{sec:maximal-under-approximations}

In this section, we state our key technical result, \Cref{lemma:bound_size_refinements}, which we will refer to as the ``"Key Lemma"''.
Essentially, we follow the same structure as \Cref{thm:closure-under-sublanguages}:
given a "C2RPQ" $\gamma$ and a natural number $k>1$, we start by
considering its "maximal under-approximation" by "infinitary unions"
of "conjunctive queries" of "tree-width" $k$ (\Cref{def:max-under-approx}),
and then show that this query can in fact be expressed as a
"UC2RPQ" of "tree-width" $k$ whose "atoms" contain "sublanguages" of those in $\gamma$ ("Key Lemma"~\ref{lemma:bound_size_refinements}).

\AP For the first definitions of this section, let us fix any class $\class$ of "C2RPQs"---we will later apply these results to the class $\Tw$ of "C2RPQs" of "tree-width" at most $k$.

\begin{definition}[Maximal under-approximation]
    \AP\label{def:max-under-approx}
    \AP Let $\gamma$ be a "C2RPQ". The \AP""maximal under-approximation"" of $\gamma$ by "infinitary unions" of $\class$-queries is $\intro*\MUA{\gamma}{\class} \defeq
    \{ \alpha \in \class \mid \alpha \contained\gamma \}$.
\end{definition}
For intuition, we refer the reader back to paragraph ``Some intuitions on maximal under-approximations'' at the beginning of \Cref{sec:prelim}.

\begin{remark}
    \AP\label{rk:uctworpq}
    Observe that $\MUA{\gamma}{\class}$ is an "infinitary union" of $\class$-queries,
    that $\MUA{\gamma}{\class} \contained \gamma$, and that for every "infinitary union" of
    $\mathcal{C}$-queries $\Delta$, if $\Delta \contained \gamma$, then $\Delta \contained \MUA{\gamma}{\class}$ ("ie", it is the unique maximal under-approximation up to "semantical equivalence").
    Similarly, the "maximal under-approximation" of a "UC2RPQ" is simply the union of the "maximal under-approximations" of the "C2RPQs" thereof.
\end{remark}

Unfortunately, the fact that a query $\alpha$ is part of this union, namely $\alpha \in \MUA{\gamma}{\class}$, does not yield any useful information on the \emph{shape} of $\alpha$---we merely know that $\alpha \contained \gamma$. We thus introduce another "infinitary union" of $\class$-queries  of a restricted shape, namely $\MUAHom{\gamma}{\class} \subseteq \MUA{\gamma}{\class}$, in which queries $\alpha \in \MUAHom{\gamma}{\class}$ come together with a witness of their "containment" in $\gamma$.

\begin{definition}
    \AP The "maximal under-approximation" of $\gamma$ by "infinitary unions" of homomorphic\-ally-smaller $\class$-queries is\phantomintro\MUAHom
    \begin{align}
        \reintro*\MUAHom{\gamma}{\class} \defeq
        \{
            \alpha \in \class
            \mid
            \exists \rho \in \Refin(\gamma),\, \exists \fun\colon \rho \surj \alpha
        \}.
        \AP\label{eq:MUAHom}
    \end{align}
\end{definition}

For a basic example of "approximation" (with no constraint on $\class$),
we refer the reader to \Cref{fig:basic-approx}.
The resulting query $\alpha(x,y)$ is the "homomorphic image" of a "refinement" of
$\gamma(x,y)$. Hence, $\alpha(x,y) \in \MUAHom{\gamma}{\class}$ if $\class$ is, for instance, the class of all "C2RPQs"---or more generally, if $\class$ contains $\alpha(x,y)$.

\begin{example}[{\Cref{ex:CRPQ-tw3-stw2}, cont’d}]
    \AP\label{ex:CRPQ-tw3-stw2-contd}
    Both $\delta(\bar x)$ and $\delta'(\bar x)$ are "semantically equivalent" 
    to queries in $\MUAHom{\gamma(\bar x)}{\Tw[2]}$.
    Indeed, starting from $\gamma(\bar x)$,
    we can "refine"
	\[
		x_0 \atom{a(bb)^+} z
		\quad\text{into}\quad
		x_1 \atom{a} t \atom{(bb)^+} z.
	\]
    Denote by $\rho(\bar x)$ the query obtained:
    \begin{center}
        \small
        \begin{tikzcd}[column sep=small, row sep=small]
            &[-.5em] x_0 \ar[dr, "a"] \ar[rr, "c"] \ar[d, "a" left] & &
            x_1 \ar[dl, "a" swap] \ar[ddl, "ab(bb)^*", pos=.7, bend left]
                &[1.5em] &[-.5em] x_0 \ar[dr, "a"] \ar[rr, "c"] & &
                x_1 \ar[dl, "a" swap] \ar[ddl, "ab(bb)^*", pos=.6, bend left]
                    &[1.5em] &[-.5em] x_0 \ar[dr, "a"] \ar[rr, "c"]  & &
                    x_1 \ar[dl, "a" swap] \ar[ddl, "ab(bb)^*", pos=.6, bend left] \\
            \rho(\bar x) \defeq & t \ar[dr, "(bb)^+" below left, pos=.2, bend right=14] & y \ar[d, "b^+", pos=.35] & 
                & \delta'_{\textnormal{app}}(\bar x) \defeq & & y \ar[d, "b^+" pos=.35, bend left=15] \ar[d, "(bb)^+" swap, bend right=15, pos=.35] & 
                    & \delta'(\bar x) = & & y \ar[d, "(bb)^+" swap, pos=.35] & \\
            & & z & 
                & & & z & 
                    & & & z &
        \end{tikzcd}
    \end{center}
    Then merge variables $t$ and $y$: this new query $\delta'_{\textnormal{app}}(\bar x)$
    is "equivalent" to $\delta'(\bar x)$. Moreover, since
	$\delta'_{\textnormal{app}}(\bar x)$ has "tree-width" at most 2 and was obtained as a 
	"homomorphic image" of a "refinement" of $\gamma(\bar x)$, we have that
	$\delta'_{\textnormal{app}}(\bar x) \in \MUAHom{\gamma(\bar x)}{\Tw[2]}$.
	A similar argument applies to $\delta$, by "refining" the "atom" between $x_1$ and $z$ instead.
	\qedhere
\end{example}

\AP Clearly, $\MUAHom{\gamma}{\class}$---whose queries are informally called 
""approximations""---is included, and thus semantically "contained", in $\MUA{\gamma}{\class}$, 
since $\rho \contained \gamma$ and $\alpha \contained \rho$ in \eqref{eq:MUAHom}.
In fact, under some assumptions on $\class$,
the converse "containment" also holds. 

\begin{observation}
    \AP\label{obs:equivalence_under_approx_homomorphism}
    If $\class$ is closed under "expansions" and "subqueries",
	then for any "C2RPQ" $\gamma$, we have $\MUA{\gamma}{\class} \semequiv \MUAHom{\gamma}{\class}$.
\end{observation}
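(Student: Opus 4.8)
The inclusion $\MUAHom{\gamma}{\class} \contained \MUA{\gamma}{\class}$ has already been observed in the paragraph preceding the statement, so the plan is to prove only the converse containment $\MUA{\gamma}{\class} \contained \MUAHom{\gamma}{\class}$; combining the two directions yields the claimed semantic equivalence. Since $\MUA{\gamma}{\class}$ is the (infinitary) union of the "C2RPQs" $\alpha$ with $\alpha \in \class$ and $\alpha \contained \gamma$, it suffices to show that each such $\alpha$ satisfies $\alpha \contained \MUAHom{\gamma}{\class}$. For this I would use the expansion-based characterization of "containment" (\Cref{prop:cont-char-exp-st}, which extends verbatim to "infinitary unions" on the right): it is enough to exhibit, for every "expansion" $\anexpansion \in \Exp(\alpha)$, some $\beta \in \MUAHom{\gamma}{\class}$ together with an "expansion" of $\beta$ admitting a "homomorphism" into $\anexpansion$.

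Concretely, fix $\alpha \in \MUA{\gamma}{\class}$ and an arbitrary "expansion" $\anexpansion \in \Exp(\alpha)$. Since an "expansion" is in particular a "refinement", we have $\anexpansion \contained \alpha \contained \gamma$, and hence $\anexpansion \contained \gamma$. Applying \Cref{prop:cont-char-exp-st} to $\anexpansion \contained \gamma$ produces an "expansion" $\eta \in \Exp(\gamma)$ together with a "homomorphism" $\fun\colon \eta \homto \anexpansion$; as "expansions" are "refinements", we have $\eta \in \Refin(\gamma)$. Let $\beta$ be the "homomorphic image" of $\fun$ inside $\anexpansion$, so that $\fun\colon \eta \surj \beta$ is a "strong onto homomorphism" and $\beta$ is a "subquery" of $\anexpansion$ (its "atoms" are exactly the $\fun$-images of the "atoms" of $\eta$, each of which is an "atom" of $\anexpansion$).

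It then remains to verify the two membership facts. First, $\beta \in \MUAHom{\gamma}{\class}$: because $\class$ is closed under "expansions" we get $\anexpansion \in \class$ from $\alpha \in \class$, and because $\class$ is closed under "subqueries" we get $\beta \in \class$; together with $\eta \in \Refin(\gamma)$ and the "strong onto homomorphism" $\fun\colon \eta \surj \beta$, this is precisely the defining condition for $\beta \in \MUAHom{\gamma}{\class}$. Second, being the image of the "CQ" $\eta$, the query $\beta$ is itself a "CQ", so $\beta \in \Exp(\beta)$, and the inclusion of $\beta$ into $\anexpansion$ is a "homomorphism" $\beta \homto \anexpansion$. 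This supplies exactly an "expansion" of a query of $\MUAHom{\gamma}{\class}$ mapping homomorphically into $\anexpansion$, which closes the argument via \Cref{prop:cont-char-exp-st}.

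The step I expect to require the most care is the bookkeeping of "output variables": the "homomorphism" $\fun$ furnished by \Cref{prop:cont-char-exp-st} matches distinguished tuples position-wise, so the image $\beta$ inherits the same "output variables" as $\anexpansion$ (hence as $\alpha$), which is what makes both $\beta \in \Exp(\beta)$ and $\beta \homto \anexpansion$ respect the free-variable constraints. The only other genuinely load-bearing point is the identification of the "homomorphic image" $\beta$ with a "subquery" of $\anexpansion$, which is exactly where closure under "subqueries" is used; the remainder is a direct unfolding of the definitions of $\MUA{\gamma}{\class}$ and $\MUAHom{\gamma}{\class}$.
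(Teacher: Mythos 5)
Your proposal is correct and follows essentially the same route as the paper's own proof: pick $\alpha \in \MUA{\gamma}{\class}$ and an "expansion" $\xi$ of it, invoke \Cref{prop:cont-char-exp-st} to obtain an "expansion" of $\gamma$ mapping homomorphically into $\xi$, restrict to the "homomorphic image", and use closure under "expansions" and "subqueries" to place that image in $\MUAHom{\gamma}{\class}$ before closing the loop with \Cref{prop:cont-char-exp-st} again. The only differences are cosmetic (you apply the proposition to $\xi \contained \gamma$ rather than to $\alpha \contained \gamma$, and you spell out the free-variable bookkeeping the paper leaves implicit).
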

\begin{proof}
	Since $\MUA{\gamma}{\class} \supseteq \MUAHom{\gamma}{\class}$,
	it suffices to show that $\MUA{\gamma}{\class} \contained \MUAHom{\gamma}{\class}$.
	Pick $\alpha \in \MUA{\gamma}{\class}$. Let $\xi$ be an "expansion" of $\alpha$. 
	Since $\alpha \contained \gamma$, there exists by \Cref{prop:cont-char-exp-st}
	an "expansion" $\xi_\gamma$ of $\gamma$ such that $\xi_\gamma \homto \xi$. 
	Consider the restriction $\xi'$ of $\xi$ to its "homomorphic image".
	Since $\alpha \in \class$ and $\class$ is closed both under "expansions" and "subqueries",
	$\xi' \in \class$. Since moreover, by construction, $\xi'$ is the ("strong onto@strong onto homomorphism") "homomorphic image"
	of an "expansion" (hence "refinement") of $\gamma$, then $\xi' \in \MUAHom{\gamma}{\class}$.
	Hence, we have shown that for every "expansion" of $\MUA{\gamma}{\class}$,
	there is an "expansion" of $\MUAHom{\gamma}{\class}$ with a "strong onto homomorphism" from the
	latter to the former, which concludes the proof by \Cref{prop:cont-char-exp-st}.
\end{proof}

Note that
in the definition of $\MUAHom{\gamma}{\class}$ we work with "strong onto homomorphisms":
changing the definition to have any "homomorphism" would yield a slightly bigger but "semantically equivalent" class of queries---though having untamed shapes.

Observe then, by \Cref{fact:refinement-tw}, that the class $\Tw$ of all "C2RPQs" of "tree-width" at 
most $k$ is closed under "refinements" and hence under "expansions", provided that $k$ is greater 
or equal to 2. Moreover, $\Tw$ is always closed under "subqueries" for each $k$.

\begin{corollary}
    \AP\label{coro:equivalence_under_approx_homomorphism_twk}
    For $k \geq 2$, for all "C2RPQ" $\gamma$,
    $\MUA{\gamma}{\Tw} \semequiv \MUAHom{\gamma}{\Tw}$.
\end{corollary}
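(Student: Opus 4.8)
The plan is to obtain the result as a direct instance of \Cref{obs:equivalence_under_approx_homomorphism} applied to the class $\class = \Tw$. That observation reduces the claim to verifying two closure properties of $\Tw$: closure under "expansions" and closure under "subqueries". Once both are established for $k \geq 2$, the equivalence $\MUA{\gamma}{\Tw} \semequiv \MUAHom{\gamma}{\Tw}$ follows immediately for every "C2RPQ" $\gamma$.

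First I would check closure under "expansions". By definition an "expansion" of a query is a "CQ" that is also a "refinement" of it, so expansions are a special case of refinements. \Cref{fact:refinement-tw} states that for $k \geq 2$ any "refinement" of a "C2RPQ" of "tree-width" at most $k$ again has "tree-width" at most $k$; in particular this applies to expansions. Hence if $\alpha \in \Tw$ then every "expansion" of $\alpha$ lies in $\Tw$, which is exactly closure under expansions. This is the only place where the hypothesis $k \geq 2$ is used, matching the failure of \Cref{fact:refinement-tw} for $k = 1$ discussed just after its proof.

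Next I would check closure under "subqueries", which holds for every $k$. A "subquery" is obtained by deleting some "atoms", and on the level of the underlying multigraph this amounts to removing edges (and possibly leaving some vertices isolated). Removing edges cannot increase "tree-width": any "tree decomposition" of the original query remains a valid "tree decomposition" of the subquery, since each of the three defining conditions is preserved as the edge set shrinks. Therefore any "subquery" of a query of "tree-width" at most $k$ still has "tree-width" at most $k$, so $\Tw$ is closed under subqueries.

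With both closure properties in hand, \Cref{obs:equivalence_under_approx_homomorphism} applies verbatim to $\class = \Tw$, yielding $\MUA{\gamma}{\Tw} \semequiv \MUAHom{\gamma}{\Tw}$. There is no substantial obstacle here: the corollary is a straightforward specialization, with the real content already residing in \Cref{obs:equivalence_under_approx_homomorphism} and \Cref{fact:refinement-tw}. The one point deserving explicit care is to record that expansions are instances of refinements, so that \Cref{fact:refinement-tw} is indeed applicable to establish closure under "expansions".
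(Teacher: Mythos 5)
Your proof is correct and is exactly the paper's argument: the paper likewise derives the corollary by noting (via \Cref{fact:refinement-tw}) that $\Tw$ is closed under "refinements"---hence under "expansions"---for $k \geq 2$, that $\Tw$ is closed under "subqueries" for every $k$, and then invoking \Cref{obs:equivalence_under_approx_homomorphism}. Your additional care in spelling out that expansions are special refinements and that deleting atoms cannot increase "tree-width" is fine but adds nothing beyond what the paper leaves implicit.
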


\begin{example}[counterexample for $k=1$]
    \AP\label{ex:counterex-tw1}
    Consider the following query:
    \begin{center}
    \begin{tikzcd}[row sep=0.2em]
        &[-2em] &[-2em] z \ar[<-, ddr, "b"] &[-2em] & & &[-2em] \\
        \gamma(x) \;\defeq & & & \\
		& x \ar[<-, uur, "c"] & & y. \ar[<-, ll, "a"]
    \end{tikzcd}
    \end{center}
    We claim that $\MUA{\gamma}{\Tw[1]} \not\contained \MUAHom{\gamma}{\Tw[1]}$.
	First, we claim that $\gamma \in \Exp(\MUA{\gamma}{\Tw[1]})$ since
	$\gamma$ is an "expansion" of $\delta(x) = x \atom{abc} x$, which clearly belongs
	to $\MUA{\gamma}{\Tw[1]}$.
	Then, observe that $\gamma(x)$ has a single refinement: itself!
	It follows that $\MUAHom{\gamma}{\Tw[1]}$ is finite, and consists precisely of all
	"homomorphic images" of $\gamma(x)$ of "tree-width" at most 1, which are:
	\[
		\alpha_1(w) \defeq
		\begin{tikzcd}[row sep=0.2em, ampersand replacement=\&]
			w \ar["a", loop left] \rar["b" below, bend right=20] \& z \lar["c" above, bend right=20]
		\end{tikzcd},\qquad
		\alpha_2(w) \defeq
		\begin{tikzcd}[row sep=0.2em, ampersand replacement=\&]
			w \ar["c", loop left] \rar["a" below, bend right=20] \& y \lar["b" above, bend right=20]
		\end{tikzcd}
	\]
	\[
		\alpha_3(x) \defeq
		\begin{tikzcd}[row sep=0.2em, ampersand replacement=\&]
			x \rar["a" below, bend right=20] \& w \lar["c" above, bend right=20] \ar["b", loop right] 
		\end{tikzcd},\,\qquad
		\alpha_4(w) \defeq
		\begin{tikzcd}[row sep=0.2em, ampersand replacement=\&]
			w \ar["a", loop left] \ar["b", loop below] \ar["c", loop right] \& \hphantom{z}
		\end{tikzcd}
	\]
	which correspond to the case when the following variable are merged: $\{x,y\}$,
	$\{x,z\}$, $\{y,z\}$ and $\{x,y,z\}$, respectively. Note that
	all of these queries are "CQs", from which it follows that
	every "expansion" of a query in $\MUAHom{\gamma}{\Tw[1]}$ is one of the $\alpha_i$,
	and has a self-loop. In particular, such an "expansion" cannot have a "homomorphism"
	to $\gamma$. Hence, we showed that there is an "expansion" of $\MUA{\gamma}{\Tw[1]}$
	"st" no "expansion" of $\MUAHom{\gamma}{\Tw[1]}$ can be "homomorphically mapped@homomorphism"
	to it. Hence, by \Cref{prop:cont-char-exp-st},
	$\MUA{\gamma}{\Tw[1]} \not\contained \MUAHom{\gamma}{\Tw[1]}$. \qedhere
\end{example}

In general, by definition,
$\MUAHom{\gamma}{\Tw}$ is an "infinitary union" of "C2RPQs". Our main technical result shows that,
in fact, $\MUAHom{\gamma}{\Tw}$ is always equivalent to a \emph{finite} union of "C2RPQs". This is done by bounding the length of the "refinements" occurring in the definition of $\MUAHom{\gamma}{\Tw}$.
For any $m \geq 1$, we define:
\AP
\[
    \intro*\MUAHomBounded{\gamma}{\class}{\leq m} \defeq 
    \{ 
        \alpha \in \class
        \mid
        \exists \rho \in \Refin[\leq m](\gamma),\, \exists \fun\colon \rho \surj \alpha
    \}.
\]
\newcommand{\lbound}[2]{\Theta(\nbatoms[2]{#2}\cdot ({#1}+1)^{\nbatoms{#2}})}
\begin{lemma}[""Key Lemma""]
    \AP\label{lemma:bound_size_refinements}
    \AP For $k \geq 2$ and "C2RPQ" $\gamma$, we have
    $\MUAHom{\gamma}{\Tw} \semequiv \MUAHomBounded{\gamma}{\Tw}{\leq\l}$, where
    $\intro*\l = \lbound{k}{\gamma}$.
\end{lemma}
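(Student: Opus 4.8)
The plan is to prove the two containments separately. The inclusion $\MUAHomBounded{\gamma}{\Tw}{\le\l} \subseteq \MUAHom{\gamma}{\Tw}$ is immediate from the definitions, since every $\rho \in \Refin[\le\l](\gamma)$ is in particular a refinement of $\gamma$; this already gives $\MUAHomBounded{\gamma}{\Tw}{\le\l} \contained \MUAHom{\gamma}{\Tw}$. The whole work lies in the converse containment $\MUAHom{\gamma}{\Tw} \contained \MUAHomBounded{\gamma}{\Tw}{\le\l}$, which I would establish through the expansion characterization of \Cref{prop:cont-char-exp-st}: it suffices to show that for every $\alpha \in \MUAHom{\gamma}{\Tw}$ and every expansion $\xi \in \Exp(\alpha)$, there is some $\beta \in \MUAHomBounded{\gamma}{\Tw}{\le\l}$ together with an expansion $\xi' \in \Exp(\beta)$ such that $\xi' \homto \xi$. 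Fix then such an $\alpha$, witnessed by a refinement $\rho \in \Refin(\gamma)$ of arbitrary length and a strong onto homomorphism $\fun\colon \rho \surj \alpha$, and fix an expansion $\xi$ of $\alpha$.

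First I would reduce everything to a single structure over $\xi$. Since $\fun$ is strong onto, $\xi$ is itself a strong onto homomorphic image of an expansion $\xi_\rho \in \Exp(\rho) \subseteq \Exp(\gamma)$; in particular each atom $a = x \atom{L_a} y$ of $\gamma$, refined and then expanded inside $\xi_\rho$, is realized as a walk $W_a = (u_0^a, \dots, u_{N_a}^a)$ in $\xi$ (with $u_0^a, u_{N_a}^a$ the images of $x,y$), and this walk comes with an accepting run $(p_0^a, \dots, p_{N_a}^a)$ of the NFA $\+A_{L_a}$ on the corresponding word. Because $k \ge 2$ and $\xi$ is an expansion of a query of tree-width at most $k$, \Cref{fact:refinement-tw} guarantees that $\xi$ has tree-width at most $k$; I would fix a fine tree decomposition $(T, \bagmap)$ of $\xi$ of width at most $k$.

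The core of the argument is a global condensation of the walks $W_a$ driven by $(T, \bagmap)$. For each atom I would select a bounded set of boundary indices $0 = s_0^a < \dots < s_{r_a}^a = N_a$ and, between consecutive boundaries, replace the sub-walk either by the single letter it reads (unit segments) or by a condensed atom labelled $\subaut{\+A_{L_a}}{p_{s_t^a}}{p_{s_{t+1}^a}}$ in the sense of \Cref{def:atom-contraction}. This yields a refinement $\rho' \in \Refin(\gamma)$ (each condensation keeps us inside $\Refin(\gamma)$ by \Cref{fact:refinement-contained}), whose strong onto image $\beta$ under the induced map on boundary and endpoint vertices is a \CtwoRPQ{} all of whose atoms come from $\rho'$. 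Re-expanding each condensed atom of $\beta$ back along its original sub-walk produces an expansion $\xi' \in \Exp(\beta)$ that maps homomorphically to $\xi$ (boundary vertices to themselves, fresh path-vertices to the corresponding sub-walk vertices of $\xi$), so $\xi' \homto \xi$ as required. The two properties I must guarantee when choosing the boundaries are: (i) the endpoints $u_{s_t^a}^a, u_{s_{t+1}^a}^a$ of every condensed segment lie in a common bag of $(T,\bagmap)$, and all internal vertices of that segment are \emph{private} to it (not a boundary or endpoint of any other segment), so that deleting them and restricting the tree decomposition to the surviving vertices yields a valid tree decomposition of $\beta$ of width at most $k$; and (ii) the refinement length of $\rho'$, i.e.\ $\max_a r_a$, is at most $\l$.

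The main obstacle is exactly to achieve (i) and (ii) \emph{simultaneously and globally}, across all $\nbatoms{\gamma}$ walks at once: a vertex of $\xi$ is in general shared by several walks (since $\fun$ and the expansion identify variables), so the condensations cannot be chosen atom-by-atom and must respect a common choice of surviving vertices. To control this I would attach to each node of $T$ a \emph{global profile} recording, for each of the $\nbatoms{\gamma}$ atoms, which vertex of the current bag (of size at most $k+1$) the walk $W_a$ is currently passing through---giving at most $(k+1)^{\nbatoms{\gamma}}$ profiles---and argue that whenever a profile repeats along a branch of $T$, the intervening portion of every walk returns to the same separator and can be absorbed into a single condensation with both endpoints in that separator's bag. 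A repetition-based (pumping) argument along $(T,\bagmap)$ then bounds the number of surviving boundaries by $\Theta(\nbatoms[2]{\gamma}\cdot (k+1)^{\nbatoms{\gamma}}) = \l$, the extra polynomial factor $\nbatoms[2]{\gamma}$ coming from the per-atom bookkeeping and the number of atoms. A pleasant feature of this scheme is that the condensed languages $\subaut{\+A_{L_a}}{p}{p'}$ absorb all automaton behaviour, so the bound $\l$ depends only on $k$ and the number of atoms, and not on the sizes of the NFAs labelling $\gamma$.
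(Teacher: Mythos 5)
Your overall plan is sound and in fact closely parallels the paper's proof: the trivial inclusion, the reduction of the hard containment to an expansion-by-expansion statement via \Cref{prop:cont-char-exp-st}, the realization of each atom of $\gamma$ as a walk with an accepting run inside a width-$k$ decomposition of $\xi$, and a profile-plus-pigeonhole condensation argument with the same bound $\l$. (The paper works one level up, with trios $\fun\colon\rho\surj\alpha$ and tagged tree decompositions of the homomorphism rather than with a single expansion $\xi$; that difference by itself is harmless.)

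The genuine gap is in your condition (i) combined with the profile-repetition step. If a profile repeats at two \emph{distinct} bags $b$ and $b'$ on a branch, a walk $W_a$ that crosses the intervening region meets $\bagmap(b)$ at some vertex $u$ and $\bagmap(b')$ at some vertex $v$, so the condensed atom runs from $u$ to $v$ --- and $u,v$ do \emph{not} lie in a common bag: they lie in the two different bags $b$ and $b'$. Hence these condensed atoms are exactly the ones not covered by the restriction of $(T,\bagmap)$ to the surviving vertices, and your claim that deleting the private interiors and restricting yields a valid width-$k$ decomposition of $\beta$ fails precisely for the crossing walks that the length bound is about. Restriction does suffice for the other kind of condensation, a walk returning to the \emph{same} bag (the local-acyclicity cleanup of \Cref{lemma:locally_acyclic_treedec}), but those alone cannot shorten long acyclic stretches. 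Repairing the crossing case is the technical heart of the paper's proof: one excises the part of the decomposition strictly between $b$ and $b'$ and glues the two halves with at most $2k$ interpolating bags (\Cref{prop:connecting-tree-decompositions}, used inside \Cref{claim:shortening-paths}), and this surgery is only possible because (a) $b$ and $b'$ are both non-full --- which is why fineness matters and why the pigeonhole argument must treat full bags as forbidden --- and (b) the profile matching guarantees the new atoms only join matched pairs $x_i$ to $y_i$, never crossing pairs. Neither (a) nor (b) appears in your sketch. Three further ingredients you rely on implicitly are also missing: your profiles are only well defined once local acyclicity has been enforced (otherwise a walk may leave a bag at several vertices, \Cref{fact:acyclic-decomposition-leave-forever}); the condensed region must contain no bag touching a walk endpoint, i.e.\ no image of a variable of $\gamma$ (otherwise $\rho'$ is not a refinement of $\gamma$ and free variables can be destroyed --- the paper's atomic bags); and to pass from short non-branching paths to the bound $\l$ on refinement length you need a bound of roughly $2\nbatoms{\gamma}$ on the number of leaves of the decomposition, which the paper extracts from local acyclicity and which your pumping argument alone does not give.
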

By construction, $\MUA{\gamma}{\Tw}$ is the maximal under-approximation of $\gamma$ by
"infinitary unions" of "C2RPQs" of "tree-width" at most $k$. Using the equivalence above and
\Cref{coro:equivalence_under_approx_homomorphism_twk}, it follows that
it is also the maximal under-approximation of $\gamma$ by
a "UC2RPQ" of "tree-width" at most $k$.
\muaexistseffective
\begin{proof}
	The algorithm to compute $\Gamma'$ is straightforward:
	it enumerates $\l$-refinements, enumerates its "homomorphic images",
	and keeps the result only if it has "tree-width" at most $k$---which can
	be done in linear time using Bodlaender's algorithm \cite[Theorem 1.1]{bodlaender1996treewidth}.
\end{proof}

Using the "Key Lemma" as a black box---which will be proven in \Cref{sec:proof-key-lemma}---, we can now give a proof of the upper bound of \Cref{thm:decidability-semtw} for all
cases $k\geq 2$---the case $k=1$ will be the object of \Cref{sec:acyclic-queries}.
\begin{restatable}[Upper bound for \Cref{thm:decidability-semtw} for $k\geq 2$]{lem}{lemsemtwintwoexp}
    \AP\label{lem:sem-tw-in-twoexp}
    For $k \geq 2$, the "semantic tree-width $k$ problem" for "UC2RPQ" is in "2ExpSpace".
\end{restatable}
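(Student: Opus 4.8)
The plan is to reduce the "semantic tree-width $k$ problem" to a single "containment" test against the finite approximation furnished by the "Key Lemma". By \Cref{coro:equivalence_under_approx_homomorphism_twk} (valid since $k\geq 2$) together with the "Key Lemma" (\Cref{lemma:bound_size_refinements}), the infinitary union $\MUA{\Gamma}{\Tw}$ is "semantically equivalent" to the \emph{finite} "UC2RPQ" $\Gamma'' \defeq \MUAHomBounded{\Gamma}{\Tw}{\leq \l}$, all of whose "C2RPQs" have "tree-width" at most $k$. I would first show that $\Gamma$ has "semantic tree-width" at most $k$ if and only if $\Gamma \semequiv \Gamma''$, and then establish that this last test can be carried out in "2ExpSpace".

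For the characterisation, the "containment" $\Gamma'' \contained \Gamma$ is automatic, since $\Gamma''$ is an under-approximation. For the converse, suppose $\Gamma \semequiv \Delta$ for some "UC2RPQ" $\Delta$ of "tree-width" at most $k$. Then $\Delta \contained \Gamma$, and by the maximality of the "maximal under-approximation" (\Cref{rk:uctworpq}) applied to the finite—hence infinitary—union $\Delta$ of $\Tw$-queries, we get $\Delta \contained \MUA{\Gamma}{\Tw} \semequiv \Gamma''$; combined with $\Gamma \semequiv \Delta$ and $\Gamma'' \contained \Gamma$, this gives $\Gamma \semequiv \Gamma''$. Hence it suffices to decide the single "containment" $\Gamma \contained \Gamma''$.

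Next I would bound the sizes involved. The "refinement length" $\l$ of the "Key Lemma" is $\Theta(\nbatoms[2]{\gamma}\cdot(k+1)^{\nbatoms{\gamma}})$, which is singly exponential in $\size{\Gamma}$. Consequently every length-$\leq\l$ "refinement", and every "homomorphic image" of one, has singly-exponential size, whereas the number of such queries—and hence the number of "C2RPQs" occurring in $\Gamma''$—is doubly exponential. Thus $\Gamma''$ is a finite "UC2RPQ" with doubly-exponentially many disjuncts, each of singly-exponential size.

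The hard part will be deciding $\Gamma \contained \Gamma''$ within "2ExpSpace": since $\size{\Gamma''}$ is doubly exponential, feeding it as the right-hand side into the generic "ExpSpace" "containment" procedure for "UC2RPQs" would cost a third exponential. The plan is therefore to never materialise $\Gamma''$. By \Cref{prop:cont-char-exp-st}, the "containment" fails exactly when some "expansion" $\xi$ of $\Gamma$ admits no "expansion" of $\Gamma''$ mapping into it—equivalently, no "expansion" of a "homomorphic image" of a length-$\leq\l$ "refinement" of $\Gamma$. I would recognise the \emph{covered} "expansions" of $\Gamma$ by an automaton whose states are \emph{profiles}, that is, sets of partial-match configurations of such images, generating the relevant "refinements" on the fly instead of storing the whole union; since each image has singly-exponential size, its partial matches range over a doubly-exponential configuration space, so a subset-construction together with a reachability (nonemptiness) search for an uncovered "expansion" runs in doubly-exponential space, i.e. in "2ExpSpace". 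The main technical obstacle is precisely this bookkeeping: expressing the coverage test in terms of the single polynomial-size query $\Gamma$ and its bounded "refinements"—whose matches are trackable in doubly-exponential space—rather than in terms of the doubly-exponentially many disjuncts of $\Gamma''$, which would push the budget to triply-exponential space.
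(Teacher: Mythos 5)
Your proposal is correct, and its first two steps---reducing the "semantic tree-width $k$ problem" to the single "containment" $\Gamma \contained \Gamma''$ where $\Gamma'' = \bigcup_{\gamma\in\Gamma}\MUAHomBounded{\gamma}{\Tw}{\leq\l}$, justified via \Cref{coro:equivalence_under_approx_homomorphism_twk}, \Cref{lemma:bound_size_refinements} and the maximality argument of \Cref{rk:uctworpq}, together with the count of doubly-exponentially many disjuncts of singly-exponential size---are exactly the paper's. Where you diverge is the last step. The paper's main-text proof \emph{does} materialise $\Gamma''$ and feeds it into a containment procedure, but not the black-box \expspace one: it invokes \Cref{prop:bound-containment-pb}, a sharper bound (via "bridge-width", implicit in \cite{Figueira20}) of the form $\+O(\size{\Gamma} + \size{\Delta}^{c\cdot n_\Delta})$ where $n_\Delta$ is the maximal number of "atoms" per disjunct of the right-hand side; since $n_\Delta$ is only singly exponential here, this stays doubly exponential even though $\size{\Delta}$ is doubly exponential. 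Your alternative---never materialising $\Gamma''$ and deciding non-containment by an on-the-fly automaton that tracks, per disjunct, a set of partial-match configurations---is essentially the paper's own alternative proof, \Cref{prop:bound-containment-pb-alt} in \Cref{apdx-sec:alternative-upper-bound-containment}, which inspects the standard algorithm of \cite{CGLV00}: one guesses a counterexample expansion word while maintaining one state-set per disjunct, in non-deterministic space $2^{c\cdot\size{\Gamma}} + p_\Delta\cdot 2^{c\cdot m_\Delta}$, again doubly exponential overall. So the bookkeeping you flag as the main technical obstacle is precisely what that proposition carries out, and it does go through (modulo the linearization of expansions into words, which you leave implicit). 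The trade-off: the bridge-width route buys a clean, citable parameterized statement usable as a black box; your route (and the appendix's) is more elementary and self-contained, but requires redoing the CGLV00-style automaton construction rather than quoting it.
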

Note that $\MUAHomBounded{\gamma}{\Tw}{\leq\l}$ has double-exponential size in $\size{\gamma}$,
so testing equivalence of $\gamma$ with this "UC2RPQ" yields an algorithm in triple-exponential 
space in $\size{\gamma}$ since "(U)C2RPQ" equivalence is "ExpSpace" \cite[Theorem 5]{CGLV00}
---see also \cite[§ after Theorem 4.8]{Florescu:CRPQ} for a similar result on "CRPQs" without 
inverses but with an infinite alphabet. To get a better upper bound, we first need
the following proposition:
\begin{proposition}
	\AP\label{prop:bound-containment-pb}
    The "containment problem" $\Gamma \contained \Delta$ between two "UC2RPQs" can be solved in non-deterministic space $\+O(\size{\Gamma} + \size{\Delta}^{c \cdot {n_\Delta}})$, for some constant $c$,
	and where $n_\Delta$ is the maximal number of "atoms" of a disjunct of $\Delta$, namely $n_\Delta = \max{\{\nbatoms{\delta} \mid \delta \in \Delta\}}$. 
\end{proposition}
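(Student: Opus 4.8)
The plan is to refine the classical automata-theoretic containment algorithm of \cite{CGLV00}, tracking precisely where each exponential blow-up occurs. By \Cref{prop:cont-char-exp-st}, $\Gamma \not\contained \Delta$ holds iff there is a disjunct $\gamma$ of $\Gamma$ and an "expansion" $\xi \in \Exp(\gamma)$ such that no "expansion" of any disjunct $\delta$ of $\Delta$ maps homomorphically into $\xi$; we write $\xi \not\models \delta$ for the latter condition. First I would nondeterministically guess the disjunct $\gamma$ (stored in space $\+O(\size{\Gamma})$) and then refute $\gamma \contained \Delta$ by generating an "expansion" $\xi$ of $\gamma$ \emph{on the fly} while simultaneously certifying $\xi \not\models \delta$ for every $\delta \in \Delta$.

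The generation of $\xi$ is handled by an NFA $\+B_\gamma$ performing a fixed traversal of the underlying (bounded) graph of $\gamma$: it visits each "atom" $x \atom{L_i} y$ as a segment and guesses, letter by letter, a word of $L_i$ together with the run of the NFA for $L_i$. Since $\gamma$ has only $\nbatoms{\gamma} \leq \size{\Gamma}$ atoms and $|\vars(\gamma)|$ junction nodes, the traversal control plus the current NFA state costs only $\+O(\size{\Gamma})$, independently of the unbounded length of $\xi$. For each disjunct $\delta$ I would run, synchronized with this same traversal, a nondeterministic \emph{matching} automaton $\+A_\delta$ that guesses a "homomorphism" of an "expansion" of $\delta$ into $\xi$: at each junction or internal position it may decide that a variable of $\delta$ sits there and launch or terminate the witnessing path-threads of the "atoms" of $\delta$ incident to it. Apart from the shared traversal control, the state of $\+A_\delta$ records only which variables of $\delta$ have already been placed, plus, for each of the at most $n_\Delta$ active threads, its target variable and the state of the corresponding atom-NFA. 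Crucially, a variable placed on a long segment is enforced \emph{by simultaneity} at the current scan position and need not be stored as a coordinate; hence $\+A_\delta$ has $\size{\Delta}^{\+O(n_\Delta)}$ states, with \emph{no} dependence on $\size{\gamma}$ in the exponent. The two-wayness of witnessing paths is absorbed by taking the traversal to be an Euler tour of $\gamma$, so that each segment is scanned in both directions, exactly as in \cite{CGLV00}.

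To certify $\xi \not\models \delta$ I must complement the existential automaton $\+A_\delta$, which I would do by running its subset determinization: I maintain a macro-state that is a subset of the $\size{\Delta}^{\+O(n_\Delta)}$ states of $\+A_\delta$, storable in $\size{\Delta}^{\+O(n_\Delta)}$ bits, and require at the end of the traversal that it contains no accepting configuration. Running these determinized complements for all (at most $\size{\Delta}$) disjuncts of $\Delta$ in parallel with $\+B_\gamma$ over the single shared traversal costs $\+O(\size{\Gamma}) + \size{\Delta}\cdot \size{\Delta}^{\+O(n_\Delta)} = \+O(\size{\Gamma} + \size{\Delta}^{c\cdot n_\Delta})$ for a suitable constant $c$. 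This gives a nondeterministic procedure for $\Gamma \not\contained \Delta$ within the stated bound; since the bound is at least logarithmic, the Immerman--Szelepcs\'enyi theorem yields closure of nondeterministic space under complement, so the "containment problem" itself is solvable in the same space.

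The main obstacle is the complexity accounting rather than decidability, which is already known from \cite{CGLV00}. One must design $\+A_\delta$ so that (i) the contribution of $\gamma$ stays \emph{additive}---in particular the placement of $\delta$'s variables on $\gamma$'s junctions and segments must be encoded by a subset of $\vars(\delta)$ and by simultaneity, never by storing junction identifiers per variable, which would introduce a forbidden $\size{\gamma}^{n_\Delta}$ factor---and (ii) the exponential part depends only on the per-disjunct atom count $n_\Delta$ (the product is taken over the atom-NFAs \emph{of a single disjunct}), which is what makes the exponent $c\cdot n_\Delta$ rather than $c\cdot\nbatoms{\Delta}$. The one genuinely delicate point is checking that the two-way witnessing paths can be matched along a bounded Euler tour without the number of simultaneously active threads exceeding $n_\Delta$; this is ensured by a standard pumping argument that shortcuts redundant zigzags in a witnessing path.
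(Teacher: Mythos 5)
Your overall architecture --- guess a disjunct of $\Gamma$, generate a counterexample expansion on the fly, run complemented satisfaction automata for the disjuncts of $\Delta$ in parallel, and conclude by closure of nondeterministic space under complement --- is the standard skeleton, and your accounting of the $\+O(\size{\Gamma})$ additive term via simultaneity is fine. The gap is in the one claim that carries the whole bound: that each disjunct $\delta$ needs at most $n_\Delta$ simultaneously active threads, each storing a \emph{single} atom-NFA state, thanks to a pumping argument that shortcuts redundant zigzags. This is false, and it is exactly the point where the exponent $n_\Delta$ has to be earned. Take $\Delta = \delta$ where $\delta() = x \atom{L_n} y$ with $L_n = \set{(abb^-a^-)^n\, ab}$ (one atom, so $n_\Delta = 1$), and $\Gamma = \gamma$ where $\gamma() = x' \atom{ab} y'$. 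Then $\Gamma \contained \Delta$: in the expansion $v_0 \atom{a} v_1 \atom{b} v_2$ of $\gamma$, the unique expansion of $\delta$ maps homomorphically, in a forced way, onto $\set{v_0,v_1,v_2}$, and its witnessing path is $v_0,v_1,v_2,v_1,v_0,v_1,v_2,v_1,v_0,\dotsc,v_0,v_1,v_2$, which crosses the cut between $v_0$ and $v_1$ exactly $2n+1$ times. Since $L_n$ is a singleton language, no two of these crossings can share the same NFA state and direction --- otherwise splicing out the run between them would yield an accepting run on a strictly shorter word of $L_n$ --- so the shortcut argument is inapplicable, for every NFA representing $L_n$, every expansion, and every homomorphism. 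Moreover the zigzag spans two segments, so it cannot be absorbed by letting a thread iterate transitions on the currently scanned edge. Any bookkeeping that stores one state per atom of $\delta$, over a traversal visiting each edge a bounded number of times (Euler tour or not), cannot certify this homomorphism; your complemented automaton then accepts, and your algorithm wrongly reports $\Gamma \not\contained \Delta$.

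What the example shows is that two-wayness forces Shepherdson-style crossing \emph{relations} (per atom, a set of pairs of states summarizing round trips), not single states; and that is precisely what makes the satisfaction automaton of \cite{CGLV00} exponential in $\size{\Delta}$ and yields only the \expspace{} bound. Bringing the exponent down from $\size{\Delta}$ to $n_\Delta$ is the genuinely hard content of the proposition, and it is not clear it can be obtained within your architecture at all (a satisfaction NFA of size $\size{\Delta}^{\+O(n_\Delta)}$ followed by a generic subset construction): the known argument instead separates the crossing relations, which can be maintained deterministically from the scanned prefix, from the genuinely nondeterministic placement data, and bounds the latter via the minimal disconnecting sets of atoms of $\delta$. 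This is also why the paper does not reprove the bound: its proof consists of invoking a claim implicit in \cite{Figueira20} --- containment is decidable in nondeterministic space $\+O(\size{\Gamma} + \size{\Delta}^{c \cdot \bw(\Delta)})$, where $\bw(\Delta)$ is the bridge-width of $\Delta$ --- together with the observation that a bridge is a set of atoms of a single disjunct, hence $\bw(\Delta) \leq n_\Delta$. To make your proof self-contained you would essentially have to reconstruct that argument; the shortcutting lemma cannot replace it.
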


\begin{proof}
	The proposition follows from the following claim.
    \begin{claim}[implicit in \cite{Figueira20}]
        The "containment problem" $\Gamma \contained \Delta$ between two "UC2RPQs" can be solved in non-deterministic space $\+O(\size{\Gamma} + {\size{\Delta}}^{c \cdot \bw(\Delta)})$, where $\bw(\Delta)$ is the "bridge-width" of $\Delta$ and $c$ is a constant.
    \end{claim}
    \AP
    In the statement above, a ""bridge"" of a "C2RPQ" is a minimal set of "atoms" whose removal increases the number of connected components of the query, and the ""bridge-width"" of a "C2RPQ" is the maximum size of a "bridge" therein. The "bridge-width" of a union of "C2RPQs" is the maximum "bridge-width" among the "C2RPQs" it contains. In particular, the maximal number
	of "atoms" of a disjunct is an upper bound for "bridge-width".
\end{proof}

We provide an alternative upper bound in
\Cref{prop:bound-containment-pb-alt} (\Cref{apdx-sec:alternative-upper-bound-containment}),
which also yields a "2ExpSpace" upper bound for \Cref{lem:sem-tw-in-twoexp}.

\begin{proof}[Proof of~\Cref{lem:sem-tw-in-twoexp}]
To test whether a query $\Gamma$ is of "semantic tree-width" $k$, it suffices to test the "containment" $\Gamma \contained \Gamma'$, where $\Gamma'$ is the "maximal under-approximation" $\bigcup_{\gamma \in \Gamma}\MUAHomBounded{\gamma}{\Tw}{\leq\l}$ given by \Cref{cor:mua-exists-effective}: a double-exponential union of single-exponential sized "C2RPQs". Thus, by the bound of \Cref{prop:bound-containment-pb} (and "Savitch's Theorem"), we obtain a double-exponential space upper bound.
\end{proof}

Moreover, from the "equivalences"
$\MUA{\gamma}{\Tw} \semequiv \MUAHom{\gamma}{\Tw}$ and
$\MUAHom{\gamma}{\Tw} \semequiv \MUAHomBounded{\gamma}{\Tw}{\leq\l}$ of
\Cref{coro:equivalence_under_approx_homomorphism_twk,lemma:bound_size_refinements}, we can derive
new characterizations for queries of bounded "semantic tree-width".
\closureundersublanguages

\begin{proof}[Proof of \Cref{thm:closure-under-sublanguages}]
    The implications
	$\itemClosureUCRPQSimple \Rightarrow \itemClosureUCRPQ \Rightarrow \itemClosureInfCQ$
	are straightforward:
    they follow directly from \Cref{fact:refinement-tw}.
    For $\itemClosureInfCQ \Rightarrow \itemClosureUCRPQSimple$, note that $\itemClosureInfCQ$ implies that
    $\Gamma \semequiv \MUA{\Gamma}{\Tw}$, and by
    \Cref{lemma:bound_size_refinements},
    $\MUA{\Gamma}{\Tw} \semequiv \Delta \defeq
    \bigvee_{\gamma \in \Gamma} \MUAHomBounded{\gamma}{\Tw}{\leq \l}$,
    so $\Gamma$ is "equivalent" to the latter.
    Since queries of $\Delta$ are obtained as "homomorphic" images
    of "refinements" of $\Gamma$, all of which are labelled by "sublanguages" of
    $\+L$, and since $\+L$ is "closed under sublanguages", it follows that
    $\Gamma$ is "equivalent" to a $\UCtwoRPQ(\+L)$ of "tree-width" $k$.
\end{proof}

\begin{remark}
    \AP\label{rk:closure-under-sublanguages-k1}
	The statement of \Cref{thm:closure-under-sublanguages} does not hold for $k=1$.

    $\itemClosureUCRPQ \not\Rightarrow \itemClosureInfCQ$ when $k=1$:
    consider the "CRPQ" $\gamma(x,y) = x \atom{a^*} y \land y \atom{b} x$ of "tree-width" 1,
    and hence of "semantic tree-width" $1$, and observe that it is not equivalent to any
    "infinitary union" of "conjunctive queries" of "tree-width" $1$---this can be proven
    by considering, for example, the "expansion"
    $x \atom{a} z \atom{a} y \land y \atom{b} x$ of $\gamma(x,y)$
    and applying \Cref{prop:cont-char-exp-st}.

    $\itemClosureUCRPQSimple \not\Rightarrow \itemClosureUCRPQ$ when $k=1$:
	by \cite[Proposition 6.4]{BarceloRV16} the "CRPQ"
    of "semantic tree-width" 1
    $\gamma(x) \defeq x \coatom{a} z \atom{a} y \land x \atom{b} y \semequiv x \atom{ba^- a} x$
    is not equivalent to any "UCRPQ" of "tree-width" 1. Hence, the implication is false when $\+L$ is the class of regular languages over $\Aext$
    that do not use any letter of the form $a^-$. \qed
\end{remark}

See \Cref{coro:charact-semantic-treewidth-1} for a similar (but different) characterization
of queries of "semantic tree-width" at most 1.
As an immediate corollary of \Cref{thm:closure-under-sublanguages}, by taking
$\+L$ to be the class of all regular languages over $\A$, we obtain the following result.
\begin{corollary}
	\AP\label{coro:collapse-twoway-oneway-semtw}
	Let $k \geq 2$. A "UCRPQ" has "semantic tree-width" at most $k$
	if and only if it has "one-way semantic tree-width" at most $k$.
\end{corollary}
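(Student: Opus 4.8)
The plan is to obtain this statement as a direct instance of \Cref{thm:closure-under-sublanguages}, for a well-chosen class of languages $\+L$. First I would take $\+L$ to be the class of all regular languages over the positive alphabet $\A$, that is, those using no inverse letter. The crucial observation is that with this choice the class $\UCtwoRPQ(\+L)$ coincides \emph{exactly} with the class of "UCRPQs": an "atom" labelled by a regular language over $\A$ is by definition one-way, and conversely every "UCRPQ" has all of its "atoms" labelled by such languages. In particular, the input "UCRPQ" $\Gamma$ belongs to $\UCtwoRPQ(\+L)$, so the theorem will be applicable to it.

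Before invoking it, I would check the only nontrivial hypothesis, namely that this $\+L$ is "closed under sublanguages". For condition (i), every singleton $\{a\}$ with $a \in \A$ is a regular language over $\A$ and hence lies in $\+L$; moreover, since no word of any language in $\+L$ uses an inverse letter, the positive letters are the only ones that need to be considered. For condition (ii), given any $L \in \+L$ and any NFA $\+A_L$ recognising $L$ over $\A$, every "sublanguage" $\subaut{\+A_L}{q}{q'}$ is again a regular language over $\A$, and therefore belongs to $\+L$. Thus $\+L$ is "closed under sublanguages".

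With these two facts established, \Cref{thm:closure-under-sublanguages} applies to $\Gamma$ for every $k \geq 2$; I would use in particular the equivalence $\itemClosureUCRPQ \Leftrightarrow \itemClosureUCRPQSimple$, relating the statement that $\Gamma$ has "semantic tree-width" at most $k$ to the statement that $\Gamma$ is "equivalent" to a $\UCtwoRPQ(\+L)$ of "tree-width" at most $k$. By the identification above, the latter says precisely that $\Gamma$ is "equivalent" to a "UCRPQ" of "tree-width" at most $k$, which is exactly the condition that $\Gamma$ has "one-way semantic tree-width" at most $k$. This settles the nontrivial (``only if'') direction. The converse (``if'') is immediate and needs no theorem: any "UCRPQ" of "tree-width" at most $k$ is in particular a "UC2RPQ" of "tree-width" at most $k$, so "one-way semantic tree-width" at most $k$ trivially implies "semantic tree-width" at most $k$.

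I expect no genuine obstacle, since all the real work is packaged inside \Cref{thm:closure-under-sublanguages}. The only points demanding care are the two pieces of bookkeeping: verifying that the chosen $\+L$ is "closed under sublanguages", and identifying $\UCtwoRPQ(\+L)$ for this particular $\+L$ with the syntactic class of "UCRPQs", so that condition $\itemClosureUCRPQSimple$ becomes precisely the "one-way semantic tree-width" condition.
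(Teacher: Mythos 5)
Your proposal is correct and matches the paper's own argument exactly: the paper also obtains this corollary as an immediate instance of \Cref{thm:closure-under-sublanguages} by taking $\+L$ to be the class of all regular languages over $\A$, so that $\UCtwoRPQ(\+L)$ is precisely the class of "UCRPQs" and the equivalence $\itemClosureUCRPQ \Leftrightarrow \itemClosureUCRPQSimple$ yields the statement. Your verification that this $\+L$ is "closed under sublanguages" is the right (and only) bookkeeping step, which the paper leaves implicit.
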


Lastly, using \Cref{cor:mua-exists-effective} as a black box, we can obtain an "FPT" algorithm
for the "evaluation problem".
\fptEvalBoundedSemTreeWidth
\begin{proof}
	First, compute from $\Gamma$ its "maximal under-approximation" $\Gamma'$ using
	\Cref{cor:mua-exists-effective}
	in single-exponential space, and hence double-exponential time.
	Then, evaluate $G$ on $\Gamma'$ using \Cref{prop:crpq-bound-tree-width-upper-bound}.
\end{proof}
This improves the database-dependency from the previously best (and first) known upper bound, which was $\+O(f'(\size{\Gamma})\cdot |G|^{2k+1})$ for a single-exponential $f'$ \cite[Theorem IV.11 \& Lemma IV.13]{DBLP:conf/lics/0001BV17}.
We discuss open questions related to this
in \Cref{sec:charact-tractability}.

We are left with the proof of the "Key Lemma". But before doing so, we will need to introduce in the next \Cref{sec:treedec} some basic notions that we will need in the proof, which is deferred to \Cref{sec:proof-key-lemma}.

\section{\AP{}Intermezzo: Tagged Tree Decompositions}
\label{sec:treedec}

In this section we introduce some technical tools necessary for the proof of
the "Key Lemma". Remember that its statement deals with
\[
	\MUAHomBounded{\gamma}{\Tw}{\leq m} \defeq 
    \{ 
        \alpha \in \Tw
        \mid
        \exists \rho \in \Refin[\leq m](\gamma),\, \exists \fun\colon \rho \surj \alpha
    \},
\]
and consequently its proof needs to manipulate "homomorphisms"
from "refinements" onto "C2RPQs" of "tree-width" $\leq k$.
The proof will ``massage'' the "homomorphism" $f$ and queries $\alpha,\rho$ in order to reduce the size of $m$, while preserving (a) the existence of a "homomorphism" between the two queries,
(b) the "tree-width" of the right-hand side, (c) the fact that the left-hand side is a "refinement",
and (d) some semantic properties of the queries. Our construction will be guided by
the "tree decomposition" of $\alpha$, and more importantly by how $\rho$ is mapped onto such decomposition.

\begin{figure}[tbp]
	\centering
	\begin{subfigure}[c]{.25\linewidth}
		\centering
		\includegraphics{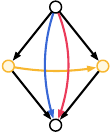}
		\caption{A query $\gamma$ of "tree-width" 3.\\~}
	\end{subfigure}
	\hfill
	\begin{subfigure}[c]{.35\linewidth}
		\centering
		\includegraphics{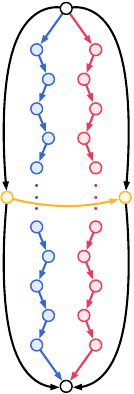}
		\caption{\AP\label{fig:trio-refinement}A "refinement" $\rho$ of $\gamma$.\\~\\~}
	\end{subfigure}
	\hfill
	\begin{subfigure}[c]{.35\linewidth}
		\centering
		\includegraphics{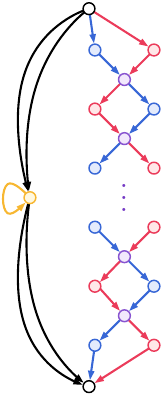}
		\caption{A homomorphic image $\alpha$ of $\rho$ of "tree-width" 2. See \Cref{fig:trio-tree-dec} for a "tree decomposition" of $\alpha$ (ignoring the dashed blue lines).}
	\end{subfigure}
	\caption{
		\AP\label{fig:trio}
		An example of a "homomorphism" $f\colon \rho \surj \alpha$.
		The "strong onto homomorphism" $f$ is implicitly defined: it sends the two yellow
		vertices of $\rho$ on the unique yellow vertex of $\alpha$, and identifies some
		blue and red vertices of $\rho$---thus creating purple vertices in $\alpha$.
	}
\end{figure}

\begin{definition}
    \AP Let $\fun\colon \rho \homto \alpha$ be a "homomorphism" between two "C2RPQs".
    A ""tagged tree decomposition""
    of $\fun$ is a triple $(T, \bagmap, \intro*\tagmap)$ where
    $(T, \bagmap)$ is a "tree decomposition" of $\alpha$,
    and $\tagmap$ is a mapping $\tagmap\colon \atoms{\rho} \to \vertex{T}$, called \AP""tagging"",
    such that for each "atom" $e = x \atom{\lambda} y \in \atoms{\rho}$, we have that $\bagmap(\tagmap(e))$ "contains@@tw" both $\fun(x)$ and
    $\fun(y)$.
\end{definition}

\begin{figure}[tbp]
	\centering
	\includegraphics[width=\linewidth]{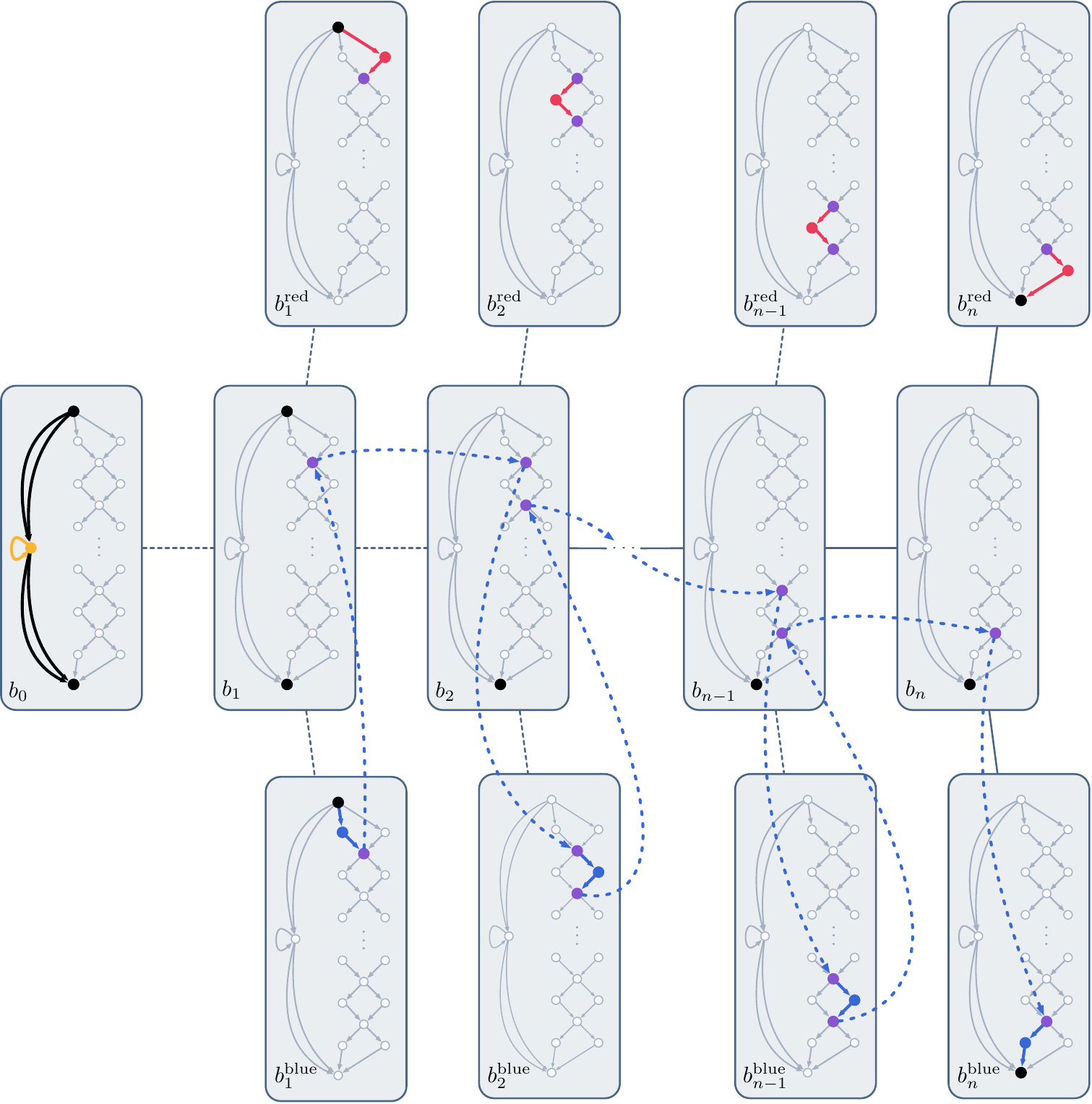}
	\caption{
		\AP\label{fig:trio-tree-dec}
		A "fine tagged tree decomposition" of $\alpha$ (see \Cref{fig:trio}) of "width" 2.
		(Recall that some bags are omitted for the sake of readability. These bags
		are there to make the decomposition "fine".)
	}
\end{figure}

In other words, $\tagmap$ gives,
for each "atom" of $\rho$, a witnessing "bag" that "contains@@tw" it, in the sense that
it contains the image by $\fun$ of the atom's source and target.
By definition, given a "tree decomposition" $(T, \bagmap)$ of $\alpha$ and a "homomorphism"
$\fun\colon \rho \surj \alpha$, there is always one way (usually many) of extending $(T, \bagmap)$
into a "tagged tree decomposition" of $\fun$.

We provide an example of "homomorphism" $\fun\colon \rho \surj \alpha$ in \Cref{fig:trio}. Note 
that in this example, $\rho$ is defined as the "refinement" of a query, and $\fun$ is "strong 
onto"---for now this is innocuous, but we will always work under these
assumptions in \Cref{sec:proof-key-lemma}. In \Cref{fig:trio-tree-dec}, we give a "tagged tree 
decomposition" of this "homomorphism". Each "bag" is given a name, written in the bottom left 
corner. The "tagging" is represented as follows: if an "atom" is "tagged" in a "bag", then it
is drawn as a solid bold arrow in this "bag". Note that by definition, a given "atom" is "tagged" in
exactly one "bag". For now, blue dashed arrow between bags can be ignored---they will illustrate \Cref{def:path-induced}.

\begin{fact}
    \AP\label{fact:restriction_tagged_treedec}
    Let $(T, \bagmap, \tagmap)$ be a "tagged tree decomposition" of some "strong onto homomorphism"
    $\fun\colon \rho \surj \alpha$.
    Let $T'$ be the smallest connected subset of $T$ containing the image of $\tagmap$.
    Then $(T', \bagmap\vert_{T'}, \tagmap)$ is still a "tagged tree decomposition" of $\fun$,
    whose "width" is at most the "width" of $(T, \bagmap, \tagmap)$.
\end{fact}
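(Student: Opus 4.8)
The plan is to verify that $(T', \bagmap\vert_{T'})$ satisfies the three axioms of a "tree decomposition" of $\alpha$, after which the "tagging" condition and the "width" bound follow with almost no work. The two structural facts I would lean on are that moving from $T$ to $T'$ merely discards nodes of $T$ without ever modifying a surviving "bag", and that $T'$ --- being the smallest connected subset of $T$ containing $\tagmap(\atoms{\rho})$ --- is a subtree of $T$.

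First I would treat edge coverage together with the "tagging" condition, since this is where the "strong onto" hypothesis is indispensable. Every $\tagmap(e)$ lies in $T'$ by definition, so $\tagmap$ remains a well-defined map into $\vertex{T'}$, and as bags are unchanged, $\bagmap(\tagmap(e))$ still contains $\fun(x)$ and $\fun(y)$ for each "atom" $e = x \atom{\lambda} y$ of $\rho$; this is exactly the "tagging" condition. It also yields edge coverage: since $\fun$ is "strong onto", every "atom" $x' \atom{\lambda} y'$ of $\alpha$ is of the form $\fun(x) \atom{\lambda} \fun(y)$ for some "atom" $e = x \atom{\lambda} y$ of $\rho$, and the "bag" $\bagmap(\tagmap(e)) \in \vertex{T'}$ then contains both endpoints of this edge. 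In particular, every vertex of $\alpha$ incident to an "atom" is covered; these exhaust $\vars(\alpha)$ in the case relevant to the "Key Lemma", where $\alpha$ has no isolated vertex.

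Next I would check connectivity, the only axiom that restriction could plausibly break. Fix $v \in \vars(\alpha)$ and set $B_v = \set{ b \in \vertex{T} : v \in \bagmap(b) }$; since $(T, \bagmap)$ is a "tree decomposition", $B_v$ is a subtree of $T$. The bags of $T'$ containing $v$ are precisely $B_v \cap \vertex{T'}$, the intersection of the two subtrees $B_v$ and $T'$ of $T$, which is again connected by the elementary Helly-type property of subtrees of a tree. Hence connectivity is preserved.

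The "width" bound is then immediate: $(T', \bagmap\vert_{T'})$ reuses a subset of the bags of $(T, \bagmap)$ verbatim, so its "width" cannot exceed that of $(T, \bagmap, \tagmap)$. The only conceptual obstacle is that a blind restriction of a "tree decomposition" generally destroys both edge coverage and connectivity; the statement survives precisely because $\fun$ being "strong onto" forces every edge of $\alpha$ to be witnessed by a \emph{tagged} --- hence retained --- "bag", and because intersections of subtrees of a tree remain connected.
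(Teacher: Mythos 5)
Your proof is correct, and since the paper states this fact without any proof (treating it as immediate), your argument simply fills in the routine verification the authors leave implicit: edge coverage and the tagging condition via strong-ontoness of $f$ (every atom of $\alpha$ is witnessed by a tagged, hence retained, bag), connectivity of occurrence sets via the fact that the intersection of two subtrees of a tree is connected, and the width bound because surviving bags are unchanged. Your caveat about isolated vertices is also well taken: vertex coverage genuinely needs every variable of $\alpha$ to occur in some atom, an assumption the paper makes silently and which holds in the setting where the fact is applied.
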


In the following paragraphs, we extend the notion of "tagging" to paths.
We illustrate this notion in \Cref{fig:trio-tree-dec}, where we describe
the path induced by
the blue path of \Cref{fig:trio-refinement}---which starts at the top-most vertex,
follows the blue atoms, and reaches the bottom-most vertex.
\AP Informally, in the context of a "tagged tree decomposition" $(T, \bagmap, \tagmap)$ of  $\fun\colon \rho \homto \alpha$, given a path $\pi$ of $\rho$, say $x_0 \atom{\lambda_1} x_1 \atom{\lambda_2} \cdots \atom{\lambda_n} x_n$, the "path induced" by $\pi$, denoted by $\tagmappath{\pi}$, is informally defined as the following ``path'' in
$T\times \alpha$, seen as a sequence of pairs of "bags" and variables
from $V(T) \times \vars(\alpha)$:
\begin{itemize}
    \item it starts with the "bag" $\tagmap(x_0 \atom{\lambda_1} x_1)$ of $T$ and the variable $\fun(x_0)$ of $\alpha$; in \Cref{fig:trio-tree-dec}, this corresponds to bag $b^{\text{blue}}_1$;
	\item it then goes to $\langle \tagmap(x_0 \atom{\lambda_1} x_1), \fun(x_1) \rangle$;
    \item it then follows the shortest path in $T$ (unique, since it is a tree) that goes to the "bag" $\tagmap(x_1 \atom{\lambda_2} x_2)$, while staying in $\fun(x_1)$ in $\alpha$---in \Cref{fig:trio-tree-dec}, this bag is the same as before, namely $b^{\text{blue}}_1$, so we do nothing;
    \item then, it goes to $\langle \tagmap(x_1 \atom{\lambda_2} x_2),\fun(x_2) \rangle$ in a single step;
    \item it then follows the shortest path in $T$ (unique, since it is a tree) that goes to the "bag" $\tagmap(x_2 \atom{\lambda_\lambda} x_3)$, while staying in $\fun(x_2)$ in $\alpha$---in our running example, we go from $b^{\text{blue}}_i$
	to $b_i$, and then to $b_{i+1}$ before reaching $b^{\text{blue}}_{i+1}$;
    \item it continues in the same way for all other atoms of the path, ending up with the  "bag" $\tagmap(x_{n-1} \atom{\lambda_n} x_n)$ and the variable $\fun(x_{n})$ of $\alpha$.
\end{itemize}
By construction, note that the constructed sequence $(b_i, z_i)_{i}$,
also denoted by $({b_i \choose z_i})_{i}$, is
such that $z_i \in \bagmap(b_i)$. Moreover, the values taken
by the sequence $(z_i)_{i}$ are $(f(x_j))_{0 \leq j \leq n}$, in the same order
but potentially with repetitions.
Graphically, this sequence corresponds to a path in the "tagged tree decomposition", where one can
not only move along the "bags", but also along the variables they contain.
In our example, the "path induced" by the blue path of \Cref{fig:trio-refinement}
corresponds in \Cref{fig:trio-tree-dec} to the blue path consisting of both solid and dashed edges.
Moreover, note that a single atom $x_0 \atom{\lambda} x_1$ of $\rho$ "induces the path":
\begin{equation}
	\AP\label{eq:path-induced-atom}
	\Bigl\langle
	{\textstyle
		{ \tagmap(x_0 \atom{\lambda} x_1) \choose x_0 },\;
		{ \tagmap(x_0 \atom{\lambda} x_1) \choose x_1 }
	}
	\Bigr\rangle.
\end{equation}

\begin{definition}[Path induced in a tagged tree decomposition---formal definition]
	\AP\label{def:path-induced}
	Given a "homomorphism" $f\colon \rho \homto \alpha$ and a "tagged tree decomposition"
	$(T, \bagmap, \tagmap)$ of $f$,
    the \AP""link"" from an "atom" $A = x \atom{\lambda} y$ to an "atom" $B= y \atom{\lambda'} z$ of $\rho$ is the unique (possibly empty) sequence
    $
            \textstyle{
                {b_{1} \choose \fun(y)}, \hdots,
                {b_{n} \choose \fun(y)},
            }
    $
    where $\tagmap(A), b_{1}, \dotsc, b_n, \tagmap(B)$ is the unique simple path from $\tagmap(A)$ to $\tagmap(B)$ in $T$.

    \AP The \intro{path induced} by a path
	$
		\pi = 
		x_0 \atom{\lambda_1} x_1 \atom{\lambda_2} \cdots \atom{\lambda_n} x_n
	$
	of $\rho$ is the unique sequence
    \[
		\intro*\tagmappath{\pi} \defeq
            \textstyle{
                {b_0 \choose \fun(x_0)} {b_0 \choose \fun(x_1)} \,
                L_1 \,
                {b_1 \choose \fun(x_1)} {b_1 \choose \fun(x_2)} \,
                L_2 
                \dotsb 
				{b_{n-2} \choose \fun(x_{n-2})}
                L_{n-1} \,
                {b_{n-1} \choose \fun(x_{n-1})} {b_{n-1} \choose \fun(x_n)}
            }
    \]
    where $b_i = \tagmap(x_i \atom{\lambda_{i+1}} x_{i+1})$ and $L_i$ is the "link" from $x_{i-1} \atom{\lambda_i} x_i$ to $x_{i} \atom{\lambda_{i+1}} x_{i+1}$, for every $i$.
\end{definition}

\AP Moreover, given a "bag" $b$ of $T$ and a variable $z$ of $\alpha$, we say that
$\tagmappath{\pi}$ ""leaves"" $b$ at $z$ when ${b \choose z}$ belongs to
$\tagmappath{\pi}$, and this is either the last element of the sequence $\tagmappath{\pi}$,
or the next element of the sequence has a "bag" distinct from $b$.

For example, in \Cref{fig:trio-tree-dec},
$\tagmappath{\pi}$ "leaves" $b^{\text{blue}}_1$ at the first purple vertex.
Similarly, it "leaves" $b_1$ and $b_2$ at this same vertex. Moreover,
it also "leaves" $b_2$ at the second purple vertex.

We say that an "induced path" is \AP""cyclic@@path"" if it contains two positions $i, j$
such that $i+2 \leq j$ and $b_i = b_{j}$.
We say that it is \reintro(path){acyclic} otherwise,
meaning that if we visit a "bag" for the first time, we can visit it again at
most once, in which case it must be precisely at the next time step.
For instance, the "path induced" by the blue "atom refinement" in \Cref{fig:trio-tree-dec}
is "cyclic@@path". However, the "path induced" by a single
"atom"---see \eqref{eq:path-induced-atom}--- is always "acyclic@@path".

\begin{fact}
	\AP\label{fact:acyclic-decomposition-leave-forever}
	If an "induced path" $\tagmappath{\pi}$ is "acyclic@@path", for any "bag" $b$,
	there is at most one variable $z$ of $\alpha$ such that $\tagmappath{\pi}$
	"leaves" $b$ at $z$.
\end{fact}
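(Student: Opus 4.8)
The plan is to unwind the definitions of "acyclic@@path" and "leaves" and to observe that the bag $b$ can occupy only a very restricted set of positions along the sequence $\tagmappath{\pi}$. Write $\tagmappath{\pi}$ as a sequence of pairs ${b_0 \choose z_0}, {b_1 \choose z_1}, \dotsc, {b_N \choose z_N}$ (where, by construction, $z_i \in \bagmap(b_i)$ for every $i$), and let $S = \set{ i : b_i = b }$ be the set of positions at which the bag equals $b$.

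First I would show that $S$ is either empty, a singleton, or a pair $\set{i, i+1}$ of consecutive positions. Indeed, take any $i, j \in S$ with $i < j$; since $b_i = b_j = b$ and $\tagmappath{\pi}$ is "acyclic@@path", the defining condition for being "cyclic@@path" must fail on the pair $(i,j)$, so $i+2 \leq j$ is impossible and hence $j = i+1$. Applying this to $\min S$ and $\max S$ forces $|S| \leq 2$, with the two elements consecutive when $|S| = 2$. This is the only place where acyclicity is used, and it is the crux of the argument.

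Next I would match this block structure against the definition of "leaves". By definition, $\tagmappath{\pi}$ "leaves" $b$ at $z$ exactly when some occurrence of the pair ${b \choose z}$ is either the last element of $\tagmappath{\pi}$ or is immediately followed by a pair whose bag is distinct from $b$. Since all positions carrying bag $b$ form the contiguous block $S$, every position of $S$ except its greatest element $p$ is immediately followed by another position of $S$, which still carries bag $b$; such a position can therefore not witness a leave. Only the position $p$ is either the global endpoint of the sequence or followed by a position outside $S$ (hence with a different bag). Consequently the only variable at which $\tagmappath{\pi}$ can leave $b$ is $z_p$.

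This yields the statement at once: if $S = \emptyset$ there is no leave, and otherwise $z_p$ is the unique candidate, so there is at most one such variable. I do not anticipate any genuine obstacle, as the claim is a direct consequence of the definitions; the only point deserving a line of care is the degenerate subcase $S = \set{i, i+1}$ with $z_i = z_{i+1}$, where the single pair ${b \choose z_i}$ occurs at both positions---here it is its occurrence at position $i+1 = p$ that witnesses the leave, and we still obtain exactly one variable.
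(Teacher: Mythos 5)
Your proof is correct, and it matches the paper's treatment: the paper states this fact without proof, as an immediate consequence of the definitions of \emph{acyclic} and \emph{leaves}, and your argument (acyclicity forces the positions carrying bag $b$ to form a contiguous block of length at most two, of which only the last position can witness a leave) is precisely the definitional unwinding that is being taken for granted. The care you give to the degenerate case $z_i = z_{i+1}$ is a nice touch, though not strictly needed since either way only one variable can witness a leave.
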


\AP Lastly, we define a ""fine tagged tree decomposition"" of $\fun\colon \rho \homto \alpha$
to be a "tagged tree decomposition" of $\fun$ that is also a "fine tree decomposition" of $\alpha$.
We abuse the notation and talk about the "fine tagged tree decomposition" of a "C2RPQ" $\gamma$
to talk about the "fine tagged tree decomposition" of the identity "homomorphism"
$\id\colon \gamma \surj \gamma$.

\AP One of the key properties of "fine tagged tree decompositions" is that in any of its ""non-branching paths""---"ie" paths in $T$ whose non-extremal "bags" have degree exactly 2---, at least half of the "bags" are "non-full", "ie" they
contain at most $k$ variables\footnote{Recall 
that in a decomposition of "width" $k$, "bags" are allowed to contain at most $k+1$ variables.}.
Such "bags" will prove useful in the next section because of the following property.

\begin{figure}
    \centering
    \includegraphics[width=\linewidth]{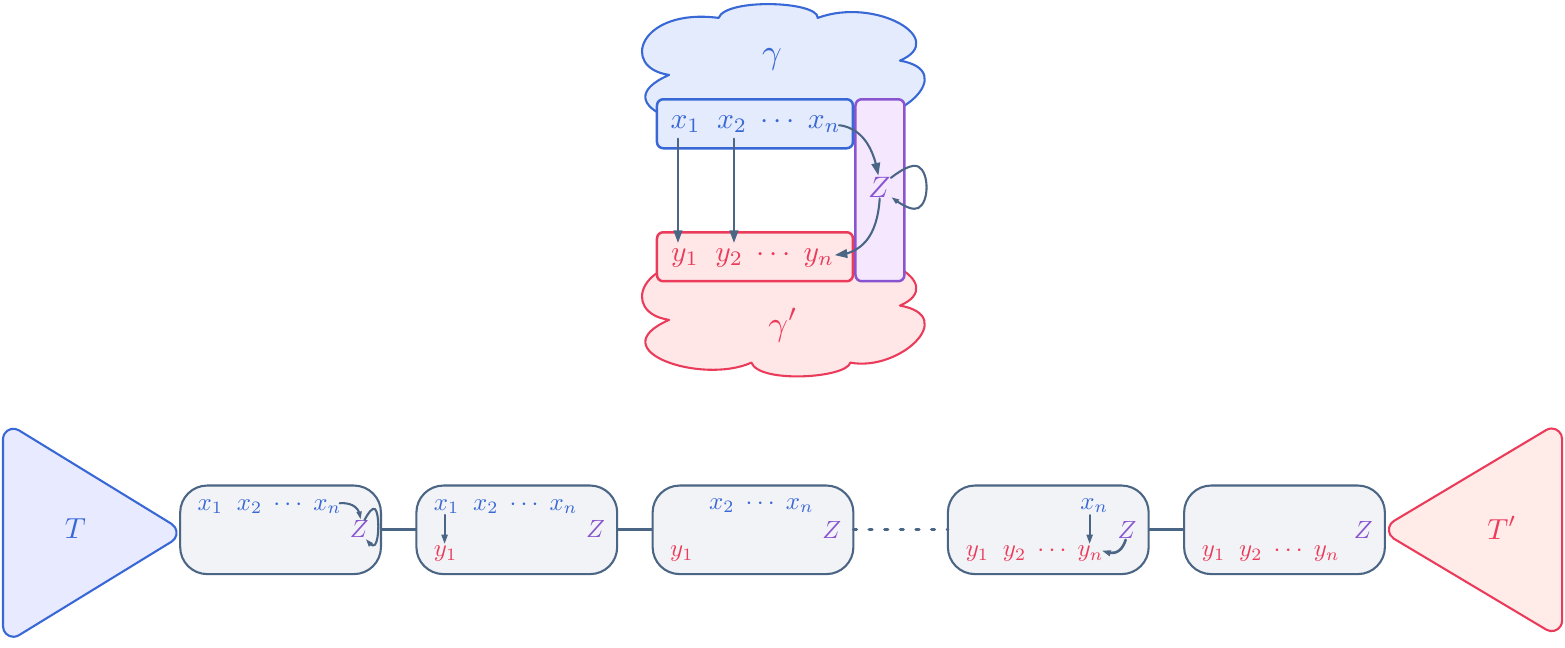}
    \caption{%
        \AP\label{fig:connecting-tree-decompositions} The query
        $\gamma \land \gamma' \land \delta$ (top)
        and one of its "fine tagged tree decomposition"
		of "width" at most $k$ (bottom).
    }
\end{figure}

\begin{proposition}
    \AP\label{prop:connecting-tree-decompositions}
    Let $\gamma,\gamma'$ be "C2RPQs",
    and $(T,\bagmap,\tagmap)$---resp.\ $(T',\bagmap',\tagmap')$---be
	a "fine tagged tree decomposition" of width $k$ of $\gamma$---resp.\ of $\gamma'$.
    Let $b, b'$ be leaves of $T$ and $T'$ respectively, such that $b$ and $b'$ are "non-full bags"
	of the same cardinality,
	and let $Z = \bagmap(b) \cap \bagmap'(b')$. In particular, we have
    \[
        \bagmap(b) = \{x_1,\hdots,x_n\}\cup Z
        \;\text{ and }\;
        \bagmap'(b') = \{y_1,\hdots,y_n\}\cup Z,
    \]
	from some variables "st" the $x_i$'s are disjoint from the $y_i$'s.
	Assume moreover that $\vars(\gamma) \cap \vars(\gamma') \subseteq Z$.
    Then, for any conjunction $\delta$ of "atoms" the form:
	\begin{itemize}
		\item $x_i \atom{L} y_i$ for some $i \in \lBrack 1, n \rBrack$,
		\item $x_i \atom{L} z$ for some $i \in \lBrack 1, n \rBrack$ and $z\in Z$,
		\item $z \atom{L} y_i$ for some $i \in \lBrack 1, n \rBrack$ and $z\in Z$,
		\item $z \atom{L} z'$ for some $z, z'\in Z$,
	\end{itemize}
	the query $\gamma \land \gamma' \land \delta$
	has a "fine tagged tree decomposition"
    of "width" $k$ in which the length of the longest "non-branching path" is smaller
    than the sum of the longest "non-branching paths" of $T$ and of $T'$, plus $2n$.
\end{proposition}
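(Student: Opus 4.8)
The plan is to build the desired decomposition by \emph{gluing} $(T,\bagmap,\tagmap)$ and $(T',\bagmap',\tagmap')$ through a freshly inserted path of bags connecting the leaf $b$ to the leaf $b'$ (see \Cref{fig:connecting-tree-decompositions}). Since $b$ and $b'$ are leaves, attaching a path to each and joining them yields again a tree $T''$ that contains $T$ and $T'$ as subtrees. The path will progressively transform the bag $\bagmap(b)=\{x_1,\dots,x_n\}\cup Z$ into $\bagmap'(b')=\{y_1,\dots,y_n\}\cup Z$ by a ``rotation'' that swaps the $x_i$ for the $y_i$ one variable at a time: starting from $\bagmap(b)$, I alternately \emph{add} $y_i$ and then \emph{remove} $x_i$, for $i=1,\dots,n$, so that the $(2i-1)$-th new bag is $\{x_i,\dots,x_n,y_1,\dots,y_i\}\cup Z$, and a final removal of $x_n$ lands exactly on $\bagmap'(b')$. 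This uses $2n-1$ new bags.

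The key quantitative point---and the place where the hypotheses are used---is that every new bag has size at most $k+1$: the ``remove'' bags have size $n+|Z|\leq k$, while the ``add'' bags have size $n+1+|Z|$, which is $\leq k+1$ precisely because $b$ (equivalently $b'$) is \emph{non-full}, i.e.\ $|\bagmap(b)|=n+|Z|\leq k$; the equal-cardinality assumption guarantees that both leaves contribute the same number $n$ of private variables, so the rotation ends exactly at $\bagmap'(b')$. I then set $\bagmap''$ to agree with $\bagmap$ on $T$, with $\bagmap'$ on $T'$, and with the above bags on the connecting path.

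It remains to check that $(T'',\bagmap'')$ is a genuine tree decomposition of $\gamma\land\gamma'\land\delta$ and to equip it with a tagging. Coverage of edges is the heart of the matter: the atoms of $\gamma$ and of $\gamma'$ are covered within $T$ and $T'$; among the atoms of $\delta$, those of type $x_i\atom{L}z$, $z\atom{L}y_i$, $z\atom{L}z'$ are already covered by $b$ or $b'$ (which contain all the $x_i$, resp.\ all the $y_i$, together with $Z$), and the only genuinely new atoms $x_i\atom{L}y_i$ are covered by the $(2i-1)$-th new bag, which contains both $x_i$ and $y_i$. Connectivity of each variable is where the assumption $\vars(\gamma)\cap\vars(\gamma')\subseteq Z$ is needed: variables of $Z$ occupy $b$, every new bag, and $b'$, so their occurrences stay connected across the glue; each $x_i$ (resp.\ $y_i$) is private to $\gamma$ (resp.\ $\gamma'$) and occupies a connected prefix (resp.\ suffix) of the path adjacent to its subtree; no other variable is shared between the two sides. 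The tagging $\tagmap''$ then extends $\tagmap$ and $\tagmap'$ by sending each atom $x_i\atom{L}y_i$ to the $(2i-1)$-th bag and each remaining $\delta$-atom to $b$ or $b'$.

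Fineness transfers almost for free once one observes that condition \eqref{eq:fine-tree-dec} is a symmetric, root-independent property of \emph{edges}---an edge is fine iff its two bags are comparable for inclusion and distinct---so it suffices that each edge of the connecting path is fine, which holds since consecutive bags differ by a single added or removed vertex, the original edges of $T$ and $T'$ being fine by hypothesis. Finally, for the length bound I observe that any non-branching path of $T''$ decomposes into a non-branching path of $T$ ending at the leaf $b$, the $2n-1$ inserted bags, and a non-branching path of $T'$ ending at $b'$; summing these gives a length strictly below the sum of the longest non-branching paths of $T$ and $T'$ plus $2n$. The main obstacle is not conceptual but the careful bookkeeping of the rotation: making airtight both the size bound at the ``add'' steps (where non-fullness and equal cardinality are essential) and the per-variable connectivity across the glue.
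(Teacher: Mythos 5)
Your proposal is correct and follows essentially the same construction as the paper's proof: gluing the two decompositions at the leaves $b$ and $b'$ via an alternating add-$y_i$/remove-$x_i$ path of $2n-1$ intermediate bags, using non-fullness for the width bound and the hypothesis $\vars(\gamma)\cap\vars(\gamma')\subseteq Z$ for per-variable connectivity. The only differences are cosmetic---you tag atoms $z \atom{L} y_i$ at $b'$ rather than at the bag where $y_i$ first appears, and you add a (valid and welcome) remark that fineness is a root-independent property of edges.
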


The proof of \Cref{prop:connecting-tree-decompositions} is elementary and illustrated in \Cref{fig:connecting-tree-decompositions}.

\begin{proof}
   	We connect $T$ with $T'$ with $2n \leq 2k$ bags:
    start from $b_0 \defeq b$, which contains $\{x_1,\hdots,x_n\}\cup Z$. Then create
    the following bags:
    \begin{itemize}
        \item $\bagmap(b_1) \defeq \{x_1,x_2,\hdots,x_n\}\cup\{y_1\}\cup Z
            = \bagmap(b_0) \cup \{y_1\}$, 
        \item $\bagmap(b_2) \defeq \{x_2,\hdots,x_n\}\cup\{y_1\}\cup Z
            = \bagmap(b_1) \smallsetminus \{x_1\}$,
        \item $\bagmap(b_{2i-1}) \defeq \{x_i,\hdots,x_n\}\cup\{y_1,\hdots,y_i\}\cup Z
            = \bagmap(b_{2i-2}) \cup \{y_i\}$,
        \item $\bagmap(b_{2i}) \defeq \{x_{i+1},\hdots,x_n\}\cup\{y_1,\hdots,y_i\}\cup Z
            = \bagmap(b_{2i-1}) \smallsetminus \{x_i\}$
    \end{itemize}
    for $1 \leq i \leq n$, and observe that $\bagmap(b_{2n}) = \bagmap'(b')$.
    Then, "tag" every atom of $\delta$ in the first bag of
	$\langle b_1, \hdots, b_{2n-1} \rangle$ containing both variables of the "atom".
	Such a bag always exists:
	\begin{itemize}
		\item an "atom" of the form $x_i \atom{L} z$ is "tagged" in $b_0$;
		\item an "atom" of the form $z \atom{L} z'$ is "tagged" in $b_0$;
		\item an "atom" of the form $x_i \atom{L} y_i$ is "tagged" in $b_{2i-1}$;
		\item an "atom" of the form $z \atom{L} y_i$ is "tagged" in $b_{2i-1}$.
	\end{itemize}
	Observe that the decomposition obtained is indeed a "fine tagged tree decomposition":
	in particular, it satisfies that for each variable $t$, the set of all $b \in T$
	containing $t$ is a connected subtree of $T$, thanks to the assumption that
	$\vars(\gamma) \cap \vars(\gamma') \subseteq Z$.
\end{proof} 

\section{\AP{}Key Lemma: Maximal Under Approximations are Semantically Finite}
\label{sec:proof-key-lemma}

\AP We can now start to describe the constructions used to prove the "Key Lemma"~\ref{lemma:bound_size_refinements}. Given a fixed "C2RPQ" $\gamma$ and a fixed $k \geq 1$,
we call a ""trio"" any triple $(\alpha,\rho,\fun)$ such that 
$\alpha \in \Tw$,
$\rho \in \Refin(\gamma)$ and 
$\fun$ is a "strong onto homomorphism" from $\rho$ to $\alpha$. 
For clarity, we will denote such a "trio" by simply ``$\fun\colon \rho \surj \alpha$''. 
Using this terminology, in order to prove \Cref{lemma:bound_size_refinements}, it is sufficient (and necessary) to show that:
\begin{center}
	for every "trio" $\fun\colon \rho \surj \alpha$,
	there exists another "trio" $\fun'\colon \rho' \surj \alpha'$\\
	"st" $\alpha \contained \alpha'$ and $\rho'\in \Refin[\leq \l](\gamma)$.
\end{center}

\begin{remark}
	\AP\label{rk:key-lemma-tw1}
	Note that this section does not use the fact that $k \geq 2$. In particular,
	\Cref{lemma:bound_size_refinements} holds for $k=1$. However,
	\Cref{coro:equivalence_under_approx_homomorphism_twk} does not apply,
	and $\MUA{\gamma}{\Tw[1]}$ (which we are interested in)
	is not equivalent to $\MUAHom{\gamma}{\Tw[1]}$ (which is shown to be computable by \Cref{lemma:bound_size_refinements}). We discuss this case in further details in \Cref{sec:acyclic-queries}.
\end{remark}

\subsection{\AP{}Local Acyclicity}
Our first construction, which will ultimately allow us to bound the size of "atom refinements",
shows that we can assume "wlog" that they "induce" "acyclic paths" in a "fine tagged 
tree decomposition" of $\fun$.

\begin{restatable}{lem}{locallyacyclictreedec}
    \AP\label{lemma:locally_acyclic_treedec}
    \AP For any "trio" $\fun\colon \rho \surj \alpha$, there exists a "trio" $\fun'\colon \rho' \surj \alpha'$ and a "fine tagged tree decomposition" $(T', \bagmap', \tagmap')$ of "width" at most 
    $k$ of $\fun'$ such that
    $\alpha \contained \alpha'$, $\nbatoms{\rho'} \leq \nbatoms{\rho}$ and every "atom refinement" of
    $\rho'$ "induces" an "acyclic path" in the tree $T'$, in which case
    we say that $(T', \bagmap', \tagmap')$ is \AP\introinrestatable{locally acyclic}
	"wrt" $f'$.
\end{restatable}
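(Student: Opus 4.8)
The plan is to prove the reformulation stated just above the lemma: for every trio $\fun\colon\rho\surj\alpha$ we must produce a trio $\fun'\colon\rho'\surj\alpha'$ with $\alpha\contained\alpha'$, $\nbatoms{\rho'}\leq\nbatoms{\rho}$, and a "fine tagged tree decomposition" of "width" $\leq k$ in which every "atom refinement" "induces" an "acyclic path". I would argue by a well-foundedness (extremal) argument. Consider the set $\+S$ of all tuples $(g,\sigma,\beta,(T,\bagmap,\tagmap))$ where $g\colon\sigma\surj\beta$ is a "trio" with $\sigma\in\Refin(\gamma)$, $\alpha\contained\beta$, $\nbatoms{\sigma}\leq\nbatoms{\rho}$, and $(T,\bagmap,\tagmap)$ is a "fine tagged tree decomposition" of $g$ of "width" $\leq k$. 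This set is non-empty: starting from $\fun\colon\rho\surj\alpha$, any "width"-$\leq k$ "tree decomposition" of $\alpha$ can be made "fine@fine tree decomposition" and then extended to a "tagging". On $\+S$ I would take the lexicographically least value of the measure $(\nbatoms{\sigma},\Lambda)$, where $\Lambda$ is the total number of positions of the "induced paths" of all "atom refinements" of $\sigma$. I claim any minimiser is already "locally acyclic", which then yields the lemma with $\rho'=\sigma$, $\alpha'=\beta$, $\fun'=g$.

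Suppose toward a contradiction that in a minimiser some "atom refinement" $\pi = u_0\atom{L_1}u_1\atom{L_2}\cdots\atom{L_n}u_n$, arising from a state sequence $(q_0,\dots,q_n)$ of the governing NFA $\+A$, "induces" a "cyclic path", witnessed by a "bag" $b$ occurring at two positions $p<p'$ with $p'\geq p+2$ and the same "bag"-component. Each position carries a variable-component $g(u_a)$, and since the "induced path" traverses $\pi$ in order, these indices are non-decreasing; write $a\leq c$ for the indices at $p$ and $p'$. First I would observe that $c\neq a$: the positions whose variable-component equals a fixed $g(u_a)$ occur consecutively, and their "bag"-components trace exactly the simple path in $T$ from $\tagmap(u_{a-1}\atom{L_a}u_a)$ to $\tagmap(u_a\atom{L_{a+1}}u_{a+1})$, whose "bags" are pairwise distinct; hence a repeated "bag" forces $c>a$. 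So $c\geq a+1$, and I split into two reductions.

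In the case $c\geq a+2$ (the "refinement" genuinely doubles back), I condense $\pi$ between $u_a$ and $u_c$, replacing the segment by a single "atom" $u_a\atom{K}u_c$ with $K=\subaut{\+A}{q_a}{q_c}$; by \Cref{fact:refinement-contained} the result $\sigma'$ is again in $\Refin(\gamma)$ and has strictly fewer "atoms". I set $\beta'$ to be the image $g'(\sigma')$ of the induced map $g'$, and rebuild a decomposition by restricting $(T,\bagmap)$ to $\vars(\beta')$, "tagging" the new "atom" in $b$ (legitimate since $g(u_a),g(u_c)\in\bagmap(b)$) and re-"fining@fine tree decomposition"; the "width" does not grow. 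The one substantial point is $\beta\contained\beta'$ (whence $\alpha\contained\beta'$ by transitivity), which I verify through \Cref{prop:cont-char-exp-st}: given an "expansion" of $\beta$, the "atoms" $g(u_a)\atom{L_{a+1}}g(u_{a+1})\cdots\atom{L_c}g(u_c)$ of $\beta$ are realised by a concrete path spelling a word $w\in L_{a+1}\cdots L_c\subseteq K$; expanding the new "atom" of $\beta'$ to spell this same $w$ and every surviving "atom" as in $\beta$ yields an "expansion" of $\beta'$ mapping homomorphically into the chosen "expansion" of $\beta$ (endpoints agree, and the fresh internal nodes are sent onto that realisation). Thus $(g',\sigma',\beta',\dots)\in\+S$ has strictly smaller "atom" count—irrespective of the new value of $\Lambda$—contradicting minimality. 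In the case $c=a+1$ (a spurious tree detour), $b$ contains both $g(u_a)$ and $g(u_{a+1})$ and, since $p'\geq p+2$, $b$ is not the current "tag" of $u_a\atom{L_{a+1}}u_{a+1}$ but lies strictly on the tree-detour between the neighbouring tags; I re-"tag" this "atom" to $b$, leaving $\sigma,\beta,g$ and $(T,\bagmap)$ unchanged, so the tuple stays in $\+S$ with the same "atom" count while $\Lambda$ strictly drops (the two incident "links" shorten by twice the tree-distance from $b$ to the old tag). This again contradicts minimality.

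The main obstacle is the bookkeeping of the condensation step: establishing $\beta\contained\beta'$ cleanly despite $\fun$ not being injective—so that some "atoms" of the replaced path may be shared with other "atom refinements" and must be retained in $\beta'$, which the "expansion" argument must accommodate—and checking that the rebuilt decomposition is again "fine@fine tree decomposition", correctly "tagged", and of "width" $\leq k$. The second delicate ingredient is the index analysis that converts a mere "bag"-repetition into a concrete reduction (a condensable pair when $c\geq a+2$, a re-taggable "atom" when $c=a+1$), together with ruling out $c=a$. Termination, and hence the existence of a minimiser, follows from well-foundedness of the lexicographic measure $(\nbatoms{\sigma},\Lambda)$, which is bounded below.
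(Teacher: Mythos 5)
Your proof is correct and takes essentially the same approach as the paper: upon finding a bag repeated with gap at least two in an induced path, you condense the corresponding segment of the atom refinement into a single atom labelled by the sublanguage $\subaut{\+A}{q_a}{q_c}$, add the matching atom to the approximation (harmless for the width, since both endpoint images lie in the repeated bag), restrict to the image to keep the homomorphism strong onto, and repeat. The only differences are packaging: your lexicographic-minimiser argument replaces the paper's explicit iteration (which terminates because the induced path strictly shortens at each step), and your separate re-tagging case $c=a+1$ corresponds to the paper's uniform treatment, where the degenerate condensation replaces the single atom's language by the sublanguage and re-tags it at the repeated bag.
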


\begin{figure}[tbp]
	\centering
	\includegraphics[width=\linewidth]{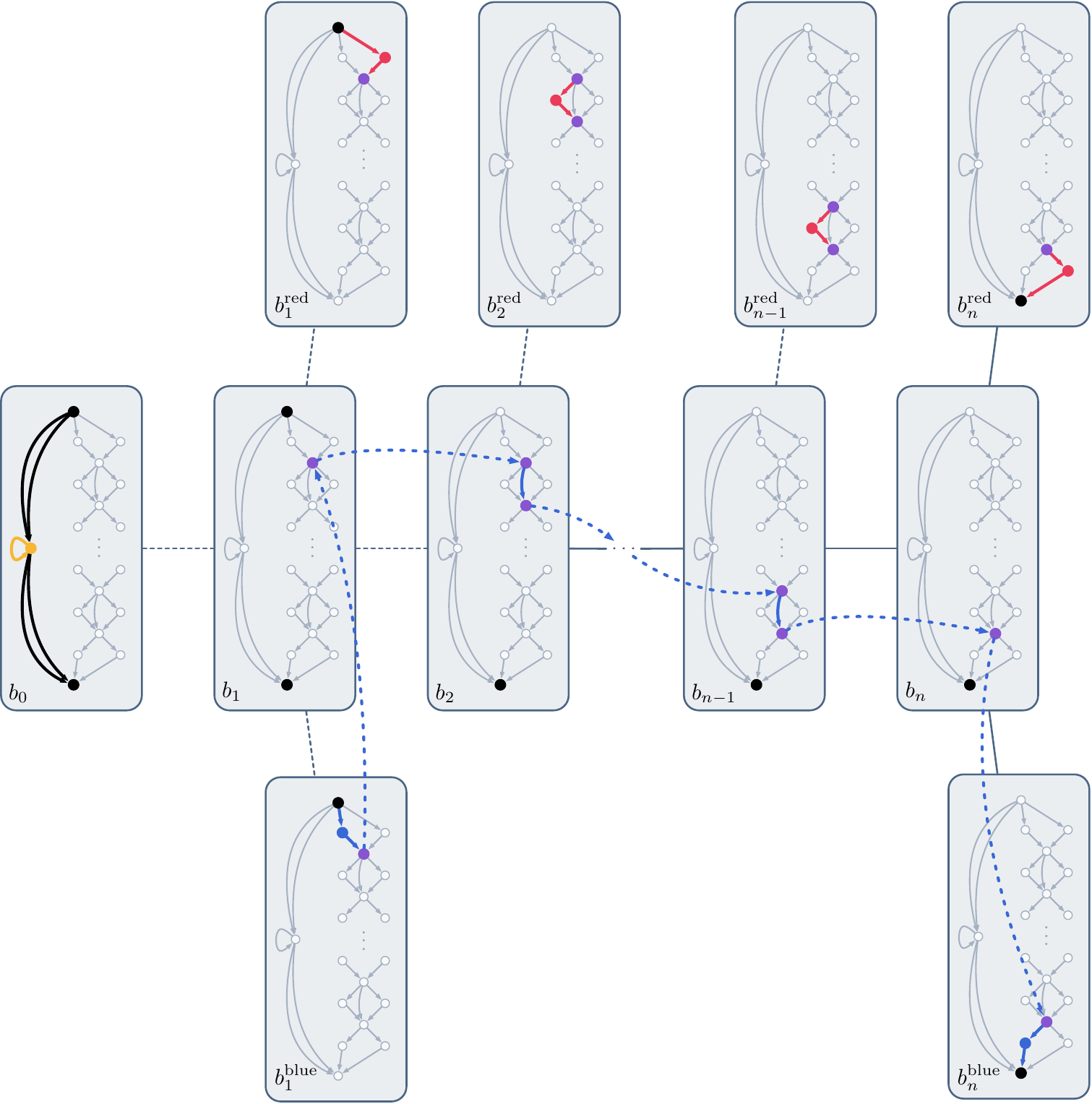}
	\caption{
		\AP\label{fig:path-induced}
        Changing the blue "atom refinement" of $\gamma$ (see \Cref{fig:trio-tree-dec})
		so that it induces an "acyclic path".
	}
\end{figure}

Note that the fact that $f'$ is a "trio" implies in particular that
$\rho'$ is a "refinement" of $\gamma$.
The construction behind \Cref{lemma:locally_acyclic_treedec} is
illustrated in \Cref{fig:path-induced}.
\begin{nota}
	\AP\label{nota:nice-tree-dec}
	When two "bags" are linked by a dashed
	edge (as in \Cref{fig:trio-tree-dec,fig:path-induced}), it means
	that there is another "bag" in between them, which is there
	to ensure the fact that the "decomposition@tree decomposition" is "fine".
	The vertices contained in this extra bag are exactly the intersection of
	the vertices contained by its two neighbours, and no atom is tagged inside.
\end{nota}

\begin{proof}[Informal proof of \Cref{lemma:locally_acyclic_treedec}]
    Start with a "trio"
    $\fun\colon \rho \surj \alpha$, and let $(T, \bagmap, \tagmap)$ be a "fine tagged tree decomposition" of
    $\fun$. Consider an "atom refinement"
    $\pi \defeq z_0 \atom{L_1} z_1 \atom{L_2} \cdots \atom{L_n} z_n$
    in $\rho$ of some atom $x \atom{L} y$ (with $z_0 \defeq x$ and $z_n \defeq y$),
    and assume that it "induces"
    a "cyclic path" in $T$--- see "eg" \Cref{fig:trio-tree-dec}. It means that some variables $z_i$ and $z_j$ are mapped by $\fun$
    to the same bag of $T$, somewhere along the path "induced" by $\pi$. It suffices then to "condense" $\rho$
    by replacing the "atoms" $z_i \atom{L_{i+1}} \cdots \atom{L_j} z_j$ by a single
    "atom" $z_i \atom{\contract{L_{i+1} \cdots L_j}} z_j$.
    We thus obtain a new "refinement" $\rho'$ of $\gamma$.
    Then define $\alpha'$ be simply adding an atom $\fun(z_i) \atom{L_{i+1} \cdots L_j} \fun(z_j)$.
    The definitions of $\fun'$ and $(T',\bagmap',\tagmap')$ are then straightforward---potentially,
    $\alpha'$ should be restricted to the image of $\fun'\colon \rho' \homto \alpha'$ so that $\fun'$ is still "strong onto" by using \Cref{fact:restriction_tagged_treedec}. Crucially, $\alpha \contained \alpha'$, and $\alpha'$ still has "tree-width" at most $k$ 
    since we picked $\fun(z_i)$ and $\fun(z_j)$ so that they belonged to the same bag of $T$: therefore, adding an "atom" between them is innocuous.
	We then iterate this construction for every "atom refinement".
\end{proof}

\Cref{fig:path-induced} shows the "fine tagged tree decomposition" $(T',\bagmap',\tagmap')$
obtained by applying the previous construction to the "decomposition@fine tagged tree decomposition"
$(T, \bagmap, \tagmap)$ of \Cref{fig:trio-tree-dec} for the blue "atom refinement",
followed by applying \Cref{fact:restriction_tagged_treedec}.
In \Cref{fig:trio-tree-dec}, the "induced path" was "leaving" the "bag" $b_2$ both
at the first and at the second purple vertex. This leads in \Cref{fig:path-induced}
to a new atom between these vertices. The same phenomenon happens to "bags" $b_3, \hdots, b_{n-1}$.
Lastly, note that because the "atoms" tagged in "bags"
$b^{\text{blue}}_2, \hdots, b^{\text{blue}}_{n-1}$ are not in the image of $f'$,
these bags were removed by \Cref{fact:restriction_tagged_treedec}.

\begin{proof}[Formal proof of \Cref{lemma:locally_acyclic_treedec}]
    Let $\pi$ be an "atom refinement" in $\rho$ that induce a "cyclic path" in $T$,
    say
    \[
      \pi = z_0 \atom{L_1} z_1 \atom{L_2} \dotsb \atom{L_{n-1}} z_{n-1} \atom{L_n} z_n.
    \]

    In order to build the "trio" $\fun'\colon \rho' \surj \alpha'$
    and a "fine tagged tree decomposition" $T'$ of $\fun'$ of width at most $k$,
    we will mainly use the fact that if two vertices $(u,v)$ of some graph $G$
    belong to the same bag of a "tree decomposition" $(T, \bagmap)$ of $G$, then
    $(T, \bagmap)$ is still also a "tree decomposition" of the graph obtained by adding an edge from $u$ to $v$.
    
    By definition, the "induced path" $\tagmappath{\pi} = \bigl({b_i \choose x_i}\bigr)_i$
	is of the form
    \[
        \tagmappath{\pi} =
        \textstyle{
        \Bigl\langle
            {b_{i_0} \choose \fun(z_0)},
			{b_{i_0+1} \choose \fun(z_1)},
			\hdots,
			{b_{i_1} \choose \fun(z_1)},
			{b_{i_1+1} \choose \fun(z_2)},
			\hdots,
			{b_{i_{n-1}} \choose \fun(z_{n-1})},
			{b_{i_{n-1}+1} \choose \fun(z_n)}
        \Bigr\rangle
        },
    \]
	where $i_0 \defeq 0$, and for each $l$, $b_{i_l} = b_{i_l+1}$.
    Since it is not "acyclic@@path", there exists $(j, j')$ such that
    $j + 2 \leq  j'$ and $b_j = b_{j'}$.
    Let $n(j)$ (resp.\ $n(j')$) denote the unique index such that
    $i_{n(j)-1} < j \leq i_{n(j)}$ (resp.\ $i_{n(j')-1} < j' \leq i_{n(j')}$).
    In particular, we have $\fun(z_{n(j)}) \in \bagmap(b_j)$
    and $\fun(z_{n(j')}) \in \bagmap(b_{j'})$.
    We claim that $n(j) < n(j')$---otherwise, we would have twice the same
	bag in a "link", which would contradict the fact that it is a simple path in $T$.
  
    We can then define
    \[
      \pi' \defeq t_0 \atom{L_1} \dotsb \atom{L_{n(j)}} t_{n(j)} \atom{K} t_{n(j')} \atom{L_{n(j')+1}} \dotsb \atom{L_n} t_n,
    \]
    where $K \defeq \contract{L_{n(j)+1}\cdots L_{n(j')}}$ (see \Cref{def:atom-contraction})
    and let $\rho'$ be the query obtained from $\rho$ by replacing $\pi$ with $\pi'$.
    Then, define $\alpha'$ to be the query obtained from $\alpha$ by adding an
    atom
    $\fun(z_{n(j)}) \atom{K} \fun(z_{n(j')})$,
    so that by construction, we have $\alpha \contained \alpha'$,
    that $\rho' \in \Refin(\gamma)$ with $\nbatoms{\rho'} \leq \nbatoms{\rho}$ and 
    $\fun$ induces a "homomorphism" $\fun'\colon \rho' \homto \alpha'$.
  
    We must then build a "tagged tree decomposition" $(T', \bagmap', \tagmap')$
    of $\fun'$.
	First, we restrict $\alpha'$ to be the image of
    $\fun'\colon \rho' \homto \alpha'$, in order to obtain a "strong onto homomorphism".
	Then, starting from the "tagged tree decomposition" $(T, \bagmap, \tagmap)$ of $\fun$, restrict $\tagmap$ to the atoms $\atoms{\rho'} \smallsetminus{\{z_{n(j)} \atom{K} z_{n(j')}\}} \subseteq \atoms{\rho}$,
    and tag the atom $z_{n(j)} \atom{K} z_{n(j')}$ to the bag 
    $b_j = b_{j'}$. This "tree decomposition" has the same "width" as $T$.
    Then, apply \Cref{fact:restriction_tagged_treedec} to get rid of potentially useless "bags".

    Observe then that the "path induced" by 
    $\pi'$ in $(T', \bagmap', \tagmap')$ is simply
    \[
        \tagmappathprime{\pi'} =
        \textstyle{
		\Bigl\langle
            {b_{i_0} \choose \fun(z_0)},
			{b_{i_0+1} \choose \fun(z_1)},
			\hdots,
			{b_{j} \choose \fun(z_{n(j)})},
            {b_{j'} \choose \fun(z_{n(j')})},
			\hdots,
			{b_{i_{n-1}} \choose \fun(z_{n-1})},
			{b_{i_{n-1}+1} \choose \fun(z_n)}
        \Bigr\rangle
        }
    \]
    and thus $\tagmappathprime{\pi'}$
    is strictly shorter than $\tagmappath{\pi}$ since $j+2 \leq j'$,
    by definition of these indices.
    Finally, observe that if $(T, \bagmap, \tagmap)$ is "fine" 
    then so is $(T', \bagmap', \tagmap')$. 
  
    Overall, we built $\fun'\colon \rho' \surj \alpha'$ together with a "fine tagged tree decomposition" $(T', \bagmap', \tagmap')$ of "width" at most $k$
    where $\alpha \contained \alpha'$ (by \Cref{fact:refinement-contained}),
    and $\rho' \in \Refin(\gamma)$ is such that $\nbatoms{\rho'} \leq \nbatoms{\rho}$, and
    for each atom of $\gamma$, the "refinement" of this atom in $\rho$ is exactly the
    same as the "refinement" of this atom in $\rho'$ except possibly for one atom,
    for which the path induced in $T'$ by its "refinement" in $\rho'$ is strictly shorter than the "path induced" in $T$ by its "refinement" in $\rho$.
    After iterating this construction as many times as needed, we obtain
    a "trio" as in the conclusion of \Cref{lemma:locally_acyclic_treedec}, which concludes our proof.
\end{proof}

\subsection{{\AP}Short Paths}

Ultimately, \Cref{lemma:locally_acyclic_treedec} will allow us to give a bound on the number of
leaves of a "fine tagged tree decomposition" of a "trio". The following claim---which is 
significantly more technical than the foregoing---will give us a bound on the height of
a "decomposition@fine tagged tree decomposition".

\begin{restatable}{lem}{shapedecomposition}
    \AP\label{lemma:shape-decomposition}
    Let $\fun\colon \rho \surj \alpha$ be a "trio" and $(T, \bagmap, \tagmap)$ be a "locally acyclic"
    "fine tagged tree decomposition" of "width" at most $k$ of $\fun$.
    Then there is a "trio" $\fun'\colon \rho' \surj \alpha'$ and a "fine tagged tree decomposition" $(T', \bagmap', \tagmap')$ of width at most $k$ of $\fun'$ such that:
    \begin{itemize}
        \item $\alpha \contained \alpha'$,
        \item $(T',\bagmap', \tagmap')$ is "locally acyclic" "wrt" $f'$, and
        \item the length of the longest "non-branching path" in $T$ is at most
        $\+O(\nbatoms{\gamma}\cdot (k+1)^{\nbatoms{\gamma}})$.
    \end{itemize}
\end{restatable}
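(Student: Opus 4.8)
The plan is to repeatedly \emph{shorten} any overly long non-branching path of the decomposition by a cut-and-reconnect surgery, until every non-branching path obeys the desired bound. To make this terminate, I would ensure that every surgery strictly decreases $\nbatoms{\rho}$ (by condensing at least two atoms of some thread into one), so that after finitely many steps no non-branching path of the resulting tree is longer than $\+O(\nbatoms{\gamma}(k+1)^{\nbatoms{\gamma}})$; transitivity of $\contained$ then yields the single trio $\fun'\colon \rho'\surj\alpha'$ and decomposition $(T',\bagmap',\tagmap')$ of the statement. Throughout, I call a \emph{thread} the induced path $\tagmappath{\pi}$ of an atom refinement $\pi$ of $\rho$; there are at most $\nbatoms{\gamma}$ of them, and by local acyclicity each thread, restricted to a fixed non-branching path $P=b_1,\dots,b_N$, is a single monotone sweep that, at each cut between consecutive bags, crosses at a uniquely determined interface variable (\Cref{fact:acyclic-decomposition-leave-forever}).

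First I would set up the combinatorics that gives the bound. Call a thread \emph{active} at a cut if it crosses that cut; by monotonicity the set of cuts at which a given thread is active is an interval of $P$, so the active set changes only at the $\leq 2\nbatoms{\gamma}$ endpoints of these intervals. These endpoints split $P$ into at most $2\nbatoms{\gamma}+1$ sub-segments on each of which the active set is constant. On such a sub-segment I define the \emph{profile} of a cut to be the cardinality of its interface together with the function assigning to each active thread the interface slot at which it crosses; since a non-full bag has interface of size $\leq k$, the number of profiles occurring at non-full bags is $\+O((k+1)^{\nbatoms{\gamma}})$. Recalling that at least half of the bags of a non-branching path are non-full, if some sub-segment is longer than twice this number, then two \emph{non-full} bags $b$ and $b'$ of equal cardinality inside it carry the same profile; this yields a bijection $\beta$ between their interfaces $I$ and $I'$ with $\beta(w^e)=u^e$ for the crossing variables $w^e\in I'$, $u^e\in I$ of every active thread $e$ (and fixing the variables that persist across the whole sub-segment).

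The surgery then excises the segment strictly between $b$ and $b'$. For $\rho'$ I condense, for each crossing thread $e$, the portion between $u^e$ and $w^e$ into a single atom $u^e \atom{\contract{\cdots}} w^e$ (\Cref{def:atom-contraction}); this is again a refinement of $\gamma$ with $\nbatoms{\rho'}<\nbatoms{\rho}$. For $\alpha'$ I do \emph{not} identify $I'$ with $I$; instead I keep the left part $L$ (on the $b$-side) and the right part $R$ (on the $b'$-side) with disjoint variables apart from the persistent set $Z$, and reconnect them using \Cref{prop:connecting-tree-decompositions}, taking as the connecting conjunction $\delta$ exactly the condensed atoms $u^e \atom{\contract{\cdots}} w^e$: matching indices guarantee that each such atom has one of the four admissible forms of that proposition. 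The output is a fine tagged tree decomposition of width $k$ whose non-branching path is the concatenation of those of $L$ and $R$ with a connector of $\leq 2k$ bags, so the segment has been shortened; I then re-apply \Cref{lemma:locally_acyclic_treedec} (which only condenses, hence never lengthens a path) to restore local acyclicity, and \Cref{fact:restriction_tagged_treedec} to keep $\fun'$ strong onto. To see $\alpha\contained\alpha'$ I use \Cref{prop:cont-char-exp-st}: given any expansion $\xi$ of $\alpha$, each condensed thread is realised in $\xi$ by a concrete word $w_e\in\contract{\cdots}$ running from $u^e$ to $w^e$ through the excised segment, so expanding $u^e \atom{\contract{\cdots}} w^e$ by that very word yields an expansion $\xi'$ of $\alpha'$ with $\xi'\homto\xi$.

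The main obstacle is precisely the correct construction of $\alpha'$. The naive move of identifying the two matching interfaces (so that each condensed atom lands inside a single bag) is fatal: it turns the two distinct crossing points $u^e\neq w^e$ of one thread into a self-loop that is \emph{not} implied by $\alpha$, breaking $\alpha\contained\alpha'$ (e.g.\ a simple $a$-path witnesses $\alpha$ while no closed walk witnesses the self-loop). Keeping $u^e$ and $w^e$ distinct and reconnecting through the fresh connector of \Cref{prop:connecting-tree-decompositions} is what simultaneously preserves containment and the bound on tree-width. The remaining care is bookkeeping: checking that every variable strictly interior to the excised segment is an internal refinement variable of some crossing thread---hence removed by condensation, which uses constancy of the active set---that persistent variables are handled by the $Z$-component of $\beta$, and that output variables are never excised.
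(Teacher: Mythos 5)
Your overall architecture coincides with the paper's: find two equal-profile non-full bags on a long non-branching path, condense each crossing atom refinement between its two crossing variables, excise the middle, and reconnect the two halves via \Cref{prop:connecting-tree-decompositions} while keeping the interfaces distinct and matching indices; containment then follows from condensation as in \Cref{fact:refinement-contained}. Your active-interval sub-segmentation and per-bag profiles are a legitimate variant of the paper's profile counting. There is, however, a genuine gap in exactly the step you dismiss as bookkeeping. Your claim that every variable strictly interior to the excised segment is an internal refinement variable of some crossing thread, justified by constancy of the active set, is false. An entire atom refinement $x \atom{L_1} t_1 \atom{L_2} \cdots \atom{L_n} y$ of an atom $x \atom{L} y$ of $\gamma$ can be mapped by $f$ into a \emph{single} bag lying strictly inside the segment, with all of its atoms tagged there. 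Such a thread crosses no cut, hence has an empty active interval: it contributes no endpoints to your sub-segmentation and is invisible to your profiles, yet it is neither crossing nor removable by condensation, since its endpoints are images of variables of $\gamma$ (possibly output variables) that must survive in $\alpha'$ for $f'$ to remain a strong onto homomorphism from a refinement of $\gamma$. Excising its bag leaves $f'$ undefined on that thread and destroys the trio. This is precisely the role of the paper's notion of atomic bags and of the trap element in \Cref{fact:pigeon-hole}: the matched bags must have no atomic bag between them (condition (4) of \Cref{claim:shortening-paths}), which is affordable because there are at most $2\nbatoms{\gamma}$ atomic bags. Your argument can be repaired by additionally splitting $P$ at the (at most $\nbatoms{\gamma}$) bags containing such a confined thread, but as written the surgery is not well defined.

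A second, more minor flaw concerns termination. You cannot ensure that each surgery strictly decreases $\nbatoms{\rho}$: a crossing thread may traverse the whole segment through a link at a single persistent variable (its two crossing variables coincide in $\rho$, so nothing is condensed), or with exactly one atom between them (a condensation requires at least two atoms, by \Cref{def:atom-contraction}), so a surgery can leave $\rho$ unchanged. Moreover, you never require the matched bags to lie at distance greater than $2k$, so reconnecting them with up to $2k$ fresh bags can fail to shorten the path, or even lengthen it. The paper handles both points at once by building the distance requirement $d = 2k+1$ into the pigeonhole and taking the number of bags of the decomposition, rather than the number of atoms of $\rho$, as the strictly decreasing quantity.
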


\smallskip

To prove \Cref{lemma:shape-decomposition}, we will try to find, in a long "non-branching path",
some kind of shortcut. The piece of information that is relevant to finding this shortcut
is what we call the "profile" of a "bag".

\begin{figure}[tbp]
	\centering
	\includegraphics[scale=.48]{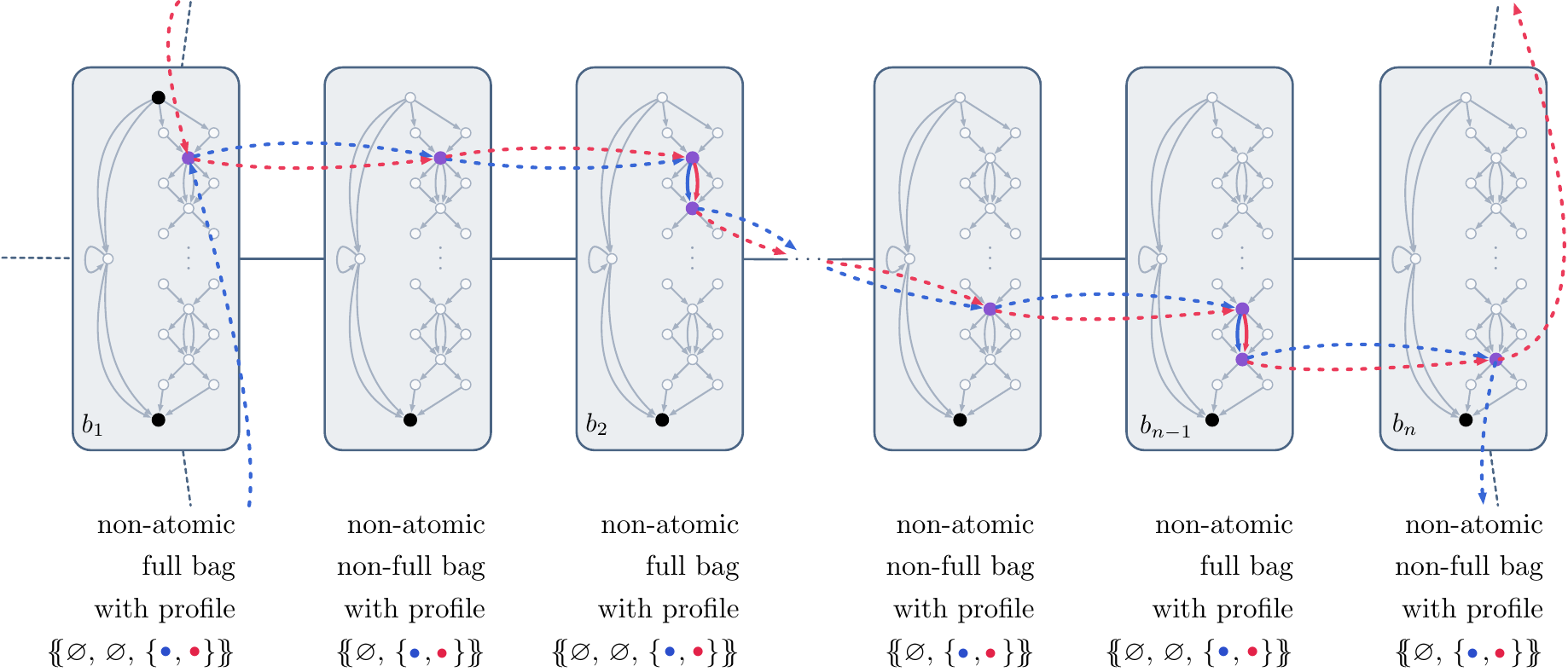}
	\caption{
		\AP\label{fig:trio-tree-dec-long-path}
        "Profiles" of the "bags" in the "non-branching path" between $b_1$ and $b_n$ in the "fine tagged tree decomposition" obtained from \Cref{fig:trio-tree-dec} after applying 
		\Cref{lemma:locally_acyclic_treedec,fact:restriction_tagged_treedec} to both the
		red and blue "atom refinements".
	}
\end{figure}

\begin{definition}[Types and Profiles]
    \AP Given a "trio" $\fun\colon \rho \surj \alpha$ and a "fine tagged tree decomposition"
    $(T, \bagmap, \tagmap)$ of $\fun$, for each "bag" $b$ of $T$, we say that:
    \begin{itemize}
        \itemAP $b$ is ``""atomic""'' if there is at least one atom $e \in \tagmap^{-1}[b]$ 
            and at least one variable $x$ of $e$ such that $x \in \vars(\gamma)$, "ie", the atom $e$ is not in the `middle' part of an "atom refinement";
        \itemAP otherwise, when $b$ is "non-atomic", we assign to each variable
            $z \in \bagmap(b) \subseteq \vertex{\alpha}$ a ""type""\phantomintro{\type}
            \begin{align*}
                \reintro*\type_z^b \defeq
                \bigl\{ x \atom{L} y \text{ "atom" of $\gamma$} \;\big\vert\; {} &
                    \text{the path "induced" by the "atom refinement"} \\
                    &\text{of $x \atom{L} y$ in $\rho$ "leaves" $b$ at $z$}
                \bigr\},
            \end{align*}
			where each "type" is potentially the empty set.
            Then the ""profile"" of $b$ is the multiset of the types of $z$
            when $z$ ranges over $\bagmap(b)$.
    \end{itemize}
\end{definition}

Note that $\rho$ and $\alpha$ can have arbitrarily more "atoms" than the original query
$\gamma$, and so the numbers of "bags" in $T$ can be arbitrarily high. However,
only few of them can be "atomic": an "atom refinement" of "atom" of $\gamma$
contains at most two "atoms" with a variable from $\gamma$---namely the
first and the last "atom" in the "refinement@atom refinement".
\begin{fact}
	\AP\label{fact:bound-atomic-bags}
	There is at most $2 \nbatoms{\gamma}$ "atomic bags" in $T$.
\end{fact}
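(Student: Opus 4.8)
The plan is to prove this by a direct counting argument on the atoms of $\rho$. First I would recall the shape of a refinement: $\rho$ is obtained from $\gamma$ by replacing each atom $x \atom{L} y$ by an "atom refinement", which—ignoring the degenerate case of an "equality atom", handled below—is a path
\[
  x \atom{L_1} t_1 \atom{L_2} \cdots \atom{L_{n-1}} t_{n-1} \atom{L_n} y,
\]
whose intermediate variables $t_1, \dotsc, t_{n-1}$ are fresh and whose endpoints $x,y$ are the only variables belonging to $\vars(\gamma)$. The one point deserving care is that variable collapsing (step 2 in the definition of "$m$-refinements") cannot contaminate this picture: the relation $=_\gamma$ is generated by "equality atoms", each of which equates the two endpoints of an original atom of $\gamma$, so every equivalence class of $=_\gamma$ consists solely of variables of $\gamma$. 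Hence no fresh variable $t_i$ is ever merged with a variable of $\gamma$, and the partition into ``original'' versus ``fresh'' variables is preserved in $\rho$.

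The key observation is then that within such a path only the first atom $x \atom{L_1} t_1$ and the last atom $t_{n-1} \atom{L_n} y$ have an endpoint in $\vars(\gamma)$, while every intermediate atom $t_i \atom{L_{i+1}} t_{i+1}$ involves only fresh variables. If the "atom refinement" has length one it is a single atom $x \atom{L_1} y$ with both endpoints in $\vars(\gamma)$, still contributing a single such atom; and if it is an "equality atom" it contributes no atom to $\rho$ at all. In every case, each of the $\nbatoms{\gamma}$ atoms of $\gamma$ gives rise to \emph{at most two} atoms of $\rho$ that have an endpoint in $\vars(\gamma)$. Summing over all atoms of $\gamma$, there are at most $2\nbatoms{\gamma}$ atoms of $\rho$ containing a variable of $\gamma$.

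To conclude, I would recall that $\tagmap$ is a \emph{function} $\atoms{\rho} \to \vertex{T}$, so each atom of $\rho$ is "tagged" in exactly one "bag". By definition a "bag" $b$ is "atomic" precisely when some atom of $\tagmap^{-1}[b]$ has an endpoint in $\vars(\gamma)$; hence each "atomic bag" is the image under $\tagmap$ of at least one of the $\leq 2\nbatoms{\gamma}$ atoms singled out above. Therefore the number of "atomic bags" is bounded by the number of such atoms, namely $2\nbatoms{\gamma}$. I do not anticipate any genuine obstacle here: the whole argument is a counting bound, and its only subtle ingredient—that collapsing never promotes a middle atom of a "refinement@atom refinement" into one touching $\vars(\gamma)$—is exactly the structural remark about $=_\gamma$ made in the first paragraph.
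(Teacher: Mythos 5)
Your proof is correct and follows essentially the same route as the paper, whose entire justification is the one-line observation that an "atom refinement" of an atom of $\gamma$ contains at most two atoms touching $\vars(\gamma)$ (the first and the last), each "tagged" in exactly one "bag". The only difference is that you additionally verify that collapsing "equality atoms" never merges a fresh variable with a variable of $\gamma$ — a detail the paper leaves implicit — which makes your write-up slightly more careful but not a different argument.
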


Consider the "fine tree decomposition" of \Cref{fig:path-induced}, and now apply the construction of \Cref{lemma:locally_acyclic_treedec} to the red "atom refinement", followed by \Cref{fact:restriction_tagged_treedec}. We now obtain a "non-branching path"
between "bags" $b_1$ and $b_n$. We depict it in \Cref{fig:trio-tree-dec-long-path}:
the implicit "bags", hidden behind the dashed edges in \Cref{fig:path-induced} (see \Cref{nota:nice-tree-dec}), are made explicit in this new figure, and, moreover, the rest of the "fine tree decomposition" is not drawn.
Lastly, for each "bag", we indicate if it is "full" and if it is "atomic"; when it is not "atomic", 
we provide the "profile" of the "bag".

The rest of the proof consists in two parts.
First, we show that if two "non-atomic bags" $b$ and $b'$
occurring in some "non-branching path"
of $T$ have the same "profile", then we can essentially replace 
the path between $b$ and $b'$ by a path of constant length (\Cref{claim:shortening-paths}). And second, we show that in every sufficiently long "non-branching path"
we can find $b$ and $b'$ satisfying the aforementioned property: this
part simply relies on an enhanced ``pigeonhole principle'' (\Cref{fact:pigeon-hole}).

\begin{restatable}{clm}{shorteningpaths}
    \AP\label{claim:shortening-paths}
	Let $\fun\colon \rho \surj \alpha$ be a "trio", and consider a
	"fine tagged tree decomposition" of $f$ which is "locally acyclic".
    Suppose there are two "bags" $b$ and $b'$ such that:
	\begin{enumerate}
		\item they contain at most $k$ nodes ("ie", not "full" "bags"),
		\item they have the same "profile",
		\item there is a "non-branching path" in $T$ between these "bags", and
		\item no "bags" of the path between $b$ and $b'$ (both included) are "atomic".
	\end{enumerate}
	Then, there exists a "trio" $\fun'\colon \rho' \surj \alpha'$ and a
    "fine tagged tree decomposition" of $\fun'$ of "width" at most $k$
    that can be obtained by replacing the "non-branching path" between $b$ and $b'$
    in the "fine tagged tree decom\-position" of
    $\fun$ by another "non-branching path" with at most 
    $2k+1$ "bags", such that $\alpha \contained \alpha'$.
\end{restatable}
The proof of \Cref{claim:shortening-paths} relies on the definition of "profile", which
was specifically designed so that we can "condense" every "refinement" between $b$ and $b'$,
while preserving every needed property of the "trio".
We give first an informal and then a formal proof of \Cref{claim:shortening-paths}, which are 
illustrated in \Cref{fig:shortening-paths}.

\begin{figure}
    \centering%
	\begin{subfigure}{.49\linewidth}
		\centering
		\includegraphics{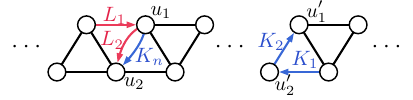}
		\caption{
			\AP\label{fig:shortening-paths-approx}
			An "approximation" $\alpha$.
		}
	\end{subfigure}
	\hfill
	\addtocounter{subfigure}{1}
	\begin{subfigure}{.49\linewidth}
		\centering
		\includegraphics{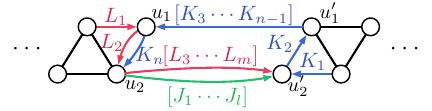}
		\caption{
			\AP\label{fig:shortening-paths-approx-result}
			The resulting "approximation" $\alpha'$.
		}
	\end{subfigure}\\[1em]
	\addtocounter{subfigure}{-2}
	\begin{subfigure}{\linewidth}
		\centering
		\includegraphics{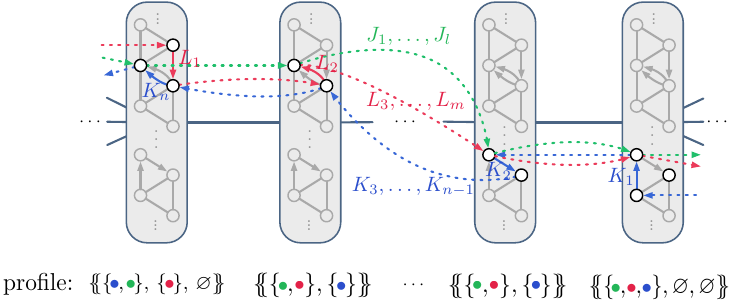}
		\caption{
			\AP\label{fig:shortening-paths-tree-dec}
			A "fine tagged tree decomposition" of $\fun\colon \rho \surj \alpha$
			together with the "path induced" by three "atom refinements"
			of $\rho$ (in red, green \& blue).
		}
	\end{subfigure}
    \caption{%
        \AP\label{fig:shortening-paths}%
		A "trio" $\fun\colon \rho \surj \alpha$---where $\rho$ and $\fun$ are implicit---
		together with one of its "fine tagged tree decomposition"
		(\Cref{fig:shortening-paths-approx,fig:shortening-paths-tree-dec}).
		There are two non-"full bags" in the path with the same "profile", and thus the query $\alpha$ can be simplified to $\alpha'$
		(see \Cref{fig:shortening-paths-approx-result})
		by applying "condensations" to the "atom refinements" involved.
    }
\end{figure}

\begin{proof}[Informal proof of \Cref{claim:shortening-paths}]
    \AP\label{proof-claim:shortening-paths}
    If $b$ and $b'$ have the same "profile", then in particular they have the same cardinality $m$,
    which is smaller or equal to $k$ by assumption.
    Let $\bagmap(b) = \{x_1,\hdots, x_m\}$ and $\bagmap(b') = \{y_1,\hdots,y_m\}$ be such that: 
    $\type^{b}_{x_i} = \type^{b'}_{y_i}$ for all $1 \leq i \leq m$.
	Note that the $x_i$'s don't need to be distinct from the $y_i$'s.
	Essentially, we can then "condense"
    every "atom refinement" in $\rho$ of some "atom" occurring in a set
    of the form $\type^{b}_{x_i} = \type^{b'}_{y_i}$ for some $i$.
	At this point, bags strictly comprised between
    $b$ and $b'$ are discarded, and so are variables of $\alpha$ that do not occur anywhere else.
    We are left with two halves of a "fine tagged tree decomposition" that we need to merge,
    which can easily be done by using \Cref{prop:connecting-tree-decompositions}.
    The construction makes use of some crucial ingredients to guarantee its correctness.
    \begin{itemize}
        \item First, an "atom" $y \atom{L} y'$ of $\gamma$ cannot occur
            in two different "types", allowing us to do the "condensation" of each "atom refinement" independently---this property is guaranteed by the fact that we started
			with a "locally acyclic" "fine tagged tree decomposition", so an "atom" of $\gamma$ cannot leave a given "bag" at two different variables, by
			\Cref{fact:acyclic-decomposition-leave-forever}.
        \item Second, this "condensation" forces us to add new "atoms" in $\alpha$ (to preserve the
            existence of a "homomorphism" from the "refinement" to the approximation)
            from some variables of $\bagmap(b)$ to some variables of $\bagmap(b')$,
            but we only add edges from $x_i$ to $y_i$, and never from $x_i$ to $y_j$ with
			$i \neq j$. This allows us to preserve the "tree-width" of the approximation
			by using \Cref{prop:connecting-tree-decompositions}.\qedhere
    \end{itemize}
\end{proof}

\begin{proof}[Formal proof of \Cref{claim:shortening-paths}]
    Let
    \[\bagmap(b) = \{x_1,\hdots, x_m\}
    \text{ and }
    \bagmap(b') =\{y_1,\hdots,y_m\}
    \]
    be as in the informal proof.
    Note that given an "atom" $x \atom{L} y$ of $\gamma$ and
	a "bag", there is at most
    one variable of $\alpha$ "st" $x \atom{L} y$ is in the "type" of this variable
	at this bag, by \Cref{fact:acyclic-decomposition-leave-forever}.

    For every "atom" $x \atom{L} y$ of $\gamma$, let 
    $\pi(x \atom{L} y) \defeq t_0 \atom{L_1} \hdots \atom{L_n} t_n$
    be its "refinement" in $\rho$.
    If $x \atom{L} y$ is not in some "type" of the "profile" of $b$
    (or equivalently, of $b'$), leave it as is.
    Otherwise, let $i$ (resp.\ $j$) be the unique
    index (by acyclicity) such that
    $\tagmappath{t_0 \atom{L_1} \hdots \atom{L_n} t_n}$ "leaves" $b$
    at $\fun(t_i)$ (resp.\ "leaves" $b'$ at $\fun(t_j)$).
    Define
    \[
		\pi'(x \atom{L} y) \defeq
        t_0 \atom{L_1} \hdots \atom{L_i} t_i \atom{\contract{L_{i+1}\cdots L_j}}
        t_j \atom{L_{j+1}} \hdots \atom{L_n} t_n
    \]
    when $i \leq j$ and otherwise the definition is symmetric.
    Then, let $\rho'$ be the "refinement" of $\gamma$ obtained by simultaneously
    substituting $\pi(x \atom{L} y)$ with $\pi'(x\atom{L} y)$  in $\rho$,
    for every "atom" $x\atom{L} y$ of $\gamma$.

    Then, let $\alpha'$ be the query obtained by first
    adding the atoms  
    \[\fun(t_i) \atom{\contract{L_{i+1}\cdots L_j}} \fun(t_j),\]
    and observe that $\fun\colon \rho \surj \alpha$
    induces a "homomorphism" $\fun'\colon \rho' \homto \alpha'$---in particular, note that
	because of assumption (4) of our claim, we could not have removed images of
	free variables of $\gamma$.
    Moreover, by construction, $\alpha \contained \alpha'$ (by \Cref{fact:refinement-contained}).
	As usual, we restrict $\alpha'$ to the image of $f'$ so that it
	becomes "strong onto", while preserving that fact that $\alpha \contained \alpha'$.
	Finally, we build a "tagged tree decomposition" $(T', \bagmap', \tagmap')$
    of $f'$ by applying \Cref{prop:connecting-tree-decompositions}; it can be applied because:
	\begin{itemize}
		\item by assumption $(1)$ and $(2)$ of the claim, both "bags" have the same cardinality $m \leq k$;
		\item the variables in common between the first and second half of the "decomposition@fine tagged tree decomposition" are necessarily included in $Z \defeq \bagmap(b) \cap \bagmap(b')$
		since we started from a "tree decomposition";
		\item we only add atoms from $x_i$ to $y_i$: depending on whether $x_i \in^? Z$,
		and whether $y_i \in^? Z$, we fall in one of the four types of atoms allowed
		by \Cref{prop:connecting-tree-decompositions}.
	\end{itemize}
	This concludes the proof of \Cref{claim:shortening-paths}.
\end{proof}

\begin{figure}[tbp]
	\centering
	\begin{subfigure}[c]{\linewidth}
		\centering
		\includegraphics[scale=.7]{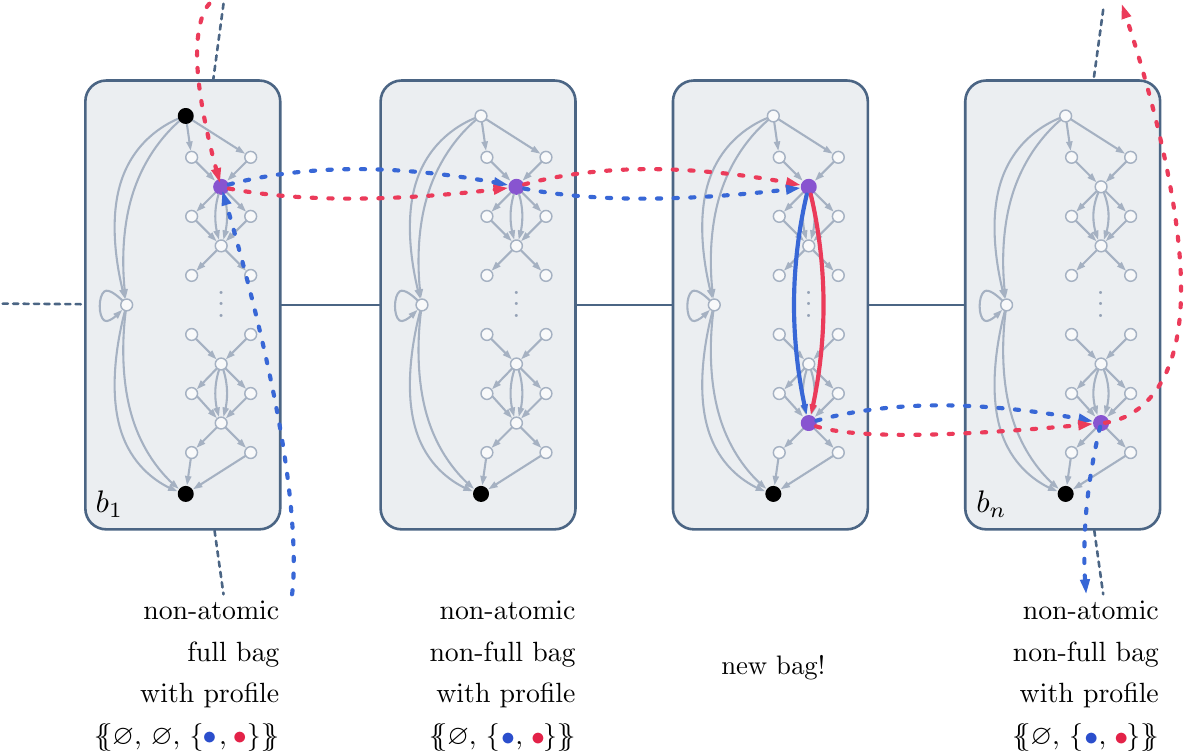}
		\smallskip
		\caption{
			\AP\label{fig:trio-tree-dec-long-path-shortened}
			"Non-branching path" in the ``new'' "fine tagged tree decomposition".
		}
	\end{subfigure}\\[1em]
	\begin{subfigure}[c]{.25\linewidth}
		\centering
		\includegraphics{trio-query.pdf}
		\caption{
			\AP\label{fig:trio-result-query}
			The original query $\gamma$ of "tree-width" 3.\\~
		}
	\end{subfigure}
	\hfill
	\begin{subfigure}[c]{.35\linewidth}
		\centering
		\includegraphics{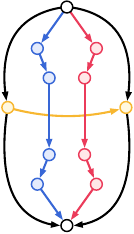}
		\caption{
			\AP\label{fig:trio-result-refinement}
			The ``new'' "refinement" $\rho'$ of $\gamma$.\\~
		}
	\end{subfigure}
	\hfill
	\begin{subfigure}[c]{.35\linewidth}
		\centering
		\includegraphics{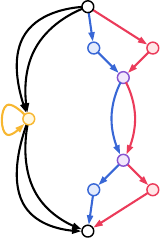}
		\caption{
			\AP\label{fig:trio-result-approx}
			The ``new'' "approximation" $\alpha'$ of "tree-width" 2.
		}
	\end{subfigure}
	\caption{
		\AP\label{fig:trio-result}
		"Trio" resulting from applying \Cref{claim:shortening-paths}
		between the second and last "bags" of \Cref{fig:trio-tree-dec-long-path}.
	}
\end{figure}
In \Cref{fig:trio-tree-dec-long-path-shortened}, we depict the "non-branching path" (the rest
of the "fine tree decomposition" is not depicted as it is left unchanged) obtained
by applying the construction used to prove \Cref{claim:shortening-paths}
between the second and last bag of \Cref{fig:trio-tree-dec-long-path}. Observe that a "non-branching path" of size $\+O(n)$ is replaced, by this procedure, by a path with three "bags".
Then, after applying \Cref{fact:restriction_tagged_treedec}, we obtain a "trio"
depicted in \Cref{fig:trio-result-query,fig:trio-result-refinement,fig:trio-result-approx}.

Before moving to the proof of \Cref{lemma:shape-decomposition}, we establish one last result.
\begin{fact}
    \AP\label{fact:pigeon-hole}
    Let $n, d, t \in \Nat$. Let $P$ be a set with at most $n$ elements,
    and $\tilde P$ be the disjoint union of $P$ and $\{\trap, \avoid\}$.
    For every natural number $m \geq 2(t+1)d(n+1) + 2t$, 
    for every sequence $(p_i)_{0 \leq i < m} \in \tilde P^m$ containing at most
    $t$ elements equal to $\trap$, if at most half of  the elements of the sequence
    are equal to $\avoid$, then there exists
    $i < i'$ such that $p_i = p_{i'} \neq \avoid$, $i' - i \geq d$ and
    $p_j \neq \trap$ for every $i \leq j \leq i'$.
\end{fact}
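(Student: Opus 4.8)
The plan is to prove the contrapositive by a counting argument: assuming no index pair $i<i'$ with the three required properties exists, I would bound $m$ from above and contradict the hypothesis $m \geq 2(t+1)d(n+1)+2t$. One may assume $d \geq 1$; indeed, the threshold is designed to leave exactly a $2(t+1)d$ slack over the bound I will produce, and this slack is what makes the final comparison strict.

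First, I would use the at most $t$ occurrences of $\trap$ to cut the index set $\{0,\dots,m-1\}$ into at most $t+1$ maximal runs of consecutive non-trap indices, which I call \emph{blocks}. The point of the blocks is that any two indices $i<i'$ lying in the same block automatically satisfy $p_j \neq \trap$ for all $i \leq j \leq i'$, so the trap-freeness clause of the conclusion comes for free as soon as one searches for a repeated value inside a single block. Separately, the assumption that at most half of the $p_i$ equal $\avoid$ gives at least $m/2$ non-avoid indices, of which at most $t$ are traps; hence at least $m/2 - t$ indices carry a genuine value from $P$. Call these the $P$-indices; these are the only positions relevant to producing a pair with $p_i = p_{i'} \neq \avoid$.

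The heart of the argument is a per-block bound. Fix a block $B$ and suppose, toward the contradiction, that no value of $P$ occurs twice in $B$ at positions distance $\geq d$ apart. Then for each value $v \in P$ appearing in $B$, its first and last occurrence differ by at most $d-1$, so all its occurrences are confined to an interval spanning at most $d$ consecutive positions; consequently $v$ occurs at most $d$ times in $B$. Summing over the at most $n$ distinct values of $P$, the block $B$ contains at most $nd$ many $P$-indices. This interval observation is the crux, and the main place where the feature ``$\avoid$ between $i$ and $i'$ is harmless'' is implicitly exploited: we never need to control what sits between the two chosen occurrences except that it not be a trap, which block-membership already guarantees.

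Finally I would add up: with at most $t+1$ blocks each contributing at most $nd$ $P$-indices, the total number of $P$-indices is at most $(t+1)nd$. Combined with the lower bound $m/2 - t$ from the second step, this yields $m/2 - t \leq (t+1)nd$, i.e.\ $m \leq 2(t+1)nd + 2t$. Since $d \geq 1$, we have $2(t+1)nd + 2t < 2(t+1)d(n+1) + 2t \leq m$, a contradiction. I expect the only genuinely delicate points to be bookkeeping ones: confirming that $t$ traps really create at most $t+1$ blocks, that the repeated value found lies in $P$ (neither $\trap$ nor $\avoid$), and that the surplus term $2(t+1)d$ in the threshold is precisely what the interval bound leaves over, so that the closing inequality is strict rather than merely an equality.
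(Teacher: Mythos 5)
Your proof is correct, and it reaches the conclusion by a route that differs from the paper's in its organization and in its key inner step, though both arguments share the same skeleton: split the index set by the at most $t$ occurrences of $\trap$ into at most $t+1$ trap-free pieces, and use the ``at most half $\avoid$'' hypothesis to guarantee many positions carrying values of $P$. The paper argues directly: it first deletes the $\avoid$ positions, then by averaging finds one trap-free run with at least $d(n+1)$ elements of $P$, and applies the pigeonhole principle to the samples $(y_{i\cdot d})_{0\leq i < n+1}$ of that run, which forces two equal values at original distance at least $d$. You instead argue by contraposition with a global count: if no valid pair exists, then inside each block every value $v\in P$ has all its occurrences confined to an interval of fewer than $d$ original positions (your interval-confinement observation), hence at most $d$ occurrences per value and at most $nd$ many $P$-positions per block, so at most $(t+1)nd$ in total, contradicting the lower bound $m/2 - t$. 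These two inner steps are duals of the same combinatorial fact—a trap-free block with more than $nd$ $P$-positions must contain a $d$-spaced repetition—but your confinement argument is in fact marginally tighter (per-block bound $nd$ versus the paper's requirement of $d(n+1)$) and works directly with original positions, avoiding the small bookkeeping the paper leaves implicit, namely that distances in the extracted subsequence lower-bound distances in the original sequence. What the paper's version buys is brevity and an explicit witnessing pair rather than a contradiction. One shared caveat: both proofs need $d\geq 1$ (for $d=0$ and $t\geq 1$ the statement is actually false, e.g.\ for an alternating sequence of $t$ traps and $t$ avoids of length $m=2t$); you flag this assumption explicitly, the paper leaves it implicit, and it is harmless since the lemma is only invoked with $d = 2k+1$.
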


\begin{proof}
    First extract from $(x_i)_{0 \leq i < m}$ the subsequence of elements
    distinct from $\avoid$, of length at least $\lceil\frac{m}{2}\rceil \geq (t+1)d(n+1) + t$.
    Then extract from it contiguous subsequences that avoid
    the $\trap$ element. Since there is at most $t+1$ subsequences like this,
    one of them must have size at least $d(n+1)$. Denote by
    $(y_i)_{0 \leq i < d(n+1)}$ the prefix of such a subsequence.
    Applying the pigeon-hole principle
    to $(y_{i\cdot d})_{0 \leq i < n+1}$ yields the desired result.
\end{proof}

\begin{proof}[Proof of \Cref{lemma:shape-decomposition}]
    Let $\fun\colon \rho \surj \alpha$ be a "trio", and $(T, \bagmap, \tagmap)$ be a
    "locally acyclic" "fine tagged tree decomposition" of $\fun$. If there is a "non-branching path"
    $(b_i)_{0\leq i<m}$ in $T$ of length at least $m$, let $(p_i)_{0 \leq i<m}$ be
    the sequence defined by letting:
    \[
        p_i \defeq \begin{cases}
            \trap & \text{if $b_i$ is "atomic",} \\
            \avoid & \text{if $b_i$ contains $k+1$ variables,} \\
            \text{"profile" of } b_i & \text{otherwise.}
        \end{cases}    
    \]
    
    Observe that, by \Cref{fact:acyclic-decomposition-leave-forever},
	"profiles" can be seen encoded as partial functions from the set of "atoms" of $\gamma$ to $\lBrack 1,k\rBrack$---of course this encoding is not surjective---, so
	there are at most $(k+1)^{\nbatoms{\gamma}}$ different "profiles" on "bags" with at most $k$ variables.
    Applying \Cref{fact:pigeon-hole} for $n = (k+1)^{\nbatoms{\gamma}}$, $d = 2k+1$, $t = 2\nbatoms{\gamma}$ yields, under the assumption that
    \[
        m \geq m_0 \defeq 2(2\nbatoms{\gamma}+1)(2k+1)((k+1)^{\nbatoms{\gamma}}+1)
		+ 4\nbatoms{\gamma},
    \]
    the existence of indices $i < i'$ such that $i' - i \geq 2k+1$,
    and $b_i$ and $b_{i'}$ have the same "profile", contain at most $k$ variables,
    and every bag $b_j$ for $i \leq j \leq i'$ is "non-atomic"---note that the hypothesis
    of \Cref{fact:pigeon-hole} are satisfied since at most $t = 2\nbatoms{\gamma}$
    "bags" of $(b_i)_{0 \leq i < m}$ are "atomic" ("cf" \Cref{fact:bound-atomic-bags}), and assuming "wlog"~that
    no two consecutive "bags" of $(b_i)_{0 \leq i < m}$ are identical, since the "tagged tree decomposition" $(T, \bagmap, \tagmap)$ of "width" $k$ is "fine", at most half of the 
    "bags" contain $k+1$ variables.
    The assumption $i' - i \geq 2k+1$ means that the path from $b_i$ to $b_{i'}$ has
    length at least $2k+2$, and thus applying \Cref{claim:shortening-paths} will
    strictly shorten this path.
    Note that \Cref{claim:shortening-paths} preserves the "fineness" of the "tagged 
    tree decomposition", its "local acyclicity", and that the size of
    this "tree decomposition" is strictly smaller (in number of nodes) than
    the original "tree decomposition". By iteratively applying this construction,
    we obtain a "trio" $\fun'\colon \rho' \surj \alpha'$ together with a "locally acyclic"
    "fine tagged tree decomposition" $T'$
    of "width" at most $k$, such that $\alpha \contained \alpha'$ (by a variation of
	\Cref{fact:refinement-contained})
    and every "non-branching path" of $T'$ has length at most\footnote{Recall that $k$
	is fixed.} 
    $m_0 - 1 \in 
    \+O(\nbatoms{\gamma}\cdot (k+1)^{\nbatoms{\gamma}})$.
\end{proof}

\subsection{\AP{}Proof of Lemma~\ref{lemma:bound_size_refinements}}
\label{sec:proof-bound-size-refinements}

Finally, our main lemma follows from
\Cref{lemma:locally_acyclic_treedec,lemma:shape-decomposition}.

\begin{proof}[Proof of \Cref{lemma:bound_size_refinements}]
    In order to show $\MUA{\gamma}{\Tw} \contained \MUAHomBounded{\gamma}{\Tw}{\leq\l}$---the other
    "containment" being trivial---, pick a "trio" $\fun\colon \rho \surj \alpha$.
    Applying \Cref{lemma:locally_acyclic_treedec} and then \Cref{lemma:shape-decomposition}
    yields the existence of a "trio" $\fun'\colon \rho' \surj \alpha'$ together
    with a "fine tagged tree decomposition" $(T', \bagmap', \tagmap')$ of $\fun'$
    such that $\alpha \contained \alpha'$ and $(T', \bagmap', \tagmap')$ is "locally acyclic",
    and any "non-branching path" in $T'$ has length at most
    $\+O(\nbatoms{\gamma}\cdot (k+1)^{\nbatoms{\gamma}})$.

    Moreover, we can assume "wlog", by applying \Cref{fact:restriction_tagged_treedec},
    that every leaf of $T'$ is "tagged" by at least one "atom" of $\rho'$.
    The "local acyclicity" of $T'$ implies that if $b$ is a leaf of $T'$,
    and
    $
        \pi \defeq x \atom{L_1} t_1 \atom{L_2} \cdots \atom{L_{n-1}} t_{n-1} \atom{L_n} y
    $
    is an "atom refinement" in $\rho'$ of some "atom" $x \atom{L} y$ of $\gamma$,
    then if $b$ is tagged by one "atom" of $\pi$ this "atom" must either be $z_0 \atom{L_1} z_1$
    or $z_{n-1} \atom{L_n} z_n$ by "local acyclicity".
    The number of such "atoms" in $\rho'$ being bounded by $2\nbatoms{\gamma}$, we conclude that
    $T'$ has at most $2\nbatoms{\gamma}$ leaves.

    Then, observe that a tree with at most $p$ leaves and whose "non-branching paths"
    have length at most $q$ is of height at most\footnote{The length of a path being its number of nodes,
    and with the convention that the height of a single node is zero.} $p\cdot q - 1$. We
    conclude that the height of $T'$ is 
    $\+O(\nbatoms[2]{\gamma}\cdot (k+1)^{\nbatoms{\gamma}})$.
    Using again the "local acyclicity" of $T'$, observe that the "refinement length" of $\rho'$
    is at most twice the height of $T'$, and hence $\rho' \in \Refin[\leq \l](\gamma)$
    where $\l = \Theta(\nbatoms[2]{\gamma}\cdot (k+1)^{\nbatoms{\gamma}})$.
    In other words, $\alpha' \in \MUAHomBounded{\gamma}{\Tw}{\leq\l}$.
    Hence, we have shown that for all $\alpha \in \MUAHom{\gamma}{\Tw}$,
    there exists $\alpha' \in \MUAHomBounded{\gamma}{\Tw}{\leq\l}$ such that $\alpha \contained \alpha'$.
\end{proof}

This concludes \Cref{sec:proof-key-lemma} and the proof of the "Key Lemma".
The next four sections are independent of one another:
\begin{itemize}
	\item in \Cref{sec:sre}, we show that the "2ExpSpace" complexity
	of the "semantic tree-width $k$ problem" can be dropped down to
	"PiP2" under assumptions on the regular languages;
	\item in \Cref{sec:acyclic-queries,sec:semantic-path-width}, we adapt
	the proofs of this section to deal with
	"semantic tree-width" 1 and "semantic path-width" $k$, respectively.
	\item in \Cref{sec:lowerbound}, we prove an "ExpSpace" lower bound for the
	"semantic tree-width $k$ problem" and "semantic path-width $k$ problems";
\end{itemize}

\section{\AP{}Semantic Tree-Width for Simple Queries}
\label{sec:sre}
\AP A ""simple regular expression"", or \reintro{SRE}, is a regular expression the form $a^*$ for some letter $a \in \A$ or of the form $a_1 + \dotsb + a_m$ for some $a_1, \dotsc, a_m \in \A$. 

\AP Let \intro*{\UCRPQSRE} be the set of all "UCRPQ" whose languages are expressed via "SREs". Observe that {\UCRPQSRE} is "semantically equivalent" to the class of "UCRPQs" over the closure under concatenation of "simple regular expressions"
since $\gamma(x,y) = x \atom{e_1 \cdot e_2} y$ is equivalent to $\gamma'(x,y) = x \atom{e_1} z \land  z \atom{e_2} y$.
Moreover, \UCRPQSRE{} also corresponds to "UC2RPQ" whose languages are expressed via "SREs";
in other words adding two-wayness does not increase the expressivity of the class. 

One interest of {\UCRPQSRE} comes from the fact that it is used widely in practice, as recent studies on SPARQL query logs on Wikidata, DBpedia and other sources show that this kind of regular expressions cover a majority of the queries investigated, "eg", 75\% of
the ``property paths'' ("C2RPQ" "atoms") of the corpus of 1.5M queries of Bonifati, Martens and Timm \cite[Table 15]{BonifatiMT-vdlbj20}.
An additional interest comes from the fact that the "containment problem" for {\UCRPQSRE} is much better behaved than for general "UCRPQs", since it is in {\pitwo} \cite[Corollary 5.2]{FigueiraGKMNT20}, that is, just one level up the polynomial hierarchy compared to the "CQ" "containment problem", which is in "NP" \cite{DBLP:conf/stoc/ChandraM77}, and in sharp contrast with the costly \expspace-complete "CRPQ" "containment problem" \cite{CGLV00,Florescu:CRPQ}.

We devote this section to showing the following result.

\thmSemTwSREpitwo 

Observe that "simple regular expressions" are "closed under sublanguages". Hence, in the light of
\Cref{thm:closure-under-sublanguages}, the "maximal under-approximation" of a {\UCRPQSRE} query by infinitary unions of "CQs" of "tree-width" $k$ is always equivalent to a {\UCRPQSRE} query  of "tree-width" $k$. We will see how the construction of the "maximal under-approximation" of the previous section can be exploited to improve the complexity from "2ExpSpace" down to "PiTwo".

\subsection{\AP{}Summary Queries}
We will first show that the "maximal under-approximation" of "tree-width" $k$ of a "UC2RPQ" can be expressed as a union of polynomial sized ``summary'' queries. Each "summary query" represents a union of exponentially-bounded "C2RPQs" sharing some \emph{common structure}. 
"Summary queries" are normal "UC2RPQ" queries extended with some special kind of atoms, called ``"path-$l$ approximations"''. 
Intuitively, they represent a "maximal under-approximation" of "tree-width" $l$ of queries of the form $\bigwedge_i x_i \atom{L_i} y_i$ such that $x_i \neq y_j$ for all $i,j$. 
Path-$l$ approximations may require an exponential size when represented as "UC2RPQs".
\AP Formally, a ""path-$l$ approximation"" is a query of the form ``$\intro*\pathl(X, Y, \delta)$''
where: 
\begin{enumerate}
	\item $X$, $Y$, are two disjoint sets of variables of size at most $l$, 
	\item $\delta(\bar z)$ is a conjunction of "atoms" $\bigwedge_{1 \leq i \leq n} A_i$ where $\bar z$ contains all variables of $X \cup Y$,
	\item each $A_i$ is a "C2RPQ" atom of the form $x \atom{L} y$ or $y \atom{L} x$  such that $x \in X$, $y \in Y$, and $L_i$ is a regular language over $\A$.
\end{enumerate}
We give the semantics of $\pathl(X, Y, \delta)$ in terms of infinitary unions of "C2RPQs".
A query like the one before is defined to be equivalent to the (infinitary) union of all queries $\alpha(\bar z) \in \MUA{\delta}{\Pw[l]}$ such that
\begin{align}
	\text{$\alpha$ has a "path decomposition" of "width" $l$ where $X$ is the root and $Y$ is the leaf,}\AP\label{eq:pathl:rootleafppty}
\end{align}
that is, the root and leaf "bags" contain precisely the vertices of $X$ and $Y$, respectively.
 See \Cref{fig:l-path-example} for an example.
\begin{figure}
  \centering
  \includegraphics[width = \textwidth]{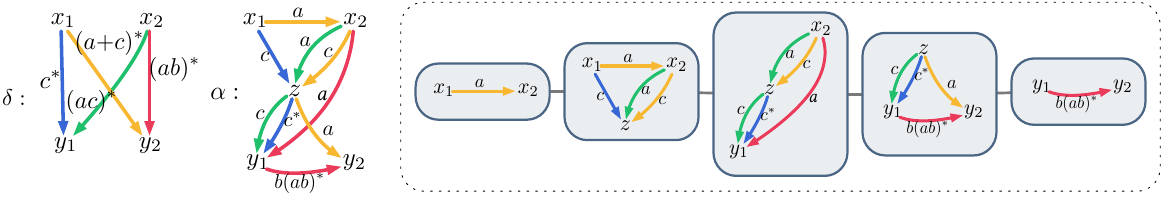}
  \caption{
	\AP\label{fig:l-path-example}
	Consider the "path-$l$ approximation" $\pathl(\set{x_1,x_2}, \set{y_1,y_2},\delta)$ where $l=2$ and $\delta$ is depicted on the left. Its semantics contains the "approximation" $\alpha(x_1,x_2,y_1,y_2) \in \MUA{\delta}{\Pw[l]}$ depicted in the middle because it has "path decomposition" of "width" $l$ verifying \eqref{eq:pathl:rootleafppty}, as shown on the right.
	}
\end{figure}

\AP We now simply define a ""$k$-summary query"" as a "C2RPQ" extended with "path-$l$ approximation" atoms for any $l\leq k$, with the expected semantics.
A \AP""refinement@@summary"" of a "$k$-summary query" is any "C2RPQ" obtained by
replacing "atoms" with "atom refinements", and 
each "path-$l$ approximation" $\pathl(X, Y, \delta)$ with any
$\alpha(\bar z) \in \MUA{\delta}{\Pw}$ verifying \eqref{eq:pathl:rootleafppty}. By definition,
a "database" satisfies a "$k$-summary query" if and only if it
satisfies one of its "refinements@@summary".

A \AP ""tree decomposition@@summary"" of a "$k$-summary query" $\gamma$ consists of
a pair $(T, \bagmap)$ with $\bagmap: \vertex{T} \to \pset{\vars(\gamma)}$
such that:
\begin{itemize}
	\item for every classical "atom" $x \atom{L} y$ in $\gamma$,
		there is a "bag" $b \in T$ such that $\{x,y\} \subseteq \bagmap(b)$;
	\item for every "path-$l$ approximation" $\pathl(X, Y, \delta)$ in $\gamma$,
		there are two adjacent "bags" $b, b' \in T$ such that $X \subseteq \bagmap(b)$
		and $Y \subseteq \bagmap(b')$.
\end{itemize}
The "width" is defined as usual. Then, by
\Cref{fact:refinement-tw}, we obtain the following upper bound.

\begin{fact}
	\AP\label{fact:tree-width-summary}
	For any $k\geq 2$, any "refinement@@summary" of a "$k$-summary query" with
	a "tree decomposition@@summary" of "width" at most $k$ is a "C2RPQ" of "tree-width" at most $k$.
\end{fact}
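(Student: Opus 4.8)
The plan is to convert the given "tree decomposition@@summary" $(T,\bagmap)$ of $\gamma$, of "width" at most $k$, into a genuine "tree decomposition" of the "refinement@@summary" $\gamma'$, again of "width" at most $k$, by rewriting the decomposition locally around each "atom" of $\gamma$. Recall that $\gamma'$ replaces every classical "atom" of $\gamma$ by one of its "atom refinements", and every "path-$l$ approximation" $\pathl(X,Y,\delta)$ by some $\alpha\in\MUA{\delta}{\Pw[l]}$ that admits a "path decomposition" of "width" $l\leq k$ whose root "bag" is $X$ and whose leaf "bag" is $Y$. I would treat these two kinds of rewriting independently and argue that each preserves the "width".

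For a classical "atom" $x\atom{L}y$ this is essentially the content of \Cref{fact:refinement-tw}: an "atom refinement" either contracts the edge $\{x,y\}$ (when it collapses to an "equality atom") or subdivides it into a path through fresh variables. The first bullet of the definition of "tree decomposition@@summary" provides a "bag" $b$ with $\{x,y\}\subseteq\bagmap(b)$, so I attach to $b$ the branch of size-three "bags" spelling out the subdivided path, exactly as in the proof of \Cref{fact:refinement-tw}; since $k\geq 2$ these new "bags" have size at most $3\leq k+1$, and contractions never increase "tree-width", so the "width" stays at most $k$.

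The substantial step is the "path-$l$ approximation". Here I am given adjacent "bags" $b,b'$ with $X\subseteq\bagmap(b)$ and $Y\subseteq\bagmap(b')$, and I must splice a decomposition of $\alpha$ into the edge $b\,b'$ while keeping, for every variable, its "bags" a subtree. The natural attempt is to subdivide that edge and insert the "bags" of $\alpha$'s "path decomposition", each enlarged by the separator $Z=\bagmap(b)\cap\bagmap(b')$ so that $Z$-variables stay together across the splice; the fact that the "path decomposition" is anchored at $X\subseteq\bagmap(b)$ at one end and at $Y\subseteq\bagmap(b')$ at the other is precisely what keeps the $X$- and $Y$-variables connected once the two ends are glued back to $b$ and $b'$.

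I expect the "width" estimate on these inserted "bags" to be the main obstacle. Blindly enlarging each "bag" of $\alpha$'s "path decomposition" by $Z$ does not work: an internal "bag" may already have size $l+1$, so adding $Z$ can push it beyond $k+1$, and this genuinely occurs even when $l=k$ and $|Z|=1$. The correct argument must therefore not thread the decomposition of $\alpha$ verbatim, but rebuild the local region, interleaving $Z$ with the slices of $\alpha$ so that no single "bag" ever contains a maximum-size slice of $\alpha$ together with all of $Z$; this is the same bookkeeping as in \Cref{prop:connecting-tree-decompositions}, which glues two decompositions sharing a separator $Z$ along "non-full" "bags" while introducing only the permitted kinds of cross "atoms". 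Once this quantitative "width" bound is established for a single "path-$l$ approximation", iterating the construction over all "atoms" of $\gamma$ yields a "tree decomposition" of $\gamma'$ of "width" at most $k$, which is the claim.
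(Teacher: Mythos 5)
Your reduction of the classical-atom case to \Cref{fact:refinement-tw} is exactly right, and it is in fact all the paper itself offers: the paper derives the whole statement from \Cref{fact:refinement-tw} in one line, never discussing how the path-$l$ approximations are spliced in. The gap in your proposal is the final step of the path-approximation case. Having correctly isolated the obstacle---the separator $Z=\bagmap(b)\cap\bagmap(b')$ must survive across the spliced region, and enlarging every inserted bag by $Z$ can exceed width $k$ when $l+|Z|>k$---you assert that a finer interleaving ``as in \Cref{prop:connecting-tree-decompositions}'' repairs this. No such interleaving exists: in the very configuration you flag ($l=k$, $|Z|=1$) the statement being proved is simply false. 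Take $k=l=2$, variables $x_1,x_2,y_1,y_2,z$, classical atoms $x_1\atom{a}x_2$, $y_1\atom{a}y_2$, $z\atom{a}x_1$, $z\atom{a}x_2$, $z\atom{a}y_1$, $z\atom{a}y_2$, and the approximation $\pathl(X,Y,\delta)$ with $X=\{x_1,x_2\}$, $Y=\{y_1,y_2\}$ and $\delta=x_1\atom{a^*}y_1\land x_2\atom{a^*}y_2$. The two adjacent bags $\{x_1,x_2,z\}$ and $\{y_1,y_2,z\}$ form a width-$2$ tree decomposition in the summary sense. Now refine by keeping every classical atom as is (an atom is a refinement of itself) and replacing the approximation by $\alpha=x_1\atom{a}t\land x_2\atom{a}t\land t\atom{a}y_1\land t\atom{a}y_2$, which lies in $\MUA{\delta}{\Pw[2]}$ and has the anchored path decomposition $\langle\{x_1,x_2\},\{x_1,x_2,t\},\{t,y_1,y_2\},\{y_1,y_2\}\rangle$ of width $2$ with root exactly $X$ and leaf exactly $Y$. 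The resulting C2RPQ contains $K_4$ as a minor (branch sets $\{z,y_1\}$, $\{t\}$, $\{x_1\}$, $\{x_2\}$), hence has tree-width $3>k$.

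So what you are missing is not a trick but a hypothesis, and the difficulty you ran into is a genuine defect of the statement that the paper's one-line proof glosses over. Where the fact is actually applied (\Cref{lemma:MUAk-union-of-polysized-summary-test-np}), the path approximations are created from two \emph{non-full} bags $b_X,b_Y$ by setting $X'=\bagmap(b_X)\setminus\bagmap(b_Y)$ and $Y'=\bagmap(b_Y)\setminus\bagmap(b_X)$, so the relevant $l$ can be chosen with $l+|Z|\leq k$ for $Z=\bagmap(b_X)\cap\bagmap(b_Y)$. Under the extra hypothesis $l+|\bagmap(b)\cap\bagmap(b')|\leq k$, your first, ``blind'' splice is already a complete proof: insert the anchored path decomposition of $\alpha$ between $b$ and $b'$ and add all of $Z$ to every inserted bag; edge coverage is inherited, the $X$-variables form a prefix of inserted bags touching $b$, the $Y$-variables a suffix touching $b'$, the $Z$-variables survive throughout, and every inserted bag has at most $(l+1)+|Z|\leq k+1$ vertices. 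Without some such hypothesis no bookkeeping in the style of \Cref{prop:connecting-tree-decompositions} can succeed: that proposition glues along two non-full end bags, whereas here the interior bags of $\alpha$'s decomposition may be full and must nonetheless coexist with $Z$ on every path between $b$ and $b'$, which is exactly what the counterexample exploits.
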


Lastly, a \AP""homomorphism@@summary"" from a "C2RPQ"
$\gamma(\bar z) = \bigwedge_{i} x_i \atom{L_i} y_i$
to a "summary query" $\delta(\bar z') =
\bigl( \bigwedge_{j} x'_j \atom{L'_j} y'_j \bigr)
\wedge \bigl( \bigwedge_{j'} \pathl(X_{j'}, Y_{j'}, \delta_{j'}) \bigr)$
consists of a mapping $\fun$ from variables of $\gamma$ to variables of $\delta$
such that $f(\bar z) = \bar z'$, and for each $i$,
there is an atom $f(x_i) \atom{L_i} f(y_i)$ in $\delta$.
Note that if there is a homomorphism from $\gamma(\bar z)$ to $\delta(\bar z')$,
then $\delta(\bar z') \contained \gamma(\bar z)$.

Let us fix $\+L$ to be any class  "closed under sublanguages". For every $\gamma \in \CtwoRPQ(\+L)$, we define \AP $\intro*\Qapp$ as the set of all "$k$-summary queries" $\alpha$ such that:
\begin{enumerate}[label=(\alph*)]
	\item $\alpha$ has a "fine tagged tree decomposition" $(T, \bagmap)$ of "width" at most $k$,
	\item there exists a "strong onto" "homomorphism@@summary" from a "refinement" $\rho$ of $\gamma$ to $\alpha$, 
	\item $T$ has at most $2\nbatoms{\gamma}$ leaves, and every
		"non-branching path" of $T$ consisting only of "non-atomic" "bags" must contain at most two "non-full bags".
\end{enumerate}
Note that since $\alpha$ is a homomorphic image a "refinement" of $\gamma$,
and since $\+L$ is "closed under sublanguages", then $\alpha$ has only $\+L$-labelled "atoms".

\begin{lemma}\AP\label{lemma:MUAk-union-of-polysized-summary-test-np}
    \AP Let $k \geq 2$. For every \textbf{finite} class $\+L$ "closed under sublanguages",
	and for every $\gamma \in \CtwoRPQ(\+L)$, 
	we have:
	\begin{enumerate}
		\itemAP \label{lemma:MUAk-union-of-polysized-summary-test-np:1} $\Qapp \semequiv \MUA{\gamma}{\Tw}$,
		\itemAP \label{lemma:MUAk-union-of-polysized-summary-test-np:2} $\Qapp$ is a union of polynomial-sized "$k$-summary queries" having only
			$\+L$-labelled "atoms", and
		\itemAP \label{lemma:MUAk-union-of-polysized-summary-test-np:3} one can test in "NP" if a "summary query" is part of this union.
	\end{enumerate}
\end{lemma}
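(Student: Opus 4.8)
The plan is to read $\Qapp$ through its "summary query" semantics and show that it is a \emph{compressed} presentation of $\MUAHom{\gamma}{\Tw}$. The intuition is that every long "non-branching path" arising in the "fine tagged tree decomposition" produced by the "Key Lemma"---the very paths that force the "refinement" length $\l$ to be exponential---is, once "locally acyclic", a "path-width"-bounded under-approximation of a fixed subquery of $\gamma$, and can therefore be \emph{zipped} into a single "path-$l$ approximation" atom. Since $\MUA{\gamma}{\Tw} \semequiv \MUAHom{\gamma}{\Tw}$ by \Cref{coro:equivalence_under_approx_homomorphism_twk}, item~(1) reduces to proving $\Qapp \semequiv \MUAHom{\gamma}{\Tw}$, which I would establish by two "containments".

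For $\Qapp \contained \MUA{\gamma}{\Tw}$, I take $\alpha \in \Qapp$ and an arbitrary summary refinement $\alpha'$ of it. That $\alpha' \in \Tw$ is exactly \Cref{fact:tree-width-summary}: condition~(a) gives a "fine tagged tree decomposition" of "width" $\le k$, and the adjacency requirement on the two bags carrying each "path-$l$ approximation" atom turns it into a summary-level tree decomposition of "width" $\le k$. For $\alpha' \contained \gamma$ I would lift the "strong onto homomorphism" $\fun\colon \rho \surj \alpha$ of condition~(b) to "refinements": refining the ordinary atoms of $\alpha$ synchronously with $\rho$, and replacing each "path-$l$ approximation" $\pathl(X,Y,\delta)$ by the chosen member of $\MUA{\delta}{\Pw[l]}$ (itself a "homomorphic image" of a "refinement" of $\delta$, hence of the "atom refinements" of $\gamma$ that produced $\delta$), yields a "refinement" $\rho'$ of $\gamma$ with $\rho' \homto \alpha'$, so that $\alpha' \contained \gamma$ by \Cref{prop:cont-char-exp-st}.

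For $\MUA{\gamma}{\Tw} \contained \Qapp$ I would re-run the "Key Lemma" construction but stop short of condensing. Starting from a "trio" $\fun\colon \rho \surj \alpha$ with $\alpha \in \Tw$, \Cref{lemma:locally_acyclic_treedec} makes the "fine tagged tree decomposition" "locally acyclic" and (as in the proof of \Cref{lemma:bound_size_refinements}) bounds its "leaves" by $2\nbatoms{\gamma}$, while \Cref{fact:bound-atomic-bags} bounds its "atomic bags" by $2\nbatoms{\gamma}$. Rather than shortening each long "non-branching path" of "non-atomic bags" via \Cref{claim:shortening-paths}, I replace it by its two endpoint bags joined by a single atom $\pathl(X,Y,\delta)$, where $\delta$ gathers the atoms of $\gamma$ whose "refinements" cross the path; the width-$\le l$ sub-path-decomposition carried by that path witnesses $\alpha \contained \alpha^{\mathrm{zip}}$, and since the semantics of $\pathl(X,Y,\delta)$ is the \emph{maximal} "path-width"-$l$ under-approximation of $\delta$, we even obtain $\alpha^{\mathrm{zip}} \in \Qapp$, with conditions~(a)--(c) holding by construction---the crucial point being that every surviving "non-branching path" of "non-atomic bags" now has at most two "non-full bags". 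The hardest step is precisely this zipping: one must verify that a long "non-branching path" really carries a genuine "path-width"-$\le l$ under-approximation of a well-defined subquery $\delta$, that zipping preserves "containment" in $\gamma$ in both directions, and---most delicately---that "fineness" and the bound of condition~(c) are restored afterwards. This is a "path-width" refinement of the "tree-width" analysis of \Cref{claim:shortening-paths,prop:connecting-tree-decompositions}, with the $l=1$ corner case (where "refinements" need not preserve "path-width") requiring the extra care already flagged for "tree-width" $1$.

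Item~(2) then follows from the structural bounds: with $\le 2\nbatoms{\gamma}$ "leaves" and $\le 2\nbatoms{\gamma}$ "atomic bags", the skeleton of $T$ has $\+O(\nbatoms{\gamma})$ nodes, and each "non-branching path" of "non-atomic bags" between two skeleton nodes has at most two "non-full bags", hence---since "fineness" forces at least half the bags of such a path to be "non-full"---only a constant number of bags; as every bag has $\le k+1$ variables, $\size{\alpha}$ is polynomial, and since $\+L$ is "closed under sublanguages" every atom of $\alpha$ is $\+L$-labelled. For item~(3), on input a "summary query" $\alpha$ I would guess a "fine tagged tree decomposition" of "width" $\le k$ together with a "strong onto homomorphism" from a "refinement" of $\gamma$---guessed atom by atom, its middle segments matched against the explicitly given $\delta$'s of the $\pathl$ atoms using polynomial-time NFA-sublanguage tests---and then verify conditions~(a)--(c) in polynomial time, yielding an "NP" procedure.
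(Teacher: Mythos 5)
Your proposal follows the paper's own proof in all essentials: item~(1) is reduced, exactly as in the paper, to showing $\MUAHom{\gamma}{\Tw} \contained \Qapp$ via \Cref{coro:equivalence_under_approx_homomorphism_twk}, the forward containment rests on \Cref{fact:tree-width-summary}, the reverse containment takes the locally acyclic, few-leaved decomposition coming out of the proof of \Cref{lemma:bound_size_refinements} and zips each non-branching path of non-atomic bags into a single $\pathl$ atom whose inner query collects the condensed atoms of $\gamma$ whose induced paths traverse that segment, and items~(2) and~(3) are the same counting and guess-and-verify arguments.

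One local claim, however, is false as stated, even though the step it supports survives. In the forward direction you justify $\alpha' \contained \gamma$ by asserting that the member of $\MUA{\delta}{\Pw[l]}$ substituted for a $\pathl(X,Y,\delta)$ atom is itself a homomorphic image of a refinement of $\delta$. That is the defining property of $\MUAHom{\delta}{\Pw[l]}$, not of $\MUA{\delta}{\Pw[l]}$, and the two genuinely differ here because the class $\Pw[l]$ is not closed under expansions (\Cref{fact:pw-not-closed}; compare \Cref{ex:counterex-tw1} for tree-width $1$) --- this distinction is precisely why the semantics of $\pathl$ is phrased in terms of $\MUA{\delta}{\Pw[l]}$ rather than via refinements. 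Consequently you cannot, in general, produce a single refinement $\rho'$ of $\gamma$ with $\rho' \homto \alpha'$ and invoke \Cref{prop:cont-char-exp-st}. The repair is to argue by composition of containments, as the paper implicitly does: any summary-refinement $\alpha'$ of $\alpha$ satisfies $\alpha' \contained \alpha$ by the very definition of the semantics of summary queries, and condition~(b) together with the observation that a summary-homomorphism from $\rho$ to $\alpha$ yields $\alpha \contained \rho$ gives $\alpha \contained \rho \contained \gamma$; combined with \Cref{fact:tree-width-summary} this places $\alpha'$ in $\MUA{\gamma}{\Tw}$, with no homomorphism from a refinement of $\gamma$ into $\alpha'$ ever needed. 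A second, cosmetic point: the two endpoint bags you join must carry \emph{disjoint} variable sets for $\pathl(X,Y,\delta)$ to be well-formed, so one should pass to $X' = X \setminus Y$ and $Y' = Y \setminus X$ as the paper does; relatedly, your worry about an $l=1$ corner case is vacuous for the same reason as above --- the $\pathl$ semantics never takes refinements of path-width-$l$ queries, so no closure property of $\Pw[l]$ is required.
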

  
\begin{proof}
	Point \eqref{lemma:MUAk-union-of-polysized-summary-test-np:2} follows directly from the definition: there are few branches in the decomposition, branches are short, and each "bag" cannot contain more than $(k+1)^2 \cdot |\+L|$ "atoms" labelled with $\+L$-languages.

	For point \eqref{lemma:MUAk-union-of-polysized-summary-test-np:3}, recall that one can check if a query has "tree-width" at most $k$ in linear time, "eg" using Bodlaender's algorithm \cite[Theorem 1.1]{bodlaender1996treewidth}.

	To prove \eqref{lemma:MUAk-union-of-polysized-summary-test-np:1}, notice first that 
	$\Qapp \contained \MUA{\gamma}{\Tw}$ as a consequence
	of \Cref{fact:tree-width-summary}.

	For the converse "containment", we use \Cref{coro:equivalence_under_approx_homomorphism_twk}
	and prove instead $\MUAHom{\gamma}{\Tw} \contained \Qapp$.
	Observe that, as corollary of the proof of \Cref{lemma:bound_size_refinements}, we can assume to have 
	$\MUAHom{\gamma}{\Tw}$ expressed as a union of $\CtwoRPQ(\+L)$ with a "fine tagged tree decomposition" of "width" $k$ with at most $2\nbatoms{\gamma}$ leaves, and hence it suffices to replace each "non-branching paths" having "non-atomic" bags with "path-$l$ approximations".

	Indeed, fix a "fine tagged tree decomposition" and a "trio"
	$\fun\colon \rho \surj \alpha$. Given a long "non-branching path" from "bag" $b_X$ with
	variables $X$ to a "bag" $b_Y$ with variables $Y$,
	such that $b_X$ and $b_Y$ are "non-full", and no "bag" in between is "atomic",
	define $X' \defeq X \setminus Y$ and $Y' \defeq Y \setminus X$.
	Consider the set $\+S$ of "atoms" $u \atom{L} v$ of $\gamma$, such that the "path induced"
	$\bigl( {b_i \choose z_i} \bigr)_i$ by the "refinement", say
	\[
		u = w_0 \atom{L_0} w_1 \atom{L_1} \cdots \atom{L_n} w_n	= v
	\]
	of $u \atom{L} v$ in $\rho$ 
	goes through $b_X$ at some variable of $X'$ and through $b_Y$ at some variable
	of $Y'$, in the sense that $b_i = b_X$ and $z_i \in X'$ for some $i$,
	and $b_j = b_Y$ and $z_j \in Y'$ for some $j$.
	There exist $i', j'$ such that $f(w_{i'}) = z_i$ and $f(w_{j'}) = z_j$,
	and "wlog" $i' < j'$.
	Now let $\alpha'$ be the query obtained from $\alpha$ by removing all "atoms" tagged in a bag between $b_X$ and $b_Y$, and add a "path-$l$ approximation" query
	\[
		\pathl(X', Y', \delta)
	\]
	where $\delta$ is the conjunction over $u \atom{L} v \in \+S$ of
	$w_{i'} \atom{\contract{L_{i'}\cdots L_{j'}}} w_{j'}$.
	Repeat this operation for every non-trivial "non-branching path" with
	"non-atomic" "bags". We obtain $\alpha'' \in \Qapp$ "st"
	$\alpha \contained \alpha''$, which concludes the proof
	that $\MUAHom{\gamma}{\Tw} \contained \Qapp$.
\end{proof}

\subsection{\AP{}Semantic Tree-Width Problem}
With the previous results in place, we now show that the semantic tree-width $k$ problem is in {\pitwo} for \UCRPQSRE, for every $k>1$.

\thmSemTwSREpitwo*
\begin{proof}
  It suffices to show the statement for any $\CRPQSRE~\gamma$. Remember that $\gamma$ is of "semantic tree-width" $k$ if, and only if, $\gamma \contained \Qapp$. The first ingredient to this proof is the fact that this "containment" has a polynomial counterexample property.
  
  \begin{claim}\AP\label{claim:poly-sized-counterexample-sre}
    If $\gamma \not\contained \MUAHom{\gamma}{\Tw} $ then there is a polynomial-sized "expansion" $\anexpansion$ of $\gamma$ such that $\anexpansion \not\contained \MUAHom{\gamma}{\Tw} $.
  \end{claim}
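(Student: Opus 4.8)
The plan is to establish this polynomial counterexample property by a pumping argument on the lengths of the $a$-paths appearing in expansions. First I would unfold the hypothesis using \Cref{prop:cont-char-exp-st}: from $\gamma \not\contained \MUAHom{\gamma}{\Tw}$ we obtain some expansion $\anexpansion \in \Exp(\gamma)$ such that no expansion of $\MUAHom{\gamma}{\Tw}$ admits a "homomorphism" onto $\anexpansion$; pick such a witness of minimal size. Since $\gamma$ is a $\CRPQSRE$, every "atom" is either of the form $x \atom{a^*} y$—expanding into a directed $a$-path of arbitrary length whose internal vertices are fresh—or of the form $x \atom{a_1 + \dots + a_m} y$—expanding into a single edge. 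Moreover, because "SREs" are "closed under sublanguages", \Cref{lemma:MUAk-union-of-polysized-summary-test-np} guarantees that every query of $\MUAHom{\gamma}{\Tw}$ is again labelled by "SREs", so expansions on both sides are built solely from directed $a$-paths (\emph{lengthenable}, coming from $a^*$-atoms) and single edges (\emph{rigid}, coming from $(a_1 + \dots + a_m)$-atoms). Consequently the only source of unbounded size in $\anexpansion$ is the length of its $a$-paths, and it suffices to bound each of them by a polynomial in $\nbatoms{\gamma}$.

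The key structural observation—a \emph{no-folding} property—is that homomorphisms into $a$-paths are monotone. Let $P = v_0 \atom{a} v_1 \atom{a} \dots \atom{a} v_N$ be a maximal $a$-path of $\anexpansion$ coming from an "atom" $x_i \atom{a^*} y_i$: its internal vertices $v_1, \dots, v_{N-1}$ are fresh, hence have in- and out-degree exactly one in $\anexpansion$, both $a$-labelled. Therefore, for any expansion $\zeta$ of $\MUAHom{\gamma}{\Tw}$ and any "homomorphism" $\fun\colon \zeta \homto \anexpansion$, every $\zeta$-vertex mapped into $\{v_1, \dots, v_{N-1}\}$ has all its (necessarily $a$-labelled, as $\gamma$ is one-way) neighbours mapped to the two adjacent positions. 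Thus the preimage of $P$ decomposes into directed $a$-subpaths of $\zeta$, each sent order-preservingly onto a contiguous infix of $P$; no $\zeta$-edge is folded across a vertex of $P$.

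With this in hand I would run the pumping argument. Fix a polynomial bound $B$ and suppose some $a$-path $P$ of the minimal witness $\anexpansion$ has length $N > B$. Let $\anexpansion'$ be the expansion obtained by decreasing the corresponding $a^*$-exponent by one; it is strictly smaller, so either it is already a (smaller) counterexample—directly contradicting minimality—or there is some $\fun'\colon \zeta' \homto \anexpansion'$ with $\zeta' \in \Exp(\MUAHom{\gamma}{\Tw})$. In the latter case it suffices to \emph{re-stretch} $\zeta'$ into some $\fun\colon \zeta \homto \anexpansion$, contradicting that $\anexpansion$ is a counterexample. By no-folding, $\zeta'$ covers the shortened path $P'$ by monotone $a$-threads, so to re-lengthen $P'$ back to $P$ it is enough to find one edge $e$ of $P'$ whose crossing threads \emph{all} come from $a^*$-atoms of the underlying "approximation": such an edge is \emph{stretch-safe}, since $a^*$-expansions may be lengthened arbitrarily, and subdividing each crossing edge (re-expanding the corresponding $a^*$-atom one step further) routes a new middle vertex onto the re-inserted vertex, extending $\fun'$ to the desired $\fun$. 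The only obstructions are the \emph{rigid} threads—single $a$-edges produced by $(a_1 + \dots + a_m)$-atoms—and here the polynomial summary-query representation of \Cref{lemma:MUAk-union-of-polysized-summary-test-np} is essential: it lets us take $\zeta'$ to be an expansion of a polynomial-sized "$k$-summary query", bounding by a polynomial the number of rigid threads crossing $P'$. Choosing $B$ above this bound, a pigeonhole over the edges of $P'$ yields a stretch-safe edge, completing the re-stretching.

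The main obstacle is precisely this re-stretching step: reconstructing a "homomorphism" into the longer expansion requires controlling \emph{where} the non-lengthenable rigid edges of $\zeta'$ land, and it is this that forces us to use the polynomial summary-query description of $\MUAHom{\gamma}{\Tw}$ rather than arbitrary "refinements", whose single-exponential length would only give an exponential $B$. Once every $a$-path of the minimal witness is bounded by the polynomial $B$ and $\nbatoms{\gamma}$ is fixed, $\anexpansion$ has polynomial size, which is the statement of the claim.
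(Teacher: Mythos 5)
Your skeleton --- minimal counterexample, shrink a long $a$-path, then re-stretch a ``safe'' edge found by pigeonhole --- is the same as the paper's, and your no-folding observation about monotone threads is correct. The genuine gap is in the counting step that feeds the pigeonhole. You bound the number of non-stretchable threads crossing $P'$ by the size of a polynomial "$k$-summary query", asserting that the only obstructions are single edges produced by $(a_1+\dotsb+a_m)$-atoms. But expansions of "summary queries" are not built solely from classical-atom expansions: each "path-$l$ approximation" $\pathl(X,Y,\delta)$ expands into an expansion of an \emph{arbitrary} member of $\MUA{\delta}{\Pw[l]}$, and such members are C2RPQs with arbitrary regular-language labels, so their expansions can contribute unboundedly many threads across $P'$ that fall outside your rigid/lengthenable dichotomy. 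Stretch-safety genuinely fails for these threads: subdividing an edge inside such an expansion need not preserve "path-width" at most $l$ (\Cref{fact:pw-not-closed} shows "path-width" is not closed under subdivision), nor need it preserve "containment" in $\delta$, so the re-stretched query need not lie in the semantics of the "path-$l$ approximation", hence not in $\Qapp$. Consequently the claimed polynomial bound on obstructions, and with it your choice of $B$, is unjustified.

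The fix --- and the paper's actual argument --- is to trace rigidity back to $\gamma$ itself rather than to the approximating query. The paper replaces $\MUAHom{\gamma}{\Tw}$ by the "equivalent" infinitary union of its CQ members, so that a witness for the shrunken $\anexpansion'$ is a CQ $\delta$ coming with $\fun_2\colon \anexpansion'' \surj \delta$ for some "expansion" $\anexpansion''$ of $\gamma$ together with $\fun_1 \colon \delta \homto \anexpansion'$. Pigeonholing the composite $g = \fun_1 \circ \fun_2$ over the edges of $\anexpansion''$ that expand "non-recursive atoms" of $\gamma$ --- there are exactly $n \leq \nbatoms{\gamma}$ of them --- shows that in any shrunken path of length $n+1$ some edge either lies outside the image of $\fun_1$ or has all its $g$-preimages coming from expansions of "recursive atoms" of $\gamma$; that edge can then be pumped, lengthening $\anexpansion''$ (which stays an "expansion" of $\gamma$) and $\delta$ (which stays in $\Tw$ by closure under "refinements", \Cref{fact:refinement-tw}). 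This yields a linear bound $n+1$ on $a$-path lengths, hence a quadratic counterexample, with no summary queries at all: those are needed only afterwards, for the "NP" membership test of \Cref{claim:cq-in-Qapp-np}.
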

  \begin{proof}
    \newcommand{\Qalt}{\+U}
    \AP
    Let us call any atom with a language of the form $a^*$ a ""recursive atom"", and any other atom a ""non-recursive atom"".
    Let $n$ be the number of "non-recursive atoms" of $\gamma$. Hence, any refinement $\rho \in \Refin(\gamma)$ has $n$ atoms deriving from "non-recursive@@nrat" "atom refinements", all the remaining ones derive from "recursive@@rat" "atom refinements".

    We will work with the infinitary union of "conjunctive queries" $\Qalt\defeq\MUAHom{\gamma}{\Tw} \cap \textnormal{"CQ"}$. Note that $\Qalt=\set{ \alpha \in \textnormal{"CQ"} \mid \alpha \in \Tw \text{ and there is } \anexpansion \in \Exp(\gamma) \text{ "st" } \anexpansion \surj \alpha }$.
    It is easy to see that $\Qalt \semequiv \MUAHom{\gamma}{\Tw}$ as a consequence of \Cref{fact:refinement-tw}.
    By \Cref{prop:cont-char-exp-st}, we have $\gamma \not\contained \Qalt$ if, and only if, there is some "expansion" $\anexpansion$ of $\gamma$ such that $\anexpansion \not\contained \Qalt$. In turn, this happens if, and only if, there is no $\delta \in \Qalt$ such that $\delta \homto \anexpansion$.
    
    Take any such counterexample $\anexpansion$ of minimal size (in number of atoms). We show that for any "internal path" of $\anexpansion$ of the shape
    \[\pi = x_0 \atom{a} x_1 \atom{a} x_1 \dotsb x_{m-1}\atom{a} x_m,\]
    we have $m \leq n+1$. Hence, since $\anexpansion$ is an "expansion" of a {\CRPQSRE}, this means that the size of each \AP""atom expansion""---namely an "expansion" obtained from a query by only expanding one "atom"---of $\anexpansion$ is linearly bounded in the size of $\gamma$, and thus that $\anexpansion$ is quadratically bounded.

    By means of contradiction, if $m > n+1$ consider the "expansion" $\anexpansion'$ resulting from ``shrinking'' the path $\pi$ to a path $\pi'$ of length $n+1$. Hence, $\anexpansion'$ is smaller than $\anexpansion$, and since $\anexpansion$ was assumed to be minimal, $\anexpansion'$ cannot be a counterexample. Thus, there is some $\delta \in \Qalt$ such that $\fun_1:\delta \homto \anexpansion'$ for some "homomorphism" $\fun_1$. Further, by definition of $\Qalt$, we have $\fun_2 : \anexpansion'' \surj \delta$ for some $\anexpansion'' \in \Exp(\gamma)$. Consider then the composition $\anexpansion'' \surj \delta \homto \anexpansion'$ of $\fun_2$ with $\fun_1$ and let us call it $g: \anexpansion'' \homto \anexpansion'$. By definition of $n$ there must be at least one atom $x_i \atom{a} x_{i+1}$ of the shrunken path $\pi'$ of $\anexpansion'$ which either 
		(i) is not in the image of $\fun_1$, or 
		(ii) all its $g$-preimages proceed from "atoms" $z \atom{a} z'$ of $\anexpansion''$ which are in the "expansions" of "recursive atoms" of $\gamma$. 
	We show that, in both cases, we can replace $x_i \atom{a} x_{i+1}$ with a path of $a$'s of any arbitrary length $l>0$, obtaining a "conjunctive query" $\anexpansion'_{+l}$  which
	is---still---not a counterexample.  In the first case (i), we actually obtain that $\delta \homto \anexpansion'_{+l}$. In the second case (ii), we have to replace each atom $z \atom{a} z'$ in the $g$-preimage of $x_i \atom{a} x_{i+1}$ in $\anexpansion''$ by an $a$-path of length $l$, obtaining some "expansion" $\anexpansion''_{+l}$ of $\gamma$. We also replace each atom in the $\fun_1$-preimage of $x_i \atom{a} x_{i+1}$ by an $a$-path of length $l$ obtaining some $\delta_{+l}$ such that $\anexpansion''_{+l} \surj \delta_{+l} \homto \anexpansion'_{+l}$. Further, $\delta_{+l} \in \Tw$ since $\Tw$ is closed under "refinements" by \Cref{fact:refinement-tw}.
    In both cases this shows that $\anexpansion'_{+l}$ is \emph{not} a counterexample. 
	In particular, for $l=m-n-1$, we have $\anexpansion'_{+l}=\anexpansion$, and this would contradict the fact that $\anexpansion$ is a counterexample.
	Therefore, there exists a counterexample of polynomial (quadratic) size whenever $\gamma \not\contained \MUAHom{\gamma}{\Tw}$.
  \end{proof}
  The second ingredient is that testing whether a "CQ" is a counterexample is in "coNP".

  \begin{claim}\AP\label{claim:cq-in-Qapp-np}
    The problem of testing, given a "C2RPQ" $\gamma$ and a "CQ" $\xi$, whether $\xi \contained \Qapp$, is in "NP".
  \end{claim}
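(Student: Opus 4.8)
The plan is to reduce the containment $\xi \contained \Qapp$ to a single-disjunct check and then turn that check into a bounded number of reachability computations, the only delicate one being the evaluation of the \emph{"path-$l$ approximation"} atoms. Since $\xi$ is a "CQ", its only "expansion" is itself, so by \Cref{prop:cont-char-exp-st} together with the canonical-database characterisation of "containment" for "CQs", I would first argue that $\xi \contained \Qapp$ holds if and only if there is a \emph{single} "$k$-summary query" $\alpha \in \Qapp$ with $\xi \contained \alpha$: the canonical database of $\xi$ "satisfies" $\xi$, hence "satisfies" the union $\Qapp$, hence "satisfies" some disjunct $\alpha$, and conversely $\xi \contained \alpha \contained \Qapp$. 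By \Cref{lemma:MUAk-union-of-polysized-summary-test-np} each $\alpha \in \Qapp$ has polynomial size (point~(2)) and membership $\alpha \in \Qapp$ is testable in "NP" (point~(3)), so the algorithm begins by nondeterministically guessing such an $\alpha$ with a certificate of its membership. It then remains to decide, still in "NP", whether $\xi$, viewed as a database, "satisfies" the "summary query" $\alpha$ on the tuple of its "output variables", i.e.\ whether some "expansion" of $\alpha$ maps homomorphically into $\xi$.

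For this last test I would guess the images in $\xi$ of the polynomially many \emph{named} variables of $\alpha$ — the endpoints of its ordinary "atoms" together with the boundary sets $X \cup Y$ of each "path-$l$ approximation" — and then verify each kind of "atom" independently. An ordinary "atom" $x \atom{L} y$ is verified by checking that $\xi$ contains a path from the guessed image of $x$ to that of $y$ labelled by a word of $L$; this is reachability in the product of the NFA $\+A_L$ with $\xi$, hence in "NL". The crux, and what I expect to be the main obstacle, is verifying a "path-$l$ approximation" $\pathl(X, Y, \delta)$ once its boundary images are fixed: by definition its semantics is the infinitary union of the $\alpha'(\bar z) \in \MUA{\delta}{\Pw[l]}$ satisfying the root/leaf condition \eqref{eq:pathl:rootleafppty}, so guessing a witnessing "refinement@@summary" outright is impossible as these may have exponential size.

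To overcome this I would first establish a structural reformulation: $\xi$ "satisfies" $\pathl(X, Y, \delta)$ under the fixed boundary exactly when there is an "expansion" $\xi_\delta$ of $\delta$ and a "homomorphism" $g\colon \xi_\delta \homto \xi$ extending the boundary such that the image $g(\xi_\delta)$ has "path-width" at most $l$ with the images of $X$ and $Y$ lying in its root and leaf "bags". The backward direction is immediate, since such an image is a "path-width"-$l$ query "contained" in $\delta$ and thus lies in $\MUA{\delta}{\Pw[l]}$; the forward direction uses that a "homomorphic image" of a query of "path-width" $\le l$ again has "path-width" $\le l$, by the same contraction argument underlying \Cref{fact:refinement-tw}. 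This reformulation is then decided by a nondeterministic walk along a width-$l$ "path decomposition" of a subgraph of $\xi$: one guesses a sequence of "bags" (subsets of $\vertex{\xi}$ of size $\le l+1$) from a "bag" containing the image of $X$ to one containing the image of $Y$, while threading, for each "atom" of $\delta$, a realising path of $\xi$ through consecutive "bags" and tracking its progress in $\+A_L$. Because $l \le k$ is fixed, there are only polynomially many candidate "bags", and because a width-$l$ decomposition has bounded cut there are only boundedly many simultaneously active threads; hence the configuration space of the walk is polynomial and the whole check reduces to reachability, in "NL" $\subseteq$ "NP".

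Combining the guess of $\alpha$, the guess of the boundary "homomorphism", and the reachability certificates for the ordinary "atoms" and the "path-$l$ approximations" yields a single nondeterministic polynomial-time procedure, giving the desired "NP" bound. The subtle point to get right — and the place where the argument must be carried out carefully — is the path-decomposition bookkeeping for the "path-$l$ approximation": one must track exactly the active atom-threads crossing each window so that the number of configurations stays polynomial in $\size{\xi}$ and $\size{\gamma}$ for fixed $k$, while ensuring that a successful walk genuinely encodes an "expansion" of a "path-width"-$l$ member of $\MUA{\delta}{\Pw[l]}$ mapping into $\xi$.
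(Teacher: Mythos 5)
Your high-level architecture is the same as the paper's: reduce to a single disjunct, guess a polynomial-sized summary query $\alpha$ together with an NP certificate that $\alpha \in \Qapp$ (points (2) and (3) of \Cref{lemma:MUAk-union-of-polysized-summary-test-np}), guess a valuation of its variables into $\vars(\xi)$, check ordinary atoms by reachability, and check each $\pathl(X,Y,\delta)$ atom by a nondeterministic left-to-right walk. The gap sits exactly where you flag it: the completeness direction of your structural reformulation of satisfaction of $\pathl(X,Y,\delta)$ is unproven, and the lemma you invoke for it is false. Homomorphic images do not preserve path-width (nor tree-width): every graph is a homomorphic image of a disjoint union of single edges, which has path-width $1$. \Cref{fact:refinement-tw} cannot be cited here --- it concerns refinements, not homomorphic images, and tree-width $\geq 2$, not path-width --- and the paper itself proves (\Cref{fact:pw-not-closed}) that path-width is not even closed under the much tamer operation of expansion; that failure is the very reason \Cref{sec:semantic-path-width} has to introduce contracted path-width.

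Worse, the reformulation itself is false, not merely its justification, so your per-atom test is sound but not complete. Take $l=2$, $X=\set{x}$, $Y=\set{y}$, $\delta = x\atom{a^*}y \land x\atom{b^*}y \land x\atom{c}y$, and let $\xi$ be the CQ on variables $u,p,q,v$ with atoms $u\atom{a}p$, $p\atom{a}q$, $q\atom{a}v$, $u\atom{b}q$, $q\atom{b}p$, $p\atom{b}v$, $u\atom{c}v$, with boundary $\mu(x)=u$, $\mu(y)=v$. The CQ $\zeta = x\atom{a}t_1\atom{a}t_1'\atom{a}y \land x\atom{b}t_2\atom{b}t_2'\atom{b}y \land x\atom{c}y$ has path-width $2$, admits a path decomposition with root exactly $\set{x}$ and leaf exactly $\set{y}$, is contained in $\delta$, and maps homomorphically into $\xi$ by sending $t_1,t_2'\mapsto p$ and $t_1',t_2\mapsto q$; hence $\xi$ satisfies $\pathl(X,Y,\delta)$ under $\mu$, and the paper's test accepts via $\delta_{\textup{path}}=\zeta$. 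But the $a$-atoms of $\xi$ form the unique $a$-walk $u,p,q,v$ and the $b$-atoms the unique $b$-walk $u,q,p,v$, so \emph{every} homomorphism from \emph{every} expansion of $\delta$ compatible with $\mu$ has image all of $\xi$, whose underlying graph is $K_4$, of path-width $3>l$; your condition rejects. (Lengthening the two walks, traversing the same vertices in opposite orders, shows this is not an artifact of small instances: long witnesses can only enter $\Qapp$ through $\pathl$ atoms because of condition (c) in the definition of $\Qapp$, so the defect cannot always be absorbed by other disjuncts.) The paper avoids the trap because the width-$l$ path decomposition it guesses is that of an \emph{abstract} CQ $\delta_{\textup{path}}$, built bag by bag from fresh variables each carrying a pointer into $\xi$: the homomorphism $\delta_{\textup{path}} \homto \xi$ may be non-injective, and no width bound is ever imposed on its image inside $\xi$. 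Relatedly, your bookkeeping claim that a bounded cut yields boundedly many simultaneously active threads is also unfounded, since arbitrarily many atoms of $\delta$ may thread through the same bag vertex; the paper instead certifies membership in the semantics by guessing, along the walk, a homomorphism from a refinement of $\delta$ into $\delta_{\textup{path}}$.
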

  \begin{proof}
        We first guess a polynomial-sized "$k$-summary query" $\delta_{\textup{zip}}$ and test in "NP" that it is part of $\Qapp$ by \Cref{lemma:MUAk-union-of-polysized-summary-test-np}. 
        Let us call $\Delta$ be the equivalent {\UCRPQSRE} query, given by \Cref{lemma:MUAk-union-of-polysized-summary-test-np} \textit{cum} \Cref{lemma:bound_size_refinements}.
        We have to check that there is some "expansion" $\delta$ of $\Delta$ such that there is a "homomorphism" $\delta \homto \xi$. 
        We first guess a valuation $\mu :  \vars(\delta_{\textup{zip}}) \to \vars(\xi)$.
        Now it remains to check that:
        \begin{enumerate}
          \item For every "CRPQ" atom $x \atom{a^*} y$ of $\delta_{\textup{zip}}$ there is an $a$-path in $\xi$ from $\mu(x)$ to $\mu(y)$.
          \item Every "path-$l$ approximation" $\pathl(X, Y, \bigwedge_{1 \leq i \leq n} A_i(x_i,y_i))$ of $\delta_{\textup{zip}}$ contains a "CQ" $\delta_{\textup{path}}$  admitting a "path decomposition" of "width" $l$ which starts with the bag $X$ and ends with $Y$. And further, there is a homomorphism $h: \delta_{\textup{path}}  \homto \xi$ which coincides with $\mu$ on variables $X \cup Y$.
        \end{enumerate}
        Observe that  these two properties hold true if, and only if, there is some "expansion" $\delta$ of $\Delta$ such that $\delta \homto \gamma$.
        It is clear the first point can be achieved in polynomial time (actually, in "NL") since it is a simple reachability query. 
        The second point can also be achieved in polynomial time (or in "NL"), since the "fine" "path decomposition" of "width" $l$ can be guessed on-the-fly using $l+1$ pointers to the variables of $\gamma$ ("cf" \Cref{lemma:evaluation-bounded-pathwidth}). An "NL" algorithm can advance down the "path decomposition" while simultaneously
        \begin{enumerate}
          \item guessing the "conjunctive query" $\delta_{\textup{path}}$ via its "fine" "path decomposition" of "width" $k$,
          \item checking that there is a partial "homomorphism" to $\gamma$ ("ie", a "homomorphism" from the subquery of $\gamma$ restricted to current "bag"'s variables to $\gamma$),
          \item ensuring that the "CQ" $\delta_{\textup{path}}$ being built is an element of $\pathl(X, Y, \bigwedge_{1 \leq i \leq n} A_i(x_i,y_i))$, which requires to also guess a homomorphism $\rho \homto \delta_{\textup{path}}$ from a refinement $\rho$ of $\bigwedge_{1 \leq i \leq n} A_i(x_i,y_i)$.
        \end{enumerate}
        Further, a simple test can ensure that the first and last bags of the decomposition  coincide with the guessed assignment $\mu$.
        Since the number of pointers (bounded by $k+1$) is fixed, this subroutine is in "NL", and hence in polynomial time. This yields an "NP" algorithm for testing $\xi \contained \Qapp$.
  \end{proof}

  As a consequence of the two claims, we obtain a \sigmatwo~algorithm for non-containment of $\gamma \contained \Qapp$: We first guess an "expansion" $\anexpansion$ of $\gamma$ of polynomial size, and we then test $\anexpansion \not\contained \Qapp$ in "coNP". This gives a {\pitwo} algorithm for the "semantic tree-width $k$ problem", which is correct by
  \Cref{lemma:MUAk-union-of-polysized-summary-test-np,claim:poly-sized-counterexample-sre}.
\end{proof}

\section{\AP{}Acyclic Queries: the Case $k=1$}
\label{sec:acyclic-queries}

Observe that in the previous sections we have treated the cases of "semantic tree-width" $k$ for every $k\geq 2$. However, the case $k=1$ remains rather elusive so far. 
While the "Key Lemma" holds for $k=1$, it proves the computability of an object that
is irrelevant to study "semantic tree-width" 1, see \Cref{rk:key-lemma-tw1}. 
The problem comes from \Cref{ex:counterex-tw1}, namely that
$\MUA{\gamma}{\Tw[1]} \not\semequiv \MUAHom{\gamma}{\Tw[1]}$.
This is the main obstacle why our approach does not directly yield an algorithm for the case $k=1$, which had been previously solved by Barceló, Romero and Vardi \cite{BarceloRV16}. 
However, as we argue in this section, a rather elegant modification on the notion of "tree-width" allows to use our approach as a unifying framework for both the case $k=1$ and the cases
$k \geq 2$. 
Concretely, we introduce a family of classes $\set{\ContrTw}_k$ such that 
$\MUA{\gamma}{\Tw} \semequiv \MUAHom{\gamma}{\ContrTw}$ for every $\gamma$ and $k$, and where 
$\MUAHom{\gamma}{\ContrTw[1]} \semequiv \MUAHomBounded{\gamma}{\ContrTw[1]}{\leq\textit{poly}(\size{\gamma})}$.
As a corollary, we reprove \cite[Theorem~6.1]{BarceloRV16}, namely that the "semantic tree-width $1$ problem" is \expspace-complete. Further, we can also solve the "one-way semantic tree-width $1$ problem", which is outside the scope of \cite{BarceloRV16}. 
Remember that for $k=1$, the "semantic tree-width@semantic tree-width $1$ problem" and "one-way semantic tree-width $1$ problems@one-way semantic tree-width $1$ problem" are two independent problems, since there are queries of "semantic tree-width" 1 but not of "one-way semantic tree-width" 1 ("cf" \Cref{rk:closure-under-sublanguages-k1}).

\subsection{\AP{}Contracted Tree-Width}
We next formally define the notion of ``"contracted tree-width"'', meaning the "tree-width" of the graph obtained by contracting paths (or directed paths) into edges. This altered notion of "tree-width" will allow us to seamlessly prove the case of $k=1$ for \Cref{thm:decidability-semtw}.

\AP Given a "C2RPQ" $\gamma$, an \AP""internal path"" is a sequence of atoms\footnote{We write
$x \symatom{\lambda} y$ to mean that there is either an "atom" $x \atom{\lambda} y$
\textbf{or} an "atom" $x \coatom{\lambda} y$.}
\[
	x_0 \symatom{L_1} x_1 \symatom{L_2} \cdots \symatom{L_{n-1}} x_{n-1} \symatom{L_n} x_n
\]
where each $x_i$ for $i \in \lBrack 1,n-1 \rBrack$ has total degree
exactly 2 and is existentially quantified.
\AP Its ""contraction@@path"" is defined as the edge
\[
	x_0 \atom{K_1 \cdot K_2 \cdots K_{n-1} \cdot K_n} x_n,
\]
where $K_i \defeq L_i$ if the "atom" between $x_i$ and $x_{i+1}$ is directed from
left to right, and $K_i \defeq L_i^{-}$ if the "atom" is directed from right to left.\footnote{Given a regular language $L$ over $\Aext$, we define a regular language $L^-$ over $\Aext$ by induction on regular expressions: $\emptyset^- \defeq \emptyset$, $(a)^- \defeq a^-$, $(a^-)^- \defeq a$, $(L_1\cdot L_2)^- \defeq L_2^- \cdot L_1^-$, $(L^*)^- \defeq (L^-)^*$ and $(L_1 + L_2)^- \defeq L_1^- + L_2^-$. Then, for any graph, there is a path from $x$ to $y$ labelled by
a word of $L$ "iff" there is a path from $y$ to $x$ labelled by a word of $L^-$.}

\AP Similarly, a ""one-way internal path"" is a sequence of atoms 
\[
	x_0 \atom{L_1} x_1 \atom{L_2} \cdots \atom{L_{n-1}} x_{n-1} \atom{L_n} x_n
\]
where each $x_i$ for $i \in \lBrack 1,n-1 \rBrack$ has exactly in-degree
and out-degree 1 in $\gamma$ and is existentially quantified.
\AP Its ""one-way contraction@@path"" is defined as the edge
\[
	x_0 \atom{L_1 \cdot L_2 \cdots L_{n-1} \cdot L_n} x_n.
\]

\AP A ""contraction"" (resp.\ ""one-way contraction"") of a "C2RPQ" is any query obtained
by iteratively replacing some "internal paths" (resp.\ "one-way internal paths") by
their "contraction@@path" (resp.\ "one-way contraction@@path"). By definition,
a query is always "equivalent" to any of its "contractions" or "one-way contractions".

\begin{definition}
	\AP
	Define the ""contracted tree-width"" (resp.\ ""one-way contracted tree-width"")
	of a "C2RPQ" as the minimum of the "tree-width" among its "contractions" (resp.\ of its "one-way contractions"). Let $\intro*\ContrTw$ and $\intro*\ContrTwOneWay$ be, respectively, the set of all "C2RPQs" of "contracted tree-width" at most $k$ and of "CRPQs" of "one-way contracted tree-width" at most $k$.
\end{definition}

For instance, the query
\begin{center}
	\small
	\begin{tikzcd}[column sep=small, row sep=small]
		&[-1em] x_0 \ar["K", rr] \ar["L" swap, ddr] & & x_1 \ar["M", ddl] \\[-1em]
		\gamma(x_0, x_1) \defeq & \\[-1em]
		& & y &
	\end{tikzcd}
\end{center}
has "contracted tree-width" one since the "internal path" $x_0 \atom{L} y \coatom{M} x_1$
can be "contracted" into $x_0 \atom{LM^{-}} x_1$. On the other hand,
its "one-way contracted tree-width" is two, since there is no non-trivial "one-way internal path"
as $x_1$ is an "output variable".

Note that, by definition:
\begin{itemize}
	\item the "contracted tree-width" is at most the "one-way contracted tree-width", which is
		in turn at most the "tree-width";
	\item for $k\geq 2$, these notions collapse (by \Cref{fact:refinement-tw});
	\item for $k=1$, both inequalities can be strict.
\end{itemize}
Moreover, for any $k\geq 1$,
"contracted tree-width" at most $k$ and "one-way contracted tree-width" 
at most $k$ are both closed under "refinements": if a query has "tree-width" at most $k$, so does any "refinement" thereof. In fact, the "CQs" of "contracted tree-width" $1$ precisely correspond to what in \cite[\S 5.2.1, p1358]{BarceloRV16} is known as ``pseudoacyclic graph databases''.

\begin{fact}
	\AP\label{fact:tw-equiv-to-ctw}
	Let $k \geq 1$. For any "CRPQ" $\gamma$, we have
	$\MUA{\gamma}{\TwOneWay} \semequiv \MUAHom{\gamma}{\ContrTwOneWay}$.\\
	Moreover, for any "C2RPQ" $\gamma$,
	$\MUA{\gamma}{\Tw} \semequiv \MUAHom{\gamma}{\ContrTw}$.
\end{fact}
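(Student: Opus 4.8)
The plan is to route both equivalences through the intermediate notion of "contracted tree-width" and then invoke \Cref{obs:equivalence_under_approx_homomorphism}. First I would prove the auxiliary equivalence $\MUA{\gamma}{\Tw} \semequiv \MUA{\gamma}{\ContrTw}$, and then show that $\ContrTw$ satisfies the hypotheses of \Cref{obs:equivalence_under_approx_homomorphism}, so that $\MUA{\gamma}{\ContrTw} \semequiv \MUAHom{\gamma}{\ContrTw}$; chaining the two yields the claim. The "CRPQ" statement follows from the very same argument applied to $\TwOneWay$ and $\ContrTwOneWay$. For $k\geq 2$ this merely recovers \Cref{coro:equivalence_under_approx_homomorphism_twk} since the notions collapse, so the new content is really $k=1$.

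For the auxiliary equivalence, one containment is immediate: replacing an "internal path" by its "contraction@@path" is a minor operation and hence never raises the "tree-width", so $\Tw \subseteq \ContrTw$ as classes of queries, and set inclusion of the disjuncts gives $\MUA{\gamma}{\Tw} \contained \MUA{\gamma}{\ContrTw}$. For the converse, I would take any disjunct $\alpha \in \MUA{\gamma}{\ContrTw}$ and a "contraction" $\alpha_c$ of $\alpha$ witnessing membership in $\ContrTw$, so that $\alpha_c$ has "tree-width" at most $k$, i.e.\ $\alpha_c \in \Tw$. Since a query is always "equivalent" to its "contractions", $\alpha \semequiv \alpha_c \contained \gamma$, whence $\alpha_c \in \MUA{\gamma}{\Tw}$ and the single disjunct $\alpha$ is "contained" in $\MUA{\gamma}{\Tw}$. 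The one-way variant is identical, using that all "refinements" of a "CRPQ"—and in particular its "one-way contraction@@path"—use only positive labels and hence remain "CRPQs", so the whole argument stays within one-way queries.

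To apply \Cref{obs:equivalence_under_approx_homomorphism} I must check that $\ContrTw$ (resp.\ $\ContrTwOneWay$) is closed under "expansions" and "subqueries". Closure under "refinements"—hence under "expansions"—is the property recorded just above the statement (and is precisely what fails for $\Tw[1]$, which is why the contracted notion is needed when $k=1$). Closure under "subqueries" is the step I would carry out explicitly: to remove an "atom" $e = u \atom{L} v$ from a query of "contracted tree-width" at most $k$, fix a witnessing "contraction" $\gamma_c$ and distinguish whether $e$ survives as an edge of $\gamma_c$ or lies in the interior of one of the "internal paths" contracted to form $\gamma_c$. In the first case $\gamma_c$ with $e$ deleted is a "contraction" of the "subquery" of no larger "tree-width". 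In the second case deleting $e$ splits that path into two pendant paths, whose interior vertices remain degree-$2$ and existential (hence contractible) while their new endpoints become degree-$1$ (hence not suppressed); the resulting "contraction" is $\gamma_c$ with one edge removed and two pendant edges attached, and such operations keep the "tree-width" at most $\max(k,1)=k$ since $k\geq 1$. The same bookkeeping applies verbatim to "one-way internal paths".

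The main obstacle I anticipate is precisely this subquery-closure verification: one has to confirm that the pendant-path construction genuinely produces a \emph{contraction} of the "subquery", and to dispose of the degenerate cases where $e$ is incident to a branch vertex or where the "internal path" has length one. Note that the only place where $k\geq 1$—rather than $k\geq 2$—is used is in absorbing the two pendant edges into the "tree-width" bound; this is exactly what makes the contracted classes the right uniform replacement for $\Tw$ in \Cref{coro:equivalence_under_approx_homomorphism_twk} when $k=1$.
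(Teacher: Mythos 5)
Your proof is correct and follows essentially the same route as the paper's: the paper's proof is precisely the two-step chain $\MUA{\gamma}{\Tw} \semequiv \MUA{\gamma}{\ContrTw} \semequiv \MUAHom{\gamma}{\ContrTw}$, the first step justified by the fact that contractions preserve semantics and the second by \Cref{obs:equivalence_under_approx_homomorphism}, with the same argument repeated for the one-way variant. The only difference is that you explicitly verify the closure of $\ContrTw$ under subqueries (the pendant-path bookkeeping), a hypothesis of \Cref{obs:equivalence_under_approx_homomorphism} that the paper leaves implicit; this verification is sound and a worthwhile addition.
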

\begin{proof}
	\begin{align*}
		\MUA{\gamma}{\Tw}
			& \semequiv \MUA{\gamma}{\ContrTw}
			\quad\text{since "contractions" preserve semantics,} \\
			& \semequiv \MUAHom{\gamma}{\ContrTw}
			\quad\text{by \Cref{obs:equivalence_under_approx_homomorphism}.}
	\end{align*}
	The same arguments work with "one-wayness".
\end{proof}

\subsection{\AP{}The Key Lemma for Contracted Tree-Width One}
We show next that "contracted tree-width" $1$ has all the needed properties for the analogue of "Key Lemma" for $k=1$ to hold.

\begin{lemma}
    \AP\label{lemma:bound_size_refinements_tw}
    \AP For any "CRPQ" $\gamma$, we have
    $\MUAHom{\gamma}{\ContrTwOneWay[1]} \semequiv \MUAHomBounded{\gamma}{\ContrTwOneWay[1]}{\leq\lOne}$, where
    $\intro*\lOne = \Theta(\nbatoms[3]{\gamma})$.
	Similarly, for a "C2RPQ" $\gamma$,
	$\MUAHom{\gamma}{\ContrTw[1]} \semequiv
		\MUAHomBounded{\gamma}{\ContrTw[1]}{\leq\lOne}$.
\end{lemma}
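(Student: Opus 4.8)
The plan is to replay the two–stage argument of the Key Lemma (\cref{lemma:locally_acyclic_treedec} followed by \cref{lemma:shape-decomposition}), but relative to a \emph{contraction} of $\alpha$ witnessing contracted tree-width $1$ rather than relative to $\alpha$ itself. As in \cref{sec:proof-key-lemma}, it suffices to show that every trio $\fun\colon\rho\surj\alpha$ with $\alpha\in\ContrTw[1]$ (resp.\ $\ContrTwOneWay[1]$) can be replaced by a trio $\fun'\colon\rho'\surj\alpha'$ with $\alpha\contained\alpha'$ and $\rho'\in\Refin[\leq\lOne](\gamma)$. The decompositions I manipulate are fine tagged tree decompositions of $\alpha$ that are of width $1$ \emph{after contraction}: contracting every maximal non-branching path (each of which corresponds to an internal path of $\alpha$) yields a width-$1$ decomposition of the contraction $\hat\alpha$ of $\alpha$. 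Along such a path the full bags carry two consecutive internal-path vertices and the non-full bags carry a single vertex. I repeatedly use that, by the closure recorded just before \cref{fact:tw-equiv-to-ctw}, both $\ContrTw[1]$ and $\ContrTwOneWay[1]$ are closed under refinements and condensations, so none of the operations below leaves the class.

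First I would transfer \cref{lemma:locally_acyclic_treedec} verbatim: whenever an atom refinement of $\rho$ induces a cyclic path, condense the offending factor $z_i\atom{L_{i+1}}\cdots\atom{L_j}z_j$ into a single atom $z_i\atom{\contract{L_{i+1}\cdots L_j}}z_j$ and add the matching atom between $\fun(z_i)$ and $\fun(z_j)$ in $\alpha$. Since $\fun(z_i)$ and $\fun(z_j)$ lie in a common bag, adding this edge preserves the (contracted) decomposition; $\alpha\contained\alpha'$ holds by \cref{fact:refinement-contained} and $\nbatoms{\rho'}\leq\nbatoms{\rho}$. Nothing in this construction used $k\geq 2$, so after iterating we obtain a locally acyclic fine tagged tree decomposition which is of width $1$ after contraction.

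The heart of the proof is the short-paths bound, and here $k=1$ is favourable. The contraction $\hat\alpha$ has tree-width $1$, so its underlying multigraph is a forest (up to self-loops and parallel edges) whose vertices are precisely the images of the non-internal variables of $\gamma$; hence $\hat\alpha$ has at most $|\vars(\gamma)|\leq 2\nbatoms{\gamma}$ vertices and edges, which bounds the number of leaves and of maximal non-branching paths of the decomposition by $\+O(\nbatoms{\gamma})$. The decisive point is the profile count. In a width-$1$ (contracted) decomposition every non-full bag carries a single variable $z$, so its profile is just the type $\type^b_z\in\pset{\atoms{\gamma}}$ of that variable. Along a single non-branching path $b_0,b_1,\dots$ I would show that only $\+O(\nbatoms{\gamma})$ \emph{distinct} profiles occur: by local acyclicity the induced path of each atom of $\gamma$ visits each bag at most twice, and consecutively (\cref{fact:acyclic-decomposition-leave-forever}), so each atom of $\gamma$ belongs to the types of an $\+O(1)$ number of contiguous blocks of bags of the path; thus the profile sequence is piecewise constant with $\+O(\nbatoms{\gamma})$ pieces. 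This replaces the exponential profile count $(k+1)^{\nbatoms{\gamma}}$ of \cref{lemma:shape-decomposition} by a \emph{linear} one. Plugging $n=\+O(\nbatoms{\gamma})$, $d=2k+1=3$ and $t=2\nbatoms{\gamma}$ into \cref{fact:pigeon-hole} and then applying \cref{claim:shortening-paths} shortens every non-branching path to length $\+O(\nbatoms[2]{\gamma})$.

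Assembling as in \cref{sec:proof-bound-size-refinements}: with at most $2\nbatoms{\gamma}$ leaves and non-branching paths of length $\+O(\nbatoms[2]{\gamma})$, the decomposition has height $\+O(\nbatoms[3]{\gamma})$, and local acyclicity bounds the refinement length of $\rho'$ by twice the height, giving $\rho'\in\Refin[\leq\lOne](\gamma)$ with $\lOne=\Theta(\nbatoms[3]{\gamma})$. The one-way statement is proved identically, substituting one-way internal paths and one-way contractions throughout; requiring in- and out-degree $1$ rather than total degree $2$ affects none of the counting. I expect the main obstacle to be exactly the profile-counting step—making fully rigorous that each atom of $\gamma$ induces only $\+O(1)$ contiguous blocks in the profile sequence of a non-branching path, i.e.\ that the single-variable types stabilise quickly—since this is precisely where the exponential factor of the general case disappears and the cubic bound $\nbatoms[3]{\gamma}$ is earned.
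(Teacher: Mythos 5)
Your overall architecture is exactly the paper's: replay \Cref{lemma:locally_acyclic_treedec} and \Cref{lemma:shape-decomposition} on a fine tagged tree decomposition of width $1$ of a \emph{contraction} of $\alpha$ (noting that neither construction uses $k\geq 2$, only that added atoms join variables sharing a bag), replace the exponential profile count by a linear one, and then assemble via \Cref{fact:pigeon-hole} and \Cref{claim:shortening-paths} to get non-branching paths of length $\+O(\nbatoms[2]{\gamma})$, at most $\+O(\nbatoms{\gamma})$ leaves, and hence $\lOne=\Theta(\nbatoms[3]{\gamma})$. This is the paper's proof, including the one-way variant.

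However, the step you yourself flag as the main obstacle is a genuine gap as you argue it, and the paper closes it with an ingredient you do not identify. Local acyclicity (\Cref{fact:acyclic-decomposition-leave-forever}) only says that an induced path occupies any \emph{given} bag in at most two consecutive positions; by itself it does not prevent an atom of $\gamma$ from appearing in the profile of $b_3$ and of $b_7$ of a non-branching path while being absent from that of $b_5$, so your inference from local acyclicity to contiguous blocks is a non-sequitur. What actually gives contiguity is the paper's \Cref{fact:paths-are-connected}: the projection of $\tagmappath{\pi}$ onto $T$ is a walk in $T$, so if an induced path visits two bags of a non-branching path it must visit every bag in between. Hence each atom of $\gamma$ contributes exactly \emph{one} contiguous interval (not merely $\+O(1)$ blocks), the profile sequence changes at most $2\nbatoms{\gamma}$ times along the path, and only $\+O(\nbatoms{\gamma})$ distinct profiles can occur --- which is the linear count feeding \Cref{fact:pigeon-hole}. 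With that fact substituted for the appeal to local acyclicity, your argument goes through. A second, more cosmetic imprecision: the vertices of the contraction of $\alpha$ are not in general images of variables of $\gamma$ (the homomorphism can merge internal refinement variables into branching vertices that survive contraction), so your leaf bound should instead be obtained as in \Cref{sec:proof-bound-size-refinements}: after \Cref{fact:restriction_tagged_treedec} every leaf is tagged by some atom of $\rho'$, and local acyclicity forces that atom to be the first or last atom of an atom refinement, giving at most $2\nbatoms{\gamma}$ leaves.
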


\begin{proof}
	Consider the proof of the "Key Lemma" (\Cref{lemma:bound_size_refinements}).
	We claim that:
	\begin{enumerate}
		\itemAP the constructions of
			\Cref{lemma:locally_acyclic_treedec,lemma:shape-decomposition}
			both preserve "contracted tree-width" at most 1 and "one-way contracted tree-width" at most 1;
			\label{eq:lemma:bound_size_refinements_tw:1}
		\itemAP the upper bound
			$\l \in \+O(\nbatoms[2]{\gamma} \cdot 2^{\nbatoms{\gamma}})$
			can be easily boiled down to $\lOne \in \+O(\nbatoms[3]{\gamma})$ in the special case of $k=1$.
			\label{eq:lemma:bound_size_refinements_tw:2}
	\end{enumerate}

	\smallskip

	\proofcase{\eqref{eq:lemma:bound_size_refinements_tw:1} Preservation of "contracted tree-width".}
	We claim that all constructions of \Cref{sec:proof-key-lemma} preserve "contracted tree-width" 
	at most 1. The setting is similar, except that now, a "trio" consists of a triple $\fun\colon \rho \to 
	\alpha$ where $\rho$ is a "refinement" of a fixed "C2RPQ" $\gamma$, and $\alpha$ is a "C2RPQ" of
	"\emph{contracted} tree-width@contracted tree-width" 1. We now apply the constructions not to a decomposition of $\alpha$ but to a "fine tagged tree decomposition" of a \emph{"contraction"} of $\alpha$ of "tree-width" 1.
	
	\begin{fact}
		Let $\gamma$ be a "C2RPQ", $\chi$ be a "contraction" of $\gamma$, and
		$(T, \bagmap, \tagmap)$ be a "fine tagged tree decomposition" of $\chi$ of "width" at most 1.
		Let $z, z' \in \bagmap(b)$ for some "bag" $b \in T$. Then $\gamma \land z \atom{\lambda} z'$
		still has "contracted tree-width" at most 1.
	\end{fact}
	As a consequence, the construction of \Cref{lemma:locally_acyclic_treedec}---which takes us from \Cref{fig:trio-tree-dec} to \Cref{fig:path-induced}---preserves "contracted tree-width" 1.
	Then, \Cref{prop:connecting-tree-decompositions}---illustrated in \Cref{fig:connecting-tree-decompositions}---can be trivially adapted to our setting as follows:
	\begin{fact}
		Let $\gamma$, $\gamma'$ be two "C2RPQ" with a disjoint set of variables.
		Let $z$ (resp.\ $z'$) be a variable of $\gamma$ (resp.\ $\gamma'$).
		If both $\gamma$ and $\gamma'$ have "contracted tree-width" at most 1, then so does
		$\gamma \land \gamma' \land z \atom{\lambda} z'$.
	\end{fact}
	As a consequence, the construction of \Cref{lemma:shape-decomposition}---which takes us from 
	\Cref{fig:path-induced} to \Cref{fig:trio-result}---preserves "contracted tree-width" 1. 

	\medskip

	\proofcase{\eqref{eq:lemma:bound_size_refinements_tw:2} Improved upper bound.} In the proof claiming that in a sufficiently
	long "non-branching path", we can always find two "non-full", "non-atomic bags" with the same 
	"profile" (see the proof of \Cref{lemma:shape-decomposition}), we obtain a bound of
	$\+O(2^{\nbatoms{\gamma}})$. We actually claim that it can be improved to obtain a polynomial 
	bound. This is because, for "width" 1, a "non-full" "bag" $b$ contains exactly 1 variable $z_b$.
	So, its "profile" consists  
	simply on a set of "atoms" of $\gamma$---namely the set of "atoms" whose "refinement" "induces a 
	path" which "leaves" the "bag" $b$ at $z_b$.
	But we claim that in a "non-branching path", not all of these $2^{\nbatoms{\gamma}}$
	"profiles" can occur at the same time. Indeed,
	in "tree decompositions", the set of "bags" containing a given variable must be connected.
	This property can be lifted to paths in "tagged tree decompositions" in the following way.
	\begin{fact}
		\AP\label{fact:paths-are-connected}
		Let $(T, \bagmap, \tagmap)$ be a "tagged tree decomposition"
		of some "homomorphism" $f\colon \rho \homto \alpha$.
		Let $\pi$ be a path in $\rho$. Assume that:
		\begin{itemize}
			\item the simple path in $T$ from $b$ to $b''$ goes through $b'$,
			\item there exists some variable $z$ of $\alpha$ such that $\tagmappath{\pi}$ 
			  	"leaves" $b$ at $z$, and
			\item there is no variable like that for the "bag" $b'$.
		\end{itemize}
		Then, there is no variable $z$ of $\alpha$ such that
		$\tagmappath{\pi}$ "leaves" $b''$ at $z$.
	\end{fact}
	\begin{proof}[Proof of \Cref{fact:paths-are-connected}]
		Fix a "tagged tree decomposition" $(T, \bagmap, \tagmap)$ of some "homomorphism" $f\colon \rho \homto \alpha$ and $\pi$ be a path in $\rho$. Let $b,b',b''$ be "bags" such that
		the simple path in $T$ from $b$ to $b''$ goes through $b'$.
		\AP Say that an "induced path" $\tagmappath\pi = ({b_i \choose z_i})_i$ ""visits"" a "bag" $b$
		if $b_i = b$ for some $i$. Note that this is equivalent to saying that there exists a variable
		$z$ of $\alpha$ "st" $\tagmappath\pi$ "leaves" $b$ at $z$.
		Hence, \Cref{fact:paths-are-connected} boils down to the following property:
		if $\tagmappath\pi$ "visits" both $b$ and $b''$, then it must also "visit" $b'$.
		This property holds because by construction, the sequence $(b_i)_i$---namely the projection of
		$\tagmappath\pi$ onto $T$---is a path in $T$, with some node repetition.
	\end{proof}
	As a consequence,
	if an "atom" occurs in a "bag", but not in a latter one, then it can never occur again.
	Hence, the number of "bags" of size $1$ in a "non-branching path" where each "bag"
	has a different "profile" must be at most $n = \nbatoms{\gamma}$. Hence, \Cref{fact:pigeon-hole}
	yields a bound of $\+O(\nbatoms[2]{\gamma})$.
	Finally, we can conclude like in \Cref{sec:proof-bound-size-refinements}: we obtain a tree
	with at most $\+O(\nbatoms{\gamma})$ leaves, and with "non-branching paths" of length
	at most $\+O(\nbatoms[2]{\gamma})$, so the tree has size at most $\lOne \in \+O(\nbatoms[3]{\gamma})$.
	By "local acyclicity", this concludes the proof of \Cref{lemma:bound_size_refinements_tw}.
	The case of "one-way contracted tree-width" is completely similar.
\end{proof}

\begin{lemma}
	\AP\label{lemma:characterisation-bounded-semantic-tw}
	Let $k \geq 1$.
	\begin{enumerate}
		\item Given a "UCRPQ" $\Gamma$, it has "one-way semantic tree-width" at most $1$
			"iff" $\Gamma \semequiv \MUAHomBounded{\Gamma}{\ContrTwOneWay[1]}{\leq\lOne}$;
		\item Given a "UC2RPQ" $\Gamma$, it has "semantic tree-width" at most $1$
			"iff" $\Gamma \semequiv \MUAHomBounded{\Gamma}{\ContrTw[1]}{\leq\lOne}$;
	\end{enumerate}
	where $\lOne \in \+O(\nbatoms[3]{\gamma})$.
\end{lemma}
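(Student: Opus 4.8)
The plan is to obtain this lemma by \emph{assembling} the two facts already proven for single queries, \Cref{fact:tw-equiv-to-ctw} and \Cref{lemma:bound_size_refinements_tw}, lifting them from "C2RPQs" to "UC2RPQs" and bridging the gap between "contracted tree-width" and plain "tree-width". I would treat the "semantic tree-width" case in detail; the "one-way semantic tree-width" case is entirely analogous, replacing "contractions" by "one-way contractions" and $\ContrTw[1]$ by $\ContrTwOneWay[1]$, and using that a "one-way contraction" of a "CRPQ" is again a "CRPQ" (so that the witnessing query stays one-way). First I would record that all the operators in play distribute over disjuncts: by \Cref{rk:uctworpq} we have $\MUA{\Gamma}{\Tw[1]} \semequiv \bigcup_{\gamma \in \Gamma} \MUA{\gamma}{\Tw[1]}$, and the definitions of $\MUAHom{-}{\ContrTw[1]}$ and $\MUAHomBounded{-}{\ContrTw[1]}{\leq\lOne}$ (via "refinements" of each disjunct and their "homomorphic images") give the same for those. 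Hence, setting $\lOne$ to the maximum of the per-disjunct bounds $\+O(\nbatoms[3]{\gamma})$, the chain
\[
  \MUA{\Gamma}{\Tw[1]}
  \;\semequiv\; \MUAHom{\Gamma}{\ContrTw[1]}
  \;\semequiv\; \MUAHomBounded{\Gamma}{\ContrTw[1]}{\leq\lOne}
\]
follows disjunct-by-disjunct from \Cref{fact:tw-equiv-to-ctw} and \Cref{lemma:bound_size_refinements_tw}.

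For the forward implication, I would assume $\Gamma$ has "semantic tree-width" at most $1$, i.e.\ $\Gamma \semequiv \Gamma'$ for some "UC2RPQ" $\Gamma'$ of "tree-width" at most $1$. Each disjunct of $\Gamma'$ is "contained" in $\Gamma$ and has "tree-width" at most $1$, so it lies in $\MUA{\Gamma}{\Tw[1]}$, whence $\Gamma' \contained \MUA{\Gamma}{\Tw[1]}$; together with the always-true $\MUA{\Gamma}{\Tw[1]} \contained \Gamma$ this squeezes out $\Gamma \semequiv \MUA{\Gamma}{\Tw[1]}$. Plugging this into the chain above yields $\Gamma \semequiv \MUAHomBounded{\Gamma}{\ContrTw[1]}{\leq\lOne}$, as required.

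For the converse, I would assume $\Gamma \semequiv \MUAHomBounded{\Gamma}{\ContrTw[1]}{\leq\lOne}$ and observe that the right-hand side is a \emph{finite} union of "C2RPQs", each of "contracted tree-width" at most $1$. The step to watch here—and the place I expect the real (if light) content of the argument to sit—is the passage back from the "contracted" setting, in which all the combinatorics of \Cref{lemma:bound_size_refinements_tw} was carried out, to the plain "tree-width" occurring in the \emph{definition} of "semantic tree-width". A query of "contracted tree-width" $1$ need not itself have "tree-width" $1$, but it is "semantically equivalent" to one of its "contractions", which does have "tree-width" at most $1$, since "contractions" preserve semantics. Consequently the right-hand side is "equivalent" to a "UC2RPQ" of "tree-width" at most $1$, and therefore so is $\Gamma$, i.e.\ $\Gamma$ has "semantic tree-width" at most $1$.

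In short, the whole argument is a short composition of established equivalences; the only genuinely load-bearing point is why $\ContrTw[1]$ (rather than $\Tw[1]$) is the correct class through which to run the approximation machinery when $k=1$. This is exactly the reinterpretation isolated in \Cref{fact:tw-equiv-to-ctw}: for $k \geq 2$ one may work directly with $\Tw$ because "refinements" preserve "tree-width" (\Cref{fact:refinement-tw}), whereas for $k=1$ "refinements" can raise the "tree-width" to $2$, and passing to "contracted tree-width" is precisely what restores closure under "refinements" while keeping semantic equivalence to genuine "tree-width" $1$ queries.
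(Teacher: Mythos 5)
Your proposal is correct and follows essentially the same route as the paper's own proof: both directions rest on the chain $\MUA{\Gamma}{\Tw[1]} \semequiv \MUAHom{\Gamma}{\ContrTw[1]} \semequiv \MUAHomBounded{\Gamma}{\ContrTw[1]}{\leq\lOne}$ obtained from \Cref{fact:tw-equiv-to-ctw} and \Cref{lemma:bound_size_refinements_tw}, the forward implication is the same squeeze argument (an equivalent width-$1$ union sits inside the maximal under-approximation, which is always contained in $\Gamma$), and the backward implication is the same observation that a query of "contracted tree-width" $1$ is "semantically equivalent" to its "contraction", which has "tree-width" at most $1$. The only differences are presentational: you spell out the disjunct-wise lifting via \Cref{rk:uctworpq} and detail the two-way case, whereas the paper details the one-way case and declares the other "similar".
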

\begin{proof}
	To prove the first point:
	\begin{itemize}
		\item if $\Gamma$ is "equivalent" to a "UCRPQ" $\Delta$ of "tree-width" at most $1$, then
			$\Delta \subseteq \MUA{\Gamma}{\Tw[1]}$ and by \Cref{fact:tw-equiv-to-ctw,lemma:bound_size_refinements_tw},
			$\Delta \contained \MUAHomBounded{\Gamma}{\ContrTwOneWay[1]}{\leq\lOne}$,
			and hence:
			\[
				\Gamma \semequiv \Delta
				\contained \MUAHomBounded{\Gamma}{\ContrTwOneWay[1]}{\leq\lOne}
				\contained \Gamma.
			\]
		\item If $\Gamma \semequiv \MUAHomBounded{\Gamma}{\ContrTwOneWay[1]}{\leq\lOne}$, 
			then $\Gamma$ is equivalent to a "UCRPQ" of "contracted tree-width" at
			most 1, and hence (by "contraction") it is equivalent a "UCRPQ" of "tree-width" at most 1. 
	\end{itemize}
	The second point can be proven similarly.
\end{proof}
\begin{corollary}[Upper bound of \Cref{thm:decidability-semtw} for $k=1$]
	\AP\label{cor:sem-tw-1-pb-exp-c}
	The "semantic tree-width $1$ problem" and "one-way semantic tree-width $1$ problem" are in \expspace.
\end{corollary}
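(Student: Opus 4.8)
The plan is to turn the characterisation of \Cref{lemma:characterisation-bounded-semantic-tw} into a space-bounded containment test, exploiting the fact that the refinement bound $\lOne = \+O(\nbatoms[3]{\gamma})$ is merely \emph{polynomial}---in sharp contrast with the exponential bound $\l$ of the general \Cref{lemma:bound_size_refinements}. By that lemma, $\Gamma$ has "one-way semantic tree-width" (resp.\ "semantic tree-width") at most $1$ if and only if $\Gamma \semequiv \Gamma'$, where $\Gamma' \defeq \MUAHomBounded{\Gamma}{\ContrTwOneWay[1]}{\leq\lOne}$ (resp.\ $\Gamma' \defeq \MUAHomBounded{\Gamma}{\ContrTw[1]}{\leq\lOne}$). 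Since the containment $\Gamma' \contained \Gamma$ always holds by construction, the algorithm only needs to decide $\Gamma \contained \Gamma'$.

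First I would bound the syntactic sizes. Each disjunct of $\Gamma'$ is a "homomorphic image" of some $\rho \in \Refin[\leq\lOne](\gamma)$, with $\gamma$ a disjunct of $\Gamma$; since $\lOne$ is polynomial, $\rho$ replaces each of the $\nbatoms{\gamma}$ "atoms" by a path of at most $\lOne$ "atoms" labelled by "sublanguages" of the original NFAs, so $\rho$---and hence its "homomorphic image"---has only $\poly(\size{\gamma})$ "atoms", each labelled by an NFA of size at most $\size{\gamma}$. Thus every disjunct of $\Gamma'$ has polynomial size, whence $n_{\Gamma'} \in \poly(\size{\Gamma})$, while the number of disjuncts (bounded by the number of bounded-length "refinements" times the number of their "homomorphic images") is single-exponential, giving $\size{\Gamma'} \in 2^{\poly(\size{\Gamma})}$. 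Computing $\Gamma'$ is unproblematic within this budget: one enumerates the (polynomially-many-atom) "refinements" and their "homomorphic images", keeping those of "contracted tree-width" at most $1$---a condition testable in polynomial time by contracting all "internal paths" and checking "tree-width" $1$ on the resulting query.

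Then I would feed this into \Cref{prop:bound-containment-pb}: $\Gamma \contained \Gamma'$ is decidable in non-deterministic space $\+O(\size{\Gamma} + \size{\Gamma'}^{c\cdot n_{\Gamma'}})$. Plugging in $\size{\Gamma'} \in 2^{\poly}$ and $n_{\Gamma'} \in \poly$, the exponent $c\cdot n_{\Gamma'}$ stays polynomial, so $\size{\Gamma'}^{c\cdot n_{\Gamma'}} = 2^{\poly\cdot\poly} = 2^{\poly}$ remains single-exponential. Hence the non-containment (and therefore the whole equivalence test) lies in non-deterministic single-exponential space; by "Savitch's Theorem" this collapses to deterministic single-exponential space, i.e.\ \expspace. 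The very same argument handles both problems, selecting $\ContrTw[1]$ or $\ContrTwOneWay[1]$ accordingly.

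The one point deserving care---the crux that separates this from a naïve triple-exponential bound---is that it is the \emph{per-disjunct} atom count $n_{\Gamma'}$, not the total size $\size{\Gamma'}$, that appears in the exponent of \Cref{prop:bound-containment-pb}. Keeping $n_{\Gamma'}$ polynomial is precisely what the polynomial refinement bound $\lOne$ of \Cref{lemma:bound_size_refinements_tw} secures, and is the whole reason the contracted-tree-width reformulation (\Cref{fact:tw-equiv-to-ctw}) was introduced: it replaces the non-effective $\MUAHom{\gamma}{\Tw[1]}$ by an effectively-and-compactly-bounded family whose disjuncts remain small.
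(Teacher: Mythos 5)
Your proposal is correct and follows essentially the same route as the paper's own proof: both reduce via \Cref{lemma:characterisation-bounded-semantic-tw} to the single containment $\Gamma \contained \MUAHomBounded{\Gamma}{\ContrTw[1]}{\leq\lOne}$ (the converse holding trivially), observe that the polynomial bound $\lOne$ makes this an exponential union of polynomial-sized "C2RPQs", and then apply \Cref{prop:bound-containment-pb}, whose exponent depends only on the per-disjunct atom count, to land in \expspace. Your added remarks on computing $\Gamma'$ and on Savitch's theorem are just explicit versions of what the paper leaves implicit.
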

\begin{proof}
	The fact that the "semantic tree-width $1$ problem" is in \expspace is actually the main result of \cite[Theorem~6.1]{BarceloRV16}, but we show how the upper bound follows as a direct corollary of \Cref{lemma:characterisation-bounded-semantic-tw} above.
	Since $\lOne \in \+O(\nbatoms[3]{\gamma})$,
	$\MUAHomBounded{\gamma}{\ContrTw[1]}{\leq\lOne}$ is an exponential union of polynomial sized "C2RPQs", and thus by \Cref{prop:bound-containment-pb} the "containment problem" $\Gamma \contained \MUAHomBounded{\Gamma}{\ContrTw[1]}{\leq\lOne}$ is in \expspace, and so is the "semantic tree-width $1$ problem" (since the converse "containment" $\MUAHomBounded{\Gamma}{\ContrTw[1]}{\leq\lOne} \contained \Gamma$ always holds, "cf" \Cref{rk:uctworpq}).
	The proof for "one-way semantic tree-width $1$ problem" is analogous.
\end{proof}

Lastly, note that we can derive from \Cref{lemma:bound_size_refinements_tw} a characterization of "semantic tree-width" 1 somewhat similar to \Cref{thm:closure-under-sublanguages}.

\begin{corollary}\AP
	\label{coro:charact-semantic-treewidth-1}
	\AP Assume that $\+L$ is "closed under sublanguages". 

	\noindent
    \proofcase{Two-way queries:} For any query $\Gamma \in \UCtwoRPQ(\+L)$, the following are equivalent:
    \begin{enumerate}
        \itemAP $\Gamma$ is "equivalent" to an "infinitary union" of "conjunctive queries"
            of "contracted tree-width" at most $1$;
        \itemAP $\Gamma$ has "semantic tree-width" at most $1$;
        \itemAP $\Gamma$ is "equivalent" to a $\UCtwoRPQ(\+L)$ of "contracted tree-width" at most $1$;
        \itemAP $\Gamma$ is "equivalent" to a $\UCtwoRPQ(\+L')$ of "tree-width" at most $1$,
			where $\+L'$ is the closure of $\+L$ under concatenation and inverses, "ie"
			$\+L'$ is the smallest class containing $\+L$ and such that  if $K, L \in \+L'$
			then $K\cdot L \in \+L'$ and $K^{-} \in \+L'$.
    \end{enumerate}
	
	\noindent
	\proofcase{One-way queries:} Similarly, if $\Gamma \in \UCRPQ(\+L)$, then the following are equivalent:
	\begin{enumerate}
        \itemAP $\Gamma$ is "equivalent" to an "infinitary union" of "conjunctive queries"
            of "one-way contracted tree-width" at most $1$;
        \itemAP $\Gamma$ has "one-way semantic tree-width" at most $1$;
        \itemAP $\Gamma$ is "equivalent" to a $\UCRPQ(\+L)$ of "one-way contracted tree-width" at most $1$;
        \itemAP $\Gamma$ is "equivalent" to a $\UCRPQ(\+L')$ of "tree-width" at most $1$,
			where $\+L'$ is the closure of $\+L$ under concatenation, "ie"
			$\+L'$ is the smallest class containing $\+L$ and such that if $K, L \in \+L'$
			then $K\cdot L \in \+L'$.
    \end{enumerate}
\end{corollary}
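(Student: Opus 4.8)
The plan is to establish each of the two groups of equivalences through a single cycle of implications, following the blueprint of the proof of \Cref{thm:closure-under-sublanguages} but with \Cref{lemma:bound_size_refinements,coro:equivalence_under_approx_homomorphism_twk} replaced by their contracted counterparts \Cref{fact:tw-equiv-to-ctw,lemma:bound_size_refinements_tw}. I would treat the two-way statement in detail; the one-way statement is proven \emph{mutatis mutandis}, replacing $\ContrTw[1]$ by $\ContrTwOneWay[1]$, "contractions" by "one-way contractions", and dropping inverses from the closure that defines $\+L'$. Numbering the four two-way conditions $(1)$--$(4)$ as stated, the plan is to prove the cycle $(4) \Rightarrow (2) \Rightarrow (1) \Rightarrow (3) \Rightarrow (4)$.

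The three implications $(4)\Rightarrow(2)$, $(2)\Rightarrow(1)$ and $(3)\Rightarrow(4)$ are the routine ones. For $(4)\Rightarrow(2)$: a $\UCtwoRPQ(\+L')$ of "tree-width" at most $1$ is in particular a "UC2RPQ" of "tree-width" $1$, so equivalence to it yields "semantic tree-width" at most $1$ by definition. For $(2)\Rightarrow(1)$: if $\Gamma \semequiv \Delta$ with $\Delta$ of "tree-width" at most $1$, then $\Delta$ is "equivalent" to the "infinitary union" of its "expansions"; each such "expansion" is a "CQ" obtained by replacing atoms of $\Delta$ by paths through fresh degree-$2$ existentially quantified vertices, so "contracting" these paths recovers the underlying graph of $\Delta$ and witnesses that the "expansion" has "contracted tree-width" at most $1$ (this is exactly the point where "tree-width" would fail but "contracted tree-width" succeeds). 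For $(3)\Rightarrow(4)$: "contracting" each disjunct of a $\UCtwoRPQ(\+L)$ of "contracted tree-width" at most $1$ produces an "equivalent" query of "tree-width" at most $1$ whose atoms are labelled by concatenations---and, by two-wayness, inverses---of $\+L$-languages, hence by languages of $\+L'$.

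The crux is $(1)\Rightarrow(3)$. Assuming $\Gamma$ is "equivalent" to some "infinitary union" $U$ of "conjunctive queries" of "contracted tree-width" at most $1$, each disjunct $\alpha$ of $U$ is "semantically equivalent" to one of its "contractions", which has "tree-width" at most $1$ and satisfies $\alpha \contained \Gamma$; hence this "contraction" lies in $\MUA{\Gamma}{\Tw[1]}$, giving $U \contained \MUA{\Gamma}{\Tw[1]}$. As the reverse "containment" $\MUA{\Gamma}{\Tw[1]} \contained \Gamma$ always holds (\Cref{rk:uctworpq}), we get $\Gamma \semequiv \MUA{\Gamma}{\Tw[1]}$. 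Then \Cref{fact:tw-equiv-to-ctw,lemma:bound_size_refinements_tw} give
\[
    \MUA{\Gamma}{\Tw[1]} \semequiv \MUAHom{\Gamma}{\ContrTw[1]} \semequiv \MUAHomBounded{\Gamma}{\ContrTw[1]}{\leq\lOne},
\]
whose right-hand side is a \emph{finite} union of "C2RPQs" of "contracted tree-width" at most $1$. Since every such disjunct is a "homomorphic image" of a "refinement" of a disjunct of $\Gamma$, all its atoms are labelled by "sublanguages" of $\+L$; as $\+L$ is "closed under sublanguages", the resulting query lies in $\UCtwoRPQ(\+L)$, which is precisely condition $(3)$.

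I expect the main obstacle to be already discharged before reaching this corollary: the whole weight of $(1)\Rightarrow(3)$ rests on \Cref{lemma:bound_size_refinements_tw}, the $k=1$ analogue of the "Key Lemma", whose proof is the technically delicate part, together with the bookkeeping that "closed under sublanguages" keeps the finite union inside $\UCtwoRPQ(\+L)$. The genuinely new ingredient relative to \Cref{thm:closure-under-sublanguages} is the systematic use of "contracted tree-width" in place of "tree-width", which is forced because "refinements" and "expansions" do not preserve "tree-width" $1$ but do preserve "contracted tree-width" $1$; this is exactly what makes \Cref{fact:tw-equiv-to-ctw} applicable and lets the $k=1$ case fit the same framework as $k\geq 2$.
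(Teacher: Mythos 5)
Your proof is correct and is essentially the derivation the paper intends: the paper states this corollary without an explicit proof, as a consequence of \Cref{lemma:bound_size_refinements_tw} combined with \Cref{fact:tw-equiv-to-ctw} and closure under "sublanguages", mirroring the proof of \Cref{thm:closure-under-sublanguages}, which is exactly the cycle you carry out. Only minor polish needed: in $(2)\Rightarrow(1)$, an "expansion" using the empty word collapses variables, so "contracting" the expansion paths recovers the underlying graph of $\Delta$ only up to edge contractions --- which still preserves "tree-width" at most $1$, so the argument goes through.
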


Note in particular how point (4) of each characterization reflects that a "UCRPQ" of
"semantic tree-width" 1 can have "one-way semantic tree-width" at least 2---as we showed in
\Cref{rk:closure-under-sublanguages-k1}. More generally, the differences between this
last corollary and \Cref{thm:closure-under-sublanguages} highlight the different combinatorial behaviour that "semantic tree-width" $k$ has, depending on whether $k=1$ or $k > 1$.

\begin{remark}
	Finally, note that results of \Cref{sec:sre,sec:acyclic-queries} can be joined in order to show that the "semantic tree-width 1 problems" are decidable in "PiP2" for "UC2RPQs" over the closure
	under concatenation and inverses of "SREs"  (resp. for "UCRPQs" over the closure under concatenation of "SREs").\footnote{While this the whole class of "UCRPQs" over "SREs" has the same expressivity as "UCRPQs" over the closure under concatenation of "SREs", this is not true if one adds the constraint of having "tree-width" at most 1, see \Cref{coro:charact-semantic-treewidth-1}.}
\end{remark}

\section{\AP{}Semantic Path-Width}
\label{sec:semantic-path-width}

In this section, we extend our results to "path-width". Our motivation lies in the fact that "UC2RPQs" of bounded "semantic path-width" admit a "paraNL"\footnote{This is the parametrized counterpart of non-deterministic logspace.} algorithm for the 
"evaluation problem"---see \Cref{thm:evaluation-bounded-pathwidth}---to be compared with "FPT" for bounded "semantic tree-width".

\subsection{\AP{}Path-Width of Queries}

Recall that for tree-width, for any $k \geq 2$, we proved that a "CRPQ" is equivalent to a
finite union of "C2RPQs" of "tree-width" at most $k$ "iff" it is equivalent to finite union of
"CRPQs" of "tree-width" at most $k$ (\Cref{thm:closure-under-sublanguages}). In other words, "two-way navigation" does not help to minimize further the "semantic tree-width" of a query that does not use "two-way navigation". This property does not hold for $k=1$
(\Cref{rk:closure-under-sublanguages-k1}). We show in \Cref{rk:path-width:oneway-vs-twoway} that it also does not hold for
"path-width", no matter the value of $k \geq 1$.

This motivates the following two definitions:
\begin{itemize}
	\itemAP the ""semantic path-width"" of a "UC2RPQ" is the minimal "path-width"
	of a "UC2RPQ" equivalent to it
	\itemAP the ""one-way semantic path-width"" of a "UCRPQ" is the minimal "path-width"
		of a "UCRPQ" equivalent to it.
\end{itemize}
For a given "UCRPQ", the two natural numbers are well-defined, and the former is always less or equal to
the letter.
The \AP ""semantic path-width $k$ problems"" ask, given a "UCRPQ" (resp.\ "UC2RPQ"), if it has
"semantic path-width" (resp.\ "one-way semantic path-width") at most $k$.

In this section, we first show that the "semantic path-width $k$ problems"
are decidable (\Cref{thm:decidability-sempw}), and then 
after showing that evaluation of "UC2RPQs" of bounded "path-width" is "NL" (\Cref{lemma:evaluation-bounded-pathwidth}) we deduce that for the "evaluation problem" for
"UC2RPQs" of bounded "semantic path-width" (in particular, this captures the
case of "UCRPQs" of bounded "one-way semantic path-width") is in
"para-NL" when parametrized in the size of the query (\Cref{thm:evaluation-bounded-pathwidth}).

\subsection{\AP{}Deciding Bounded Semantic Path-Width}

The key (implicit) ingredient in the proof of \Cref{thm:closure-under-sublanguages,cor:mua-exists-effective} is that "tree-width" at most $k$ is closed under "expansions" (\Cref{fact:refinement-tw}). Unfortunately, this property fails for "path-width".

\begin{figure}[tbp]
	\centering

	\begin{subfigure}[b]{.45\textwidth}
		\centering
		\includegraphics[width=.85\textwidth]{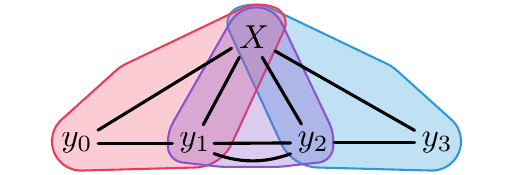}
		\caption{
			\AP\label{subfig:pw-not-closed-original}
			Graph $\+G_k$ with a "path decomposition" of "width" $k$, with three "bags".
		}
	\end{subfigure}
	\hspace{1.5em}
	\begin{subfigure}[b]{.45\textwidth}
		\centering
		\includegraphics[width=.85\textwidth]{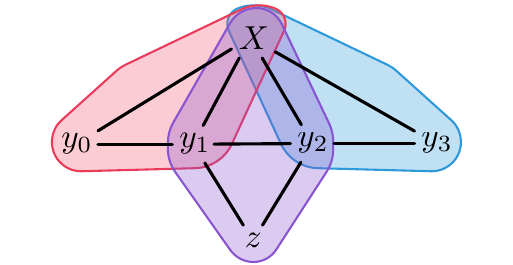}
		\caption{
			\AP\label{subfig:pw-not-closed-path-dec-horizontal}
			"Expansion" $\+G'_k$ of $\+G_k$ with a "path decomposition" of "width" $k+1$, with three "bags".
		}
	\end{subfigure}

	\bigskip

	\begin{subfigure}[b]{.45\textwidth}
		\centering
		\includegraphics[width=.85\textwidth]{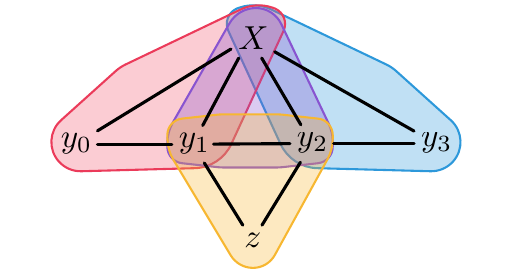}
		\caption{
			\AP\label{subfig:pw-not-closed-tree-dec}
			"Tree decomposition" of $\+G'_k$ of "width" $k$, with four "bags".
		}
	\end{subfigure}
	\hspace{1.5em}
	\begin{subfigure}[b]{.45\textwidth}
		\centering
		\includegraphics[width=.85\textwidth]{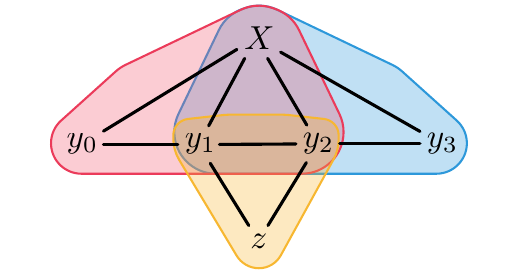}
		\caption{
			\AP\label{subfig:pw-not-closed-path-dec-vertical}
			"Path decomposition" of $\+G'_k$ of "width" $k+1$, with three "bags".
		}
	\end{subfigure}
	\caption{
		\AP\label{fig:pw-not-closed}
		The class of multigraphs with "path-width" at most $k \geq 1$ is not closed under "expansions":
		illustration of a multigraph $\+G_k$ of "path-width" $k$ (\Cref{subfig:pw-not-closed-original})
		whose "expansion" $\+G'_k$ has "path-width" $k+1$ (\Cref{subfig:pw-not-closed-path-dec-horizontal,subfig:pw-not-closed-path-dec-vertical})---but
		"tree-width" $k$ (\Cref{subfig:pw-not-closed-tree-dec}). Set $X$ represents a $(k-1)$-clique.
	}
\end{figure}
\begin{restatable}{fact}{pathwidthnotclosed}
	\AP\label{fact:pw-not-closed}
	For each $k \geq 1$, the class of graphs of "path-width" at most $k$ is not
	closed under "expansions".
\end{restatable}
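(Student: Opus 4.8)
The plan is to exhibit, for every $k\ge 1$, a graph $\+G_k$ of path-width exactly $k$ together with an expansion $\+G'_k$ of path-width $k+1$, as drawn in \Cref{fig:pw-not-closed}. Concretely, let $X$ be a clique on $k-1$ vertices and let $\+G_k$ be the join of $X$ with a star $K_{1,3}$ whose centre is $s$ and whose leaves are $l_1,l_2,l_3$; thus $s$ and each $l_i$ is adjacent to every vertex of $X$, and $s$ is adjacent to each $l_i$. Let $\+G'_k$ be the expansion obtained by replacing each star edge $s\,l_i$ by a path $s - m_i - l_i$ of length two, so that each $m_i$ is a fresh degree-two vertex adjacent only to $s$ and $l_i$, and in particular \emph{not} adjacent to $X$. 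It then suffices to prove that $\mathrm{pw}(\+G_k)=k$ while $\mathrm{pw}(\+G'_k)=k+1$.

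The upper bounds and the easy lower bound are routine. For $\+G_k$, the three bags $X\cup\{s,l_1\}$, $X\cup\{s,l_2\}$, $X\cup\{s,l_3\}$ form a path decomposition of width $k$ (\Cref{subfig:pw-not-closed-original}); since $X\cup\{s,l_1\}$ induces a $(k+1)$-clique we get $\mathrm{pw}(\+G_k)\ge k$, hence $\mathrm{pw}(\+G_k)=k$. For the expansion, a width-two path decomposition of the subdivided star (which is the spider $S(2,2,2)$), augmented by inserting all of $X$ into every bag, is a path decomposition of $\+G'_k$ of width $(k-1)+2=k+1$ (\Cref{subfig:pw-not-closed-path-dec-horizontal}), so $\mathrm{pw}(\+G'_k)\le k+1$. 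I would also record, for contrast, that $\mathrm{tw}(\+G'_k)\le k$: this is immediate from \Cref{fact:refinement-tw} for $k\ge 2$ (as $\+G'_k$ is a refinement of $\+G_k$), and holds for $k=1$ because $S(2,2,2)$ is a tree; thus only the path-width increases under expansion.

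The heart of the argument is the lower bound $\mathrm{pw}(\+G'_k)\ge k+1$, and this is where I expect the main difficulty. I would phrase path-width through the equivalent \emph{vertex separation number} and show that every linear layout of $V(\+G'_k)$ has a cut whose left part contains at least $k+1$ vertices having a neighbour on the right. The governing intuition is a ``two ends versus three legs'' phenomenon: the core $A=X\cup\{s\}$ is a $k$-clique, and each leg $i$ must connect to all of $A$ — to $X$ directly through $l_i$, and to $s$ only through the subdivision vertex $m_i$ — whereas a path decomposition is linear and can escape in only two directions. Formally, I would single out the leg whose two vertices are innermost in the layout (a ``middle'' leg, selected by a pigeonhole/median argument over the three legs) and exhibit a cut inside that leg's span at which all $k-1$ vertices of $X$ are still active (each has a pending edge to an outer leg), $s$ is still active (pending edge to an outer $m_j$), and one endpoint of the middle leg is active (its internal edge $l\,m$ is not yet covered); together these give $k+1$ simultaneously active vertices. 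The genuinely technical step is making this cut-existence argument robust against every adversarial ordering of $s$, the $l_i$ and the $m_i$ — equivalently, verifying that no interval supergraph of $\+G'_k$ has clique number at most $k+1$. The subdivision is essential here: contracting each $m_i$ back into $l_i$ recovers $\+G_k$ and drops the count to $k$, which is exactly why the bound collapses without the degree-two vertices. The small case $k=2$, where $\+G'_2$ is $K_{2,3}$ with its three $s\,l_i$ edges subdivided, already exhibits the full difficulty and can serve as the base intuition for the general layout analysis.
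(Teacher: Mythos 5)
Your construction is sound and genuinely different from the paper's (the paper subdivides a single parallel edge of a path $y_0y_1y_2y_3$ joined to $X$, cf.\ \Cref{apdx-sec:path-width-not-closed-refinements}; note in particular that \Cref{fig:pw-not-closed} depicts that graph, not yours), and your claim $\mathrm{pw}(\+G'_k)=k+1$ is in fact true; the two upper bounds you give are correct. But the entire content of the fact is the lower bound $\mathrm{pw}(\+G'_k)\ge k+1$, and that is precisely what you have not proved: you explicitly defer ``the genuinely technical step'', and the cut recipe you do describe fails on concrete layouts. Take the layout $m_1,l_1,m_2,l_2,m_3,l_3,s,x_1,\dots,x_{k-1}$, with all of $X$ placed last. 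The ``middle'' leg is leg $2$, and at any cut inside its span only $m_1,l_1,m_2$ lie to the left, so at most $3$ vertices are active; in particular no vertex of $X$ is active, contrary to your claim that all $k-1$ of them are, because none of them is to the left of the cut. (This layout does have separation $\ge k+2$, but the violating cut sits inside the block of $X$-vertices, a region your argument never examines.) So robustness against adversarial orderings is not a verification to be appended---it is the proof, and the proposed cut is the wrong place to look for it.

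A completion that does work argues on bags rather than on a single cut, in the same style as the paper's own proof. In a hypothetical width-$k$ path decomposition of $\+G'_k$, let $J$ be the nonempty interval of bags containing the clique $X\cup\{s\}$; every bag of $J$ has room for at most one further vertex. For each leg $i$, the set $S_i=I_{l_i}\cup I_{m_i}$ of bags meeting the leg is an interval (the edge $l_im_i$ makes $I_{l_i}$ and $I_{m_i}$ overlap), it meets $J$ (since $I_{l_i}$ meets $I_X$ via the clique $X\cup\{l_i\}$, $I_{m_i}$ meets $I_s$ via the edge $sm_i$, and an interval meeting two overlapping intervals must meet their intersection), and it contains a bag outside $J$ (the bag covering $l_im_i$ cannot lie in $J$, else it would contain $X\cup\{s,l_i,m_i\}$, of size $k+2$). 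Hence each $S_i$ contains an endpoint of $J$, and by pigeonhole two of the three legs are present at the same endpoint of $J$: that bag then contains $X\cup\{s\}$ plus one vertex from each of those two legs, i.e.\ at least $k+2$ vertices---contradiction. The paper sidesteps even this much work by choosing a graph whose four forced bags $b_{0,1},b_{1,2},b_{2,3},b_Z$ admit a very short case analysis on their linear order; if you keep your graph, some argument of the above kind must replace the cut-based sketch.
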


The counterexample is illustrated in \Cref{fig:pw-not-closed}. A formal proof can be found
in \Cref{apdx-sec:path-width-not-closed-refinements}.
\begin{restatable}{rem}{onewayVsTwowayPw}
	\AP\label{rk:path-width:oneway-vs-twoway}
	Contrary to the case of "semantic tree-width", for every $k$ there are "CRPQs" which are of "semantic path-width" $k$ but not of "one-way semantic path-width" $k$.
\end{restatable}
\begin{proof}
	Indeed, let
	\begin{align*}
		\gamma_k(\bar x, \bar y) \defeq\hspace{1em}
		\bigl( & \bigwedge_{1 \leq i < j \leq k-1} x_i \atom{a} x_j \bigr)
		\;\land\; \bigl(\bigwedge_{1 \leq i \leq k-1} \bigwedge_{0 \leq j \leq 3} x_i \atom{b} y_i \bigr) \\
		\land\; \bigl( & \bigwedge_{0 \leq j < 3} y_i \atom{c} y_{i+1}\bigr)
		\;\land\; y_1 \atom{d} z \;\land\; y_2 \atom{e} z,
	\end{align*}
	whose underlying graph corresponds to \Cref{subfig:pw-not-closed-path-dec-horizontal}.
	Observe that it is a "core" and that only $z$ is existentially quantified.
	Then in $\gamma_k(\bar x, \bar y)$, one can replace the two "atoms"
	$y_1 \atom{d} z \;\land\; y_2 \atom{e} z$ by $y_1 \atom{d\, e^{-}} y_2$, while preserving the semantics. The underlying graph of this new query being \Cref{subfig:pw-not-closed-original}, it shows that $\gamma_k$ has "semantic path-width" $k$.

	Finally, we claim that $\gamma_k$ has "one-way semantic path-width" $k+1$.
	The upper bound follows from \Cref{subfig:pw-not-closed-path-dec-vertical}.
	For the lower bound, consider a "UCRPQ" $\Delta_k(\bar x, \bar y)$ such that
	$\gamma_k \semequiv \Delta_k$. Since $\gamma_k$ is a "CQ", the "equivalence" implies
	that there exists an "expansion" $\xi$ of a "CRPQ" of $\Delta_k$ such $\gamma_k$ and
	$\xi$ are homomorphically equivalent. Since $\gamma_k$ is a "core", it follows that
	$\xi$ contains it as a subgraph. Hence, the underlying directed multigraph
	of the "CRPQ" in $\Delta_k$ from which $\xi$ originated
	must contain a "one-way contraction" of $\xi$ as a subgraph.
	But the only "one-way contraction" of $\xi$ is itself, and so it follows that
	at least one "CRPQ" in $\Delta_k$ contains the underlying graph of $\xi$ as a subgraph.
	Therefore, $\Delta_k$ has "path-width" at least $k+1$, which concludes the proof
	that the "one-way semantic path-width" of $\gamma_k$ is at least (and hence exactly) $k+1$.
\end{proof}

As done for "contracted tree-width", we define "contracted path-width".
\begin{definition}
	\AP
	Define the ""contracted path-width"" (resp.\ ""one-way contracted path-width"")
	of a "C2RPQ" as the minimum of the "path-width" among its "contractions" (resp.\ of its "one-way contractions"). Let $\intro*\ContrPw$ and $\intro*\ContrPwOneWay$ be, respectively, the set of all "C2RPQs" of "contracted path-width" at most $k$ and of "CRPQs" of "one-way contracted path-width" at most $k$.
\end{definition}
The statements and proofs of this section are analogous to the ones of \Cref{sec:acyclic-queries} in the context of "contracted tree-width" 1. We keep the order and structure to make this correspondence evident.

Again, by definition, "contracted path-width" at most $k$ and "one-way contracted path-width" 
at most $k$ are both closed under "refinements": if a query has width at most $k$, so does any "refinement" thereof.

\begin{fact}
	\AP\label{fact:pw-equiv-to-cpw}
	Let $k \geq 1$. For any "CRPQ" $\gamma$, we have
	$\MUA{\gamma}{\PwOneWay} \semequiv \MUAHom{\gamma}{\ContrPwOneWay}$.\\
	Moreover, for any "C2RPQ" $\gamma$,
	$\MUA{\gamma}{\Pw} \semequiv \MUAHom{\gamma}{\ContrPw}$.
\end{fact}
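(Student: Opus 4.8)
The plan is to follow the exact same two-step strategy as in the proof of \Cref{fact:tw-equiv-to-ctw}, simply replacing "tree-width" by "path-width" throughout. Concretely, for a "C2RPQ" $\gamma$ I would establish the chain
\[
	\MUA{\gamma}{\Pw} \semequiv \MUA{\gamma}{\ContrPw} \semequiv \MUAHom{\gamma}{\ContrPw},
\]
together with the analogous chain with "one-wayness" for "CRPQs". The first "equivalence" should come from the fact that "contractions" preserve semantics, and the second from \Cref{obs:equivalence_under_approx_homomorphism}.

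For the first "equivalence", I would argue as follows. Since every "C2RPQ" of "path-width" at most $k$ trivially has "contracted path-width" at most $k$, we have $\Pw \subseteq \ContrPw$ and hence $\MUA{\gamma}{\Pw} \subseteq \MUA{\gamma}{\ContrPw}$, giving one "containment". For the converse, take any $\alpha \in \ContrPw$ with $\alpha \contained \gamma$; by definition $\alpha$ admits a "contraction@@path" $\chi$ of "path-width" at most $k$, and since "contractions" preserve semantics we have $\chi \semequiv \alpha \contained \gamma$. Thus $\chi \in \Pw$ and $\chi \contained \gamma$, so $\chi$ belongs to $\MUA{\gamma}{\Pw}$, and as $\alpha \semequiv \chi$ this shows $\MUA{\gamma}{\ContrPw} \contained \MUA{\gamma}{\Pw}$.

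For the second "equivalence", I would invoke \Cref{obs:equivalence_under_approx_homomorphism}, which requires $\ContrPw$ to be closed under "expansions" and "subqueries". Closure under "expansions" is immediate, since the class is closed under "refinements" (as noted just before the statement) and "expansions" are a special case. The main obstacle --- the only point genuinely requiring work --- is closure under "subqueries", i.e.\ that deleting "atoms" cannot push the "contracted path-width" above $k$. Here I would use that the fully-contracted query has minimum "path-width" among all "contractions" (suppressing a degree-$2$ existential vertex is a minor operation, which cannot increase "path-width", and the fully-contracted query is reachable from any partial "contraction@@path" by further such steps), so "contracted path-width" equals the "path-width" of the fully-contracted query. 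Deleting one "atom" from $\alpha$ then either (i) deletes an edge of the fully-contracted query, after which the two endpoints may drop to degree $2$ and be suppressed --- yielding a minor of it --- or (ii) breaks an "internal path" of $\alpha$, so that after re-contracting the two resulting halves become pendant edges attached to the fully-contracted query. Since deleting an edge, suppressing a vertex, or attaching a pendant edge each leave the "path-width" bounded by the maximum of its previous value and $1$ (hence by $k$, as $k \geq 1$), membership in $\ContrPw$ is preserved; iterating over the deleted "atoms" gives closure under "subqueries".

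Finally, the "one-way" case is handled identically, replacing "contractions" by "one-way contractions", "internal paths" by "one-way internal paths", and $\ContrPw$ by $\ContrPwOneWay$; the minor-monotonicity argument is unchanged because suppressing an in/out-degree-$1$ existential vertex is again a minor operation. I expect the whole argument to collapse, as in \Cref{fact:tw-equiv-to-ctw}, to the two displayed "equivalences" once the closure of $\ContrPw$ (resp.\ $\ContrPwOneWay$) under "expansions" and "subqueries" is in hand.
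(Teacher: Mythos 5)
Your skeleton is exactly the paper's: the paper proves this fact in two lines, first $\MUA{\gamma}{\Pw} \semequiv \MUA{\gamma}{\ContrPw}$ because contractions preserve semantics, then $\MUA{\gamma}{\ContrPw} \semequiv \MUAHom{\gamma}{\ContrPw}$ by \Cref{obs:equivalence_under_approx_homomorphism}; your treatment of the first equivalence is correct, and you have also correctly identified the crux of the second step, namely that \Cref{obs:equivalence_under_approx_homomorphism} requires closure under expansions \emph{and} under subqueries. The genuine gap is that the lemma you invent to discharge the subquery half is false: attaching a pendant edge \emph{can} increase path-width. The star $K_{1,3}$ has path-width $1$, while attaching a pendant edge to each of its three leaves yields the tree obtained from $K_{1,3}$ by subdividing every edge once, which is not a caterpillar and hence has path-width $2$; since the intermediate trees (with only one or two legs subdivided) are caterpillars, the last single attachment raises the path-width from $1$ to $2$. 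This is precisely the point where path-width diverges from tree-width, for which your minor-monotonicity argument would indeed be sound.

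Worse, the statement you are trying to prove --- closure of $\ContrPw$ (or $\ContrPwOneWay$) under subqueries --- is itself false, so no repair of your case (ii) is possible. Take $k=1$ and $\alpha(y_1,y_2,y_3) \defeq \bigwedge_{i=1}^{3}\bigl(x \atom{a} y_i \land y_i \atom{b} z_i \land z_i \atom{c} y_i\bigr)$, where $x$ and the $z_i$ are existentially quantified. Each $z_i$ has total degree $2$, so each internal path $y_i \atom{b} z_i \atom{c} y_i$ contracts to a self-loop at $y_i$, and the resulting contraction (a star with a self-loop on each leaf) has path-width $1$; hence $\alpha \in \ContrPw[1]$, and likewise for the one-way variant. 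But the subquery obtained by deleting the three atoms $z_i \atom{c} y_i$ admits no nontrivial contraction at all ($x$ has degree $3$, the $y_i$ are free, the $z_i$ have degree $1$), and its underlying graph is exactly the subdivided star above, of path-width $2$; so this subquery is not in $\ContrPw[1]$. Note that the paper's own two-line proof is silent on this point --- it records closure under refinements only before invoking \Cref{obs:equivalence_under_approx_homomorphism}, whose subquery hypothesis fails by this very example --- so you have put your finger on a real subtlety that the paper glosses over; but your proposed fix does not close the gap, and any correct argument for this fact must avoid relying on closure of the contracted classes under subqueries.
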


\begin{proof}
	\begin{align*}
		\MUA{\gamma}{\PwOneWay}
			& \semequiv \MUA{\gamma}{\ContrPwOneWay}
			\quad\text{since "contractions" preserves the semantics,} \\
			& \semequiv \MUAHom{\gamma}{\ContrPwOneWay}
			\quad\text{by \Cref{obs:equivalence_under_approx_homomorphism}.}
	\end{align*}
	The same arguments work for "C2RPQs".
\end{proof}

\begin{lemma}
    \AP\label{lemma:bound_size_refinements_pw}
    \AP For $k \geq 1$ and "CRPQ" $\gamma$, we have
    $\MUAHom{\gamma}{\ContrPwOneWay} \semequiv \MUAHomBounded{\gamma}{\ContrPwOneWay}{\leq\l}$, where
    $\l = \lbound{k}{\gamma}$.
	Similarly, for a "C2RPQ" $\gamma$,
	$\MUAHom{\gamma}{\ContrPw} \semequiv \MUAHomBounded{\gamma}{\ContrPw}{\leq\l}$.
\end{lemma}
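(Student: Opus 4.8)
The plan is to replay the proof of the "Key Lemma" (\Cref{lemma:bound_size_refinements}) almost verbatim, following the template already set up for "contracted tree-width" $1$ in \Cref{lemma:bound_size_refinements_tw}, but systematically replacing trees by paths and "tree-width" by "path-width". As there, a "trio" is now a triple $\fun\colon \rho \surj \alpha$ with $\rho \in \Refin(\gamma)$ and $\alpha$ of "contracted path-width" at most $k$; and rather than decomposing $\alpha$ itself we work with a "fine tagged tree decomposition" of width at most $k$ of some "contraction" $\chi$ of $\alpha$, whose underlying tree is now a path. Since every "contraction" preserves the semantics and "contracted path-width" is closed under "refinements", \Cref{obs:equivalence_under_approx_homomorphism,fact:pw-equiv-to-cpw} reduce the claim to showing that every "trio" can be turned into one whose "refinement" $\rho'$ lies in $\Refin[\leq\l](\gamma)$, while preserving $\alpha \contained \alpha'$ and $\alpha' \in \ContrPw$.

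First I would verify that the two constructions of \Cref{sec:proof-key-lemma} preserve "contracted path-width" at most $k$, exactly as was done for width $1$ in \Cref{lemma:bound_size_refinements_tw}. The "local acyclicity" construction (\Cref{lemma:locally_acyclic_treedec}) only ever adds an "atom" $\fun(z_i)\atom{K}\fun(z_j)$ between two variables sitting in a common "bag" of the path decomposition of $\chi$: adding such an edge to $\chi$ keeps it a path decomposition of width at most $k$ and turns it into a "contraction" of the enlarged query, so "contracted path-width" $\leq k$ is preserved. The shortening construction (\Cref{claim:shortening-paths}) rebuilds the decomposition through \Cref{prop:connecting-tree-decompositions}; the point is that, since the whole decomposition is a single path, the two halves to be reconnected are the prefix ending at $b$ and the suffix starting at $b'$, whose only relevant leaves are the endpoints $b$ and $b'$, so joining them through $2n$ fresh "bags" yields again a \emph{path} decomposition. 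Its variable-sharing hypothesis — that common variables lie inside $Z = \bagmap(b)\cap\bagmap(b')$ — holds automatically because we started from a genuine path decomposition. Hence both constructions stay within $\ContrPw$.

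With this in place the bookkeeping is even lighter than for trees. The underlying tree of a path decomposition has only two leaves, so the leaf-counting step of \Cref{sec:proof-bound-size-refinements} is vacuous and the entire decomposition is a single "non-branching path". Applying \Cref{lemma:locally_acyclic_treedec} to achieve "local acyclicity" and then \Cref{lemma:shape-decomposition} to shorten this unique "non-branching path" bounds its length by $\+O(\nbatoms{\gamma}\cdot(k+1)^{\nbatoms{\gamma}})$: the pigeonhole argument of \Cref{fact:pigeon-hole} applies unchanged, with the $\trap$ elements being the at most $2\nbatoms{\gamma}$ "atomic bags" (\Cref{fact:bound-atomic-bags}) and the $\avoid$ elements the "full bags", of which at most half occur by "fineness". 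By "local acyclicity" the "refinement length" of $\rho'$ is at most twice this length, so $\rho' \in \Refin[\leq\l](\gamma)$ with $\l = \lbound{k}{\gamma}$, as claimed. The one-way statement is identical, replacing "contractions" by "one-way contractions" and "internal paths" by "one-way internal paths" throughout.

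I expect the preservation of "contracted path-width" under the shortening step to be the main obstacle. Unlike "tree-width", and unlike "contracted tree-width" $1$ where two trees joined by an edge remain a tree, "path-width" is \emph{not} preserved when two path-decomposable queries are joined by an edge at arbitrary vertices — indeed \Cref{fact:pw-not-closed} already shows how fragile path decompositions are under even a single "refinement". What rescues the argument is that the join is never arbitrary: \Cref{prop:connecting-tree-decompositions} is invoked only at the two \emph{endpoints} of the path, between "non-full bags" of equal cardinality, which is exactly the regime in which the connecting gadget keeps the decomposition a path of width at most $k$. The place to be most careful is therefore checking that every "condensation" edge introduced by \Cref{claim:shortening-paths} runs between these endpoint bags, so that no "bag" of size $k+2$ and no branching is ever created.
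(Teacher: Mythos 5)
Your proposal is correct and takes essentially the same approach as the paper: its proof likewise consists in rerunning the two constructions of the Key Lemma (\Cref{lemma:locally_acyclic_treedec,lemma:shape-decomposition}) on a suitable path decomposition of a contraction of $\alpha$ of width $k$, observing that both preserve contracted path-width. The details you supply (new atoms only between variables sharing a bag, reconnection via \Cref{prop:connecting-tree-decompositions} at the two endpoint bags, and the now-vacuous leaf count) are exactly the ones the paper leaves implicit, following its template for contracted tree-width $1$ in \Cref{lemma:bound_size_refinements_tw}.
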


\begin{proof}
	Consider the proof of the "Key Lemma" (\Cref{lemma:bound_size_refinements}):
	both constructions (\Cref{lemma:locally_acyclic_treedec,lemma:shape-decomposition}) preserve the "contracted path-width" of $\alpha$ if the operations are applied
	to a suitable "path decomposition" of a "contraction" of $\alpha$ of width $k$.
\end{proof}

\begin{lemma}
	\AP\label{lemma:characterisation-bounded-semantic-pw}
	Let $k \geq 1$.
	\begin{enumerate}
		\item Given "UCRPQ" $\Gamma$, it has "one-way semantic path-width" at most $k$
			"iff" $\Gamma \semequiv \MUAHomBounded{\Gamma}{\ContrPwOneWay}{\leq\l}$;
		\item Given a "UC2RPQ" $\Gamma$, it has "semantic path-width" at most $k$
			"iff" $\Gamma \semequiv \MUAHomBounded{\Gamma}{\ContrPw}{\leq\l}$.
	\end{enumerate}
\end{lemma}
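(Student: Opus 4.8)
The plan is to follow verbatim the structure of the proof of \Cref{lemma:characterisation-bounded-semantic-tw}, simply substituting the path-width analogues of the ingredients used there: \Cref{fact:pw-equiv-to-cpw} plays the role of \Cref{fact:tw-equiv-to-ctw}, and \Cref{lemma:bound_size_refinements_pw} plays the role of \Cref{lemma:bound_size_refinements_tw}. Since both points are proven identically (one for "CRPQs" and "one-wayness", one for "C2RPQs"), I would spell out only the first and remark that the second is symmetric, replacing $\PwOneWay$, $\ContrPwOneWay$ and "CRPQs" by $\Pw$, $\ContrPw$ and "C2RPQs".

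For the forward direction of point (1), suppose $\Gamma$ has "one-way semantic path-width" at most $k$, so that $\Gamma \semequiv \Delta$ for some "UCRPQ" $\Delta$ of "path-width" at most $k$. Each disjunct of $\Delta$ is "contained" in $\Gamma$ (since $\Delta \contained \Gamma$ forces containment of each disjunct) and has "path-width" at most $k$, so every disjunct lies in $\MUA{\Gamma}{\PwOneWay}$; hence $\Delta \subseteq \MUA{\Gamma}{\PwOneWay}$. Chaining \Cref{fact:pw-equiv-to-cpw} and \Cref{lemma:bound_size_refinements_pw} gives $\MUA{\Gamma}{\PwOneWay} \semequiv \MUAHomBounded{\Gamma}{\ContrPwOneWay}{\leq\l}$, so $\Delta \contained \MUAHomBounded{\Gamma}{\ContrPwOneWay}{\leq\l}$. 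I would then close the loop with the sandwich
\[
  \Gamma \semequiv \Delta \contained \MUAHomBounded{\Gamma}{\ContrPwOneWay}{\leq\l} \contained \Gamma,
\]
where the last "containment" is the always-true direction (analogous to \Cref{rk:uctworpq}), since every disjunct of the approximation is a "homomorphic image" of a "refinement" of $\Gamma$ and is thus "contained" in $\Gamma$. This forces all three queries to be "semantically equivalent".

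For the backward direction, if $\Gamma \semequiv \MUAHomBounded{\Gamma}{\ContrPwOneWay}{\leq\l}$, then $\Gamma$ is "equivalent" to a "UCRPQ" all of whose disjuncts have "one-way contracted path-width" at most $k$. Applying the relevant "one-way contractions" to each disjunct produces an "equivalent" "UCRPQ" of "path-width" at most $k$ (a query is always "equivalent" to its "contractions"), witnessing that $\Gamma$ has "one-way semantic path-width" at most $k$. Point (2) is obtained by the same argument with "C2RPQs" throughout.

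I do not expect a genuine obstacle here: all the difficulty has been front-loaded into \Cref{lemma:bound_size_refinements_pw} (the path-width incarnation of the "Key Lemma", whose proof reuses \Cref{lemma:locally_acyclic_treedec,lemma:shape-decomposition} applied to a suitable "path decomposition" of a "contraction" of the approximation) and into the closure of "contracted path-width" at most $k$ under "refinements", already noted above. The only points to double-check are bookkeeping: that the bound $\l$ is exactly the one appearing in \Cref{lemma:bound_size_refinements_pw}, and that the always-true containment $\MUAHomBounded{\Gamma}{\ContrPw}{\leq\l} \contained \Gamma$ holds for the contracted-width approximations, which follows because "contractions" preserve semantics and every member of the approximation is a "homomorphic image" of a "refinement" of $\Gamma$.
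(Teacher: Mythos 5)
Your proof is correct and takes essentially the same route as the paper's: the forward direction is the same sandwich $\Gamma \semequiv \Delta \contained \MUAHomBounded{\Gamma}{\ContrPwOneWay}{\leq\l} \contained \Gamma$ obtained by chaining \Cref{fact:pw-equiv-to-cpw} with \Cref{lemma:bound_size_refinements_pw}, and the backward direction likewise replaces each disjunct by a contraction of path-width at most $k$. The only (harmless) difference is that you make explicit the always-true containment $\MUAHomBounded{\Gamma}{\ContrPwOneWay}{\leq\l} \contained \Gamma$ (analogous to \Cref{rk:uctworpq}), which the paper leaves implicit.
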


\begin{proof}
	To prove the first point:
	\begin{itemize}
		\item if $\gamma$ is "equivalent" to a "UCRPQ" $\Delta$ of "path-width" at most $k$, then
			$\Delta \subseteq \MUA{\gamma}{\Pw}$ and by \Cref{fact:pw-equiv-to-cpw,lemma:bound_size_refinements_pw},
			$\Delta \contained \MUAHomBounded{\gamma}{\ContrPwOneWay}{\leq\l}$,
			and hence:
			\[
				\gamma \semequiv \Delta
				\contained \MUAHomBounded{\gamma}{\ContrPwOneWay}{\leq\l}
				\contained \gamma.
			\]
		\item If $\gamma \semequiv \MUAHomBounded{\gamma}{\ContrPwOneWay}{\leq\l}$, 
			then $\gamma$ is equivalent to a "UCRPQ" of "contracted path-width" at
			most $k$, and hence it is equivalent a "UCRPQ" of "path-width" at most $k$. 
	\end{itemize}
	The second point can be proven similarly.
\end{proof}

We can now prove the main theorem.
\decidabilitySemPw

\begin{proof}
	The upper bounds follow from \Cref{lemma:characterisation-bounded-semantic-pw}.
	The lower bounds will be shown in \Cref{lemma:lowerbound}. Lastly, to prove the "ExpSpace" upper bound
	for $k=1$, we can apply the same trick as in \Cref{cor:sem-tw-1-pb-exp-c}.
\end{proof}

Similarly to \Cref{coro:charact-semantic-treewidth-1}, we can derive from
\Cref{lemma:bound_size_refinements_pw} a characterization of "semantic path-width" at most $k$.
\begin{corollary}
	\AP\label{coro:charact-semantic-pathwidth-k}
	\AP Assume that $\+L$ is "closed under sublanguages", and let $k \geq 1$. 

	\noindent
    \proofcase{Two-way queries:}
    For any query $\Gamma \in \UCtwoRPQ(\+L)$, the following are equivalent:
    \begin{enumerate}
        \itemAP $\Gamma$ is "equivalent" to an "infinitary union" of "conjunctive queries"
            of "contracted path-width" at most $k$;
        \itemAP $\Gamma$ has "semantic path-width" at most $k$;
        \itemAP $\Gamma$ is "equivalent" to a $\UCtwoRPQ(\+L)$ of "contracted path-width" at most $k$;
        \itemAP $\Gamma$ is "equivalent" to a $\UCtwoRPQ(\+L')$ of "path-width" at most $k$,
			where $\+L'$ is the closure of $\+L$ under concatenation and inverses, "ie"
			$\+L'$ is the smallest class containing $\+L$ and such that  if $K, L \in \+L'$
			then $K\cdot L \in \+L'$ and $K^{-} \in \+L'$.
    \end{enumerate}

	\noindent
    \proofcase{One-way queries:}
	Similarly, if $\Gamma \in \UCRPQ(\+L)$, then the following are equivalent:
	\begin{enumerate}
        \itemAP $\Gamma$ is "equivalent" to an "infinitary union" of "conjunctive queries"
            of "one-way contracted path-width" at most $k$;
        \itemAP $\Gamma$ has "one-way semantic path-width" at most $k$;
        \itemAP $\Gamma$ is "equivalent" to a $\UCRPQ(\+L)$ of "one-way contracted path-width" at most $k$;
        \itemAP $\Gamma$ is "equivalent" to a $\UCRPQ(\+L')$ of "path-width" at most $k$,
			where $\+L'$ is the closure of $\+L$ under concatenation, "ie"
			$\+L'$ is the smallest class containing $\+L$ and such that if $K, L \in \+L'$
			then $K\cdot L \in \+L'$.
    \end{enumerate}
\end{corollary}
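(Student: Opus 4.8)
The plan is to prove the two-way part by the cycle of implications $(1)\Rightarrow(3)\Rightarrow(4)\Rightarrow(2)\Rightarrow(1)$, using \Cref{fact:pw-equiv-to-cpw,lemma:bound_size_refinements_pw} as the engine (these are the path-width analogues of \Cref{fact:tw-equiv-to-ctw,lemma:bound_size_refinements_tw}, exactly as \Cref{coro:charact-semantic-treewidth-1} was obtained), and then to observe that the one-way part follows by replacing every two-way "contraction" with a "one-way contraction" throughout. For $(1)\Rightarrow(3)$, assume $\Gamma$ is "equivalent" to an "infinitary union" $U$ of "CQs" of "contracted path-width" at most $k$. Since $U \contained \Gamma$, each "CQ" of $U$ is a $\ContrPw$-query "contained" in $\Gamma$, so it lies in $\MUA{\Gamma}{\ContrPw}$; hence $\Gamma \semequiv \MUA{\Gamma}{\ContrPw}$. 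The proof of \Cref{fact:pw-equiv-to-cpw} shows $\MUA{\Gamma}{\ContrPw} \semequiv \MUAHom{\Gamma}{\ContrPw}$, and \Cref{lemma:bound_size_refinements_pw} gives $\MUAHom{\Gamma}{\ContrPw} \semequiv \MUAHomBounded{\Gamma}{\ContrPw}{\leq\l}$, a \emph{finite} union. Every disjunct of this finite union is, by definition of $\MUAHomBounded{\Gamma}{\ContrPw}{\leq\l}$, a "strong onto" "homomorphic image" of a "refinement" of $\Gamma$; "refinements" only relabel "atoms" with "sublanguages" of the languages of $\Gamma$, and "homomorphic images" do not change "labels", so—because $\+L$ is "closed under sublanguages"—the whole union is a $\UCtwoRPQ(\+L)$ of "contracted path-width" at most $k$, which is $(3)$.

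Next, $(3)\Rightarrow(4)$ is the step that forces the alphabet to grow: given an equivalent $\UCtwoRPQ(\+L)$ of "contracted path-width" at most $k$, pass to the "contraction" witnessing "path-width" at most $k$. Contracting an "internal path" concatenates the atom languages (inverting those traversed backwards), so the resulting query has "path-width" at most $k$ and is labelled by languages of $\+L'$, the closure of $\+L$ under concatenation and inverses; this is $(4)$. Then $(4)\Rightarrow(2)$ is immediate from the definition of "semantic path-width". Finally $(2)\Rightarrow(1)$: any witnessing $\UCtwoRPQ$ of "path-width" at most $k$ is "equivalent" to the "infinitary union" of its "expansions", which are "CQs"; since "contracted path-width" at most $k$ is closed under "refinements" (in particular under "expansions", as noted just before \Cref{fact:pw-equiv-to-cpw}), each of these "CQs" has "contracted path-width" at most $k$, giving $(1)$.

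The main obstacle—and the reason the statement is shaped this way—is the label bookkeeping forced by \Cref{fact:pw-not-closed}. Because "path-width" at most $k$ is \emph{not} closed under "expansions", one cannot hope to stay within $\UCtwoRPQ(\+L)$ of plain "path-width" $k$, which is precisely why $(3)$ is phrased with "contracted path-width" while $(4)$ pays for plain "path-width" by enlarging $\+L$ to $\+L'$. The delicate points to verify carefully are that the chain in $(1)\Rightarrow(3)$ genuinely keeps every "label" inside $\+L$ (so that no concatenated language is introduced before the contraction step), and that the "contraction" used in $(3)\Rightarrow(4)$ produces exactly the languages of $\+L'$.

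For the one-way part I would run the identical four implications with $\ContrPwOneWay$, $\MUAHom{\Gamma}{\ContrPwOneWay}$, $\MUAHomBounded{\Gamma}{\ContrPwOneWay}{\leq\l}$ and the one-way half of \Cref{fact:pw-equiv-to-cpw,lemma:bound_size_refinements_pw}; the only substantive difference is that a "one-way contraction" never reverses an "atom", so in point $(4)$ the class $\+L'$ is the closure of $\+L$ under concatenation alone (no inverses), matching the statement.
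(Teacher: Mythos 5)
Your proof is correct and follows essentially the same route as the paper: the paper derives this corollary from \Cref{fact:pw-equiv-to-cpw,lemma:bound_size_refinements_pw} in exact analogy with \Cref{coro:charact-semantic-treewidth-1}, which is precisely the engine you use, and your treatment of item (4) via "contractions" producing languages in $\+L'$ (with inverses in the two-way case, without in the one-way case) matches the intended argument. The only difference is that you spell out the full cycle of implications and the label bookkeeping, which the paper leaves implicit.
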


\subsection{\AP{}Evaluation of Queries of Bounded Semantic Path-Width}

In this section, we show that, as a consequence of \Cref{thm:decidability-sempw}, we can obtain an
efficient algorithm for the "evaluation problem".
\paraNLEvalBoundedSemPathWidth

The class "para-NL" was introduced in \cite[Definition, p. 123]{CCDF97Advice} under the
name ``uniform "NL" + advice''. It was renamed "para-NL" in
\cite[Definition 1, p. 294]{FG03Describing}. For the sake of simplicity,
instead of either of those definitions, we use the characterization of "para-NL"
proven in \cite[Theorem 4, p. 296]{FG03Describing}.

We define \AP""para-NL"" as the class of parametrized languages $L \subseteq \Sigma^* \times \Nat$
for which there is a Turing machine $M$ "st"
\begin{center}\begin{tikzpicture}[>={Classical TikZ Rightarrow}]
	\node (first) {%
		$M \text{ accepts } \langle \textcolor{cRed}{x},\, \textcolor{cBlue}{k}\rangle%
		\quad\text{"iff"}\quad
		\langle$%
	};
	\node[inode, cRed, right=-3.5pt of first.base east, anchor=base west](toX){$x$};
	\node[inode, right=0pt of toX.base east, anchor=base west](comma){$,\,$};
	\node[inode, cBlue, right=0pt of comma.base east, anchor=base west](toK){$k$};    
	\node[inode, right=-.5pt of toK.base east, anchor=base west]{$\rangle \in L,$};
	\node[cRed, below left= .3cm and .2cm of toX](fromX){input};
	\node[cBlue, below right= .3cm and .2cm of toK](fromK){parameter};
	\draw[->, cRed] (fromX) to[out=60, in=-120] (toX.south);
	\draw[->, cBlue] (fromK) to[out=120, in=-60] (toK.south);
\end{tikzpicture}\end{center}
and, moreover, $M$ runs in non-deterministic space
$f(\textcolor{cBlue}{k}) + \+O(\log(|\textcolor{cRed}{x}|))$,
where $f\colon \Nat \to \Nat$ is a computable function.
A typical example of "para-NL" problem is the model-checking problem for first-order logic on
finite structures, when parametrized by the maximum degree of the structure
\cite[Example 6]{FG03Describing}. 

To show \Cref{thm:evaluation-bounded-pathwidth}, we first focus on the evaluation of queries of bounded "path-width".
\begin{lemma}
	\AP\label{lemma:evaluation-bounded-pathwidth}
	For each $k \geq 1$, the "evaluation problem", restricted to "UC2RPQs" of
	"path-width" at most $k$, is "NL"-complete.
\end{lemma}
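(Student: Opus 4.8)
I would prove the two bounds separately. "NL"-hardness is immediate, and the content lies in the "NL" upper bound, which I would obtain by a single left-to-right sweep of a "path decomposition" of "width" $k$: guess the assignment of query variables to database nodes one "bag" at a time, and discharge each "atom" by a reachability computation in the product of $G^{\pm}$ with the "atom"'s NFA. The only quantities kept in memory during the sweep are the current "bag" and, for each of its at most $k+1$ variables, a pointer into $\vertex{G}$, giving logarithmic space.

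For hardness, I would reduce from directed $s$–$t$ reachability (STCON), the canonical "NL"-complete problem. Given a directed graph $H$ with distinguished nodes $s,t$, label every edge by a single letter $a$, view $H$ as a "graph database" $G$, and consider the "C2RPQ" $\gamma(x,y)=x\atom{a^*}y$ with "output variable" tuple evaluated at $(s,t)$. Its underlying graph is a single edge, hence of "path-width" $1\le k$, and $(s,t)$ "satisfies" $\gamma$ on $G$ exactly when $t$ is reachable from $s$ in $H$. Thus the "evaluation problem" is "NL"-hard already for "path-width" $1$.

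For membership, fix $k$ and take an input $\langle\Gamma,G,\bar u\rangle$ with $\Gamma$ of "path-width" $\le k$. First guess, in $\+O(\log|\Gamma|)$ space, a disjunct $\gamma$ of $\Gamma$; it then suffices to test whether $\bar u$ "satisfies" the "C2RPQ" $\gamma$ on $G$. Work with a "path decomposition" $(T,\bagmap)$ of $\gamma$ of "width" $\le k$ in which consecutive "bags" differ by the insertion or deletion of a single variable (a so-called \emph{nice} decomposition). The algorithm sweeps the "bags" $B_1,B_2,\dotsc$ from left to right; when a variable $z$ is inserted, its image $\fun(z)\in\vertex{G}$ is guessed (and forced to equal the corresponding component of $\bar u$ when $z$ is an "output variable"), and when $z$ is deleted it is forgotten. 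Whenever both endpoints of an "atom" $x\atom{L}y$ of $\gamma$ are simultaneously present in the current "bag", verify it inline by non-deterministically checking that there is a path from $\langle\fun(x),q_0\rangle$ to $\langle\fun(y),q_f\rangle$ in the product of $G^{\pm}$ with the NFA $\+A_L$ (where $q_0$ is initial, $q_f$ final, and the case $\varepsilon\in L$ is handled by $\fun(x)=\fun(y)$). This reachability is itself an "NL" computation, and since nothing from it must survive the check, inlining it keeps the whole procedure within space $\+O(k\log|G|+\log|\gamma|)$, hence in "NL".

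The routine is complete: from a genuine "path decomposition" and a witnessing "homomorphism" $\fun$, the correct guesses make every "atom" co-occur in some "bag" and be discharged. The delicate point—which I expect to be the main obstacle—is soundness, namely guaranteeing that the values guessed "bag" by "bag" amount to a single consistent map $\fun\colon\vars(\gamma)\to\vertex{G}$, even though a logspace machine remembers only $\+O(k)$ variables at once. This is exactly ensured by the defining connectivity property of "path decompositions": the "bags" containing a fixed variable form a contiguous interval, so a variable is never deleted and later re-inserted, its guessed image is therefore fixed once and for all while it is active, and every "atom"'s endpoints genuinely co-occur in a "bag" and so get checked. The one subtlety is that this interval structure must be respected without storing the whole decomposition; this is secured by computing the "width"-$k$ decomposition deterministically (for fixed $k$ a "width"-$k$ "path decomposition" is computable in logarithmic space, a logspace analogue of Bodlaender's algorithm), so that any "bag" is available on demand and the sweep traverses a genuine decomposition, introducing each variable at most once. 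Combining the two directions yields "NL"-completeness.
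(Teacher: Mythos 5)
Your proof is correct and follows essentially the same route as the paper's: "NL"-hardness from directed reachability, and membership via a left-to-right sweep of a "path decomposition" that maintains a partial "homomorphism" on the current "bag", guesses images of fresh variables, discharges "atoms" by inline reachability in the product of $\Gpm$ with the "atom"'s NFA, with soundness from the interval property of "path decompositions" and the decomposition itself obtained in deterministic logspace from the known bounded-path-width result. The only cosmetic differences are that the paper uses a "tagged path decomposition" (each "atom" checked exactly once in its tagged "bag") where you check an "atom" in every "bag" where its endpoints co-occur, and that the paper phrases the on-demand computation of the decomposition as composition of logspace functions; neither affects correctness.
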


\begin{figure}
	\centering
	\begin{subfigure}{.45\linewidth}
		\centering
		\includegraphics[scale=.7]{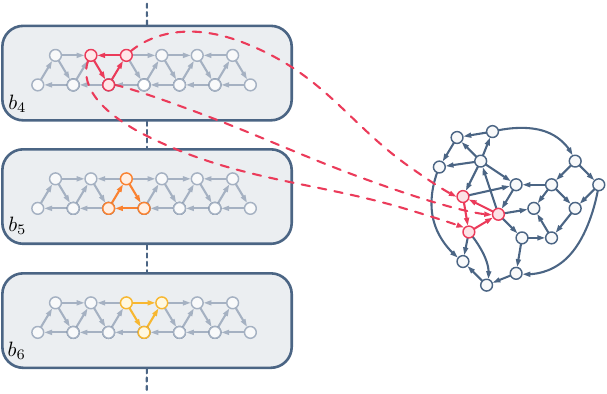}
		\caption{
			Partial homomorphism computed at the fourth step of the algorithm.
		}
	\end{subfigure}
	\hfill
	\begin{subfigure}{.45\linewidth}
		\centering
		\includegraphics[scale=.7]{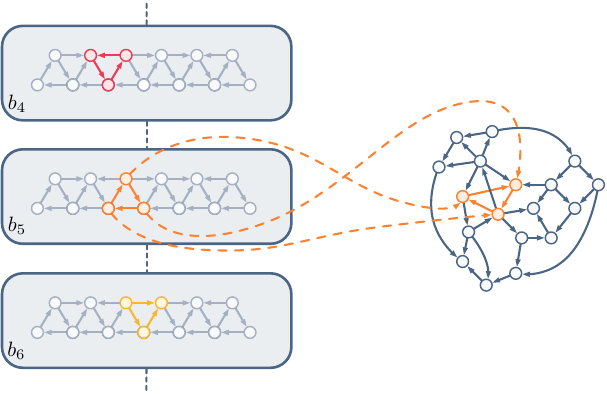}
		\caption{
			Partial homomorphism computed at the fifth step of the algorithm.
		}
	\end{subfigure}
	\caption{
		\AP\label{fig:eval-pw-nl}
		Algorithm to evaluate "C2RPQs" of "path-width" at most $k$ (here $k=2$)
		in non-deterministic logspace. Each subfigure represents: (i) on the left-hand side, a
		"path decomposition" of a "C2RPQ", (ii) on the right-hand side, a "graph database",
		and (iii) a partial "homomorphism" from the "C2RPQ" to the "database".
	}
\end{figure}

\begin{proof}
	\proofcase{Lower bound.} "NL"-hardness direected follows from the "NL"-hardness of
	the reachability problem in directed graphs, see "eg" \cite[Theorem~4.18, p.~89]{AB09Computational}.

	\proofcase{Upper bound, first part: with the "path decomposition".}
	First, we assume that a "tagged path decomposition" of width at most $k$ of the query of
	is also provided as part of the input. Moreover, we assume "wlog" that the input is "C2RPQ"---the extension to "UC2RPQ" being straightforward. So, we are given as input:
	\begin{itemize}
		\item a database together with a tuple of nodes $G(\bar u)$,
		\item a "C2RPQ" $\gamma(\bar x)$, and
		\item a "tagged path decomposition" $\langle T, \bagmap, \tagmap \rangle$ of "width" at 	
			most $k$ of $\gamma(\bar x)$.
	\end{itemize}
	The algorithm, illustrated in \Cref{fig:eval-pw-nl}, maintains a partial homomorphism
	$f\colon X \pto G$
	of these variables onto $G$.
	We scan the "bags" of the decomposition from left to right.
	\begin{itemize}
		\item Initially---before even scanning the first "bag"---$f$ is the map with empty domain.
		\item Then, when scanning the $i$-th bag $b_i$,
			we start by restricting $f$ to variables of $\dom(f)\cap b_i$.
			Then, we extend $f$ so that it is defined on the whole "bag" $b_i$.
			For every variable $y$ in $b_i \smallsetminus\dom(f)$:
			\begin{itemize}
				\item if it belongs to $\bar x$, say $y = x_i$, 
					we let $f(x_i) \defeq u_i$;
				\item otherwise, we non-deterministically guess the value of $f(y)$.
			\end{itemize}
			We then check, for every "atom" $x \atom{L} y$ of $\gamma(\bar x)$ which is "tagged" in 
			the current "bag" $b_i$ if there is a path from $f(x)$ to $f(y)$ labelled by a word of $L$ in $G$. If not, we reject. 
	\end{itemize}
	If the algorithm manages to scan the whole bag decomposition without rejecting, it accepts.

	Completeness of the algorithm is trivial. Correctness follows from the fact that if
	a variable occurs in "bags" $b_i$ and $b_k$ with $i \leq k$, then it must also belong to every
	"bag" $b_j$ for $j \in \lBrack i,k \rBrack$. As a consequence, a variable $x$ is assigned 
	exactly one value $f(x)$ during the whole process. 

	Concerning the space complexity:
	\begin{itemize}
		\item By construction, at the $i$-th step of the algorithm, $f$ is defined exactly on $b_i$, so on at most $k+1$ variables.
		So, $f$ can be stored in space $(k+1)\log(|G|)$;
		\item we need a counter with $\log(|T|)$ bits to scan through the "tagged path decomposition",
		\item each atomic check---checking if there is an $L$-labelled path from $f(x)$ to $f(y)$---can be done in non-deterministic space $c\cdot (\log(|G|) + \log(|\+A_L|))$,
		where $\+A_L$ is an NFA for $L$, using a straightforward adaptation of the "NL" algorithm for graph connectivity; note that these atomic checks are
		independent of one another, so we can reuse this space.
	\end{itemize}
	Overall, the algorithm runs in non-deterministic space
	$\+O(k\log(|G|) + \log(|T|)) = \+O(\log(|G|) + \log(|T|))$,
	which is logarithmic in the size of the input.

	\smallskip

	\proofcase{Upper bound, second part: without the "path decomposition".}
	Then, we claim that the original problem---when the "tagged tree decomposition" is not part of 
	the input---also lies in "NL". This is because one can compute, from $\gamma$, a
	"path decomposition" in (deterministic) logarithmic space by\footnote{This result is an 
	adaption of a similar statement for "tree-width" \cite[Theorem I.1, p. 143]{EJT10Logspace}. Note that the promise that the query has bounded "path-width"---in fact bounded "tree-width" suffices---in a crucial assumption of  \cite[Theorem I.1, p. 143]{EJT10Logspace}.} 
	\cite[Theorem 1.3, p. 2]{KM10Computing}. Then, a "path decomposition" can be turned into a 
	"tagged path decomposition" in (deterministic) logarithmic space by tagging an "atom"
	$x \atom{L} y$ in the first bag containing both $x$ and $y$.
	The conclusion follows since functions computable in non-deterministic logarithmic space
	are closed under composition \cite[Lemma 4.17, p. 88]{AB09Computational}. 
\end{proof}

We can now conclude with \Cref{thm:evaluation-bounded-pathwidth}, namely that the "evaluation problem" for "UC2RPQs" of
"semantic path-width" $k$ is in "para-NL".
\begin{proof}[Proof of \Cref{thm:evaluation-bounded-pathwidth}.]
	Given a "UC2RPQ"
	$\Gamma(\bar x)$ of "semantic path-width" at most $k$ and a database $G(\bar u)$, we
	first compute $\MUAHomBounded{\Gamma}{\ContrPwOneWay}{\leq\l}$---where $\l = \lbound{k}{\Gamma}$---, which is "equivalent" to
	$\Gamma$ by \Cref{lemma:characterisation-bounded-semantic-pw}.
	Then, we use \Cref{lemma:evaluation-bounded-pathwidth} to evaluate each
	$\delta(\bar x) \in \MUAHomBounded{\Gamma}{\ContrPwOneWay}{\leq\l}$ on $G(\bar u)$.
	If one of the queries accepts, we accept. Otherwise, we reject.

	The non-deterministic space needed by the algorithm is:
	\begin{itemize}
		\item $\+O(\l)$ bits to enumerate and store $\delta(\bar x)$, where $\l = \lbound{k}{\Gamma} $ 
		\item $\+O(\log{|G|} + \log{\size{\delta}}) \subseteq \+O(\log{|G|} + \log{|\l|} + \log{\size{\Gamma}})$ to evaluate
		$\delta(\bar x)$ on $G(\bar u)$, by \Cref{lemma:evaluation-bounded-pathwidth}
		and since $\size{\delta} \leq \size{\Gamma}\cdot \l$.
	\end{itemize}
	Overall, we use non-deterministic space $f(\size{\Gamma}) + \+O(\log(|G|))$
	where $f$ is a single exponential, which concludes the proof.
\end{proof}

\section{\AP{}Lower Bounds for Deciding Semantic Tree-Width and Path-Width}
\label{sec:lowerbound}

An "ExpSpace" lower bound follows by a straightforward adaptation from the
"ExpSpace" lower bound for the case $k=1$ \cite[Proposition 6.2]{BarceloRV16}.
\begin{restatable}[Lower bound of \Cref{thm:decidability-semtw}]{lem}{lowerbound}
    \AP\label{lemma:lowerbound}
    For every $k\geq 1$, the following problems are "ExpSpace"-hard, even if restricted to "Boolean CRPQs":
	\begin{itemize}
		\item the "semantic tree-width $k$ problem";
		\item the "one-way semantic tree-width $k$ problem";
		\item the "semantic path-width $k$ problem";
		\item the "one-way semantic path-width $k$ problem".
	\end{itemize}
\end{restatable}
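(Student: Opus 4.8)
The plan is to reduce from the known $\expspace$-hardness for the case $k=1$, established by Barceló, Romero and Vardi in~\cite[Proposition 6.2]{BarceloRV16}. That result shows that deciding whether a \kl{Boolean} \kl{CRPQ} has \kl{semantic tree-width} at most $1$ is $\expspace$-hard. The key observation is that hardness for general $k$ should follow by a uniform ``padding'' construction: given a \kl{Boolean} \kl{CRPQ} $\gamma$ for which we want to decide semantic tree-width $1$, I would build a new query $\gamma_k = \gamma \land \kappa_k$, where $\kappa_k$ is a fixed gadget (a \kl{Boolean} \kl{CRPQ} using fresh variables) of \kl{tree-width} exactly $k$ whose underlying graph forces any equivalent query to have \kl{tree-width} at least $k$. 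The intended equivalence is then: $\gamma$ has semantic tree-width $\leq 1$ if and only if $\gamma_k$ has semantic tree-width $\leq k$. The same reduction should work verbatim for \kl{path-width} (taking $\kappa_k$ to be a gadget of \kl{path-width} exactly $k$) and for the \kl{one-way} variants (since all the queries involved are one-way \kl{CRPQs} anyway).

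\textbf{Key steps.} First, I would fix the gadget $\kappa_k$. A natural candidate is a query whose underlying graph is a $(k{+}1)$-clique, which has \kl{tree-width} and \kl{path-width} exactly $k$, built as a \kl{core} so that it cannot be simplified semantically. Crucially, $\kappa_k$ must use its own dedicated alphabet letters, disjoint from those of $\gamma$, so that on any \kl{graph database} the two conjuncts cannot interact: an \kl{expansion} of $\gamma_k$ splits as a disjoint union of an expansion of $\gamma$ and an expansion of $\kappa_k$. Second, I would verify the forward direction: if $\gamma \semequiv \delta$ with $\delta$ of \kl{tree-width} $\leq 1$, then $\delta \land \kappa_k$ is equivalent to $\gamma_k$ and has \kl{tree-width} $\leq \max(1,k) = k$, since the two components share no variables. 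Third, and this is the heart of the argument, I would prove the backward direction: if $\gamma_k$ is equivalent to some \kl{UCRPQ} $\Delta$ of \kl{tree-width} $\leq k$, then $\gamma$ must have semantic tree-width $\leq 1$. Using \Cref{prop:cont-char-exp-st}, every \kl{expansion} of $\gamma_k$ decomposes along the disjoint alphabets, and since $\kappa_k$ is a \kl{core} on its own letters, each disjunct of $\Delta$ must ``contain'' a copy of $\kappa_k$; the remaining part, corresponding to $\gamma$, must then witness that $\gamma$ is approximable in \kl{tree-width} $1$. The disjointness of alphabets is what lets me project away the gadget cleanly and recover a tree-width-$1$ equivalent for $\gamma$ alone.

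\textbf{Main obstacle.} The delicate point is the backward direction: establishing that the gadget genuinely forces \kl{tree-width} $k$ in \emph{every} equivalent \kl{UCRPQ}, and that it does so \emph{without} allowing the disjuncts of $\Delta$ to ``borrow'' width between the $\gamma$-part and the $\kappa_k$-part. Because $\Delta$ is an arbitrary semantically equivalent union of \kl{CRPQs}---not merely a \kl{homomorphic image} of $\gamma_k$---I cannot assume its disjuncts have any particular shape. The argument must show that the connected component of each disjunct responsible for matching $\kappa_k$ is forced (by coreness and alphabet-disjointness) to realize the full \kl{tree-width} $k$ of the clique, while simultaneously the $\gamma$-component is forced to have \kl{tree-width} at most $k$ but can in fact be taken of \kl{tree-width} $1$. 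Separating these two requirements cleanly---so that width $k$ attributable to $\kappa_k$ does not mask a failure of $\gamma$ to have semantic tree-width $1$---is exactly where the disjoint-alphabet and \kl{core} properties of the gadget must be invoked carefully. The $k=1$ base case requires no gadget and is simply~\cite[Proposition 6.2]{BarceloRV16} directly.
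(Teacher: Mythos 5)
Your reduction fails at the claimed equivalence, and the failure is not a provability gap but an outright falsity: the backward direction does not hold. Conjunction with a gadget on a disjoint alphabet takes the \emph{maximum} of tree-widths, so it cannot lower the threshold being tested from $k$ to $1$. Concretely, take $\gamma() = x \atom{a} y \land y \atom{b} z \land z \atom{c} x$, the Boolean triangle over three distinct letters. It is a core of tree-width $2$, so its semantic tree-width is exactly $2$; in particular $\gamma$ does \emph{not} have semantic tree-width $\leq 1$. Yet for any $k \geq 2$, your padded query $\gamma_k = \gamma \land \kappa_k$ has syntactic tree-width $\max(2,k)=k$, hence trivially has semantic tree-width $\leq k$. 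So ``$\gamma_k$ has semantic tree-width $\leq k$'' cannot imply ``$\gamma$ has semantic tree-width $\leq 1$''. The gadget only pushes width \emph{up}: it forces any equivalent query to realize width $k$ in the component matching $\kappa_k$, but nothing forces the $\gamma$-component of an equivalent query down to width $1$, since that component also has the full width-$k$ budget at its disposal. What your construction actually decides is whether $\gamma$ has semantic tree-width $\leq k$, i.e., it maps the problem to itself, which transfers no hardness. A secondary issue: even where your reduction bottoms out, \cite[Proposition 6.2]{BarceloRV16} covers only the two-way semantic tree-width $1$ problem, not the one-way and path-width variants that the statement also claims.

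The paper avoids this trap by reducing from a different source: the \expspace-hard containment problem for CRPQs, in the restricted form of \cite[Lemma 8]{Figueira20}. It introduces an intermediate ``asymmetric containment problem'': given Boolean CRPQs $\gamma$ of tree-width $\leq k$ and $\gamma'$ connected and provably \emph{not} of semantic tree-width $\leq k$, decide $\gamma \contained \gamma'$. The produced instance is $\gamma \land \gamma'$, and one shows that $\gamma \land \gamma'$ has semantic tree-width $\leq k$ iff $\gamma \contained \gamma'$; the nontrivial direction uses connectivity (\Cref{fact:connectedness}) and disjoint unions of databases to extract, from any equivalent width-$k$ union, a witness of the containment. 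Your instinct that a fresh-letter clique forces width is sound and does appear in the paper---but in the opposite role: a $\#$-labelled $(k+2)$-clique is attached to the \emph{right-hand side} $\delta'$ of the containment instance precisely to guarantee that $\delta'$ has no equivalent query of tree-width $\leq k$ (homomorphisms between expansions must be injective on the fresh-letter clique), which is exactly the hypothesis the asymmetric problem needs. Making the right-hand side of a containment non-approximable is what makes the reduction work; padding the query being tested, as you do, cannot.
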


\AP We say that a "C2RPQ" is ""connected"" when its underlying undirected graph is connected.
We first give a small useful fact.

\begin{fact}[{Implicit in \cite[Proof of Proposition 6.2]{BarceloRV16}}]
    \AP\label{fact:connectedness}\leavevmode
    \begin{enumerate}
        \item Let $G, G'$ be two "databases" and $\delta$ be a "connected" "Boolean" "C2RPQ".
            If the disjoint union $G \AP\intro*\disjointUnion G'$ ("ie", the union assuming $\vertex{G}$ and $\vertex{G'}$ are disjoint) "satisfies" $\delta$, then either $G$ "satisfies" $\delta$ or
            $G'$ "satisfies" $\delta$.
        \item Let $\gamma,\delta$ be "Boolean" "C2RPQs". If $\delta$ is "connected"
            and $\gamma \contained \delta$, then there exists a "subquery" $\gamma'$ of $\gamma$,
            obtained as connected component of $\gamma$, such that
            $\gamma' \contained \delta$. 
    \end{enumerate}
\end{fact}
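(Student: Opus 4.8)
The plan is to derive both items from a single structural observation: an expansion of a connected query is again connected, and the homomorphic image of a connected query inside a disjoint sum must land entirely within one summand. Throughout I would work with the expansion characterization of satisfaction recorded earlier: a Boolean "C2RPQ" $\delta$ is satisfied by a database $H$ exactly when some $\xi \in \Exp(\delta)$ admits a homomorphism $\xi \homto H$.

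For item (1), suppose $G+G'$ satisfies $\delta$, witnessed by an expansion $\xi \in \Exp(\delta)$ together with $g\colon \xi \homto G+G'$. First I would observe that since $\delta$ is connected, so is every $\xi \in \Exp(\delta)$: replacing an atom $x \atom{L} y$ by a path keeps $x$ and $y$ in the same component of the underlying undirected graph, and collapsing variables along equality atoms only merges components. Now every edge of the disjoint sum $G+G'$ lies entirely within $G$ or within $G'$, so the two endpoints of each atom of $\xi$ are sent by $g$ into the same summand; connectivity of $\xi$ then propagates this, forcing the whole image $g(\vertex{\xi})$ into a single summand, say $\vertex{G}$. Hence $g$ is already a homomorphism $\xi \homto G$, so $G$ satisfies $\delta$ (and symmetrically for $G'$). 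A routine induction on the number of summands, splitting $G_1 + \dots + G_n = G_1 + (G_2 + \dots + G_n)$, extends this to any finite disjoint sum.

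For item (2), let $\gamma_1, \dots, \gamma_n$ be the connected components of $\gamma$; these are subqueries partitioning the atoms and variables of $\gamma$, and $\gamma$ is their conjunction. I would argue by contradiction, assuming $\gamma_i \not\contained \delta$ for every $i$. By the definition of containment this yields, for each $i$, a database $G_i$ that satisfies $\gamma_i$ but not $\delta$. Set $G \defeq G_1 + \dots + G_n$. Satisfaction of a Boolean query is preserved when passing to a disjoint super-database, so each $G_i$, and therefore $G$, satisfies $\gamma_i$; as the components have pairwise disjoint variable sets, the per-component witnessing maps glue into a single map showing that $G$ satisfies $\gamma$. From $\gamma \contained \delta$ we conclude that $G$ satisfies $\delta$, and the iterated form of item (1) then forces some $G_i$ to satisfy $\delta$, contradicting its choice. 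Hence some component $\gamma' = \gamma_i$ satisfies $\gamma' \contained \delta$.

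I do not expect a genuine obstacle here; the two points needing care are (i) checking that expansions preserve undirected connectivity even in the presence of $\varepsilon$-labelled atoms and equality atoms, which is the real content of item (1), and (ii) verifying that the per-component witnesses for $\gamma$ combine into one global homomorphism, which is immediate precisely because distinct components share no variables. Degenerate cases, such as an empty query or a single isolated variable, can be dispatched directly.
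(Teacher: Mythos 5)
Your proof is correct, and its ingredients are exactly those of the paper's: expansions of a connected query are connected, a connected query mapped homomorphically into a disjoint sum lands entirely in one summand, and the counterexample is built as the disjoint union of per-component counterexample databases. The only difference is organizational: you prove item (1) directly from the expansion characterization of satisfaction (\Cref{prop:cont-char-exp-st}) and then derive item (2) from it, whereas the paper proves item (2) first---inlining the content of item (1) in the connectivity step of its contradiction argument---and then obtains item (1) as a corollary of item (2), applied to the canonical conjunctive query of $G+G'$ (each database read back as a Boolean query). Your ordering is arguably the more natural one and makes item (1) self-contained, while the paper's route trades that for the canonical-query trick; the underlying mathematics is identical. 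A small point in your favour: you explicitly check that connectivity survives $\varepsilon$-labelled atoms and equality atoms (collapsing variables only merges components), which the paper's informal phrasing glosses over.
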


Notice first that if $\gamma$ and $\delta$ are "CQs" then the proof of
\Cref{fact:connectedness} follows directly from the equivalence of $\gamma \contained \delta$ (resp.\ $G$ "satisfies" $\gamma$) and the existence of a "homomorphism" from $\delta$ to $\gamma$ (resp.\ $\gamma$ to $G$).

\begin{proof}
	We first prove the second point.
    Write $\gamma \defeq \gamma_1 \land \hdots \land \gamma_n$ where $\gamma_1,\hdots,\gamma_n$
    are connected components of $\gamma$, and assume by contradiction that
    for all $i$, $\gamma_i \not\contained \delta$. Then there exists a database $G_i$
    such that $G_i$ "satisfies" $\gamma_i$ but not $\delta$.
    Consider the disjoint union $G = G_1 \disjointUnion \hdots \disjointUnion G_n$.

    On the one hand, since the $\gamma_i$'s have disjoint variables
    and $G_i$ "satisfies" $\gamma_i$ for each $i$, then $G$ "satisfies" $\gamma$.
    On the other hand, $G$ cannot "satisfy" $\delta$:
    if there was a "homomorphism" from $\delta$ to $G$,
    since $\delta$ is connected, there would exist an index $i$
    such that $\delta$ is mapped on $G_i$, which would contradict the
    fact that $G_i$ does not "satisfy" $\delta$.
    Hence, $G$ does not "satisfy" $\delta$, which contradicts the containment
    $\gamma \contained \delta$.

	To prove the first point, we simply apply the second one, by letting $\gamma$
	be the conjunction of the canonical "CQ" associated with $G$ and $G'$---which
	is in fact the canonical "CQ" associated with $G\disjointUnion G'$. From the assumption
    that $G \disjointUnion G'$ "satisfies" $\delta$ it follows that $\gamma \contained \delta$
    and so, by the first point, there is either a homomorphism from $\delta$
    to $G$ or from $\delta$ to $G'$.
\end{proof}

We can then prove \Cref{lemma:lowerbound}.

\begin{proof}[Proof of \Cref{lemma:lowerbound}]
    Fix $k \geq 1$. We focus on "semantic tree-width", but the exact same reduction works for the other three problems.
    We introduce an intermediate problem, called 
    the \AP""asymmetric containment problem for tree-width $k$"":
    given two "Boolean" "CRPQs" $\gamma$ and $\gamma'$,
    where $\gamma$ has "tree-width" $k$, $\gamma'$ is "connected"  
    and does \emph{not} have "semantic tree-width" $k$,
    it asks whether $\gamma \contained \gamma'$.
    The proof of the lemma then contains two parts:
    \begin{enumerate}
        \item first, we reduce the "asymmetric containment problem for tree-width $k$"
            to the "semantic tree-width $k$ problem",
        \item then, we prove that the "asymmetric containment problem for tree-width $k$" is
            "ExpSpace"-hard.
    \end{enumerate}
    \proofcase{(1):} We reduce the instance $(\gamma, \gamma')$
    of the "asymmetric containment problem for tree-width $k$"
    to the instance $\gamma \land \gamma'$ of the "semantic tree-width $k$ problem". We simply have 
    to check that $\gamma \contained \gamma'$ if and only if 
    $\gamma \land \gamma'$ has "semantic tree-width" $k$. The left-to-right
    implication is straightforward since $\gamma \contained \gamma'$
    implies that $\gamma \land \gamma' \semequiv \gamma$ and $\gamma$ was assumed to
    have "tree-width" $k$.
    For the converse implication, if $\gamma \land \gamma' \semequiv \delta$
    where $\delta$ is a "UC2RPQ" of "tree-width" $k$ then write
    $\delta = \bigvee_{i=1}^n \delta_i$ where the $\delta_i$'s are "C2RPQs"
    and let $\delta_{i,1},\hdots, \delta_{i,k_i}$ be the connected components
    of $\delta_i$.
    
    Since for each $i$ we have
    $\delta_i \contained \delta \semequiv \gamma \land \gamma' \contained \gamma'$, by 
    \Cref{fact:connectedness}, there exists $j_i$ such that
    $\delta_{i,j_i} \contained \gamma'$.
    Let $\delta' \defeq \bigvee_{i=1}^n \delta_{i,j_i}$ so that, by construction
    $\delta' \contained \gamma'$. However, note that $\delta'$ has "tree-width" at most $k$
    but $\gamma'$ was assumed not to have "semantic tree-width" $k$,
    hence $\delta' \strcontained \gamma'$,
    so there exists $G'$ such that:
    \begin{equation}
        \AP\label{eq:subquery-reduction}
        G' \text{ "satisfies" } \gamma'
        \quad\text{ and }\quad
        G' \text{ does not "satisfy" } \delta'.
    \end{equation}
    
    We now prove that $\gamma \contained \gamma'$. Let $G$ be a
    "database" "satisfying" $\gamma$. Then the disjoint union $G \disjointUnion G'$ "satisfies" $\gamma \land \gamma'$
    since $G$ "satisfies" $\gamma$, $G'$ "satisfies" $\gamma'$ and $\gamma$ and $\gamma'$
    are Boolean so we can assume "wlog" that they have disjoint variables.
    As a consequence, $G \disjointUnion G'$ "satisfies" $\delta$ and hence $\delta'$,
    so there exists $i$ such that $G \disjointUnion G'$ "satisfies" $\delta_{i,j_i}$.
    Since $\delta_{i,j_i}$ is "connected", either $G$ "satisfies"
    $\delta_{i,j_i}$ or $G'$ "satisfies" $\delta_{i,j_i}$.
    By \Cref{eq:subquery-reduction}, the latter cannot hold,
    so $G$ "satisfies" $\delta_{i,j_i}$ and hence $\gamma'$.

    Therefore, we have shown that for each "database" $G$ that "satisfies" 
    $\gamma$, then $G$ "satisfies" $\gamma'$, "ie", $\gamma \contained \gamma'$. Overall,
    $\gamma \land \gamma'$ has "semantic tree-width" $k$ if and only if
    $\gamma \contained \gamma'$.

    \medskip

    \proofcase{(2):}
    We now show that the "asymmetric containment problem for tree-width $k$" is
    "ExpSpace"-hard.
    It was shown in \cite[Lemma 8]{Figueira20} that
    the "containment" of "CRPQs" was still "ExpSpace"-hard
    when restricted to inputs of the form:
    \begin{center}
        \begin{tikzcd}[column sep=scriptsize]
            \gamma() = \qvar \rar["K"] & \qvar & \contained^{?} & 
            \qvar \rar["L_1", bend left=40] \rar["\raisebox{6pt}{\small\vdots}", phantom] \rar["L_p" below, bend right=40] & \qvar &[-2.1em] = \delta(),
        \end{tikzcd}
    \end{center}
    where $K,L_1,\hdots,L_p$ are regular languages over $\A$.
    We reduce it to the following problem:
    \begin{center}
        \begin{tikzcd}[column sep=scriptsize, row sep=tiny]
            & & & & & & \qvar \ar[dr, "\#"] \ar[dd, "\#" above left] & \\
            \gamma'() \defeq \qvar \rar["K"] & \qvar \arrow["\#" above, loop above] & \contained^{?} & 
            \qvar \rar["L_1", bend left=40] \rar["\raisebox{6pt}{\small\vdots}", phantom] \rar["L_p" below, bend right=40] & \qvar \rar["\A^*"] & \qvar \ar[rr, "\#" below right] \ar[ur, "\#"] \ar[dr, "\#" below left] & & \qvar = \delta'(). \\
            & & & & & & \qvar \ar[ur, "\#" below right]
        \end{tikzcd}
    \end{center}
    where the right-hand side of $\delta'$ is a directed $(k+2)$-clique
    and where $\#$ is a new symbol, i.e. $\# \not\in \A$.

    We claim that $\gamma \contained \delta$ if and only if $\gamma' \contained \delta'$.
    The forward implication is direct and the converse implication
    simply relies on the fact that $\# \not\in \A$.\footnote{Indeed, the only possible "homomorphisms" from "expansions" of $\delta'$ to "expansions" of $\gamma'$ are the ones sending the "expansions" of "atoms" containing $L_1, \dotsc, L_p$ inside the "expansion" of the "atom" on $K$.}
    Then, observe that $\gamma'$ has "tree-width" $1 \leq k$,
    and that $\delta'$ is connected but do not have "semantic tree-width"
    at most $k$.
    
    To prove the last point, consider a "UC2RPQ" $\Delta''$ that is
    equivalent to $\delta'$. Pick any "expansion" $\xi'_1$ of $\delta'$.
    Since $\Delta'' \contained \delta'$, there exists
    an "expansion" $\xi''$ of $\Delta''$ such that there is a "homomorphism"
    from $\xi'_1$ to $\xi''$. Dually, since $\delta' \contained \Delta''$,
    there exists an "expansion" $\xi'_2$ of $\delta'$ such that there
    is a "homomorphism" from $\xi'' \to \xi'_2$. Overall, we have
    "homomorphisms" $\xi'_1 \to \xi'' \to \xi'_2$.
    Since $\xi'_1$ and $\xi'_2$ are both "expansions" of $\delta'$,
    they contain a $\#$-labelled directed $(k+2)$-clique,
    and the $\#$-letter appears nowhere else.
    Should the "homomorphism" $\xi'_1 \to \xi''$ not be injective,
    $\xi''$ would contain a $\#$-labelled self-loop,
    and hence, the "homomorphism" $\xi'' \to \xi'_2$ would
    yield a $\#$-self loop in $\xi'_2$, which does not exist!
    Hence, the "homomorphism" from $\xi'_1$ to $\xi''$
    is injective on the $(k+2)$-clique.
    As a result, $\xi''$ contains a $(k+2)$-clique
    and has "tree-width" at least $k+1$.
    We conclude that $\Delta''$ has "tree-width" at least $k+1$ by \Cref{fact:refinement-tw}, provided that $k \geq 2$.

    Hence, we have shown that $\gamma \contained \delta$ if and only if $\gamma' \contained \delta'$
    where $\gamma'$ has "tree-width" at most $k$, where $\delta'$ is "connected" and
    has "semantic tree-width" at least $k+1$. Since our reduction
    can be implemented in polynomial time, we conclude that the problems of
    \Cref{lemma:lowerbound} are "ExpSpace"-hard.
\end{proof}

\section{\AP{}Discussion}
\label{sec:discussion}

\subsection{\AP{}Complexity}
\label{sec:discussion-complexity}
We have studied the definability and approximation of "UC2RPQ" queries by queries of bounded "tree-width" and shown that the "maximal under-approximation" in terms of an infinitary union of "conjunctive queries" of "tree-width" $k$ can be always effectively expressed as a "UC2RPQ"
of "tree-width" $k$ (\Cref{cor:mua-exists-effective}). 
However, while the "semantic tree-width $1$ problem" is shown to be \expspace-complete (which was also established in \cite[Theorem 6.1, Proposition 6.2]{BarceloRV16}), we have left a gap between our lower and upper bounds in \Cref{thm:decidability-semtw} for every $k>1$.

\begin{qu}
    For $k > 1$, is the "semantic tree-width $k$ problem" \expspace-complete?
\end{qu}
A related question is whether the "containment problem" between a "C2RPQ" and a "summary query" is in \expspace. Should this be the case, then the "semantic tree-width $k$ problem" would be in \expspace.
We also point out that since every "path-$l$ approximation" can be expressed by a polynomial "UC2RPQ" of tree-width $2k$---this is the same idea as in \cite[Lemma IV.13]{DBLP:conf/lics/0001BV17}---, one can produce, for every "UC2RPQ" $\Delta$ a union $\Gamma$ of poly-sized "C2RPQ" of tree-width $2k$ such that $ \MUA{\Delta}{\Tw} \contained \Gamma \contained \Delta$. This implies that the following ``promise'' problem\footnote{In reference to ``promise constraint satisfaction problems''
\cite[Definition 2.3]{bg2021pcsp}.} is decidable in \expspace: given a "UC2RPQ" $\Gamma$, answer `yes' if $\Gamma$ is of "semantic tree-width" $2k$, and answer `no' if $\Gamma$ is not of "semantic tree-width" $k$. The fact that $\MUA{\Delta}{\Tw}$ can be approximated by an exponential query of tree-width $2k+1$ can also be seen as a corollary of the proof of \cite[Theorem~V.1]{DBLP:conf/lics/0001BV17}.

We also do not know whether the {\pitwo} bound on the "semantic tree-width $k$ problem" for {\UCRPQSRE} has a matching lower bound. The known lower bound for the {\UCRPQSRE} "containment problem" \cite[Theorem~5.1]{FigueiraGKMNT20} does not seem to be useful to be employed in a reduction in this context, since it necessitates queries of arbitrary high "tree-width".

\subsection{\AP{}Characterization of Tractability}
\label{sec:charact-tractability}

Our result implies that for each $k$
the "evaluation problem" for "UC2RPQs" $\Gamma$
of "semantic tree-width" $k$ is fixed-parameter tractable---or \AP""FPT""---taking
the query as parameter, "ie", in time $\+O(|G|^{c} \cdot f(|\Gamma|))$
for a computable function $f$ and constant $c$, where $G$ is the "database" given as input.
While this was a known fact \cite[Corollary IV.12]{DBLP:conf/lics/0001BV17}, the dependence on the database was $c=2k+1$.  Our results show that the dependence can be improved to $c=k+1$, similarly to \cite[Theorem 6.3]{BarceloRV16} for the case $k=1$.
It has been further shown by Feier, Gogacz and Murlak that the "evaluation" can be done with a single-exponential $f$ \cite[Theorem~22]{FGM24}.

In a similar vein, our results show that the "evaluation problem" for "UC2RPQs"
of "semantic path-width" $k$ is in "para-NL". It is unknown whether the semantic bounded width properties characterize all "FPT" and "para-NL" classes.

\begin{qu}\AP\label{qu:paraNLtractability}
    Does every recursively enumerable class of "CRPQs" with "para-NL" evaluation have bounded "semantic path-width"?
\end{qu}

\begin{qu}[Also mentioned in {\cite[\S IV-(4)]{DBLP:conf/lics/0001BV17}}]\AP\label{qu:FPTtractability}
    Does every recursively enumerable class of "CRPQs" with "FPT" evaluation have bounded "semantic tree-width"?
\end{qu}

Note that the classes of bounded "contracted path-width" or "contracted tree-width" are not counterexamples to \Cref{qu:paraNLtractability,qu:FPTtractability}, since
the "path-width" is upper-bounded by one plus the "contracted path-width", and lower-bounded by the "contracted path-width"---and similarly for "tree-width"---and so a width is bounded "iff" its "contracted" variant is bounded.

In the case of "CQs", the answer is `yes' to \Cref{qu:FPTtractability} \cite[Theorem~1]{Grohe07}
under standard complexity-theoretic hypotheses ($W[1] \neq $ "FPT"). For \Cref{qu:paraNLtractability}, the answer is still `yes' \cite[Theorem 3.1]{ChenM13} conditional to a less standard assumption\footnote{By \cite[Theorems 3.1 \& 4.3]{ChenM13}, if the class has bounded "semantic path-width", then the problem is in $\textsf{Path} \subseteq $ "paraNL"; by \cite[Theorems 3.1 \& 5.5]{ChenM13}, if the class does not have bounded "semantic path-width", then the problem is \textsf{Tree}-hard.} (no $\textsf{Tree}$-hard problem is in "paraNL").

However, attempting at answering these questions for "CRPQs" is considerably more challenging. In particular, one important technical difficulty is that a class of "CRPQs" with unbounded "tree-width" may contain queries with no "expansions" which are "maximal" in the sense of "containment". That is, for every $k$, for every query $\gamma$ of "semantic tree-width" $>k$ and "expansion" $\xi$ of "semantic tree-width" $>k$ there may be another "expansion" $\xi'$ such that $\xi' \homto \xi$ ("ie", such that $\xi \contained \xi'$).  
In fact, for classes of "CRPQs" avoiding such problematic behavior, Question~\ref{qu:FPTtractability} can be positively answered. We next show why.

\AP
Let us call a "UC2RPQ" ""finitely-redundant"" if there is no infinite chain $\xi_1(\bar x) \strcontained \xi_2(\bar x) \strcontained \dotsb$ among its "expansions". See \Cref{fig:example-infinite-chain} for a non-example.
\begin{figure}
    \centering%
    \includegraphics[width=.7\linewidth]{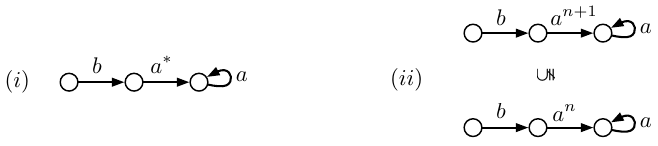}
    \caption{%
        \AP\label{fig:example-infinite-chain}%
        $(i)$ A simple non-"finitely-redundant" Boolean "CRPQ". $(ii)$ For every $n$, there is a "homomorphism" from the $(n+1)$-"expansion" to the $n$-"expansion", but no "homomorphism" in the converse direction.
    }
\end{figure}
\AP
Observe that the classes of "CQs" and "UCQs" are "finitely-redundant", and also the class of "CRPQs" with ``""no directed cycles""'', meaning no directed cycle in its underlying directed graph and no empty word $\epsilon$ in the "atom" languages.

\begin{lem}
    The class of "CRPQs" with "no directed cycles" is "finitely-redundant".
\end{lem}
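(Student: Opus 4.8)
The plan is to exploit two facts: that an expansion of a CRPQ with "no directed cycles" is a directed acyclic graph (DAG), and that homomorphisms between DAGs cannot fold directed paths. First I would check that every $\xi \in \Exp(\gamma)$ is a DAG. Since $\gamma$ has no $\epsilon$ in any of its atom languages, each atom $x \atom{L} y$ is replaced by a directed path of length at least $1$; and since the underlying directed graph of $\gamma$ is acyclic, subdividing its arcs keeps it acyclic (a directed cycle in $\xi$ would project, by contracting each expanded-atom path, to a directed closed walk in $\gamma$, hence to a directed cycle, which does not exist). So all expansions are finite DAGs.

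The key technical ingredient I would isolate is the following no-folding property: if $h\colon \xi' \homto \xi$ is a homomorphism and $\xi$ is a DAG, then $h$ sends every directed path of $\xi'$ to a directed path of $\xi$ of \emph{the same length}. Indeed, a directed path $p_0 \atom{c_1} p_1 \atom{c_2} \cdots \atom{c_L} p_L$ is mapped to a directed walk $h(p_0), \dots, h(p_L)$ in $\xi$, and a directed walk in a DAG cannot repeat a vertex (a repetition would produce a directed cycle); hence the images are distinct and form a directed path of length exactly $L$. In particular, the length of the longest directed path does not increase under $h$, and the image of each expanded-atom path of $\xi'$ is a directed path of the same length in $\xi$.

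The main argument then proceeds by contradiction. Suppose there is an infinite chain $\xi_1 \strcontained \xi_2 \strcontained \cdots$ among the expansions of $\gamma$. As these are CQs, $\xi_i \contained \xi_{i+1}$ yields a homomorphism $\xi_{i+1} \homto \xi_i$ (recall that for CQs, $\gamma \homto \gamma'$ is equivalent to $\gamma' \contained \gamma$), and composing these gives, for every $j$, a homomorphism $\xi_j \homto \xi_1$. Let $H$ be the length of the longest directed path in the \emph{fixed} finite DAG $\xi_1$. By the no-folding property applied to $\xi_j \homto \xi_1$, in every $\xi_j$ each atom of $\gamma$ is expanded by a word of length at most $H$. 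Consequently each $\xi_j$ is determined by a choice, for each of the $\nbatoms{\gamma}$ atoms of $\gamma$, of a word of length $\leq H$ over the finite alphabet $\A$; there are only finitely many such tuples, hence finitely many such expansions. On the other hand, strictness together with transitivity of $\contained$ forces the members of the chain to be pairwise non-$\semequiv$, and thus pairwise distinct expansions—contradicting finiteness. This would establish the lemma. The only delicate points are the clean statement of the no-folding property and the observation that a strict chain has pairwise distinct members; once those are in place the rest is bookkeeping, so I do not expect a genuine obstacle here.
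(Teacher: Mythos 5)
Your proof is correct and follows essentially the same route as the paper's: both extract homomorphisms $\xi_j \homto \xi_1$ from the chain and exploit the fact that a directed path cannot fold under a homomorphism into the acyclic expansion $\xi_1$. The only difference is presentational---the paper argues contrapositively (an arbitrarily long atom expansion mapped into $\xi_1$ would force a directed cycle, contradicting acyclicity), whereas you use the no-folding property to bound all atom expansions by the longest directed path of $\xi_1$ and then count; the ingredients are identical, and your write-up simply makes explicit the finiteness and pairwise-distinctness steps that the paper leaves implicit.
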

\begin{proof}
    By means of contradiction, let $\gamma$ have "no directed cycles" and suppose there is an infinite chain $\xi_1(\bar x) \strcontained \xi_2(\bar x) \strcontained \dotsb$ of "expansions" of $\gamma$. Hence, there must be an "atom" "expansion" which grows arbitrarily in the chain. Take $\xi_i$ such that it contains an "atom" "expansion" of size bigger than $\xi_1$. Since such "atom" "expansion" is a directed path (as we are dealing with one-way "CRPQs"), the fact that $\xi_i \homto \xi_1$ implies that there is some cycle in $\xi_1$.
    Since $\gamma$ cannot contain the empty word in the "atom" languages, this is in contradiction with the hypothesis that there are "no directed cycles" in $\gamma$.
\end{proof}

We next show that, restricted to classes of "finitely-redundant" "UC2RPQ", we can obtain a characterization of "evaluation" in "FPT".
\thmtractabilityfinred
\begin{proof}
    \proofcase{Left-to-right} By contraposition, we show that if $\+C$ has unbounded "tree-width", then its "evaluation problem" is \wone-hard via an "FPT"-reduction from the 
    \AP
    ""parameterized clique problem"". This is the problem of, given a parameter $k$ and a simple graph $G$, whether $G$ contains a $k$-clique. We do this by a simple adaptation of the proof of Grohe \cite[Theorem~4.1]{Grohe07} for the case of "CQs".

    Given an instance $\langle G,k \rangle$ of the "parameterized clique problem", the idea is to first search for a query $\gamma \in \+C$ of ``sufficiently large'' "semantic tree-width". 
    
    \AP
    Let us call an "expansion" $\xi$ of a "UC2RPQ" to be ""maximal"" if there is no other "expansion" $\xi'$ such that $\xi \strcontained \xi'$.
    \AP
    The ""core"" of a "CQ" is the result of repeatedly removing any atom which results in an equivalent query. It is unique up to isomorphism (see, "eg", \cite{DBLP:conf/stoc/ChandraM77}), and a "CQ" has "semantic tree-width" $k$ if{f} its "core" has "tree-width" $k$ \cite[Theorem 12]{DBLP:conf/cp/DalmauKV02}. We say that a "CQ" is `a core' if it is isomorphic to its "core".

    \begin{proposition}
        \label{prop:big-expansion}
        For any $k \geq 3$, if a "finitely-redundant" "C2RPQ" has "semantic tree-width" $\geq k$, then there is a "maximal" "expansion" thereof of "semantic tree-width" $\geq k$.
    \end{proposition}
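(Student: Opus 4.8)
The plan is to argue by contradiction: assuming that \emph{every} "maximal" "expansion" of $\gamma$ has "semantic tree-width" at most $k-1$, I would show that $\gamma$ itself has "semantic tree-width" at most $k-1$, contradicting the hypothesis. The engine of the proof is the observation that, thanks to "finite-redundancy", $\gamma$ is already "semantically equivalent" to the "infinitary union" of its "maximal" "expansions" alone, rather than to the union of all of them.

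First I would prove a domination lemma: in a "finitely-redundant" "C2RPQ", every "expansion" $\xi$ satisfies $\xi \contained \xi^{\ast}$ for some "maximal" "expansion" $\xi^{\ast}$. Indeed, starting from $\xi$ and repeatedly extending, whenever the current expansion fails to be "maximal" there is, by definition, a strictly larger one; this produces a chain $\xi = \xi_1 \strcontained \xi_2 \strcontained \dotsb$ of "expansions" of $\gamma$. By "finite-redundancy" such a chain cannot be infinite, so it terminates at a "maximal" expansion $\xi^{\ast}$, and transitivity of $\contained$ gives $\xi \contained \xi^{\ast}$.

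Next I would use that $\gamma \semequiv \bigvee_{\xi \in \Exp(\gamma)} \xi$ (recalled in \Cref{sec:prelim}) together with the domination lemma to replace this union by the union over "maximal" expansions only: every disjunct is "contained" in some "maximal" one, and "maximal" expansions are themselves expansions, so the two "infinitary unions" are "semantically equivalent". Now suppose every "maximal" expansion $\xi^{\ast}$ has "semantic tree-width" at most $k-1$. Each $\xi^{\ast}$ is a "CQ", hence "equivalent" to its "core", and since the "semantic tree-width" of a "CQ" equals the "tree-width" of its "core" \cite[Theorem 12]{DBLP:conf/cp/DalmauKV02}, that "core" is a "CQ" of "tree-width" at most $k-1$. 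Thus $\gamma$ is "equivalent" to an "infinitary union" of "CQs" of "tree-width" at most $k-1$.

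Finally I would apply \Cref{thm:closure-under-sublanguages} with parameter $k-1$: its implication $\itemClosureInfCQ \Rightarrow \itemClosureUCRPQ$ states precisely that being "equivalent" to an "infinitary union" of "CQs" of "tree-width" at most $k-1$ forces "semantic tree-width" at most $k-1$, giving the contradiction. This last step is exactly where the hypothesis $k \geq 3$ is needed, since \Cref{thm:closure-under-sublanguages} holds only for parameters strictly greater than $1$ (it genuinely fails for parameter $1$, see \Cref{rk:closure-under-sublanguages-k1}), so we require $k-1 \geq 2$. The main obstacle is the reduction to the union of "maximal" expansions: this is the only place where "finite-redundancy" is indispensable---without it, "maximal" expansions need not dominate the whole family (cf. \Cref{fig:example-infinite-chain}) and the argument collapses.
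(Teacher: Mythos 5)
Your proof is correct and follows essentially the same route as the paper's: both argue by contraposition/contradiction, use "finite-redundancy" to show $\gamma$ is "equivalent" to the "infinitary union" of (cores of) its "maximal" "expansions", and then apply the implication $\itemClosureInfCQ \Rightarrow \itemClosureUCRPQ$ of \Cref{thm:closure-under-sublanguages} at parameter $k-1 \geq 2$. The only difference is that you spell out the domination step (every "expansion" is "contained" in a "maximal" one via a chain argument) that the paper leaves implicit in the sentence ``Since $\gamma$ is "finitely-redundant", we have $\gamma \semequiv \Xi$''.
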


    \begin{proof}
        Let $\gamma$ be a "finitely-redundant" "C2RPQ".
        Consider the "infinitary UCQ" \[\Xi \defeq \{\core(\xi) \mid \xi \text{ is a "maximal" "expansion" of } \gamma \}.\]
        Since $\gamma$ is "finitely-redundant", we have $\gamma \semequiv \Xi$.
        We prove the fact by contraposition.
        If all "maximal" "expansions" of $\gamma$ have "semantic tree-width" $\leq k-1$,
        then all "CQs" of $\Xi$ have "tree-width" $\leq k-1$, and so
        by the implication $\itemClosureInfCQ \Rightarrow \itemClosureUCRPQSimple$ of \Cref{thm:closure-under-sublanguages}, query $\gamma$ has "semantic tree-width" at most $k-1$.
        Note that for \Cref{thm:closure-under-sublanguages} to apply, we need $k-1 > 1$
        "ie" $k \geq 3$.
    \end{proof}

    \begin{proposition}
        The set of all "maximal" "expansions" of queries from $\+C$ is recursively enumerable.
    \end{proposition}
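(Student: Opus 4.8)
The plan is to show that \emph{maximality of an expansion is decidable} for a fixed query $\gamma \in \+C$; since $\+C$ is recursively enumerable and each $\gamma$ has a computable (countable) set of expansions $\Exp(\gamma)$, one can then enumerate all pairs $(\gamma,\xi)$ with $\xi \in \Exp(\gamma)$, run the decision procedure on each, and output $\xi$ exactly when it is maximal. Observe that non-maximality is trivially recursively enumerable: one simply searches for an expansion $\xi'$ with $\xi \strcontained \xi'$, both containments being decidable for conjunctive queries via \Cref{prop:cont-char-exp-st}. Hence the entire content lies in making maximality \emph{semi}-decidable as well, which, given finite-redundancy is not even needed here, amounts to exhibiting a computable bound on the size of a witness to non-maximality. (Finite-redundancy is what the surrounding argument uses to guarantee that maximal expansions capture the semantics, as in $\gamma \semequiv \Xi$, but it plays no role in this enumeration lemma.)

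So, fix $\gamma$ and $\xi \in \Exp(\gamma)$. By definition $\xi$ fails to be maximal exactly when there is $\xi' \in \Exp(\gamma)$ with $\xi' \homto \xi$ (witnessing $\xi \contained \xi'$) but $\xi \nothomto \xi'$. First I would recall that any such $\xi'$ is obtained from $\gamma$ by replacing each of its $\nbatoms{\gamma}$ atoms $x \atom{L} y$ with a path spelling a word $w \in L$, so $\xi'$ is the fixed graph of $\gamma$ with each edge subdivided, and its only unbounded parameters are the lengths of these atom-expansions. The strategy is to bound these lengths by a computable function of $\size{\gamma}$ and $\size{\xi}$, via a pumping argument that reads each atom-expansion left to right and attaches to every position a finite \emph{type} recording: (i) the state reached in the NFA $\+A_L$ of the atom being expanded; (ii) the vertex of $\xi$ to which the position is sent by the homomorphism $h\colon \xi'\homto\xi$; and (iii) a finite Feferman--Vaught-style homomorphism profile summarising how the fixed finite query $\xi$ can be mapped into the portion of $\xi'$ already read, together with the constraints this leaves on the remainder.

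The number of such types is bounded computably, so in any atom-expansion longer than this bound two positions $p<p'$ carry identical types; excising the segment between them yields a strictly smaller expansion $\xi''$ of $\gamma$ preserving all three invariants: equality of NFA states guarantees the shortened word still lies in $L$ (so $\xi'' \in \Exp(\gamma)$), equality of $h$-images guarantees $h$ descends to the glued structure (so $\xi \contained \xi''$), and equality of profiles guarantees the excision creates no new homomorphism, so $\xi \nothomto \xi''$ is preserved. Iterating, a minimal witness to non-maximality has all atom-expansions bounded by this computable length, whence one decides maximality by inspecting only the finitely many expansions within that bound.

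The hard part is invariant (iii): designing the profile so that cutting at two positions of equal profile provably does \emph{not} create a homomorphism $\xi \homto \xi''$. Simple examples show that naive shortening can move queries in either direction of the homomorphism order---for the language $ab^*c$ the expansions spelling $abc$ and $ac$ are incomparable---so the profile must genuinely record the composition behaviour of homomorphisms from $\xi$ across the cut, not merely the local target vertex of $h$. Because $\xi$ is a fixed finite structure and the ambient shape is that of $\gamma$, this profile ranges over a finite, explicitly computable set, which is precisely what renders the pumping, and hence the decision of maximality and the recursive enumerability of the whole set, effective.
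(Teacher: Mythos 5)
Your skeleton is exactly the paper's: reduce the proposition to deciding, for a given $\gamma$ and $\xi \in \Exp(\gamma)$, whether $\xi$ is maximal; obtain decidability by bounding, via a pumping argument on position types, the atom-expansion lengths of a minimal witness $\xi'$ with $\xi' \homto \xi$ and $\xi \nothomto \xi'$; then enumerate all expansions of all queries of $\+C$ (possible since $\+C$ is recursively enumerable) and filter with the decision procedure. Your type components (i) and (ii) are precisely the paper's $Q_i$ (states reachable in the NFA after reading the prefix) and $f(x_i)$, used for the same two purposes: the excised word stays in the atom's language, and $h$ descends through the gluing. You are also right that finite-redundancy plays no role in this particular proposition---the paper's proof does not use it either.

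The gap is in component (iii), which is exactly where you yourself locate the entire difficulty: you never define the profile, never prove it takes computably finitely many values, and never prove the preservation property that equal profiles at the two cut positions rule out a homomorphism $\xi \homto \xi''$; all three claims are asserted, so the crucial step is a specification of what is needed rather than a proof. The paper discharges this obligation with a purely syntactic device: the type of position $i$ also records the word $\tau_i$ of length $2|\xi|$ spelled around position $i$ (positions too close to the endpoints being excluded). Then any hypothetical homomorphism $\xi \homto \xi''$ splits into connected components: a component missing the glue point lands in a substructure of $\xi'$, while a component hitting it---homomorphisms not increasing distances---lands in the radius-$|\xi|$ ball around the glue point, and $\tau_i = \tau_j$ forces that ball to be isomorphic to the corresponding ball of $\xi'$ around $x_i$; either way one reassembles $\xi \homto \xi'$, a contradiction. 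Your abstract route can be completed---two-pointed homomorphism profiles of path segments (sets of triples $(S, T_{\mathrm{left}}, T_{\mathrm{right}})$ with $T_{\mathrm{left}}, T_{\mathrm{right}} \subseteq S \subseteq \vars(\xi)$ realizable by a partial homomorphism) form a finite monoid under concatenation, and existence of $\xi \homto \xi''$ is determined by the profiles of the two retained path pieces together with the unchanged remainder---but doing so requires proving that composition law and handling two features you do not address: the path re-attaches to the rest of $\xi'$ at \emph{both} endpoints (so the gluing is cyclic, not tree-shaped), and that remainder is itself unbounded since it contains the other atom expansions, so finiteness of the profile set must depend on $|\xi|$ alone. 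The paper's window trick sidesteps all of this, which is why its argument fits in a paragraph.
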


    \begin{proof}
        \AP We first show that given an "expansion" $\xi$ of some "C2RPQ" $\gamma$, it is 
        decidable whether $\xi$ is "maximal". This follows from the following claim:
        there exists an "expansion" $\xi'$ of $\gamma$ "st" $\xi' \homto \xi$ and
        $\xi \nothomto \xi'$ "iff"
        there exists such an "expansion" whose "atom expansions" have length at most
        $2^m \cdot |\xi|\cdot |\A|^{2|\xi|}$ where $|\xi|$ is the number of variables of $\xi$ and
        $m$ is the greatest number of states of an NFA labelling an "atom" of $\gamma$. 
        Decidability of "maximality" clearly follows from this claim: it suffices to check if 
        $\xi' \homto \xi$ implies $\xi \homto \xi'$ for all ``small'' $\xi'$.

        To prove the claim, let $\xi'$ be an "expansion" of $\gamma$, and assume that there is 
        a homomorphism $f\colon \xi' \to \xi$ and that $\xi \nothomto \xi'$. Consider an "atom expansion"
        \[
            \pi' = x_0 \atom{a_1} x_1 \atom{a_2} \cdots \atom{a_{n-1}} x_{n-1} \atom{a_n} x_n
        \]
        of $\xi'$, and let $\+A$ denote the NFA associated with the "atom".
        For any index $i \in \lBrack 0,n\rBrack$ which is neither among the $|\xi|$ first positions
        nor the $|\xi|$ last positions, define its type $\tau_i$ as
        the word $a_{i-|\xi|}-1 \cdots a_i a_{i+1} \cdots a_{i+|\xi|}$ of length $2|\xi|$---note that $\tau_i$ uniquely describes the ball of radius $|\xi|$ centred at $x_i$ in $\xi$.
        Consider the function which maps index $i \in \lBrack |\xi|, n-|\xi|+1\rBrack$
        to the pair $\langle f(x_i), Q_i , \tau_i \rangle$, where $Q_i$ is the set of states $q$
        of $\+A$ which admit a path from an initial state to $q$ labelled by $a_1\cdots a_i$.
        If $n \geq |\xi|\cdot 2^{|\+A|}\cdot |\A|^{2|\xi|} + 2|\xi|$ then by the pigeon-hole principle,
        there exists $i,j \in \lBrack |\xi|, n-|\xi|+1\rBrack$ "st" $i<j$, $f(x_i) = f(x_j)$,
        $Q_i = Q_j$ and $\tau_i = \tau_j$. Letting
        \[
            \pi'' = x_0 \atom{a_1} x_1 \atom{a_2} \cdots \atom{a_{i-1}} x_{i-1} \atom{a_i} x_i = x_j
            \atom{a_{j+1}} x_{j+1} \atom{a_{j+2}} \cdots \atom{a_{n-1}} x_{n-1} \atom{a_n} x_n,
        \]
        consider the query $\xi''$ obtained from $\xi'$ 
        by replacing $\pi'$ with $\pi''$.
        Since $Q_i = Q_j$, $\xi''$ is still an "expansion" of $\gamma$.
        Moreover, $f(x_i) = f(x_j)$ implies that there is a "homomorphism" from $\xi''$ to $\xi$.
        Lastly, it there was a "homomorphism" from $\xi$ to $\xi''$,
        then this homomorphism should contain $x_i$ in its image---otherwise there would clearly be a "homomorphism" from $\xi$ to $\xi'$. Note that the image of this "homomorphism"
        is included in the ball of $\xi''$ centered at $x_i = x_j$ of radius $|\xi|$.
        But since $\tau_i = \tau_j$ this ball is equal to the ball of $\xi'$ centered at $x_i$
        (or equivalently at $x_j$) of radius $|\xi|$, and so we found a "homomorphism" from
        $\xi$ to $\xi'$, which is not possible. Hence, there cannot be any "homomorphism" from
        $\xi$ to $\xi''$, which concludes the proof.

        Finally, to enumerate all "maximal" "expansions" of queries from $\+C$,
        it suffices to enumerate all "expansions" of queries from $\+C$---which is doable since $\+C$ is recursively enumerable---and only keep those which are "maximal", using the previous algorithm.
    \end{proof}

    We proceed with the reduction.
    For any value of $k$ which is big enough, we enumerate all "maximal" "expansions" of $\+C$ until we find one such "expansion" $\xi$ whose "core" contains a $K \times K$ grid as a minor, for $K = {k \choose 2}$.
    We know that this must happen by \Cref{prop:big-expansion} and the Excluded Minor Theorem \cite{RobertsonS86}, stating that there exists a function $f : \Nat \to \Nat$ such that for every $n \in \Nat$ every graph of "tree-width" at least $f(n)$ contains a $(n \times n)$-grid as a minor.
    Once we get hold of such a "maximal" "expansion" $\xi$, we proceed as in \cite[proof of Theorem~4.1]{Grohe07} to produce, in polynomial time, a "graph database" $G_\xi$ such that:
    \begin{enumerate}
        \item there is a "homomorphism" $G_\xi \homto \xi$, and
        \item $G_\xi$ "satisfies" $\xi$ if, and only if, $G$ has a clique of size $k$.
    \end{enumerate}
Now consider the "UC2RPQ" $\Gamma \in \+C$ of which $\xi$ is an expansion, and observe that if $G_\xi$ "satisfies" $\Gamma$, then we must have that $G_\xi$ also "satisfies" $\xi$, by the fact that $G_\xi \homto \xi$ and $\xi$ is "maximal". Hence, the following are equivalent:
\begin{itemize}
    \item $G_\xi$ "satisfies" $\Gamma$, 
    \item $G_\xi$ "satisfies" $\xi$,
    \item $G$ contains a $k$-clique.
\end{itemize}
This finishes the "FPT"-reduction.

\medskip

\proofcase{Right-to-left} This direction does not need any of the hypotheses (neither "finite-redundancy",  $\wone \neq $ "FPT", nor r.e.), by \Cref{coro:fpt-eval-bounded-semtreewidth}.
\end{proof}

\subsection{\AP{}Larger Classes}
\label{sec:discussion-larger-classes}

A natural and simple approach to extend the expressive power of "CRPQs" is to close the queries by transitive closure. That is, given a binary "CRPQ" $\gamma(x,y)$ we can consider "CRPQ" over the extended alphabet $\A \cup \set \gamma$, where the label $\gamma$ is interpreted as the binary relation defined by $\gamma(x,y)$. \AP This is the principle behind ""Regular Queries"" \cite{DBLP:journals/mst/ReutterRV17}. The notion of "tree-width" can be easily lifted to this class, and classes of bounded "tree-width" still have a polynomial-time "evaluation problem". However, this class has not yet been studied in the context of the "semantic tree-width". It is not known if the "semantic tree-width $k$ problem" is decidable, nor whether classes of bounded "semantic tree-width" have an "FPT" "evaluation problem".
\begin{qu}
    Is the "semantic tree-width $k$ problem" for "Regular Queries" decidable?
\end{qu}

\subsection{\AP{}Different Notions}
\label{sec:discussion-different-notions}

\begin{table}[bht]
    \begin{tabular}{ccc}
        \toprule
        Query class & Membership problem & "Evaluation problem" \\ \midrule
        "path-width" $\leq k$ & 
        "L"-c {\footnotesize\cite[Theorem 1.3, p. 2]{KM10Computing}}
        & "NL"-c {\footnotesize(\Cref{lemma:evaluation-bounded-pathwidth})}  \\
        "sem. path-w." $\leq k$ & "2ExpSpace" \& "ExpSpace"-h & "paraNL" {\footnotesize(\Cref{thm:evaluation-bounded-pathwidth})} \\ 
        & {\footnotesize(\Cref{thm:decidability-sempw})} \\ \midrule
        "tree-width" $\leq k$ & "L"-c {\footnotesize \cite[Lemma 1.4]{EJT10Logspace}} & "P" {\footnotesize(Folklore)\footnotemark} \\
        "sem. tree-w." $\leq k$ & "2ExpSpace" \& "ExpSpace"-h\footnotemark & "FPT" {\footnotesize\cite[Corollary V.2]{DBLP:conf/lics/0001BV17}}\footnotemark \\
        & {\footnotesize(\Cref{thm:decidability-semtw})} & "NP"-c {\footnotesize\cite[Theorem V.3]{DBLP:conf/lics/0001BV17}}\\ \bottomrule
    \end{tabular}
    \caption{
        \label{table:tractability-classes-queries}
        Complexity of the membership and "evaluation problem" for
        some classes of "UC2RPQs" studied in this paper, where $k \geq 1$ is fixed.
        The same results hold for the "contracted" variants.
        The abbreviation ``-c'' (resp. ``-h'') stands for ``-complete'' (resp. ``-hard'').
    }
\end{table}
\addtocounter{footnote}{-2}
\footnotetext{Originally proven by Chekuri \& Rajaraman \cite[Theorem 3]{CHEKURI2000211} for CQs. The generalization to "UC2RPQs" is trivial, see "eg" \Cref{prop:crpq-bound-tree-width-upper-bound}
or \cite[Theorem IV.3]{DBLP:conf/lics/0001BV17}.}
\stepcounter{footnote}
\footnotetext{See also \cite[Theorem 6.1]{BarceloRV16} for $k=1$.}
\stepcounter{footnote}
\footnotetext{See also \Cref{coro:fpt-eval-bounded-semtreewidth} and \cite[Theorem~22]{FGM24}.}

"CRPQs" of small "tree-width" or "path-width" enjoy a tractable "evaluation problem", see \Cref{table:tractability-classes-queries}. However, it must be noticed that "containment" between "tree-width" $k$ or "path-width" $k$ queries is still very hard: \expspace-complete (even for $k=1$) \cite{CGLV00}. The more restrictive measure of ``"bridge-width"'' \cite{Figueira20} has been proposed as a more robust measure, which results in classes of queries which are well-behaved both for "evaluation" (since "bridge-width" $k$ implies "tree-width" $\leq k$) and for "containment" (since "containment" of bounded-"bridge-width" classes is in \pspace).
It is not hard to see that "bridge-width" is closed under "refinements", and thus that this notion is amenable to our approach ("cf" \Cref{obs:equivalence_under_approx_homomorphism}).
\begin{qu}
    \AP\label{qu:semantic-bridge-width}
    Is the problem of whether a "UC2RPQ" is equivalent to a "UC2RPQ" of "bridge-width" at most $k$ decidable?
\end{qu}

\clearpage\appendix
\section{\AP{}Polynomial-Time Evaluation of Queries of Bounded Tree-Width}
\label{apdx-sec:prop:crpq-bound-tree-width-upper-bound}

\crpqboundtwupperbound*
\begin{proof}
    \AP
    \knowledgenewrobustcmd{\sjoin}{\cmdkl{\ltimes}}
    \knowledgenewrobustcmd{\costjoin}{\cmdkl{c_{sj}}}
    A ""semi-join"" is a "CQ" of the form $q(\bar x) = R(\bar x) \land S(\bar y)$, noted $R(\bar x) \intro*\sjoin S(\bar y)$, where $\bar x$ and $\bar y$ may contain common variables and constants. \AP ""Yannakakis algorithm"" \cite{Yannakakis81} allows to evaluate any Boolean "acyclic"%
	\footnote{%
	An \AP""acyclic"" "CQ" over arbitrary relational structures is one which admits a "tree decomposition" whose set of "bags" are the sets of variables of its atoms (also known as ``generalized hyper tree-width 1'' or ``$\alpha$-acyclicity'' of its underlying hypergraph).
	} 
	"conjunctive query" $q$ on an $n$-tuple relational database in $\+O(|q| \cdot \costjoin(n))$, where $\intro*\costjoin(n)$ is the cost of performing a "semi-join": $\+O(n \cdot \log(n))$ for Turing machines, or $\+O(n)$ for a RAM model.\footnote{
		We use Random Access Machines (RAMs) with domain $\Nat$, in which we assume that the RAM's memory is initialized to 0. 
		For every fixed dimension $d \geq 1$  we have available an unbounded number of $d$-ary arrays $A$ such that for given $(n_1, \dotsc, n_d) \in \Nat^d$ the entry $A[n_1, \dotsc, n_d]$ at position $(n_1, \dotsc, n_d)$ can be accessed in constant time.
		To compute $R(\bar x \, \bar y) \sjoin S(\bar y \, \bar z)$, where $\bar y$ are the only common variables between the atoms and $n_R$ and $n_S$ are the number of tuples in $R$ and $S$ respectively, we may assume that constants from the relations are encoded as numbers in binary, of size $\+O(\log(n_R + n_S))$. 
		We first encode the relation $R$ projected onto $\bar y$, as an array $A$ of the dimension of $\bar y$ in $\+O(n_R)$, by putting a `1' in $A$ for each tuple of $R$. 
		Then, for each tuple of $S$ we test if the $\bar y$-projection belongs to $A$, if so we put a `2' in the array $A$, this is $\+O(n_S)$. 
		Finally, for each tuple of $R$ we output the tuple if its projection onto $\bar y$ has a `2' in $A$. This last step takes $\+O(n_R)$.
	}

    We reduce the "evaluation problem" for "C2RPQs" of "tree-width" $k$ to the "evaluation problem" of Boolean acyclic "CQs". First, if we receive $\Gamma (\bar x)$, $G$ and $\bar v$ as input, we replace the "free variables" $\bar x$ of $\Gamma$ with the "graph database" nodes $\bar v$ as constants, to obtain a Boolean "C2RPQ" (with constants). Let us therefore assume that $\Gamma$ is Boolean.
	Take any $\gamma \in \Gamma$ and a "tagged tree decomposition" $(T, \bagmap, \tagmap)$ of "width" at most $k$ of $\gamma$.\footnote{By a "tagged tree decomposition" of $\gamma$ we mean one for the identity $\gamma \homto \gamma$. The definition of "tagged tree decompositions" can be found in \Cref{sec:treedec}.}
	It is easy to see that we can assume that the decomposition is of linear size in the number of "atoms" $\nbatoms{\gamma}$.%
	\footnote{
		Indeed, observe that the decomposition resulting from contracting any edge so that one "bag" is contained or equal to the other is of linear size and preserves the "width" (see, "eg", \cite{DBLP:journals/jcss/GottlobLS02}).
	}
	 For every "bag" $v$ of the decomposition containing $\bagmap(v)=\set{x_1, \dotsc, x_t}$ consider the following relation $R_v$ consisting of all tuples of "database" nodes $(v_1, \dotsc, v_t)$ such that for every "atom" $A=x_i \atom{L} x_j$ in $\tagmap^{-1}(v)$ we have that $(v_i, v_j)$ "satisfies" $A$.

    It then follows that 
    \begin{enumerate}
        \item  each $R_v$ is of size $\+O(|G|^{k+1})$, 
        \item the "Boolean" "CQ" $q = \bigwedge_{v \in \vertex{T}} R_v(\bagmap(v))$ is "acyclic" and of size linear in $\nbatoms{\gamma}$, and 
        \item $q$ is "equivalent" to $\gamma$.
    \end{enumerate}
    Hence, "Yannakakis algorithm" yields a complexity of $\+O(\size{\gamma} \cdot \costjoin(|G|^{k+1}))$. That is, $\+O(\size{\gamma} \cdot |G|^{k+1} \cdot \log{|G|})$, or $\+O(\size{\gamma} \cdot |G|^{k+1})$ in a RAM model.

    We are only left with the cost of computing the relation $R_v$ for any given $v \in \vertex{T}$. Let $\+A_v= \set{A_1(y_1,z_1), \dotsc, A_s(y_s,z_s)}$ be the set of all "atoms" of $\tagmap^{-1}(v)$. Let $\bagmap(v)=\set{x_1, \dotsc, x_t}$.
    Compute first the relation $S_i = \set{ (u,u') \in \vertex{G}^2 : (u,u') \text{ "satisfies" } A_i(y_i,z_i)}$ for every $i$; this can be done in $\+O(|G| \cdot |L_i \times \Gpm|)$, where $L_i \times \Gpm$ is the product of the NFA for the regular language $L_i$ of $A_i$ and the expanded "database" $\Gpm$.\footnote{That is, $L_i \times \Gpm$ is a  "database" having pairs $(q,v)$ where $q$ is a state for the NFA $\+A_i$ of $L_i$ and $v \in \vertex{\Gpm}$, and we have an edge $(q,v) \xrightarrow{a} (q',v')$ in $L_i \times \Gpm$ if  $q \xrightarrow{a} q'$ and $v \xrightarrow{a} v'$ are in $\+A_i$ and $\Gpm$ respectively.} Compute also the relation $U = \vertex{G}^t$ in $\+O(|G|^{t})$ (hence in $\+O(|G|^{k+1})$). Finally, we compute $R_v$ by performing $s$ nested "semi-joins" 
    \[
        ((U(x_1, \dotsc, x_t) \sjoin S_1(y_1,z_1)) \sjoin S_2(y_2,z_2) \dotsb )\sjoin S_s(y_s,z_s) 
    \]
    in $\+O(s \cdot |G|^{k+1} \cdot \costjoin(|G|^{k+1}))$, that is, in $\+O(s \cdot |G|^{k+1} \cdot \log(|G|))$ or $\+O(s \cdot |G|^{k+1})$ in a RAM model. Repeating this for every "bag" we can compute all the $R_v$'s in $\+O(\size{\gamma} \cdot |G|^{k+1} \cdot \log(|G|))$ (observe that we iterate only once on each "atom", since we are using a "``tagged'' decomposition@tagged tree decomposition") or $\+O(\size{\gamma} \cdot |G|^{k+1})$ in a RAM model.
\end{proof}

\section{\AP{}Alternative Upper Bound for Containment of UC2RPQs}
\label{apdx-sec:alternative-upper-bound-containment}

In \Cref{sec:maximal-under-approximations}, in order to prove the "2ExpSpace" upper bound to the 
"semantic tree-width $k$ problem" (\Cref{lem:sem-tw-in-twoexp}), we proved an upper bound on containment
of "UC2RPQs" (\Cref{prop:bound-containment-pb}) by relying on the notion of "bridge-width".
In this section, we give a slightly different bound, which is more elementary (in the sense that it 
does not rely on "bridge-width") and still yields a "2ExpSpace" upper bound to the "semantic 
tree-width $k$ problem".

\begin{proposition}
	\AP\label{prop:bound-containment-pb-alt}
	The "containment problem" $\Gamma \contained \Delta$ between two "UC2RPQs" can be solved in non-deterministic space $2^{c\cdot \size{\Gamma}} + p_{\Delta} \cdot 2^{c\cdot m_\Delta}$, where $m_\Delta$ is the size of the greatest disjunct of $\Delta$, namely $m_\Delta = \max{\{\size{\delta_\Delta} \mid \delta \in \Delta\}}$, $p_\Delta$
	is the number of disjuncts of $\Delta$, and $c$ is a constant.
\end{proposition}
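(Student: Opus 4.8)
The plan is to reprove the classical \expspace upper bound for \UCtwoRPQ{} containment from \cite{CGLV00}, but tracking the complexity parameters of $\Gamma$ and $\Delta$ separately so as to isolate the dependence on the largest disjunct of $\Delta$. By \Cref{prop:cont-char-exp-st}, $\Gamma \contained \Delta$ holds if and only if for every "expansion" $\anexpansion_1 \in \Exp(\Gamma)$ there is some $\anexpansion_2 \in \Exp(\Delta)$ with $\anexpansion_2 \homto \anexpansion_1$. The standard route is to complement this: $\Gamma \not\contained \Delta$ iff there is an "expansion" $\anexpansion_1$ of $\Gamma$ which is a \emph{counterexample}, meaning no "expansion" of $\Delta$ homomorphically maps into it. First I would build, for each disjunct $\delta$ of $\Delta$, a nondeterministic automaton-like device recognizing the (encodings of) expansions $\anexpansion_1$ into which \emph{some} expansion of $\delta$ maps; this is the classical "two-way" product construction whose state space is exponential in $\size{\delta}$, i.e.\ of size $2^{\+O(\size\delta)}$, and hence bounded by $2^{c\cdot m_\Delta}$.

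Next I would take the union/intersection structure: an expansion of $\Gamma$ is a counterexample iff it is accepted by some path-witness device for $\Gamma$ (of size $2^{\+O(\size\Gamma)}$, accounting for the exponential blow-up coming from guessing which disjunct of $\Gamma$ and tracking the states of all its atom-NFAs simultaneously) but is \emph{rejected} by every disjunct-device of $\Delta$. To reject by all disjuncts simultaneously I must complement each $\Delta$-disjunct device and run them in parallel. Complementing a single nondeterministic device of size $2^{c\cdot m_\Delta}$ can be done by an on-the-fly subset construction, which keeps the running space at $2^{\+O(m_\Delta)}$ per disjunct; running all $p_\Delta$ complemented disjunct-devices in parallel costs $p_\Delta \cdot 2^{\+O(m_\Delta)}$ space, since each maintains its own subset-state. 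The key accounting point is that the $\Delta$-side space is a \emph{sum} over disjuncts of the per-disjunct subset space, giving the $p_\Delta \cdot 2^{c\cdot m_\Delta}$ summand, rather than a product; while the $\Gamma$-side contributes the additive $2^{c\cdot\size\Gamma}$ term for guessing and checking the expansion $\anexpansion_1$ symbol by symbol.

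The emptiness test—"does there exist an expansion $\anexpansion_1$ accepted by the $\Gamma$-device and rejected by every $\Delta$-disjunct?"—is then a reachability question in the product of all these devices, which by the standard \textsf{NLogSpace} reachability argument can be solved nondeterministically in space logarithmic in the product's size, i.e.\ linear in the sum of the individual state-description sizes. This yields the claimed nondeterministic space bound $2^{c\cdot\size\Gamma} + p_\Delta\cdot 2^{c\cdot m_\Delta}$. I expect the main obstacle to be the bookkeeping in the complementation step: one must verify that complementing each $\Delta$-disjunct via the subset construction can genuinely be performed \emph{on the fly} while the $\Gamma$-expansion is guessed left to right, so that the subset-states never need to be stored for more than one disjunct at a time beyond their own fixed-size register, and that the two-way nature of \CtwoRPQ{} atoms (the $a^-$ letters) does not break the linear left-to-right scanning—this is handled exactly as in \cite{CGLV00,Figueira20} by working over the extended alphabet $\Aext$ and treating expansions as words over $\Aext$, but it is the place where a careless argument would silently turn the additive sum into a product and lose the sharper bound.
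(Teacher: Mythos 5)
Your proposal is correct and takes essentially the same route as the paper's proof: both re-examine the reduction of \cite{CGLV00} to the NFA inclusion $\+A_\Gamma \subseteq \bigcup_{\delta \in \Delta} \+A_\delta$, where $\+A_\Gamma$ is exponential in $\size{\Gamma}$ and each $\+A_\delta$ is exponential in $\size{\delta} \leq m_\Delta$, and both obtain the additive bound $2^{c\cdot\size{\Gamma}} + p_\Delta \cdot 2^{c\cdot m_\Delta}$ by complementing each disjunct's automaton separately (one on-the-fly subset construction per disjunct) instead of determinizing the union as a whole. The paper phrases this as bounding a shortest counterexample by $2^{|\+A_\Gamma|}\cdot\prod_{\delta\in\Delta} 2^{|\+A_\delta|}$ and taking logarithms, which is exactly the same accounting as your product-reachability argument.
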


\begin{proof}[Proof sketch]
    The proposition can be shown by close inspection of the standard "containment problem" for "UC2RPQs" \cite[Theorem 5]{CGLV00}: the "containment problem" is reduced, in this instance, to checking the inclusion between NFAs of the form\footnote{$\+A_\Gamma$ and $\+A_\delta$ are denoted $A_1$ and $A_3$, respectively, in \cite{CGLV00}.}
	\begin{equation}
		\AP\label{eq:containment-to-inclusion-regexp}
		\+A_\Gamma \subseteq^? \bigcup_{\delta \in \Delta} \+A_\delta,
	\end{equation}
	where $A_\Gamma$ is a regular expression which is exponential in $\size{\Gamma}$,
	and $\+A_\delta$ has size exponential in $\size{\delta} \leq m_\Delta$.
	Should \eqref{eq:containment-to-inclusion-regexp} not hold, there must exist a counterexample of size at most 
	\[
		2^{|\+A_\Gamma|}\times \prod_{\delta \in \Delta} 2^{|\+A_\delta|}
	\]
	Letting $p_\Delta$ be the number
	of queries in $\Delta$, we get that the logarithm of the expression above---representing the
	size of the non-deterministic space needed by the algorithm---is upper bounded by
	\[
		c_0 \Bigl(|\+A_\Gamma| + \sum_{\delta \in \Delta} |\+A_\delta|\Bigr)
		\mathrel{\underset{\tiny\textrm{eventually}}{\leq}}
		2^{c\cdot \size{\Gamma}} + p_{\Delta} \cdot 2^{c\cdot n_\Delta},
	\]
	for some constants $c_0$ and $c$.
\end{proof}

\section{\AP{}Path-Width is not Closed under Refinements}
\label{apdx-sec:path-width-not-closed-refinements}

\pathwidthnotclosed*

\begin{proof}
	Let $X$ be a set of $k-1$ variables. Consider the undirected multigraph $\+G_k$ whose set of nodes is
	$X \cup \{y_0, y_1, y_2, y_3\}$ with the following edge set:
	\begin{itemize}
		\item each $X \cup \{y_i\}$ ($i \in \{0,1,2,3\}$) is a clique,
		\item there is an edge from $y_i$ to $y_{i+1}$  for $i \in \{0,1,2\}$, and
		\item there is a second edge from $y_1$ to $y_2$.
	\end{itemize}
	By definition, this graph has "path-width" exactly $k$: it is as least $k$ since it contains
	a $(k+1)$-clique---namely $X \cup \{y_i,y_{i+1}\}$---and, moreover the following sequence of bags---"cf" \Cref{subfig:pw-not-closed-original}---defines a "path decomposition" of $\+G_k$ of "width" $k$:
	\[
		\langle X \cup\{y_0,y_1\},\; X \cup\{y_1,y_2\},\; X \cup\{y_2,y_3\} \rangle.
	\]

	Let $\+G'_k$ be the graph obtained by "refining" the second edge from $y_1$ to $y_2$, into
	two edges $\langle y_1, z \rangle$ and $\langle z, y_2 \rangle$, where $z$ is a new variable---see \Cref{subfig:pw-not-closed-path-dec-horizontal,subfig:pw-not-closed-path-dec-vertical}.
	We claim that $\+G'_k$ has path-width at least $k+1$. Indeed, let
	$\langle T, \bagmap, \tagmap \rangle$ be a "path decomposition" of $\+G'_k$.

	Note that $X \cup \{y_0, y_1\}$, $X \cup \{y_1, y_2\}$, $X \cup \{y_2, y_3\}$
	and $\{z,y_1,y_2\}$ are cliques, so there must be "bags" of $\langle T, \bagmap, \tagmap \rangle$ containing each of them. Let $b_{0,1}$, $b_{1,2}$, $b_{2,3}$ and $b_Z$ denote these bags---note that they do not have to be distinct.

	\begin{enumerate}
		\item If $b_Z$ appears in $T$ between at least two "bags" among $b_{0,1}$, $b_{1,2}$
		and $b_{2,3}$ (as in \Cref{subfig:pw-not-closed-path-dec-horizontal}), since $X \subseteq \bagmap(b_{i,j})$ for all $(i,j)$, then $X \subseteq \bagmap(b_z)$.
		Hence $X \cup \{z,y_1,y_2\} \subseteq \bagmap(b_z)$
		and so $b_z$ has $k+2$ elements.
		\item Otherwise, "wlog" $b_z$ appears strictly before all three bags
		$b_{0,1}$, $b_{1,2}$ and $b_{2,3}$, as in \Cref{subfig:pw-not-closed-path-dec-vertical}.
		We consider the way $b_{0,1}$, $b_{1,2}$ and $b_{2,3}$ are ordered in the "path decomposition". If $b_{0,1}$ or $b_{2,3}$ appears first, then 
		they are located between $b_z$ and $b_{1,2}$, which both contain $\{y_1,y_2\}$,
		and so this bag must also contain $\{y_1,y_2\}$, and so it has size at least $k+2$.
		Otherwise, if $b_{1,2}$ appears first, depending on the relative ordering
		of $b_{0,1}$ and $b_{2,3}$, we either get that $y_2 \in \bagmap(b_{0,1})$
		or that $y_1 \in \bagmap(b_{2,3})$. In both cases, we have a bag with at least
		$k+2$ elements.
	\end{enumerate}

	In all cases, the "path decomposition" has width at least $k+1$, showing that
	$\+G'_k$ has "path-width" at least\footnote{In fact it has "path-width" exactly $k+1$.} $k+1$.
\end{proof}

\clearpage
\section*{Acknowledgment}
\noindent The work is supported by ANR QUID, grant ANR-18-CE40-0031.
\bibliographystyle{alphaurl}
\bibliography{long,bibli-crpq}

\end{document}